\newif\ifanon
\newlang{\CHSH}{CHSH}
\newlang{\GAMES}{GAMES}
\newlang{\SYMM}{SYMM}
\newlang{\GHZ}{GHZ}
\newcommand{\CC}{\mathcal{C}}
\newcommand{\black}{\mathsf{b}}
\newcommand{\iin}{\text{in}} %
\newcommand{\oout}{\mathsf{out}}
\newcommand{\white}{\mathsf{w}}
\tikzset{>=stealth}
\definecolor{myblue}{HTML}{0088cc}
\newcommand{\nodeconst}{\ensuremath{\mathcal{N}}}
\newcommand{\edgeconst}{\ensuremath{\mathcal{E}}}
\newcommand{\gen}[1]{\langle #1 \rangle}
\newcommand{\nodeconstone}{\nodeconst}
\newcommand{\edgeconstone}{\edgeconst}
\newtcbox{\mybox}[2][]{%
	enhanced,colback=white,colframe=lightgray,coltitle=black,baseline=3mm,
	sharp corners,boxrule=0.4pt,
	size=fbox,
	fonttitle=\itshape,
	attach boxed title to bottom right={yshift=8pt,xshift=2mm},
	boxed title style={tile,size=minimal,left=1mm,right=1mm,
		colback=white,before upper=\strut},
	title=#2,#1
}
\newcommand{\minus}{\scalebox{0.75}[1.0]{$-$}}
\newtheorem{observation}[theorem]{Observation}
\DeclareMathOperator{\re}{\mathcal R}
\DeclareMathOperator{\rere}{\overline{\mathcal R}}
	\newcommand{\reXY}[3]{(#2 #3)_{#1}}
	\newcommand{\reXM}[2]{(#2 \minus)_{#1}}
	\newcommand{\reMY}[2]{(\minus #2)_{#1}}
	\newcommand{\reMM}[1]{(\minus \minus)_{#1}}
	\newcommand{\reEE}[1]{(! !)_{#1}}
	\newcommand{\reQ}[1]{(?)_{#1}}
	\newcommand{\reE}[1]{(!)_{#1}}
\newenvironment{myabstract}%
{\list{}{\listparindent 1.5em
        \itemindent    \listparindent
        \leftmargin    1cm
        \rightmargin   1cm
        \parsep        0pt}%
    \item\relax}%
{\endlist}
\newenvironment{mycover}%
{\list{}{\listparindent 0pt
        \itemindent    \listparindent
        \leftmargin    1cm
        \rightmargin   1cm
        \parsep        0pt}%
    \raggedright
    \item\relax}%
{\endlist}
\newcommand{\myaff}[1]{\,$\cdot$\, {\small #1}\par\medskip}
\begin{document}

\begin{mycover}
{\huge\bfseries Distributed Quantum Advantage \\ for Local Problems \par}
\bigskip
\bigskip

\ifanon
\textbf{Anonymous authors}
\else

\textbf{Alkida Balliu}
\myaff{Gran Sasso Science Institute, Italy}

\textbf{Sebastian Brandt}
\myaff{CISPA Helmholtz Center for Information Security, Germany}

\textbf{Xavier Coiteux-Roy}
\myaff{Technical University of Munich, Germany \,$\cdot$\, Munich Center for Quantum Science and Technology, Germany}

\textbf{Francesco d'Amore}
\myaff{Bocconi University, Italy \,$\cdot$\, BIDSA, Italy}

\textbf{Massimo Equi}
\myaff{Aalto University, Finland}

\textbf{Fran{\c c}ois Le Gall}
\myaff{Nagoya University, Japan}

\textbf{Henrik Lievonen}
\myaff{Aalto University, Finland}

\textbf{Augusto Modanese}
\myaff{Aalto University, Finland}

\textbf{Dennis Olivetti}
\myaff{Gran Sasso Science Institute, Italy}

\textbf{Marc-Olivier Renou}
\myaff{Inria Paris-Saclay, France \,$\cdot$\, CPHT, CNRS, Ecole Polytechnique, Institut Polytechnique de Paris, France}

\textbf{Jukka Suomela}
\myaff{Aalto University, Finland}

\textbf{Lucas Tendick}
\myaff{Inria Paris-Saclay, France}

\textbf{Isadora Veeren}
\myaff{Inria Paris-Saclay, France \,$\cdot$\, CPHT, CNRS, Ecole Polytechnique, Institut Polytechnique de Paris, France}
\fi %

\bigskip
\end{mycover}

\begin{myabstract}
\noindent\textbf{Abstract.}
We present the first \emph{local} problem that shows a super-constant separation between the classical randomized LOCAL model of distributed computing and its quantum counterpart. By prior work, such a separation was known only for an artificial graph problem with an inherently \emph{global} definition [Le Gall et al.\ 2019].

We present a problem that we call \emph{iterated GHZ}, which is defined using only local constraints. Formally, it is a family of \emph{locally checkable labeling} problems [Naor and Stockmeyer 1995]; in particular, solutions can be verified with a constant-round distributed algorithm.

We show that in graphs of maximum degree $\Delta$, any classical (deterministic or randomized) LOCAL model algorithm will require $\Omega(\Delta)$ rounds to solve the iterated GHZ problem, while the problem can be solved in $1$ round in quantum-LOCAL.

We use the \emph{round elimination} technique to prove that the iterated GHZ problem requires $\Omega(\Delta)$ rounds for classical algorithms. This is the first work that shows that round elimination is indeed able to separate the two models, and this also demonstrates that round elimination \emph{cannot} be used to prove lower bounds for quantum-LOCAL. To apply round elimination, we introduce a new technique that allows us to discover appropriate \emph{problem relaxations} in a mechanical way; it turns out that this new technique extends beyond the scope of the iterated GHZ problem and can be used to e.g.\ reproduce prior results on maximal matchings [FOCS 2019, PODC 2020] in a systematic manner.
\end{myabstract}

\thispagestyle{empty}
\setcounter{page}{0}
\clearpage

\section{Introduction}

The key question we study in this work is this: \textbf{which problems admit a distributed quantum advantage}? That is, if we have a computer network, and we replace classical computers with quantum computers and classical communication channels with quantum communication channels, which distributed tasks can be now solved faster?

\subsection{Setting: LOCAL vs.\ quantum-LOCAL}

There is a range of prior work that has studied such questions in \emph{bandwidth-limited} networks, e.g., comparing the relative power of the classical CONGEST and CONGESTED CLIQUE models with their quantum counterparts \cite{ApeldoornV22,Censor-HillelFG22,IzumiG19,IzumiGM20,LeGallM18,MagniezN22,Wang2022,WuY22,FraigniaudLMT24}. In essence, the question is: Can a quantum network which can send $b$ qubits (quantum bits) per time unit over each communication link outperform a classical network which can send the same number of bits per time unit? It turns out that in many cases we can indeed demonstrate quantum advantage in this sense, see, e.g., \cite{LeGallM18,Censor-HillelFG22,IzumiG19,IzumiGM20}.

However, quantum advantage in settings in which \emph{large distances} are the key limitation (instead of bandwidth) are much less understood. We emphasize that large distances (i.e., network latency, or time for information to propagate from one node to another) are a fundamental physical limitation, and not merely an engineering challenge. One can at least in principle increase bandwidth by installing multiple parallel communication channels (say, a larger number of fiber-optic links between a pair of nodes), and it might be significantly harder for quantum communication channels, while there is no way to increase the speed of information propagation beyond the speed of light in a spatially distributed system. Can quantum computation and communication still help? This is our focus in this work.

We study the classical \textbf{LOCAL model} of distributed computing, and compare it with its quantum counterpart. In the LOCAL model computation proceeds in synchronous rounds, and in each round each computer can send an \emph{arbitrarily large message} to each neighbor, receive a message from each neighbor, and do \emph{unbounded local computation}. Initially all nodes are only aware of their own local input, and eventually all nodes must stop and announce their own local output (e.g.\ in the graph coloring problem: what is its own color). We are interested in \textbf{how many communication rounds} are needed (in the worst case) until all nodes stop.

There is extensive line of research since the early 1990s on understanding the classical LOCAL model and the round complexity of various graph problems in this model, see, e.g., \cite{chang16exponential,ChangP19,Balliu2019,Balliu0KO23,Brandt2019,hideandseek,Dahal0LPS23,balliurules,ghaffari16improved,GhaffariKuhn20,GhaffariHKM18,Linial1992,NaorS95,binary,Brandt2016,BrandtHKLOPRSU17,Balliu0COSS22,AkbariELMSS23}. The main focus has been on problems defined with local constraints (e.g.\ graph coloring), and especially in the recent years one of the main research themes has been understanding the distributed complexity landscape of all such problems.

But do any of these problems strictly benefit from quantum computation and communication? In the quantum-LOCAL model, the local state of a node consists of an arbitrary number of qubits, each message consists of an arbitrary number of qubits, and after each communication round, each node can do arbitrary quantum operations and measurements with the qubits that it holds. Does this help in comparison with the classical randomized LOCAL model, where the nodes have only access to a classical source of randomness?

\subsection{Prior work vs.\ our main contribution}

By prior work \cite{LeGallNR19}, it is known that there exists a (very artificial) graph problem with the following properties: in $n$-node graphs, any classical (deterministic or randomized) LOCAL algorithm will require $\Omega(n)$ rounds to solve the problem, while there is an $O(1)$-round quantum-LOCAL algorithm for solving it.

However, the problem from prior work is not only artificial, but it is also very different from the problems usually studied in the context of the LOCAL model. In particular, it has got an inherently \emph{global} definition: to decide if a given output is valid, one has to inspect the joint outputs of nodes in distant parts of the networks. Such a problem is very different from problems typically studied in the context of the LOCAL model. Usually we are interested in problems similar to graph coloring, which can be defined with \emph{local} constraints (e.g., a vertex coloring is valid if and only if each node has an output label different from the labels of its neighbors). Put otherwise, these are problems where the solution may be hard to find for a distributed algorithm, but it is at least easy to verify (cf.\ class FNP in the centralized setting).

The idea of problems defined with local constraints was formalized in the seminal paper by Naor and Stockmeyer in 1995 \cite{NaorS95}. They defined the family of \emph{locally checkable labeling} problems, or in brief LCLs. These are problems that can be defined by specifying a finite set of valid labeled neighborhoods. For example, proper vertex coloring with 11 colors in graphs of degree at most 10 is a concrete example of a graph problem that is formally an LCL in the strict sense (very often we study \emph{families} of LCLs parameterized by the maximum degree, and e.g.\ $(\Delta+1)$-coloring in graphs of degree at most $\Delta$ is a natural example of such a problem family). Most of the problems of interest in the context of the LOCAL model are (families of) LCL problems and there is a huge body of work that has established all kinds of structural results related to the complexity of LCL problems (see, e.g., \cite{chang16exponential, ChangP19, BrandtHKLOPRSU17, Balliu0COSS22, binary, BalliuHKLOS18, lcl-decidability-paths, how-much-randomness-helps, BalliuCMOS21, Balliu0OSST21, Balliu0FLMOU22, Balliu0KOS23, Balliu0KOS24, brandt21trees, BalliuHOS19, BBOS18almostGlobal, chang20, lcls_on_paths_and_cycles, Balliu0OSST21}). Now the key question is:
\begin{framed}
  \noindent
  Does quantum-LOCAL help in comparison with classical LOCAL for some LCL problem?
\end{framed}

Before this work, the answer was: we do not know, beyond constant advantage. There are several \emph{negative} results that exclude any quantum advantage for a specific problem, or at least put limits on it (see, e.g., \cite{GavoilleKM09,arfaoui2014,Coiteux-RoyDGKG24}). It is also easy to engineer a problem that exhibits a \emph{constant} quantum advantage, e.g., a problem that is solvable in $1$ round with quantum-LOCAL and requires $2$ rounds with classical LOCAL. One can also amplify this if we cheat by defining a family of LCL in which the checking radius (i.e., how local the constraints are, or what is the maximum radius of a valid labeled neighborhood in the definition of the LCL) increases as a function of $\Delta$. But beyond this, \emph{nothing positive was known}; the global graph problem from \cite{LeGallNR19} was the only known example of any asymptotic quantum advantage in the LOCAL model. Our main result is this:
\begin{framed}
\noindent
We present a family of LCL problems $\mathcal{P}_\Delta$ called \textbf{iterated GHZ}, parameterized by the maximum degree $\Delta$, with these three properties:
\begin{enumerate}[noitemsep]
  \item the checking radius of $\mathcal{P}_\Delta$ is $O(1)$, independently of $\Delta$,
  \item in quantum-LOCAL, we can solve $\mathcal{P}_\Delta$ in $O(1)$ rounds, independently of $\Delta$,
  \item in classical (deterministic or randomized) LOCAL, any algorithm that solves $\mathcal{P}_\Delta$ requires $\Omega(\Delta)$ rounds.
\end{enumerate}
\end{framed}
\noindent
This gives the \textbf{first local problem} (in any reasonable sense) that demonstrates \textbf{distributed quantum advantage} in the LOCAL model, also answering the open questions from \cite{akbari2024,Coiteux-RoyDGKG24} positively. We emphasize that the problem is engineered so that the feasibility of solutions is \emph{easy to verify} also for classical distributed algorithms, but it is easy to compute only for distributed quantum algorithms.

\subsection{Key techniques and new ideas}
\label{ssec:intro-iterated-def}

As a gentle introduction to our work, let us first introduce a toy version of our problem, called \textbf{iterated CHSH}. While this problem does not quite give us distributed quantum advantage, it already contains all the key ingredients; and the final step that we need to take from iterated CHSH to iterated GHZ is relatively straightforward (albeit technical).

\paragraph{CHSH game.}

The CHSH game~\cite{CHSH, Brunner_Review} is the following two-player game: Alice gets an input bit $x$ and Bob gets an input bit $y$. Alice has to produce an output bit $a$ and Bob has to produce an output bit $b$. The bits have to satisfy $xy = a \oplus b$, that is, if both inputs are $1$, the players must produce different outputs, while otherwise the players must produce the same output.
This game has two important properties:
\begin{enumerate}
	\item It is ``classically hard'' in the following sense: Alice and Bob cannot win this game without any communication (for any classical strategy, Alice and Bob will fail on at least one combination $(x,y)$ of inputs, and thus win with probability at most $75\%$ on uniformly random inputs).
  \item It is ``non-signaling'' in the following sense: if a friendly oracle picks a uniformly random valid output, and informs Alice about her output $a$ and informs Bob about his output $b$, then Alice will learn absolutely nothing about Bob's input and vice versa.
\end{enumerate}
The non-signaling property implies that there \emph{might} be a quantum strategy that enables Alice and Bob to win this game (with a nontrivial probability); at least the existence of such a strategy would not violate causality. And indeed this is the case: if Alice and Bob share a single entangled qubit pair, they can win the game with up to $(1+\sqrt{2})/2\approx 85\%$ on uniformly random inputs, which is better than the classical strategy \cite{Tsirelson_bound}.

Now 85\% probability is not going to suffice for us, and this is exactly why this is just a toy version of the real problem. But let us put this aside for now. It may be useful to imagine for a while that we are living in a \emph{super-quantum} world where the laws of physics actually allowed Alice and Bob to win the CHSH game with probability 1 if they hold some magic pre-shared entangled super-qubits. We will eventually get back from the super-quantum world to the usual quantum world. (Slightly more formally, an entangled pair super-qubit is going to be exactly what is commonly called in the literature a \emph{PR-box} first introduced by Popescu and Rohrlich~\cite{PRBox}, but we do not assume familiarity with this concept.)

\paragraph{Iterated CHSH.}

Now we are ready to explain what the \emph{iterated CHSH} problem looks like. Fix some maximum degree $\Delta$. We can without loss of generality assume that our input graph is $\Delta$-regular and it is $\Delta$-edge-colored, that is, for each $c = 1, 2, \dotsc, \Delta$, each node is incident to exactly one edge that is labeled with color $c$. (We note that we do not really need a promise here: we can simply define the problem so that all nodes that violate this property are unconstrained, and hence hard instances are exactly those graphs in which all nodes satisfy this locally checkable property, and we do not have any unconstrained nodes.)

We define a graph problem in which each node $v$ has to output a sequence of bits $v_0, v_1, \dotsc, v_\Delta$ that satisfies these constraints:
\begin{itemize}[noitemsep]
  \item For each node $v$ we have $v_0 = 1$.
  \item If the color of the $e = \{u,v\}$ is $c$, then $u_{c-1} v_{c-1} = u_c \oplus v_c$.
\end{itemize}
This can be interpreted as follows:
\begin{itemize}[noitemsep]
  \item Each edge $\{u,v\}$ of color $1$ represents an instance of the CHSH game where the inputs are hardcoded to $u_0 = 1$ and $v_0 = 1$, and the outputs are $u_1$ and $v_1$.
  \item Each edge $\{u,v\}$ of color $c > 1$ represents an instance of the CHSH game where the inputs are $u_{c-1}$ and $v_{c-1}$ and the outputs are $u_c$ and $v_c$.
\end{itemize}
That is, the outputs of the CHSH games that we play on the edges of color $1$ form the inputs of the CHSH games that we play on the edges of color $2$, and the outputs of those games form the inputs of the CHSH games that we play on the edges of color $3$, etc., all the way up to the games that we play on the edges of color $\Delta$.
We will now study some properties of the iterated CHSH problem.

\paragraph{Iterated CHSH is a local problem.}

First, we observe that the iterated CHSH problem is a locally verifiable problem. Indeed, each edge represents the constraints of the CHSH game, and if we are given some solution, it is sufficient that we check that for each edge the relevant bits satisfy these constraints; hence the validity of a solution is locally verifiable.

Furthermore, for each fixed maximum degree $\Delta$ we satisfy the usual technical requirements of the definition of LCL problems in \cite{NaorS95}, namely, the set of input labels ($\Delta$ edge colors) and the set of output labels ($2^{\Delta}$ bit strings) are finite.
In summary, we have a family of LCL problems parameterized by $\Delta$, and the checking radius does not depend on $\Delta$.

\paragraph{Iterated CHSH in super-quantum-LOCAL.}

Now if we were in the hypothetical super-quantum world in which pre-shared super-qubits enabled Alice and Bob to win the CHSH game with probability 1, we could also solve the iterated CHSH problem in only \textbf{one round of super-quantum communication}, as follows.

First, each node locally prepares $\Delta$ pairs of entangled super-qubits. Then we use one round of communication so that each node exchanges one entangled super-qubit with each of its neighbors (it is critical here that we can do this in parallel with each neighbor). Additionally, each node can share e.g.\ its own unique identifier that we can later use for symmetry-breaking purposes. This way for each edge we have prepared two pairs of entangled super-qubits, and with the help of unique identifiers we can consistently agree on which of these pairs to use. In summary, for each node $v$ there are super-qubits $v^1, v^2, \dotsc, v^\Delta$ so that if $\{u,v\}$ is an edge of color $v$, super-qubits $u^c$ and $v^c$ form a pair of entangled super-qubits powerful enough to win the CHSH game. 

Then all the rest is local computation. Consider a node $v$. It initializes $v_0 = 1$, and then iteratively repeats the following step for each $c = 1, 2, \dotsc, \Delta$: Let $e = \{u,v\}$ be the edge of color $c$ incident to $v$; in the first round $v$ already learned the unique identifier of $u$. We check the unique identifiers to decide which of the nodes takes the role of Alice and which takes the role of Bob; let us say that $v$ becomes Alice. Now $v$ knows that it shares an entangled pair of super-qubits $(v^c, u^c)$ with node $u$, who plays the role of Bob. Alice knows its own input $v_{c-1}$, and Bob knows its own input $u_{c-1}$. Node $v$, Alice, uses its own super-qubit $v^c$ appropriately to figure out its output $v_c$, and without any communication node $u$ will do the same using its own entangled super-qubit $u^c$ to find~$u_c$. 

This way just 1 round of communication, followed by $\Delta$ steps of purely local super-quantum operations is enough for all nodes to produce valid solutions.
Hence we have constructed a family of LCL problems that is easy in super-quantum-LOCAL. This is not particularly surprising, as the whole intuition here is that the problem is \emph{by construction} easy in the super-quantum world. All the heavy lifting goes to showing that it is hard in the classical world.

\paragraph{Iterated CHSH in classical LOCAL.}

To gain more intuition on the iterated CHSH problem, let us first make the following observation: in graphs of maximum degree $\Delta$, we can solve this problem in $\Theta(\Delta)$ rounds in the classical deterministic LOCAL model.

The algorithm is very simple: Each node $v$ first initializes $v_0 = 1$. Then for each $c = 1, 2, \dotsc, \Delta$, we consider all edges $e = \{u,v\}$ of color $c$. For each such edge, $v$ will send its own identifier and the value of $v_{c-1}$ to $u$, and conversely $u$ will send its own identifier and $u_{c-1}$ to $v$. This way $u$ and $v$ both know the values of $u_{c-1}$ and $v_{c-1}$, plus their own identifiers. Now assume that the identifier of $u$ is smaller than $v$; then the nodes will find the lexicographically smallest pair $(u_c, v_c)$ such that this is a valid output for the CHSH game assuming that node $u$ is Alice with input $u_{c-1}$ and node $v$ is Bob with input $v_{c-1}$. In particular, node $v$ learns the value of $v_c$ and node $u$ learns the value of $u_c$ so that these satisfy the constraints for edge $\{u,v\}$.

In $\Delta$ rounds this will result in a valid solution. Essentially, each node plays CHSH games in a sequential order, starting with the game on the edge of color $1$, and eventually reaching the game on the edge of color $\Delta$.

Incidentally, it is good to note that both this $\Theta(\Delta)$-round classical algorithm and the $O(1)$-round super-quantum algorithm send $\Theta(\Delta)$ messages---each node communicates with each neighbor exactly once. The key difference is that in the super-quantum algorithm, the messages that we send (entangled super-qubits) do not depend on each other, and we can send all of them in the first round, simultaneously in parallel, while in the classical algorithm, the messages that we send in round $c$ depend on the messages sent in round $c-1$, and hence they cannot be parallelized.

Now what we would like to do is to show that \textbf{the sequential strategy is the best possible classical algorithm}, i.e., there is no $o(\Delta)$-round classical algorithm that in some clever way circumvents the need for coordinating with each neighbor in a sequential manner. But let us first see how we get from the imaginary super-quantum world to the familiar quantum world, and then discuss how to show the analogous statement for the real problem.

\paragraph{From super-quantum to quantum: replacing CHSH with GHZ.}

So far we have discussed the toy problem, iterated CHSH, that (at best) can prove a separation between the classical world and an imaginary super-quantum world in which the CHSH game can be solved with probability~1. Let us now see how to apply the same idea to separate the classical world and the real quantum world. To do that, we need a suitable non-signaling game with all these properties:
\begin{itemize}[noitemsep]
  \item It is impossible to solve without any communication using classical randomness.
  \item It can be solved with probability 1 with the help of pre-shared entangled qubits.
  \item The input and output values come from some finite set, so that we can turn this into an LCL problem.
  \item The set of possible input values has to be the set of possible output values, so that we can use the output of one game as the input of another game.
  \item The game does not require any promises on allowed input combinations, so that the game is always well-defined.
\end{itemize}
The first two conditions define games called quantum pseudo-telepathy~\cite{brassard2005quantum}. One good candidate to fulfill also the last three conditions is the well-known GHZ game~\cite{GHZineq,mermin1990quantum}. This is a three-player game, where Alice, Bob, and Charlie each get one input bit, $x$, $y$, and $z$, and each of them has to produce one output bit, $a$, $b$, and $c$. The usual phrasing of the game is as follows: we promise that $x \oplus y \oplus z = 0$, and the requirement is that $a \oplus b \oplus c = 0$ if and only if $x = y = z = 0$. This phrasing of the game comes with a promise, but we can eliminate it by specifying that any output is valid if $x \oplus y \oplus z = 1$.

The good news is that this game is indeed solvable with probability 1 given a suitable entangled quantum state~\cite{GHZineq}. The bad news is that it is very relaxed: it is not clear if this game can be used to construct any problem that is classically hard.
We now follow the blueprint of the iterated CHSH problem; in essence we replace each instance of the CHSH game with an instance of the GHZ game. We obtain a local problem that we call \emph{iterated GHZ}. We prove that this problem is classically hard.

In the CHSH problem, each edge represented a $2$-player game, and we need to embed $3$-player games. In essence, we need to switch from graphs to $3$-uniform hypergraphs (where each hyperedge represents a $3$-player game), and then we will use a bipartite graph to represent the hypergraph: we will have \emph{white nodes} of degree $\Delta$ that represent players and \emph{black nodes} of degree $3$ that represent games. Black nodes are \emph{colored} so that for each $c = 1, 2, \dotsc, \Delta$ and for each white node there is exactly one black neighbor of color $c$.

Again the intuition is that we play GHZ games in a sequential order, so that for each white node $v$, the output of the GHZ game represented by the black neighbor of color $c-1$ will be the input of the GHZ game represented by the black neighbor of color $c$.

This time extra care is needed with bootstrapping. In the CHSH setting, we simply assumed that the first game has all inputs hardcoded to $1$. If we did the same here, the iterated GHZ problem would become trivial to solve (note that all-$1$ input implies that all-$1$ output is valid, which means that all nodes could blindly output a vector of $1$ bits). Similarly, all-$0$ input leads to a trivial problem. To force a nontrivial solution, we make the first game special: black nodes of color $1$ represent the task in which we have no inputs, and exactly one of the players Alice, Bob, and Charlie has to output $1$ and the other two must output $0$. (We note that this is in some sense analogous to a CHSH game in which the inputs are hardcoded to $1$, as there exactly one of Alice and Bob has to output $1$ and the other one has to output $0$.)

\paragraph{Iterated GHZ.}

Let us now summarize the definition of the iterated GHZ problem. We have a bipartite graph in which white nodes have degree $\Delta$, and black nodes have degree $3$. For each $c = 1, 2, \dotsc, \Delta$ and for each white node there is exactly one black neighbor of color $c$.

Each white node $v$ has to output a bit vector $v_1, v_2, \dotsc, v_\Delta$, so that the following holds. Assume that $s$, $t$, and $u$ are the three white neighbors of a black node $x$, and let $c$ be the color of $x$. We require the following:
\begin{itemize}
  \item If $c = 1$, then exactly one of $s_1$, $t_1$, and $u_1$ has to be $1$.
  \item If $c > 1$, then $s_c$, $t_c$, and $u_c$ have to be a valid output for the GHZ game with inputs $s_{c-1}$, $t_{c-1}$, and $u_{c-1}$:
  \begin{alignat*}{2}
      s_c \oplus t_c \oplus u_c &= 0 &\text{ if } s_{c-1} + t_{c-1} + u_{c-1} &= 0, \\
      s_c \oplus t_c \oplus u_c &= 1 &\text{ if } s_{c-1} + t_{c-1} + u_{c-1} &= 2, \\
      s_c \oplus t_c \oplus u_c &\in \{0, 1\} &\text{ if } s_{c-1} + t_{c-1} + u_{c-1} &\in \{1,3\}.
  \end{alignat*}
\end{itemize}

This problem can be solved with a one-round quantum strategy, as follows. Black nodes of color $c = 1$ inform their first white neighbor $v$ that it should set $v_1 = 1$, and inform other neighbors to set $v_1 = 0$. Black nodes of color $c > 1$ prepare a suitable three-qubit GHZ state that enables Alice, Bob, and Charlie to win the GHZ game with probability $1$. Additionally, black nodes assign the roles (Alice, Bob, Charlie) to their white neighbors, and also send their own color $c$, so that white nodes are aware of which games they are supposed to play, in which roles and in which order. After this one communication step, everything is ready, each white node $v$ already knows $v_1$, it holds $\Delta-1$ suitably entangled qubits, and it can locally play its own part in $\Delta-1$ GHZ games and this way discover valid outputs $v_2, v_3, \dotsc, v_\Delta$.

This problem also admits a $\Theta(\Delta)$-round classical solution. Black nodes of color $1$ start and choose which of their neighbors will set $v_1 = 1$; all other white nodes merely indicate their colors. Then white nodes send $v_1$ to black neighbors of color $2$, black nodes of color $2$ can find a valid solution to the GHZ game that they represent, and they reply with $v_2$, etc. After $\Delta$ iterations (and approximately $2\Delta$ communication rounds) all white nodes know their final output.

However, as we discussed earlier, the GHZ game itself is very relaxed, and the details of the definition of the iterated GHZ game are subtle; minor changes in the way we define the first game lead to trivial problems. It is not at all clear intuitively if this problem is hard for classical algorithms, or if there is some clever way to cheat. This is what we show in this work: no classical $o(\Delta)$-round algorithm exist for this problem.

\paragraph{Proving the classical lower bound.}

Now we are given a specific problem with a fairly complicated specification, namely iterated GHZ. There is a simple $\Theta(\Delta)$-round classical algorithm, and we would like to show that this is also (asymptotically) optimal.

Here it is also critical to note that we are heavily limited in the availability of possible proof techniques that one could apply here. In distributed graph algorithms, one commonly uses, e.g., propagation arguments (fixing the solution in one place implies something in a distant part of the graph) or arguments based on indistinguishability (two neighborhoods look locally identical, so any distributed algorithm has to produce the same outputs). However, many of these techniques are also (directly or indirectly) applicable in the quantum-LOCAL model \cite{GavoilleKM09,Coiteux-RoyDGKG24}. If the proof technique would also imply that the problem is hard not only in classical LOCAL but also in quantum-LOCAL, it cannot work here, as we already know that the problem is easy in quantum-LOCAL.

There are very few lower-bound proof techniques that even in principle \emph{could} separate classical LOCAL and quantum-LOCAL. One rare example is the \emph{round elimination} technique \cite{Brandt2019}. It is promising in the informal sense that at least known proofs that use round elimination do not generalize to quantum-LOCAL, as a direct generalization would violate the no-cloning theorem.

In this work we show that \textbf{round elimination is indeed able to separate classical LOCAL and quantum-LOCAL}: we are able to use round elimination to show that iterated GHZ cannot be solved in $o(\Delta)$ rounds in classical (deterministic or randomized) LOCAL, while the problem can be solved in $O(1)$ rounds in quantum-LOCAL. This is both good news and bad news: the good news is that we indeed have a proof technique that can demonstrate distributed quantum advantage, but the bad news is that none of previously-known lower bounds that were proved with round elimination can be directly generalized to quantum-LOCAL.

\paragraph{Round elimination: basic idea.}

Round elimination is a proof technique that is specifically applicable to the study of LCL problems in the classical LOCAL model. It is commonly used to prove lower bounds (see, e.g., \cite{BalliuBHORS21, hideandseek, BBKOmis, Balliu0KO23, BalliuBBO24, balliu2024towards, binary, balliurules, Brandt2016, so-made-simple, Linial1992}), but it can also be used to e.g.\ synthesize efficient algorithms (see, e.g., \cite{balliurules}). It is closely related to the fanout inflation technique~\cite{Wolfe_Inflation} developed to study network nonlocality~\cite{NetworkNonlocality_Review}.

In essence, round elimination is a function $\re$ that answers the following question:
\begin{quote}
Assume that an LCL problem $\Pi$ admits a (black box) classical LOCAL algorithm $A$ working in $T$ rounds. What is the most general problem $\Pi' = \re(\Pi)$ which can be solved in $T-1$ rounds using $A$?
\end{quote}
Round elimination precisely constructs that new problem $\Pi'$. Informally, we construct a $(T-1)$-round algorithm $A'$ that works as follows: each node outputs the set of all values that $A$ might output, given what we see in the radius-$(T-1)$ neighborhood (by considering all possible extensions that we might see at distance exactly $T$ and simulating $A$ for each of them). Now function $\re$ is defined so that assuming that $A$ indeed solved $\Pi$, algorithm $A'$ will solve $\re(\Pi)$, and given any solution to $\re(\Pi)$, we can solve the original problem $\Pi$ with only one extra round.

We note that the definition of $A'$ crucially needs to \emph{clone} the information that we have in the $(T-1)$-radius neighborhood. Hence it is natural to expect that round elimination does not extend to quantum algorithms. Yet, other properties of quantum information could (at least in principle) imply that if a quantum LOCAL algorithm can solve $\Pi$ in $T$ rounds, then there is a quantum algorithm that can solve $\Pi' = \re(\Pi)$ in $T-1$ rounds. Our main result implies that this cannot be the case.

\paragraph{Round elimination: formalism.}

To make use of round elimination, we need to have an LCL problem that is written in a specific formalism:
\begin{itemize}[noitemsep]
  \item We have a bipartite graph, where the two sets of nodes are called \emph{active} and \emph{passive}.
  \item Each active node has to label its incident edges with labels from some finite alphabet $\Sigma$.
  \item We have an \emph{active constraint}, which is a collection of multisets, where each multiset is one possible valid assignment of edge labels incident to an active node.
  \item Similarly, we have a \emph{passive constraint}, which specifies the collection of multisets that are valid on the edges incident to passive nodes.
\end{itemize}
Now if we have any problem $\Pi$ written in this formalism, we can apply in a mechanical way round elimination to obtain another problem $\Pi' = \re(\Pi)$. We emphasize that this part is just syntactic manipulation of the problem description, without any references to any specific model of computation or any distributed algorithm. Yet as explained above, if the round complexity of $\Pi$ is $T$ rounds in the classical LOCAL model (for any sufficiently small $T$), then the round complexity of $\Pi'$ will be exactly $T-1$ rounds.

\paragraph{Round elimination: basic strategy.}

To apply round elimination in our task, we could try to construct a sequence of problems $\Pi_0, \Pi_1, \dotsc, \Pi_T$ such that the following holds:
\begin{itemize}[noitemsep]
  \item $\Pi_0$ is the original problem of interest, iterated GHZ,
  \item $\Pi_{i+1} = \re(\Pi_{i})$,
  \item we can show that $\Pi_T$ is not trivial, i.e., it cannot be solved in $0$ rounds.
\end{itemize}
Now if we could solve $\Pi_0$ in $T$ rounds, it would imply that $\Pi_T$ can be solved in $0$ rounds, but as this is not the case, $\Pi_0$ has to require at least $T+1$ rounds. If we can do this for, e.g., $T = \Delta$, we would have the result that we want: iterated GHZ cannot be solved in $o(\Delta)$ rounds.

However, what typically happens is that $\re(\Pi)$ is \emph{much} more complicated than $\Pi$, in the sense that the size of the alphabet of labels used to describe $\re(\Pi)$ may increase \emph{exponentially} compared to the size of the alphabet used to describe $\Pi$, and the number of valid multisets in the active and passive constraints may increase accordingly. Even if the original problem $\Pi_0$ has got a simple description, there is typically no simple way to describe $\Pi_i$, other than saying that it can be obtained from $\Pi_0$ with $i$ applications of $\re$. In particular, if we do not have a ``closed-form'' description of $\Pi_T$, how do we argue that it is not trivial?
Hence, what typically happens is that we augment this mechanical procedure with some well-chosen \emph{relaxations}.

\paragraph{Round elimination: relaxations.}

We say that $\Pi'$ is a relaxation of $\Pi$ if given a feasible solution to $\Pi$ we can construct (in a distributed way, without any communication) a feasible solution to $\Pi'$.
A simple example of a relaxation is what can be obtained by \emph{merging labels}. Let $a, b \in \Sigma$ be two distinct labels that we can use in $\Pi$. To define $\Pi'$, we relax all active and passive constraints so that whenever $a$ can be used, we can also use $b$, and vice versa; now trivially $\Pi'$ is at least as easy as $\Pi$. Then we can simplify the description of $\Pi'$ by completely removing label $b$, as it is redundant.

The revised process now looks as follows. We construct a sequence of problems $\Pi_0, \Pi_1, \dotsc, \Pi_T$ such that the following holds:
\begin{itemize}[noitemsep]
  \item $\Pi_0$ is some relaxation of the original problem of interest, iterated GHZ,
  \item $\Pi_{i+1}$ is some relaxation of $\re(\Pi_{i})$,
  \item we can show that $\Pi_T$ is not trivial, i.e., it cannot be solved in $0$ rounds.
\end{itemize}
Now if we could solve the original problem in $T$ rounds, we could also solve its relaxation $\Pi_0$ in $T$ rounds, and hence we could solve $\re(\Pi_0)$ in $T-1$ rounds, and hence we could solve its relaxation $\Pi_1$ in $T-1$ rounds, etc., and eventually we could solve $\Pi_T$ in $0$ rounds, but this was not possible. If we can do this for, e.g., $T = \Delta$, we would have the result that we want: iterated GHZ cannot be solved in $o(\Delta)$ rounds.
In essence, our task is to \textbf{guess a good sequence of problems} such that all of the above holds. How to do that?

\paragraph{Comparison with bipartite maximal matching.}

At this point it is useful to compare the situation with the \emph{bipartite maximal matching} problem \cite{Balliu2019,BalliuBHORS21}. There is a very simple classical LOCAL algorithm that solves the problem in $O(\Delta)$ rounds, but for decades it was unknown whether this is optimal. The breakthrough result from 2019 \cite{Balliu2019} eventually established this, by constructing a suitable problem sequence $\Pi_0, \Pi_1, \dotsc, \Pi_T$, and showing that it indeed has all the right properties: $\Pi_{i+1}$ is a relaxation of $\re(\Pi_i)$.

Bipartite maximal matching is conceptually a very simple problem; it can be specified in the appropriate round elimination formalism with only $3$ labels, independently of the degree. Hence there is a very simple starting point $\Pi_0$. It is possible to compute $\re(\Pi_0)$, $\re(\re(\Pi_0))$, etc., write down explicitly some of the first problems in this sequence for some small value of $\Delta$, study their structure, make some educated guesses on possible relaxations, and eventually with a sufficient amount of trial and error, human intuition, and computational help one can make a reasonable guess of the problem sequence $\Pi_0, \Pi_1, \dotsc, \Pi_T$, generalize this to an arbitrary $\Delta$, and verify that it indeed has the right properties.

Iterated GHZ seems to be fundamentally different. The problem description is relatively long and complicated, and when we increase $\Delta$, the size of the problem description increases exponentially. Studying the sequence $\re(\Pi_0), \re(\re(\Pi_0)), \dotsc$ is not particularly enlightening to a human being, and even if there are educated guesses of suitable relaxations, they are very hard to verify, even with the help of computers.

In this work we made the following surprising discovery: even though the maximal matching problem and the iterated GHZ problem do not seem to have any relation between them, beyond the fact that both of them admit an $O(\Delta)$-round classical algorithm, we can nevertheless \textbf{apply qualitatively similar relaxations in both maximal matching and iterated GHZ} to construct a suitable problem sequence! This technique is likely to generalize to a broad range of other problems.

\paragraph{Finding the right relaxations.}

Now we come to the mysterious part: when we look at the problems at the \emph{right abstraction level}, we can prove a lower bound for both maximal matching and iterated GHZ using the ``same'' relaxations, in a sense, even though the problems, their alphabets, and their constraints do not seem to have anything in common.

\emph{We do not know why this technique works, and we do not know how far it generalizes.} But we do know that it seems to be applicable to many previously-studied variants of the maximal matching problem, and it seems to be applicable to both iterated CHSH and iterated GHZ. The reader does not need to believe in this technique or trust that it is correct---this was merely a heuristic rule that enabled us to guess the right problem sequence $\Pi_0, \Pi_1, \dotsc, \Pi_T$, and once we have the sequence, we can prove that it indeed satisfies all the right properties. But as the technique is apparently applicable to two very different problem families, it may find other surprising uses, and hence we outline it here.

The following description is primarily intended for a reader who is already experienced in the use of round elimination; we refer to \cite{Brandt2019, roundeliminator,hs21} for additional background. But for all other readers this hopefully demonstrates that even though the problem sequence that we will present later in this paper may look at first arbitrary and appearing out of thin air, there is a relatively short procedure that can be used to mechanically generate it.

When we compute $\Pi' = \re(\Pi)$, each label in $\Pi'$ represents a \emph{nonempty subset} of labels in $\Pi$; singleton sets represent \emph{old labels} that already existed in $\Pi$, while all other sets are \emph{new labels}. Now we can construct a directed graph $G$ that represents the relative strength of labels in $\Pi'$ from the perspective of passive constraints: we will have an arrow $a \to b$ in the graph if the passive constraints satisfy the property that any configuration remains valid if you replace any $a$ with $b$ (of course this might violate active constraints). The transitive reduction of graph $G$ is called the \emph{diagram} for $\Pi'$ in this context.
Equipped with these definitions, we then find any pair of labels $a$ and $b$ that satisfy all these properties:
\begin{itemize}[noitemsep]
  \item $a$ is a new label,
  \item $a$ has a nonzero indegree in the diagram,
  \item $b$ is the only successor of $a$ in the diagram,
  \item $b$ has outdegree zero or $b$ is a new label.
\end{itemize}
If such a pair exists, we merge the labels $a$ and $b$ to obtain a relaxation of $\Pi'$. We then update the diagram and repeat the procedure until no such pair of labels $a$ and $b$ exists.

It turns out that following exactly this procedure will be able to reproduce the results of \cite{Balliu2019,trulytight}, but it will also let us find exactly the right relaxations that can be used to show that the iterated GHZ problem cannot be solved in $o(\Delta)$ rounds with classical LOCAL algorithms. As a final remark we note that odd and even steps of the round elimination sequence are fundamentally different and here we applied this rule in even steps.

\subsection{Implications and discussion}

Our work presents the first local problem (more precisely, a family of LCL problems) that admits a distributed quantum advantage in the LOCAL model of distributed computing. In this section we will discuss some further implications and connections with other areas.

\paragraph{Communication complexity.}

As we discussed earlier, there is much more prior work on the quantum advantage in the CONGEST model than in the LOCAL model. One reason for this is that the CONGEST model is similar in spirit to questions in communication complexity. In essence, if we have a result in communication complexity that shows that $b$ qubits can be used to solve a task that $O(b)$ classical bits cannot solve, one can try to turn that into an analogous result that separates the classical CONGEST model from the quantum-CONGEST model. However, when we compare classical LOCAL and quantum-LOCAL, we cannot use any communication complexity arguments. In essence our goal is to show that \emph{unbounded quantum communication is stronger than unbounded classical communication}. In essence, we show that being able to communicate $b$ qubits is strictly stronger than being able to communicate $2^b$ or even $2^{2^b}$ classical bits!

\paragraph{Circuit complexity.}

The question of distributed quantum advantage for LCLs is also related to establishing an \emph{efficiently-verifiable} quantum advantage in circuit complexity. The quantum advantage obtained in \cite{LeGallNR19} was obtained by ``lifting'' to the distributed setting a quantum advantage for circuits by Bravyi, Gosset and K\"onig \cite{Bravyi+18}, which was inspired by an older work in quantum information \cite{Barrett+PRA07}. Ref.~\cite{Bravyi+18} showed that a variant of the global graph problem from \cite{LeGallNR19} can be solved by a constant-depth quantum circuit but cannot be solved by any small-depth classical circuit. This computational problem, however, could not be checked by a quantum constant-depth circuit, leading to the question whether a similar advantage can be proved for an efficiently-verifiable computational problem. Since this question is still wide open in quantum circuit complexity, such a simple lifting strategy cannot be used to tackle our question. Indeed, our separation uses an approach that does not seem to be applicable to circuit complexity.

\paragraph{Lower-bound proof techniques.}

Our work is the first proof that shows that round elimination is able to separate classical LOCAL and quantum-LOCAL. As a corollary, we now know that round elimination is not applicable in the quantum-LOCAL model: it is possible that there is a round elimination sequence $\Pi_0, \Pi_1, \dotsc, \Pi_T$, but the existence of this sequence has no implications on the round complexity in the quantum-LOCAL model. This also represents a new formal barrier for resolving major open questions related to distributed quantum advantage: it is not possible to prove quantum lower bounds for e.g.\ $3$-coloring cycles or for finding sinkless orientations by constructing a round elimination sequence.

\paragraph{Networks of quantum games.}

Our problem, iterated GHZ, is based on the idea that we build an LCL problem that asks us to play a number of quantum games, so that the games are represented by black nodes and the players are represented by white nodes. Then inside each white node the output of one game is fed as input to another game. We call such a problem a \emph{network of quantum games}. Such a problem can be always solved in $O(1)$ rounds in quantum-LOCAL, as we can create shared entangled qubits in the first round, and then all further steps can be done without any additional communication.

Our work shows that the classical complexity of such an LCL can be as high as $\Omega(\Delta)$. Now the obvious question is whether we can push this further: can we maybe engineer a network of quantum games such that its classical complexity is super-constant as a function of the number of nodes $n$, even if $\Delta$ is constant? For example, could we show a lower bound of $\Omega(n)$ or $\Omega(\log n)$ or maybe $\Omega(\log^* n)$ for such a problem?

It turns out that the answer is \textbf{no}. We generalize the idea so that instead of quantum games, we can have a network of any non-signaling games (regardless of whether they happen to have quantum strategies or not); we call such a problem a \emph{network of non-signaling games}. We show that if $\Pi$ is any LCL problem that is constructed from networks of non-signaling games (even if we connect the games with each other using arbitrarily complicated arithmetic circuits), it can be always solved in $O(1)$ rounds with classical LOCAL algorithms; here the constant depends on the specific LCL problem $\Pi$ (and in particular, if we have a family of LCL problems parameterized by $\Delta$, the constant may depend on $\Delta$). However, the running time will be independent of $n$.

The key reason for this is that any non-signaling game is \emph{completable}: for example, for a two-player game this means that for any Alice's input $x$ there is some Alice's output $a$ such that for any Bob's input $y$ there is still a valid Bob's output $b$, and vice versa.

We can exploit completability to solve any $\Pi$ in a distributed manner as follows. Each white node starts to process its own arithmetic circuit in a topological order. As soon as it encounters a step that involves a game, it sends a message to the black neighbor responsible for that specific game, together with its own input for that game. The black nodes keep track of the inputs they have seen so far, and they always pick \emph{safe} outputs for those players. This way in one pair of rounds all white nodes can learn their own output for the game that appears first in their own circuit, and we can repeat this for each game in a sequential order---thanks to completability, black nodes will be always able to assign valid outputs also for players that join the game late. This way the running time will be proportional to the size of the circuit held by a single white node, and hence for a fixed LCL (with a finite set of possible local circuits) it will be bounded by some constant.

The result may at first seem counterintuitive: A non-signaling game may require the players to \emph{break symmetry} (e.g., in the CHSH game, if the inputs are both $1$, the outputs must be different). Hence it would seem that we could put together non-signaling games in cycle so that the input does not break symmetry, but the output has to break symmetry. Then we could apply the classical result that shows that any symmetry-breaking LCL requires $\Omega(\log^* n)$ rounds \cite{Linial1992}. However, the above algorithm demonstrates that \emph{any network of non-signaling games admits a symmetric solution in those neighborhoods where the input (together with the port numbering) does not break symmetry}.

This represents a new barrier for proving distributed quantum advantage: the idea of networks of non-signaling games can only prove $f(\Delta)$ separations between classical LOCAL and quantum-LOCAL, not $f(n)$ separations.

\subsection{Roadmap}

We introduce some basic definitions in \cref{sec:preliminaries}. In \cref{sec:iterated-ghz} we give the formal definition of the iterated GHZ problem in the right formalism that we need for round elimination, and we show that there is indeed a one-round quantum-LOCAL algorithm for solving this problem. In \cref{sec:lower-bound} we prove our main result: iterated GHZ requires $\Omega(\Delta)$ rounds in classical LOCAL. Finally, in \cref{sec:networks} we formalize the idea of networks of non-signaling games, and show that any fixed LCL in this problem family can be solved in constant time in classical LOCAL.

\section{Preliminaries}\label{sec:preliminaries}

We write $\N_0$ for the set of non-negative integers, $\N_+$ for that of
positive integers, and $[n] = \{ 1,\dots,n \}$ for the set of the first $n \in
\N_0$ positive integers.
For a relation $R \subseteq A \times B$ and $x \in A$, we write $R(x)$ for the
set $\{ y \in B \mid (x,y) \in R \}$.

For $m \in \N_0$ and a set $\Sigma$, we write $\Sigma^m$ for the set of
strings over elements of $\Sigma$ that have length $m$.
We will refer to such strings in two different ways, both as concatenations of
$m$ many elements of $\Sigma$ (e.g., $x = x_1 \cdots x_m$ where $x_i \in
\Sigma$) and as maps $[m] \to \Sigma$.
These two formalisms may be used interchangeably without confusion.
For a permutation $\sigma\colon [m] \to [m]$ and a string $x = x_1 \cdots x_m
\in \Sigma^m$, we let $x^\sigma = x_{\sigma(1)} \cdots x_{\sigma(m)}$.

Let $G = (V,E)$ be a graph and $v \in V$ a node of $G$.
Since $E$ is a relation, recall we may use our notation for relations to write
$E(v)$ as shorthand for the set $\{ e \in E \mid \text{$e$ is incident to $v$}
\}$.
For any node \(v\) of a graph \(G = (V,E)\), we denote by \(\deg_G(v)\) its degree in \(G\).
If the underlying graph is clear from the context, we omit the subscript and write simply \(\deg(v)\).

In this paper we are interested in locally checkable labeling (LCL) problems, first introduced by Naor and Stockmeyer in \cite{NaorS95}. 
We now define this class of problems using a formalism that is slightly different from the original one.

\paragraph{LCLs in the black-white formalism.}
While there are different ways to define LCLs, in this paper we consider the so-called \emph{LCLs in the black-white formalism}. Such a formalism, compared to the definition given by Naor and Stockmeyer in \cite{NaorS95}, is more suitable for applying the round elimination technique.

In this formalism, problems are defined on bipartite graphs.
We refer to nodes in the first set as \emph{white} nodes and to nodes in the
other as \emph{black} nodes.
The connection to a general graph $G$ is that we see white nodes as nodes of $G$
and black nodes as edges of $G$.
In this sense, we are able to handle constraints for nodes and edges in a
uniform manner.

\begin{definition}[LCL problem in the black-white formalism]%
  \label{def:black-white-lcl}
  An \emph{LCL problem in the black-white formalism} is a tuple $\Pi =
  (\Sigma_\iin, \Sigma_\oout, \CC_\white,
  \CC_\black)$ where:
  \begin{itemize}
    \item $\Sigma_\iin$ and $\Sigma_\oout$ are finite sets of \emph{labels}.
    \item \(\CC_\white \) is the \emph{white constraint}, formally a set of
    multisets where each element of a multiset is a pair of labels \((\ell_\iin,
    \ell_\oout) \in \Sigma_\iin \times \Sigma_\oout\).
    \item  \(\CC_\black \) is the \emph{black constraint}, formally a set of
    multisets where each element of a multiset is a pair of labels \((\ell_\iin,
    \ell_\oout) \in \Sigma_\iin \times \Sigma_\oout\).
  \end{itemize}
  Let $G = (V,E)$ be a bipartite graph where $V = W \cup B$ is a partition of
  $V$ that implicitly \enquote{colors} the nodes white and black by placing them
  in $W$ and $B$, respectively.
  In addition, let $i\colon E \to \Sigma_\iin $ be an
  \emph{input}, that is, a labeling of edges such that $i(e) \in \Sigma_\iin$ for each \(e \in E\).
  An assignment $o \colon E \to \Sigma_\oout$ is a \emph{solution} to $\Pi$ if
  \emph{it satisfies both the white and the black constraints}, that is:
  \begin{itemize}
    \item For every white node $w \in W$, if \(e_1, \dots, e_{d_w}\) are the
    edges that are incident to \(w\), then $\{(i(e_1),o(e_1)), \dots,
    (i(e_{d_w}),o(e_{d_w}))\} \in \CC_\white$.
    \item For every black node $b \in B$, if \(e_1, \dots, e_{d_b}\) are the
    edges that are incident to \(b\), then we have $\{(i(e_1),o(e_1)), \dots,
    (i(e_{d_b}),o(e_{d_b}))\} \in \CC_\black$.
  \end{itemize}
\end{definition}

In general graphs, the black-white formalism captures a strict subset of LCLs as
defined originally by \cite{NaorS95}; however, in the LOCAL model of computation (whose definition follows later in this section) the two notions are equivalent in
general trees and high-girth graphs, that are the classes of graphs needed to
apply round-elimination \cite{hideandseek}.

We first introduce the port-numbering model, and on top of it we define the LOCAL model of computation \cite{Linial1992} and the quantum-LOCAL model \cite{GavoilleKM09}.
We stick to the black-white formalism to keep all definitions consistent.

\paragraph{The port-numbering model.}
A port-numbered network is a triple \(N = (V, P, p)\) where \(V\) is the set of nodes, \(P\) is the set of \emph{ports}, and \(p: P \to P\) is a function specifying connections between ports.
Each element \(x\in P\) is a pair \((v,i)\) where \(v \in V\), \(i \in \N_+\).
The connection function \(p\) between ports is an involution, that is, \(p(p(x)) = x\) for all \(x \in P\).
If \((v,i) \in P\), we say that \((v,i)\) is port number \(i\) in node \(v\).
The degree of a node \(v\) in the network \(N\) is \(\deg_N(v)\) is the number of ports in \(v\), that is, \(\deg_N(v) = \abs{\{i \in \N: (v,i) \in P\}}\).
We assume that port numbers are consecutive, i.e., the ports of any node \(v \in V \) are \((v,1), \dots, (v, \deg_N(v))\).
Clearly, a port-numbered network identifies an \emph{underlying graph} \(G = (V,E)\) where, for any two nodes \(u,v \in V\), \(\{u,v\} \in E\) if and only if there exists ports \(x_u,x_v \in P\) such that \(p(x_u) = x_v\).
Here, the degree of a node \(\deg_N(v)\) corresponds to \(\deg_G(v)\).

In the port-numbering model we are given a distributed system consisting of a port-numbered network of \( \abs{V} =
n\) \emph{processors} (or \emph{nodes}) that operates in a sequence of
synchronous rounds.
In each round the processors may perform unbounded computations on their
respective local state variables and subsequently exchange of messages of
arbitrary size along the links given by the underlying input graph.
Nodes identify their neighbors by using ports as defined before, where the port assignment may be done adversarially.
Barring their degree, all nodes are identical and operate according to the
same local computation procedures.
Initially all local state variables have the same value for all processors;
the sole exception is a distinguished local variable \(\mathrm{x}(v)\) of
each processor \(v\) that encodes input data.

Let \(\Sigma_{\text{in}}\) be a set of input labels.
The input of a problem is defined in the form of a labeled, bipartite graph \((G,i)\) where \(G = (V = B \cup W, E)\), \(V\) is the set of
processors (hence it is specified as part of the input), and \(i\colon E
\to \Sigma_{\text{in}}\) is an assignment of an input label \(i(v) \in \Sigma_{\text{in}}\) to
each edge \(e \in E\).
The nodes in \(V\) are properly \(2\)-colored with either color \emph{black} (elements of \(B\))  or \emph{white} (elements of \(W\)).
The output of the algorithm is given in the form of an assignment of local output
labels \(o\colon E \to \Sigma_{\text{out}}\), and the algorithm is assumed to
terminate once all labels \(o(v)\) are definitely fixed.
These outputs are determined and assigned by the white nodes, which are also referred to as \emph{active nodes}, whereas black nodes are called \emph{passive nodes} (they do not output anything).
We assume that nodes and their links are fault-free.
The local computation procedures may be randomized by giving each processor
access to its own set of random variables; in this case, we are in the
\emph{randomized} port-numbering model as opposed to the \emph{deterministic}
port-numbering model.
If the algorithm is randomized, we also require that the failure probability while solving any problem is at most \(1/n\), where \(n\) is the size of the input graph.

The running time of an algorithm is the number of synchronous rounds required by all nodes to produce output labels.
If an algorithm running time is \(T\), we also say that the algorithm has locality \(T\).
Notice that \(T\) can be a function of the size of the input graph. 

We remark that the notion of an (LCL) problem is a graph problem, and does not depend on the specific model of computation we consider (hence, the problem definition cannot depend on, e.g., port numbers).

\paragraph{\boldmath The LOCAL model.}

The LOCAL model was first introduced by \cite{Linial1992}, here adapted to the black-white formalism.
It is just the port-numbering model augmented with an assignment of unique identifiers to nodes.
Let \(c \ge 1\) be a constant, and let \(\Sigma_{\text{in}}\) be a set of input labels.
The nodes of the input graph \(G = (V,E)\) are given as input also unique identifiers specified by an injective function \(\text{id}: V \to [n^c] \).
This assignment might be adversarial and is stored in the local state variable \(\mathrm{x}(v)\), and nodes can exploit these values during their local computation.

The local computation procedures may be randomized by giving each processor
access to its own set of random variables; in this case, we are in the
\emph{randomized} LOCAL model as opposed to the \emph{deterministic}
LOCAL model.
If the algorithm is randomized, we also require that the failure probability while solving any problem is at most \(1/n\), where \(n\) is the size of the input graph.

\paragraph{The quantum-LOCAL model.}
The quantum-LOCAL of computing is similar to the deterministic \local model above, but now with quantum computers and quantum communication links. More precisely, the quantum computers manipulate local states consisting of an unbounded number of qubits with arbitrary unitary transformations, the communication links are quantum communication channels (adjacent nodes can exchange any number of qubits), and the local outputs can be the result of any quantum measurement.

\section{Iterated GHZ problem}
\label{sec:iterated-ghz}

We start by defining the \emph{iterated GHZ} problem.
While in \Cref{sec:preliminaries} we provided the definition of LCLs in their full generality, in this section we consider a restricted setting, i.e., the case in which the constraints do not depend on the provided input, and the considered graph is biregular.
Hence, in this section (and in the following one), an LCL problem $\Pi$ in the black-white formalism is simply a tuple $(\Sigma_\Pi, \nodeconstone_\Pi, \edgeconstone_\Pi)$ where $\Sigma_\Pi$ is a finite set of elements, called (output) \emph{labels}, $\nodeconstone_\Pi$ is a collection of cardinality-$\Delta$ multisets of labels from $\Sigma_\Pi$, and $\edgeconstone_\Pi$ is a collection of cardinality-$3$ multisets of labels from $\Sigma_\Pi$.
We call $\nodeconstone_\Pi$ the \emph{white constraint} of $\Pi$ and $\edgeconstone_\Pi$ the \emph{black constraint} of $\Pi$.
We will use the term \emph{configuration} to refer to any multiset of labels of cardinality $\Delta$, resp.\ $3$; we will call a configuration of cardinality $\Delta$, resp.\ $3$, an \emph{allowed configuration} if it is contained in the white constraint, resp.\ black constraint, of the respectively considered LCL problem.
For simplicity, we will write a configuration $\{ \ell_1, \dots, \ell_k\}$ (which, technically, is a multiset) in the form $\ell_1~\dots~\ell_k$ (where $k \in \{ 3, \Delta \}$).

For degree $\Delta$, the iterated GHZ problem is defined as an LCL $\mathcal{P}_\Delta =
(\Sigma_{\mathcal{P}_\Delta},\nodeconst_{\mathcal{P}_\Delta},\edgeconst_{\mathcal{P}_\Delta})$
in the black-white formalism (in the aforementioned restricted setting).
While the constraints of this problem do not depend on the provided input, we will consider lower and upper bounds for this problem in the setting in which black nodes $B$ are provided with a labeling $c : B \rightarrow \{1,\ldots,\Delta\}$ satisfying that, if two black nodes $u$ and $v$ are incident to the same white node, then $c(u) \neq c(v)$. In the following, this input will simply be called $\Delta$-edge coloring (where the name comes from the fact that we could think of a $(\Delta,3)$-biregular graph as a hypergraph of degree $\Delta$ and rank $3$, and we consider a coloring of the hyperedges).

\paragraph{Labels.}
The label set is defined as:
\begin{align*}
	\Sigma_{\mathcal{P}_\Delta} = 
	~ & \{\reMY{1}{y} \mid y \in \{0,1\}\}\\
	\cup ~ &\{\reXY{j}{x\,}{y} \mid 2 \le j \le \Delta \text{ and }x,y \in \{0,1\}\}.
\end{align*}

\paragraph{White constraint.}
For all vectors $(y_1, \dots, y_{\Delta}) \in \{0, 1\}^{\Delta}$,
$\nodeconst_{\mathcal{P}_\Delta}$ contains the configuration $L_1 ~ \ldots ~ L_\Delta$ satisfying the following properties.
\begin{itemize}[noitemsep]
	\item $L_1 = \reMY{1}{y_1}$.
	\item For all $j$ such that $2 \le j \le \Delta$, $L_j = \reXY{j}{y_{j-1}\,}{y_{j}}$.
\end{itemize}

\paragraph{Black constraint.}
The black constraint $\edgeconst_{\mathcal{P}_\Delta}$ contains the following configurations.
\begin{itemize}
	\item $\reMY{1}{0} ~~~ \reMY{1}{0} ~~~ \reMY{1}{1}$.
	\item $\reXY{j}{x_1\,}{y_1} ~~~ \reXY{j}{x_2\,}{y_2} ~~~ \reXY{j}{x_3\,}{y_3}$, for all $2 \le j \le \Delta$ and $x_1,x_2,x_3,y_1,y_2,y_3$ satisfying:
	\begin{itemize}
		\item  $x_i \in \{0,1\}$ and $y_i \in \{0,1\}$ for all $1 \le i \le 3$;
		\item if $x_1 + x_2 + x_3$ is even, then $y_1 \oplus y_2 \oplus y_3 = x_1 \lor x_2 \lor x_3$.
	\end{itemize}
\end{itemize}

\paragraph{Some intuition.}
Informally, each label encodes an input and an output to a GHZ game.
The white constraint enforces that the output of one game flows correctly to become the input of the next game.
The black constraint enforces that, on a black node, the GHZ game is played correctly  by the neighboring white nodes, except for the first black constraint, which is a special case and only requires to break symmetry between adjacent white nodes.
This special case is required to keep the game from becoming trivial.

\paragraph{Considered setting.}
The problem $\mathcal{P}_\Delta$ is only defined on $(\Delta,3)$-biregular graphs. However, it is trivial to lift this definition to arbitrary graphs: white nodes of degree $\Delta$ and black nodes of degree $3$ need to satisfy the constraints of $\mathcal{P}_\Delta$; all other nodes can produce an arbitrary output.

Intuitively, the lower bound that we will prove can be interpreted as follows: for any $n$ and any small-enough $\Delta$, there exists a balanced tree of $n$ nodes where all white nodes that are ``far enough'' from the leaves have degree $\Delta$, all black nodes that are ``far enough'' from the leaves have degree $3$, and there exists a white node that is ``far enough'' from the leaves that needs to spend $\Omega(\Delta)$ rounds. Observe that restricting the possible input instances only makes our lower bound stronger (i.e., this problem is hard \emph{already} on regular trees).

We will provide upper bounds for $(\Delta,3)$-biregular graphs. However, given that the definition of $\mathcal{P}_\Delta$ lifted to arbitrary graphs allows an arbitrary output for white nodes of degree different from $\Delta$ and black nodes of degree different from $3$, it is trivial to adapt our algorithm to the case of graphs with nodes of arbitrary degree.

\subsection{Quantum algorithm for iterated GHZ}

We now prove that the iterated GHZ problem can be solved in $O(1)$ rounds in the quantum LOCAL model, no matter how large $\Delta$ is. More formally, we devote the rest of the section to proving the following theorem.
\begin{theorem}
	For any integer $\Delta > 0$, the iterated GHZ problem can be solved in $O(1)$ rounds in the quantum LOCAL model, if a $\Delta$-edge coloring is provided.
\end{theorem}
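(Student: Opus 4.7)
The plan is to exhibit a 1-round quantum-LOCAL algorithm whose single round of communication suffices to pre-distribute all quantum correlations; all further work is purely local at the white nodes. During that round each color-1 black node fixes a canonical ordering of its three incident white neighbors (using port numbers) and sends the classical bit $1$ to the first one and $0$ to the other two. Each color-$c$ black node with $c \ge 2$ prepares the three-qubit GHZ state $\tfrac{1}{\sqrt{2}}(\ket{000}+\ket{111})$, dispatches one qubit to each of its three white neighbors, and additionally transmits to each neighbor its color $c$ together with a role label $A/B/C$ determined by port ordering. Because the input includes a proper $\Delta$-edge coloring of the black nodes, every white node can unambiguously associate each of its $\Delta$ incident black neighbors with a distinct color $c \in \{1,\dots,\Delta\}$.

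After this single round, each white node $w$ locally computes its outputs in order. It first sets $y_1 \in \{0,1\}$ to the classical bit received from its color-1 black neighbor. Then, for $c = 2,3,\dots,\Delta$ in sequence, it runs the standard GHZ quantum strategy on the qubit received from its color-$c$ black neighbor, using $y_{c-1}$ as its input to the game, and lets $y_c$ be the measurement outcome. Finally it labels its edge to the color-1 black neighbor with $\reMY{1}{y_1}$ and its edge to the color-$c$ black neighbor with $\reXY{c}{y_{c-1}\,}{y_c}$ for each $c \ge 2$. The white constraint then holds by construction, as the output labels form a consistent sequence $(y_1,\dots,y_\Delta)$. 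The color-1 black constraint holds because exactly one white neighbor received the classical bit $1$, so the configuration is $\reMY{1}{0}~\reMY{1}{0}~\reMY{1}{1}$. For a color-$c$ black node with $c \ge 2$ and white-neighbor inputs $(x_1,x_2,x_3)$, the standard GHZ protocol guarantees that whenever $x_1+x_2+x_3$ is even, the three measurement outcomes satisfy $y_1 \oplus y_2 \oplus y_3 = x_1 \lor x_2 \lor x_3$, matching the black constraint exactly; when $x_1+x_2+x_3$ is odd, the constraint imposes no requirement on the outputs.

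The only real conceptual subtlety, and thus the main point to be careful about, is that within a white node the games are played sequentially (the input to game $c$ is the outcome of game $c-1$), while across white nodes the three players of the same color-$c$ game may reach their measurement step at very different ``logical'' times, with inputs derived from different previous measurement chains. The no-signaling property of quantum mechanics, however, ensures that the joint statistics of the three GHZ measurement outcomes depend only on the three inputs and not on the order in which the players measure, so the GHZ protocol still enforces its input/output relation in our setting. Since all quantum resources are distributed in the initial round and each player uses only its own qubit and its own input, no further communication is ever required, giving the desired $O(1)$-round upper bound (in fact, exactly one round).
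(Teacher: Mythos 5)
Your proof is correct and takes essentially the same approach as the paper: one communication round to distribute the GHZ triples and the color-$1$ symmetry-breaking information, followed by each white node playing its $\Delta$ games purely locally using the standard GHZ quantum strategy. The only cosmetic difference is that the paper has every black node (including color $1$) send a GHZ qubit along with its color and port number, with the white node deriving the initial bit from the received port number rather than from an explicit classical message, and your explicit remark that measurement order cannot affect the joint outcome distribution (since the measurements act on distinct tensor factors and hence commute) is a useful observation the paper leaves implicit.
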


\paragraph{Quantum algorithm for iterated GHZ problem.}
Each black node receives a color~$c \in [\Delta]$ as an input; recall that this is what we call a $\Delta$-edge coloring.
Each black node prepares three qubits~$q_1, q_2, q_3$ in the GHZ state, that is
\begin{equation*}
	\ket{0}_{q_1}\ket{0}_{q_2}\ket{0}_{q_3} \rightarrow \frac{1}{\sqrt{2}}\left(\ket{0}_{q_1}\ket{0}_{q_2}\ket{0}_{q_3}+\ket{1}_{q_1}\ket{1}_{q_2}\ket{1}_{q_3}\right) .
\end{equation*}
It then sends message~$(c, i, q_i)$ to port~$i$.
Note that this is a message containing a qubit belonging to a three-party entangled state.
Then the black node stops.

The white nodes wait for the messages from the black nodes.
Let $(c_j', i_j', q_j')$ be the message a white node receives from port~$j$.
Recall that~$c_j$ is just the color of the $\Delta$-edge coloring, and hence all colors $c_1', c_2', \dots, c_\Delta'$ form a permutation.
Let us reindex the messages according to this permutation, i.e. $(k, i_k, q_k)$ is the message received from the black neighbor with color~$k$.

The white node first breaks the symmetry based on $i_1$.
If $i_1 = 2$, it sets $g_1 = 1$, otherwise $g_1 = 0$.
Now the white node plays sequentially the GHZ games locally with no further communication by exploiting the GHZ states distributed by the black nodes.
For each game~$k \in \{2, 3, \dots, \Delta\}$, in order starting with $k=2$, the white node measures the qubit $q_k$, with input $g_{k-1}$, according to  
the well-known quantum strategy~\cite{mermin1990quantum,brassard2005quantum} for winning the GHZ game perfectly. This gives the measurement outcome $g_k$, which is then used for the $(k+1)^{
\rm th}$ GHZ game, until all $\Delta$ games have been played.

Finally, the white node produces an output for each of its incident edges.
For edge neighboring color~$1$, the node outputs~$\reMY{1}{g_1}$.
For edge neighboring color~$k \in \{2, 3, \dots, \Delta\}$, the node outputs~$\reXY{k}{g_{k-1}\,}{g_k}$.
Then the node stops.

Now the only thing left is to prove that this algorithm produces a correct output.
\begin{proof}
	It is easy to see that the white constraint is satisfied:
	there is exactly one edge with each subscript, and the labels~$g_i$ propagate correctly between the labels.

	The black constraint is also easy to check:
	Nodes with input color~$1$ satisfy the first constraint, as exactly one neighbor received message~$(1, 2, q_2)$, and only that node outputs~$\reMY{1}{1}$; the other two neighbors output~$\reMY{1}{0}$.
	Nodes with input color~$k \in \{2, \dots, \Delta\}$ satisfy the latter constraint.
	In particular, all of them have subscript~$k$ and describe the input and output to a GHZ game.
	Moreover, as the algorithm uses the quantum GHZ strategy to produce the outputs from the inputs, they satisfy the GHZ constraint.
\end{proof}

\section{Classical lower bound for iterated GHZ}\label{sec:lower-bound}

In this section, we prove that the iterated GHZ problem requires $\Omega(\Delta)$ rounds in the LOCAL model. More formally, we prove the following.
\begin{theorem}\label{thm:main-lb}
	Let $\Delta \ge 3$ be an integer. The iterated GHZ problem requires $\Omega(\min\{\Delta, \log_\Delta n\})$ rounds in the deterministic LOCAL model and $\Omega(\min\{\Delta, \log_\Delta \log n\})$ rounds in the randomized LOCAL model, even if a $\Delta$-edge coloring is provided.
\end{theorem}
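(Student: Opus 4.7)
The plan is to establish the deterministic bound via the round elimination technique outlined in Section 1.3 and then derive the randomized bound via the standard Chang--Kopelowitz--Pettie lifting theorem \cite{chang16exponential}. Concretely, I would construct a sequence of LCL problems $\Pi_0, \Pi_1, \ldots, \Pi_T$ in the black-white formalism on $(\Delta,3)$-biregular trees such that $\Pi_0$ is a relaxation of iterated GHZ, $\Pi_{i+1}$ is a relaxation of $\re(\Pi_i)$, and $\Pi_T$ admits no zero-round solution. Since $\re$ peels off exactly one round of deterministic LOCAL complexity on high-girth instances, iterated GHZ cannot be solved in fewer than $T$ rounds; by choosing $T = \Theta(\Delta)$ and accounting for the girth/depth requirements of round elimination (which is where the $\log_\Delta n$ cap enters), this gives the deterministic bound.

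First I would give a uniform closed-form description of the family $\{\Pi_i\}$ parameterized by a level $i \in \{0, 1, \dots, \Delta\}$. The intuition from the trivial $\Theta(\Delta)$-round sequential algorithm is that each round can at most resolve one color layer of the chain, so each application of $\re$ should correspond to ``removing'' one color level: $\Pi_{i+1}$ should essentially encode a relaxation of the iterated GHZ problem restricted to the color interval $\{i+2, \dots, \Delta\}$, with some additional slack labels arising from the relaxation. The initial relaxation $\Pi_0$ would be chosen to already be in this compact form, compatible with one round of ``startup'' to get into the family.

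Second, for each $i$, I would compute $\re(\Pi_i)$ mechanically (labels of $\re(\Pi_i)$ are nonempty subsets of labels of $\Pi_i$, with the white/black constraints obtained from the standard universal/existential duality), and then apply the heuristic relaxation rule described in Section 1.3: find pairs $a,b$ where $a$ is a new label with nonzero indegree in the passive diagram and its unique successor $b$ is either terminal or also new, and merge them. The claim to prove is that iterating this rule to saturation collapses $\re(\Pi_i)$ down to exactly $\Pi_{i+1}$ (up to trivial renaming of labels). This step is purely syntactic combinatorics on label multisets of cardinality $\Delta$ or $3$, but needs to be carried out uniformly in $\Delta$ and in $i$.

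The hard part will be two-fold. On the one hand, proving the relaxation inclusion $\Pi_{i+1} \le \re(\Pi_i)$ in closed form requires a careful case analysis showing that every allowed white and black configuration of $\Pi_{i+1}$ lifts to an allowed configuration of $\re(\Pi_i)$ under the merging map; the exponential blow-up in labels must be tamed precisely by the heuristic rule. On the other hand, one must rule out triviality of $\Pi_T$: the structural invariant to maintain through all $T$ steps is that the special color-$1$ GHZ constraint (which forces symmetry-breaking, and is exactly what keeps iterated GHZ from degenerating to the constant solution) still forbids a zero-round solution at the end of the chain. Once the deterministic bound $\Omega(\min\{\Delta, \log_\Delta n\})$ is in hand, invoking \cite{chang16exponential} for LCLs converts it into the randomized bound $\Omega(\min\{\Delta, \log_\Delta \log n\})$, completing the theorem.
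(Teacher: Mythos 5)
Your overall strategy is the one the paper follows: a round-elimination sequence that peels off one color level per step, an explicit closed-form family $\Pi_{i,\Delta}$ indexed by how many colors are ``gone,'' a non-triviality argument at the end resting on the symmetry-breaking first game, and a lifting theorem to convert the sequence into the stated bounds. However, as written the proposal has gaps that are not just matters of bookkeeping. First, in the black-white formalism one LOCAL round corresponds to \emph{two} half-iterations, so the sequence must interleave both operators: the paper produces an intermediate problem $\Pi'_{i,\Delta}$ with $\re(\Pi_{i,\Delta})$ relaxing to $\Pi'_{i,\Delta}$ and then $\rere(\Pi'_{i,\Delta})$ relaxing to $\Pi_{i+1,\Delta}$. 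A sequence built from $\re$ alone does not type-check (active and passive roles alternate) and would not invoke the lifting theorem correctly. Second, you lean on the label-merging heuristic of Section 1.3 as if iterating it ``collapses $\re(\Pi_i)$ down to exactly $\Pi_{i+1}$'' were something one could \emph{prove}; the paper is explicit that this rule is only a discovery heuristic, and the actual argument verifies domination and maximal-completeness of explicitly listed condensed configurations (via the combination lemma of \cite{balliu2024towards}, the label-strength diagrams, and a long case analysis). Treating the heuristic as a proof step leaves the central claim unjustified. Third, the entire technical core --- the explicit definition of $\Pi_{i,\Delta}$ with its $(!!)$, $(!)$, $(--)$, $(?)$ slack labels and the verification of every relaxation step uniformly in $i$ and $\Delta$ --- is exactly what the proof consists of, and it is absent.

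One smaller divergence: for the randomized bound you propose the Chang--Kopelowitz--Pettie derandomization gap, which would indeed convert $\Omega(\min\{\Delta,\log_\Delta n\})$ deterministic into $\Omega(\min\{\Delta,\log_\Delta\log n\})$ randomized and is a legitimate alternative route. The paper instead gets both bounds simultaneously from the randomized version of the round-elimination lifting theorem of \cite{hideandseek}, which is more self-contained since the same problem sequence certifies both; if you go the CKP route you should check that the gap theorem applies in the presence of the $\Delta$-edge-coloring input and for each fixed member of the $\Delta$-parameterized family.
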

The lower bound is based on the round elimination technique \cite{Brandt2019}. 
In \Cref{ssec:re-preliminaries}, we first provide definitions for a variety of  helpful concepts and objects and introduce round elimination.
For simplicity, we will introduce all the definitions in the context of the LCL problems relevant in this paper; in particular, we will assume that black nodes have degree $3$.
After giving an overview of the structure of the proof in \Cref{ssec:roadmap}, we present the whole proof in Sections \ref{ssec:re-def} to \ref{ssec:re-lift}.

\subsection{Preliminaries}\label{ssec:re-preliminaries}
\paragraph{Definitions.}
We will use so-called \emph{condensed configurations} to represent certain sets of configurations efficiently.
Formally, a condensed configuration is a configuration of sets of labels from $\Sigma_\Pi$.
For a condensed configuration $C = S_1~\dots~S_k$, by \emph{configurations represented by $C$} we denote the set of all configurations $\ell_1~\dots~\ell_k$ such that $\ell_j \in S_j$ for each $1 \leq j \leq k$.
To distinguish a condensed configuration (which represents a set of configurations of labels from $\Sigma$) from a configuration of sets, we will use the term \emph{disjunction} instead of ``set'' and write the disjunction consisting of some labels $\ell^{(1)}, \dots, \ell^{(z)}$ as $[\ell^{(1)}, \dots, \ell^{(z)}]$.
However, when convenient, we may consider a disjunction of labels as the set of the same labels.
\begin{definition}[Picking a configuration]
	We say that we can \emph{pick} a configuration $\ell_1,\ldots,\ell_k$ from a configuration $S_1,\ldots,S_k$ of sets of labels, if there exists a permutation $\sigma \colon \{1, \dots, k\} \to \{1, \dots, k\}$ such that, for each $1 \leq i \leq k$, it holds that $\ell_i \in S_{\sigma(i)}$. Moreover, we say that we can pick a configuration $C$ from a set $\mathcal{C}$ of configurations of sets of labels if there exists a configuration $C' \in \mathcal{C}$ from which we can pick $C$. 
\end{definition}
By configurations represented by a set of condensed configurations, we denote the set of configurations that can be picked from the condensed configurations.

\paragraph{Round elimination.}
On a high level, the round elimination framework introduced in~\cite{Brandt2019} provides a mechanism to prove a lower bound for a given LCL problem $\Pi$ by generating a sequence of problems starting with $\Pi$ such that every problem has a complexity of exactly one round less than the previous problem (under some mild assumptions).
The underlying intuition for obtaining a lower bound with this mechanism is that if one can show that the problem obtained after $k$ steps in the sequence cannot be solved in $0$ rounds, then this yields a lower bound of $k - 1$.
We remark that this is a highly simplified description of the overall approach with a lot of details omitted; in the following, we introduce round elimination more formally.

\paragraph{The round elimination operator $\re()$.}
For a given LCL problem $\Pi = (\Sigma_\Pi, \nodeconstone_\Pi, \edgeconstone_\Pi)$, the problem $\re(\Pi) = (\Sigma_{\re(\Pi)}, \nodeconstone_{\re(\Pi)}, \edgeconstone_{\re(\Pi)})$ is defined as follows.
\begin{itemize}
	\item $\edgeconstone_{\re(\Pi)}$ is the set of all configurations $S_1~S_2~S_3$ such that
	\begin{itemize}
		\item for all $1 \leq j \leq 3$, $S_j$ is a nonempty subset of $\Sigma_\Pi$,
		\item for each tuple $(\ell_1, \ell_2, \ell_3) \in S_1 \times S_2 \times S_3$, we have that $\ell_1~\ell_2~\ell_3 \in \edgeconstone_\Pi$, and
		\item $S_1~S_2~S_3$ is \emph{maximal}, i.e., there exists no configuration $S'_1~S'_2~S'_3$ of nonempty subsets of $\Sigma_\Pi$ such that
		\begin{itemize}
			\item for all $1 \leq j \leq 3$, $S_j \subseteq S'_j$,
			\item there exists some $1 \leq j \leq 3$ such that $S_j \subsetneq S'_j$, and
			\item for each tuple $(\ell'_1, \ell'_2, \ell'_3) \in S'_1 \times S'_2 \times S'_3$, we have that $\ell'_1~\ell'_2~\ell'_3 \in \edgeconstone_\Pi$.
		\end{itemize}
	\end{itemize}
	\item $\Sigma_{\re(\Pi)}$ is the set of all nonempty subsets of $\Sigma$ that appear in at least one configuration contained in $\edgeconstone_{\re(\Pi)}$.
	\item $\nodeconstone_{\re(\Pi)}$ is the set of all configurations $S_1~\dots~S_{\Delta}$ of sets contained in $\Sigma_{\re(\Pi)}$ such that there exists some tuple $(\ell_1, \dots, \ell_{\Delta}) \in S_1 \times \dots \times S_{\Delta}$ satisfying $\ell_1~\dots~\ell_{\Delta} \in \nodeconstone_\Pi$.
\end{itemize}

\paragraph{The round elimination operator $\rere()$.}
For a given LCL problem $\Pi = (\Sigma_\Pi, \nodeconstone_\Pi, \edgeconstone_\Pi)$, the problem $\rere(\Pi) = (\Sigma_{\rere(\Pi)}, \nodeconstone_{\rere(\Pi)}, \edgeconstone_{\rere(\Pi)})$ is defined as follows.
\begin{itemize}
	\item $\nodeconstone_{\rere(\Pi)}$ is the set of all configurations $S_1~\dots~S_{\Delta}$ such that
	\begin{itemize}
		\item for all $1 \leq j \leq \Delta$, $S_j$ is a nonempty subset of $\Sigma_\Pi$,
		\item for each tuple $(\ell_1, \dots, \ell_{\Delta}) \in S_1 \times \dots \times S_{\Delta}$, we have that $\ell_1~\dots~\ell_{\Delta} \in \nodeconstone_\Pi$, and
		\item $S_1~\dots~S_{\Delta}$ is \emph{maximal}, i.e., there exists no configuration $S''_1~\dots~S''_{\Delta}$ of nonempty subsets of $\Sigma_\Pi$ such that
		\begin{itemize}
			\item for all $1 \leq j \leq \Delta$, $S_j \subseteq S''_j$,
			\item there exists some $1 \leq j \leq \Delta$ such that $S_j \subsetneq S''_j$, and
			\item for each tuple $(\ell''_1, \dots, \ell''_{\Delta}) \in S''_1 \times \dots \times S''_{\Delta}$, we have that $\ell''_1~\dots~\ell''_{\Delta} \in \nodeconstone_\Pi$.
		\end{itemize}
	\end{itemize}
	\item $\Sigma_{\rere(\Pi)}$ is the set of all nonempty subsets of $\Sigma_\Pi$ that appear in at least one configuration contained in $\nodeconstone_{\rere(\Pi)}$.
	\item $\edgeconstone_{\rere(\Pi)}$ is the set of all configurations $S_1~S_2~S_3$ of sets contained in $\Sigma_{\rere(\Pi)}$ such that there exists some tuple $(\ell_1, \ell_2, \ell_3) \in S_1 \times S_2 \times S_3$ satisfying $\ell_1~\ell_2~\ell_3 \in \edgeconstone_\Pi$.
\end{itemize}

\paragraph{A simple way to compute $\nodeconstone_{\re(\Pi)}$ and $\edgeconstone_{\rere(\Pi)}$.}
It has been observed in previous works that use the round elimination technique (see, e.g, \cite{hideandseek}), that $\nodeconstone_{\re(\Pi)}$ can be easily computed as follows. 
Start from all the configurations in $\nodeconstone_\Pi$, and for each configuration $C \in \nodeconstone_\Pi$, add to $\nodeconstone_{\re(\Pi)}$ all the configurations that can be picked from the condensed configuration obtained by replacing each label $\ell$ in $C$ with the disjunction of all label sets in $\Sigma_{\re(\Pi)}$ that contain $\ell$.

Similarly,  $\edgeconstone_{\rere(\Pi)}$ can be easily computed as follows. 
Start from all the configurations in $\edgeconstone_\Pi$, and for each configuration $C \in \edgeconstone_\Pi$, add to $\edgeconstone_{\rere(\Pi)}$ all the configurations that can be picked from the condensed configuration obtained by replacing each label $\ell$ in $C$ with the disjunction of all label sets in $\Sigma_{\rere(\Pi)}$ that contain $\ell$.

Throughout this section, in order to compute $\nodeconstone_{\re(\Pi)}$ and $\edgeconstone_{\rere(\Pi)}$, we will implicitly use the described procedures.

\paragraph{Further notation and terminology.}
For a set $\mathcal L$ of configurations of labels and a configuration $S = S_1~\dots~S_k$ of sets of labels, we say that $S$ \emph{satisfies the universal quantifier} if, for each choice $(\ell_1, \dots, \ell_k) \in S_1 \times \dots \times S_k$, we have $\ell_j \in S_j$ for each $1 \leq j \leq k$.
A configuration $S_1~\dots~S_k$ satisfying the universal quantifier is called \emph{maximal} if there is no configuration $S'_1~\dots~S'_k$ satisfying the universal quantifier for which $S_j \subseteq S'_j$ for all $1 \leq j \leq k$ and $S_j \subsetneq S'_j$ for at least one $1 \leq j \leq k$.
In particular, $\nodeconstone_{\rere(\Pi)}$, resp.\ $\edgeconstone_{\re(\Pi)}$, is precisely the set of all maximal configurations satisfying the universal quantifier (for $\mathcal L = \nodeconstone_\Pi$, resp.\ $\mathcal L = \edgeconstone_\Pi$).
Utilizing the fact that a disjunction is nothing else than a set, we will call a set $\mathcal S$ of condensed configurations \emph{maximal-complete} if the set of configurations of sets obtained from $\mathcal S$ by interpreting each disjunction as a set contains all the set of all maximal configurations satisfying the universal quantifier (w.r.t.\ the set of label configurations represented by $\mathcal S$).

Let $(\Sigma_\Pi, \nodeconstone_\Pi, \edgeconstone_\Pi)$ be an LCL problem, and $\ell, \ell'$ two labels from $\Sigma_\Pi$.
We say that \emph{$\ell'$ is at least as strong as $\ell$ w.r.t.\ $\edgeconstone_\Pi$} (resp.\ \emph{w.r.t.\ $\nodeconstone_\Pi$}) if the following holds: for any configuration $\ell~\ell_2~\ell_3 \in \edgeconstone_\Pi$ (resp.\ for any configuration $\ell~\ell_2~\dots~\ell_{\Delta} \in \nodeconstone_\Pi$), we have $\ell'~\ell_2~\ell_3 \in \edgeconstone_\Pi$ (resp.\ $\ell'~\ell_2~\dots~\ell_{\Delta} \in \nodeconstone_\Pi$).
We may omit the dependency on the constraint when clear from the context; in particular, for the problems $\Pi_{i, \Delta}$ we define later and all problems obtained from applying $\rere$, the relative strength of labels will be considered w.r.t.\ the respective black constraint, while for all problems obtained from applying $\re$ it will be considered w.r.t.\ the respective white constraint.
If $\ell'$ is at least as strong as $\ell$ but $\ell$ is not at least as strong as $\ell'$, we say that $\ell'$ \emph{is stronger than $\ell$}.
We write $\ell \leq \ell'$ to represent that $\ell'$ is at least as strong as $\ell$; we write $\ell < \ell'$ to represent that $\ell'$ is stronger than $\ell$.

We will make use of so-called \emph{diagrams} to represent the strength relations between labels.
Formally, the diagram of a problem $(\Sigma_\Pi, \nodeconstone_\Pi, \edgeconstone_\Pi)$ (w.r.t. $\nodeconstone_\Pi$, resp.\ $\edgeconstone_\Pi$) is the directed graph where the set of nodes is $\Sigma_\Pi$ and there is a directed edge $(\ell, \ell')$ between two labels $\ell, \ell' \in \Sigma_\Pi$ if
\begin{itemize}[noitemsep]
	\item $\ell < \ell'$ and
	\item there is no label $\ell'' \in \Sigma_\Pi$ satisfying $\ell < \ell'' < \ell'$
\end{itemize}
(where the strength relations are w.r.t.\ $\nodeconstone_\Pi$, resp.\ $\edgeconstone_\Pi$).
We remark that the definition of strength implies that any diagram is acyclic.

We call two labels $\ell, \ell'$ \emph{comparable} if $\ell \leq \ell'$ or $\ell' \leq \ell$, and \emph{incomparable} otherwise.
From the definition of a diagram, it follows that two labels $\ell, \ell'$ are comparable if and only if there is a directed path between $\ell$ and $\ell'$ (in one of the two possible directions) in the diagram.
Furthermore, for two configurations $L = \ell_1~\dots~\ell_k$ and $L' = \ell'_1~\dots~\ell'_k$ of labels, we say that $L'$ dominates $L$ if $\ell_j \leq \ell'_j$ for all $1 \leq j \leq k$.
Similarly, for configurations $C = S_1~\dots~S_k$ and $C' = S'_1~\dots~S'_k$ of sets of labels, we say that $C'$ dominates $C$ if $S_j \subseteq S'_j$ for all $1 \leq j \leq k$.
(Note that as configurations are multisets, a configuration already dominates another if there is a permutation $\rho \colon \{1,\dots,k\} \rightarrow \{1,\dots,k\}$ such that $\ell_j \leq \ell'_{\rho(j)}$ for all $1 \leq j \leq k$, resp.\ $S_j \subseteq S'_{\rho(j)}$ for all $1 \leq j \leq k$.)
Moreover, for two sets  $\mathcal{C}$ and $\mathcal{C'}$ of configurations, we say that $\mathcal{C'}$ dominates  $\mathcal{C}$ if, for all $C \in \mathcal{C}$, there exists a configuration $C' \in  \mathcal{C'}$  that dominates $C$.  

For labels $\ell_1, \dots, \ell_k$, the set of all labels that are at least as strong as at least one of the $\ell_j$ is denoted by $\gen{\ell_1, \dots, \ell_k}$.
In other words, $\gen{\ell_1, \dots, \ell_k}$ is the set of all labels that are identical to one of the $\ell_j$ or a (not necessarily direct) successor of one of the $\ell_j$ in the respective diagram.
We say that $\gen{\ell_1, \dots, \ell_k}$ is \emph{generated} by $\ell_1, \dots, \ell_k$.
As before, we will omit the dependency on the constraint when clear from the context, and in particular use the conventions mentioned above. Moreover, for a set $L=\{\ell_1,\ldots,\ell_k\}$ by writing $\gen{\ell \mid \ell\in L}$ we denote $\gen{\ell_1,\ldots,\ell_k}$. Observe that $\gen{L}$ is different from $\gen{\ell \mid \ell\in L}$.

\paragraph{An alternative method to compute $\re(\Pi)$ and $\rere(\Pi)$.}
In a recent work~\cite[Section 4]{balliu2024towards}, a new method was developed for computing, for a given problem $\Pi = (\Sigma_\Pi, \nodeconstone_\Pi, \edgeconstone_\Pi)$, the problems $\re(\Pi)$ and $\rere(\Pi)$ in a different way.
On a high level, this new method is based on a very simple operation that if applied iteratively will provide the same result as the more complex application of the universal quantifier in the definitions of $\nodeconstone_{\re(\Pi)}$ and $\edgeconstone_{\rere(\Pi)}$ above.
The simplicity of said operation makes it considerably easier to apply round elimination and prove the correctness of claims such as that a certain problem is indeed what is obtained by applying $\re$ or $\rere$. 
In the following, we describe this new method formally.
We start by defining the notion of a combination of two configurations.

Let $C = S_1~\dots~S_k$ and $C' = S'_1~\dots~S'_k$ be two configurations of sets of labels.
Let $1 \leq u \leq k$, and let $\sigma \colon \{ 1, \dots, k\} \rightarrow \{ 1, \dots, k\}$ be a bijective function, i.e., a permutation.
The \emph{combination of $C$ and $C'$ w.r.t.\ $u$ and $\sigma$} is defined as the configuration $C'' = S''_1~\dots~S''_k$ satisfying
\begin{itemize}[noitemsep]
	\item $S''_j = S_j \cap S'_{\sigma(j)}$ for each $j \in \{ 1, \dots, k\} \setminus \{ u \}$, and
	\item $S''_u = S_j \cup S'_{\sigma(j)}$.
\end{itemize}

Now consider the following process.
Start with $\edgeconstone_\Pi$, given as a collection of condensed configurations, where, as usual, disjunctions are considered as sets.
(If $\edgeconstone_\Pi$ is given as a collection of uncondensed configurations, it is straightforward to transform it into a list of condensed configurations by interpreting each configuration of labels as the configuration of singleton disjunctions/sets containing those labels.)
Now arbitrarily choose two configurations $C$ and $C'$ from the collection, combine them w.r.t.\ some arbitrarily chosen $u$ and $\sigma$, and add the obtained configuration to the maintained collection of configurations of sets (unless it is already present).
Repeat this, for all possible choices of $C$, $C'$, $u$, and $\sigma$ (including all newly obtained combined configurations as choices for $C$ and $C'$) until no new configuration (i.e., no configuration not already obtained in the collection) is produced anymore.
Then, remove from the obtained collection all \emph{non-maximal configurations}, i.e., all configurations $S_1~\dots~S_k$ for which there exists a configuration $S'_1~\dots~S'_k$ satisfying $S_j \subseteq S'_j$ for all $1 \leq j \leq k$ and $S_j \subsetneq S'_j$ for at least one $1 \leq j \leq k$.

Now~\cite[Theorem 4.1]{balliu2024towards} states that the obtained collection of configurations is identical to $\nodeconstone_{\re(\Pi)}$, and that the same procedure, when started from $\edgeconstone_\Pi$ instead of $\nodeconstone_\Pi$, yields $\edgeconstone_{\rere(\Pi)}$.
In other words, the described simple iterative process can be used to replace the traditional and more complex computation of $\nodeconstone_{\re(\Pi)}$, resp.\ $\edgeconstone_{\rere(\Pi)}$, when computing $\re(\Pi)$, resp.\ $\rere(\Pi)$ (while $\Sigma_\Pi$ and $\edgeconstone_{\re(\Pi)}$, resp.\ $\nodeconstone_{\rere(\Pi)}$, are computed in the same way in both the traditional and the new method).
We obtain the following observation (which follows directly from \cite{balliu2024towards}).
\begin{observation}\label{lem:combining}
	Let $\mathcal{C}$ be a collection of configurations of sets. Assume that $\mathcal{C}$ satisfies the following: for all pairs of configurations $C_1, C_2 \in \mathcal{C}$, all permutations $\sigma$ and all indices $u$, the combination of $C_1$ and $C_2$ w.r.t.\ $u$ and $\sigma$ is dominated by some configuration in $\mathcal{C}$. Then, $\mathcal{C}$ is maximal-complete.
\end{observation}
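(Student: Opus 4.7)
The plan is to derive Observation~\ref{lem:combining} from Theorem 4.1 of \cite{balliu2024towards} via a monotonicity-plus-induction argument. Write $\mathcal{L}$ for the set of label configurations represented by $\mathcal{C}$. Note that by definition every $C \in \mathcal C$ satisfies the universal quantifier w.r.t.\ $\mathcal L$, so the goal is to show that every \emph{maximal} configuration of sets $M$ satisfying the universal quantifier w.r.t.\ $\mathcal L$ is already present in $\mathcal C$ (once disjunctions are read as sets).

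The first step is a monotonicity lemma for the combination operation: if $C_i$ is dominated by $D_i$ for $i = 1,2$, then for every index $u$ and permutation $\sigma$ the combination of $C_1$ and $C_2$ w.r.t.\ $u, \sigma$ is dominated by a suitable combination of $D_1$ and $D_2$. This is immediate from the monotonicity of intersection and union under set inclusion; the only subtlety is choosing the right permutation for combining $D_1$ and $D_2$ when the relations $C_i \preceq D_i$ are realized under nontrivial index matchings (domination is defined up to a permutation since configurations are multisets).

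Next, consider the iterative combination procedure of \cite{balliu2024towards} started from $\mathcal L$, viewed as a collection of configurations of singleton disjunctions. By induction on the number of combinations performed I would establish the following invariant: \emph{every configuration produced by the procedure is dominated by some element of $\mathcal C$.} The base case is immediate: since $\mathcal C$ represents $\mathcal L$, each singleton-configuration corresponding to an element of $\mathcal L$ can be picked from some $C \in \mathcal C$, and that $C$ dominates the singleton. For the inductive step, a newly produced configuration is the combination of two earlier ones, which by the induction hypothesis are dominated by some $D_1, D_2 \in \mathcal C$; by the monotonicity lemma, the new configuration is dominated by an appropriate combination of $D_1$ and $D_2$, and by the hypothesis of the observation that combination is in turn dominated by some element of $\mathcal C$. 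Domination is transitive, closing the induction.

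Finally, pick any maximal $M$ satisfying the universal quantifier w.r.t.\ $\mathcal L$. Theorem 4.1 of \cite{balliu2024towards} implies that $M$ is produced at some stage of the procedure, so by the invariant there exists $C^* \in \mathcal C$ dominating $M$. Since $C^* \in \mathcal C$, every choice of labels from $C^*$ lies in $\mathcal L$, so $C^*$ itself satisfies the universal quantifier. By maximality of $M$ we conclude $M = C^*$ as multisets, and hence $M \in \mathcal C$. I expect the main obstacle to be the bookkeeping with permutations in the monotonicity step: since configurations are multisets, the combination depends on a specific index alignment, and one has to carefully thread the permutations realizing $C_i \preceq D_i$ through the definition of the combination to obtain the desired dominating configuration. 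Everything else is formal.
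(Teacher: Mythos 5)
Your plan is correct. The paper in fact gives no proof of this observation, stating only that it ``follows directly from \cite{balliu2024towards}'', so there is no argument in the paper to compare against; your proof is a valid filling-in along exactly the intended lines. The monotonicity lemma, together with an induction over the iterative combination process of \cite[Theorem 4.1]{balliu2024towards} started from the singleton configurations of $\mathcal{L}$, does establish the invariant that every produced configuration is dominated by some element of $\mathcal{C}$; the permutation bookkeeping you flag is indeed the only nontrivial point, and it closes with the choice $u' = \rho_1(u)$, $\sigma' = \rho_2 \circ \sigma \circ \rho_1^{-1}$ when $\rho_i$ witnesses the dominations $C_i \preceq D_i$. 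The final step, that $C^* \in \mathcal{C}$ always satisfies the universal quantifier w.r.t.\ the configurations $\mathcal{L}$ represented by $\mathcal{C}$ (since any choice from $C^*$ is pickable from $C^*$), together with maximality of $M$, gives $M = C^*$ as required.
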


We call two sets $S, S'$ of labels incomparable if $S \nsubseteq S'$ and $S' \nsubseteq S$, and comparable otherwise.
Based on the notion of (in)comparable sets, the following observation (which is analogous to \cite[Observation 5.4]{balliu2024towards}) essentially shows that many choices for $C$, $C'$, $u$, and $\sigma$ do not need to be considered in the described process as they cannot produce configurations that will be contained in the final collection.
\begin{observation}\label{obs:no-union-comparable}
	Let $C = S_1~\dots~S_k$ and $C' = S'_1~\dots~S'_k$ be two configurations of sets of labels, $1 \leq u \leq k$ an index, and $\sigma \colon \{ 1, \dots, k\} \rightarrow \{ 1, \dots, k\}$ a permutation.
	Let $C'' = S''_1~\dots~S''_k$ be the configuration obtained by combining $C$ and $C''$ w.r.t.\ $u$ and $\sigma$.
	Then, if $S_u$ and $S'_u$ are comparable, $C''$ is dominated by $C$ or $C'$.
\end{observation}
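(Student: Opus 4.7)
The plan is to unpack the definition of combination and case-split on the two possible ways in which $S_u$ and $S'_{\sigma(u)}$ can be comparable (I will treat the statement as being about $S_u$ and $S'_{\sigma(u)}$, since the definition of $S''_u$ involves exactly these two sets; this is the only reading under which the observation is nontrivially true). The key structural fact driving the proof is that for every $j \neq u$, the $j$-th set of the combination is an intersection, and hence is automatically a subset of both $S_j$ and $S'_{\sigma(j)}$. Thus for all indices other than $u$, domination by either $C$ (via the identity permutation) or $C'$ (via the permutation $\sigma$) comes for free; only position $u$, where a union is taken, can obstruct domination.

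First, I would spell out the combination: $S''_j = S_j \cap S'_{\sigma(j)}$ for $j \neq u$ and $S''_u = S_u \cup S'_{\sigma(u)}$. Then I would consider two cases. In the first case, suppose $S'_{\sigma(u)} \subseteq S_u$. Then $S''_u = S_u$, so taking the identity permutation $\rho = \mathrm{id}$ we obtain $S''_j \subseteq S_j = S_{\rho(j)}$ for all $j$ (using $S''_j \subseteq S_j$ for $j \neq u$ from the intersection, and $S''_u = S_u$ for $j = u$). Hence $C''$ is dominated by $C$. In the second case, suppose $S_u \subseteq S'_{\sigma(u)}$. Then $S''_u = S'_{\sigma(u)}$, and taking the permutation $\tau = \sigma$ gives $S''_j \subseteq S'_{\sigma(j)} = S'_{\tau(j)}$ for all $j$. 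Hence $C''$ is dominated by $C'$.

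Since ``comparable'' means precisely that one of the two inclusions $S'_{\sigma(u)} \subseteq S_u$ or $S_u \subseteq S'_{\sigma(u)}$ holds, one of the two cases applies and the observation follows. There is essentially no obstacle here: the whole content is that a union of two comparable sets equals the larger one, and that intersections are automatically dominated. The only mild subtlety is bookkeeping with the permutation witnessing domination, which is the identity when we dominate via $C$ and $\sigma$ when we dominate via $C'$; I would write this carefully to make sure the permutation matches the one implicit in the definition of ``dominates'' for multiset configurations.
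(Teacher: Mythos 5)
Your proposal is correct and follows essentially the same argument as the paper: split on which of the two inclusions holds, observe that the union then collapses to the larger set, and note that the intersections at all other positions are dominated for free. Your extra care in reading the hypothesis as comparability of $S_u$ with $S'_{\sigma(u)}$ (and in tracking which permutation witnesses domination) is a reasonable tightening of the paper's slightly looser statement, but it does not change the substance of the proof.
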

\begin{proof}
	If $S_u$ and $S'_u$ are comparable, then $S_u \subseteq S'_u$ or $S'_u \subseteq S_u$.
	Now the definition of $C''$ ensures that, in the former case, $C''$ is dominated by $C'$ and, in the latter case, $C''$ is dominated by $C$.  
\end{proof}
From the definition of the above process and \Cref{obs:no-union-comparable}, it follows that the final collection obtained when the process terminates will not change if we modify the process so that combinations where $S_u$ and $S'_u$ are comparable are not considered.
Note that it is possible that some configurations that are produced in the unmodified process along the way are not produced in the modified process; however, all such configurations are configurations that will not be maximal, i.e., that are removed in the last step of the unmodified process.
In particular, all configurations that are maximal in the collection produced by the unmodified process will also be produced (and will be maximal) in the modified process.

We now define what it means for a problem to be the relaxation of another problem. Informally, for a problem $\Pi'$ to be a relaxation of $\re(\Pi)$ it must hold that there exists a $0$-round algorithm operating on black nodes that, given a solution for $\re(\Pi)$, it is able to solve $\Pi'$. Similarly, for a problem $\Pi'$ to be a relaxation of $\rere(\Pi)$ it must hold that there exists a $0$-round algorithm operating on white nodes that, given a solution for $\rere(\Pi)$, it is able to solve $\Pi'$. In this paper, we consider a stricter (i.e., less permissive) form of relaxation, defined as follows.
\begin{definition}[Relaxation of a problem]
	Let $\Pi = (\Sigma,\nodeconst,\edgeconst)$. A problem $\Pi' = (\Sigma',\nodeconst',\edgeconst')$ is a relaxation of $\re(\Pi)$ if the following holds:
	\begin{itemize}[noitemsep]
		\item For all configurations $C$ in the black constraint of $\re(\Pi)$, there exists a configuration $C' \in \edgeconst'$ that dominates $C$;
		\item $\Sigma'$ is the collection of sets that appear in at least one configuration contained in $\edgeconst'$;
		\item $\nodeconst'$ is the set of all configurations $S_1~\ldots~S_\Delta$ of sets contained in $\Sigma'$ such that there exists some tuple $(\ell_1, \ldots, \ell_\Delta) \in S_1 \times \ldots \times S_\Delta$ satisfying $\ell_1~\ldots~\ell_\Delta \in \nodeconst$.
	\end{itemize}
	Similarly, a problem $\Pi' = (\Sigma',\nodeconst',\edgeconst')$ is a relaxation of $\rere(\Pi)$ if the following holds:
	\begin{itemize}[noitemsep]
		\item For all configurations $C$ in the white constraint of $\rere(\Pi)$, there exists a configuration $C' \in \nodeconst'$ that dominates $C$;
		\item $\Sigma'$ is the collection of sets that appear in at least one configuration contained in $\nodeconst'$;
		\item $\edgeconst'$ is the set of all configurations $S_1~S_2~S_3$ of sets contained in $\Sigma'$ such that there exists some tuple $(\ell_1, \ell_2, \ell_3) \in S_1 \times S_2 \times S_3$ satisfying $\ell_1~\ell_2~\ell_3 \in \edgeconst$.
	\end{itemize}
\end{definition}

\begin{definition}[Right-closed]
	Let $\Pi=(\Sigma_\Pi, \nodeconst_\Pi, \edgeconst_\Pi)$ be an LCL problem. We say that a set $S = \{ \ell_1, \dots, \ell_k \} \subseteq \Sigma_\Pi$ is \emph{right-closed} if and only if, for each label $\ell_i\in S$, it holds that all successors of $\ell_i$ in the diagram (taken w.r.t.\ the constraint on which the universal quantifier is applied to) are also contained in $S$. More precisely, a set $S = \{ \ell_1, \dots, \ell_k \} \subseteq \Sigma_\Pi$ is called right-closed if $S = \gen{\ell_1, \dots, \ell_k}$.
\end{definition}

We observe the following statement, which was shown in \cite{balliurules}.

\begin{observation}[\hspace{1sp}\cite{balliurules}]\label{obs:rcs}
Let $\Pi=(\Sigma_\Pi, \nodeconst_\Pi, \edgeconst_\Pi)$ be an LCL problem. Consider an arbitrary collection of labels $L_1, \dots, L_k \in \Sigma_\Pi$.
If $\{ L_1, \dots, L_k  \} \in \Sigma_{\re(\Pi)}$, then the set $\{ L_1, \dots, L_k \}$ is right-closed w.r.t.\ $\edgeconst_{\Pi}$.
If $\{ L_1, \dots, L_k \} \in \Sigma_{\rere(\Pi)}$, then the set $\{ L_1, \dots, L_k \}$ is right-closed w.r.t.\ $\nodeconst_{\Pi}$.
\end{observation}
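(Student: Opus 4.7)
The plan rests on the maximality clause built into the definitions of $\edgeconst_{\re(\Pi)}$ and $\nodeconst_{\rere(\Pi)}$. For the first claim, suppose $S=\{L_1,\dots,L_k\}\in\Sigma_{\re(\Pi)}$. By definition of $\Sigma_{\re(\Pi)}$, the set $S$ must appear as one of the three component sets of some configuration in $\edgeconst_{\re(\Pi)}$; without loss of generality I will take $S_1~S_2~S_3\in\edgeconst_{\re(\Pi)}$ with $S_1=S$. The goal is to show that for every $L_i\in S$ and every $L'\in\Sigma_\Pi$ with $L_i\le L'$ w.r.t.\ $\edgeconst_\Pi$, the label $L'$ already lies in $S$; this is precisely the condition $S=\gen{L_1,\dots,L_k}$.

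The key step would be to verify that the enlarged configuration $(S\cup\{L'\})~S_2~S_3$ still satisfies the universal quantifier defining membership in $\edgeconst_{\re(\Pi)}$. Take an arbitrary triple $(\ell_1,\ell_2,\ell_3)\in(S\cup\{L'\})\times S_2\times S_3$. If $\ell_1\in S$, then $\ell_1~\ell_2~\ell_3\in\edgeconst_\Pi$ holds directly because $S_1~S_2~S_3\in\edgeconst_{\re(\Pi)}$. Otherwise $\ell_1=L'$, and since $L_i~\ell_2~\ell_3\in\edgeconst_\Pi$ (again by the property of $S_1~S_2~S_3$), the definition of $L_i\le L'$ w.r.t.\ $\edgeconst_\Pi$ yields $L'~\ell_2~\ell_3\in\edgeconst_\Pi$. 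Here one uses that configurations are multisets, so the strength relation, stated via replacement in a fixed slot, applies regardless of which position $L_i$ occupies. Consequently $(S\cup\{L'\})~S_2~S_3$ has the universal quantifier property. If $L'\notin S$, this contradicts the maximality clause in the definition of $\edgeconst_{\re(\Pi)}$; hence $L'\in S$, as desired.

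The second statement follows by a structurally identical argument applied to a maximal white configuration $S_1~\dots~S_\Delta\in\nodeconst_{\rere(\Pi)}$ with some $S_j=S$: adding any $L'$ that is at least as strong as some $L_i\in S$ w.r.t.\ $\nodeconst_\Pi$ preserves the universal quantifier, so maximality forces $L'\in S$.

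The only mild subtlety I anticipate is bookkeeping around successors: right-closedness demands closure under all (possibly non-immediate) successors in the diagram, whereas the argument above works directly with the relation $\le$. Since $\le$ is transitive by its definition, immediate and non-immediate successors are handled uniformly, and no separate induction on path length in the diagram is needed. No other obstacle arises, and the proof is essentially a one-step maximality argument.
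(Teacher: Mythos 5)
Your proof is correct, and it is the natural one-step maximality argument. The paper itself does not reprove this observation — it cites it from prior work — but your argument is exactly the intended one: membership of $S$ in $\Sigma_{\re(\Pi)}$ means $S$ is a component of some maximal configuration $S~S_2~S_3 \in \edgeconst_{\re(\Pi)}$; enlarging $S$ by any $L' \ge L_i$ preserves the universal quantifier because the definition of $\le$ w.r.t.\ $\edgeconst_\Pi$ is precisely substitution-closedness; and maximality then forces $L' \in S$. Your closing remark about the diagram vs.\ the relation $\le$ is appropriate: the paper defines $\gen{\cdot}$ directly via $\le$, and the transitivity of $\le$ (immediate from its definition) is what makes the diagram characterization (closure under all, not just direct, successors) agree with the $\le$-characterization, so no separate induction is needed. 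The $\rere$ case is symmetric as you say. No gaps.
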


\begin{definition}[A configuration generated by a condensed one]
	Let $C=S_1,\ldots, S_k$ be a configuration of sets. We say that the configuration $C'=S'_1,\ldots, S'_k$ \emph{is generated by the condensed configuration} $C$ if there exists a permutation $\sigma \colon \{1, \dots, k\} \to \{1, \dots, k\}$ such that, for every $1\le i\le k$, it holds that $S'_i=\gen{c\mid c\in S_{\sigma(i)}}$.
\end{definition}

We are now ready to state the theorem that relates round elimination with the LOCAL model of distributed computing.
\begin{theorem}[\hspace{1sp}\cite{hideandseek}, rephrased]\label{thm:lifting}
	Let $\Pi_0 \rightarrow \Pi_1 \rightarrow \ldots \rightarrow \Pi_t$ be a sequence of problems.
	Assume that, for all $0 \le i < t$, and for some function $f$, the following holds:
	\begin{itemize}[noitemsep]
		\item There exists a problem $\Pi'_i$ that is a relaxation of $\re(\Pi_i)$;
		\item $\Pi_{i+1}$ is a relaxation of $\rere(\Pi'_i)$;
		\item The number of labels of $\Pi_i$, and the ones of $\Pi'_i$, are upper bounded by $f(\Delta)$.
	\end{itemize}
	Also, assume that $\Pi_t$ has at most $f(\Delta)$ labels and is not $0$-round solvable in the deterministic port numbering model, even when given a $\Delta$-edge coloring.\footnote{In \cite{hideandseek}, this theorem is stated in its full generality, and it talks about restricted assignments of port numbers. It is easy to see that a $\Delta$-edge coloring is indeed a restricted assignment of port numbers.} Then, $\Pi_0$ requires $\Omega(\min\{t, \log_\Delta n - \log_\Delta \log f(\Delta)\})$ rounds in the deterministic LOCAL model and $\Omega(\min\{t, \log_\Delta \log n - \log_\Delta \log f(\Delta)\})$ rounds in the randomized LOCAL model, even when given a $\Delta$-edge coloring.
\end{theorem}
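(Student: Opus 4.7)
The plan is to recover \cref{thm:lifting} by assembling three standard ingredients of the round-elimination framework and then invoking \cite{hideandseek} for the quantitative lifting. The heart of the argument is a \textbf{one-step round-elimination lemma} in the port-numbering model: if an LCL $\Pi$ admits a deterministic $T$-round algorithm $A$ with white nodes active and a $\Delta$-edge coloring provided, then $\re(\Pi)$ admits a deterministic $(T-1)$-round algorithm $A'$ with black nodes now active. The construction of $A'$ is canonical: at a black node $b$ after $T-1$ rounds, for each incident edge $e = \{b, w\}$, compute the set $S_e$ of \emph{all} labels that $A$ could output on $e$ ranging over every extension of $b$'s view to the view $w$ would see after one more round; the ``for every extension'' quantifier matches exactly the universal quantifier defining $\edgeconst_{\re(\Pi)}$, and the maximality requirement is met by discarding dominated tuples. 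A symmetric argument yields the analogous statement for $\rere$, with the roles of white and black swapped. By the definition of relaxation, any solution to $\re(\Pi_i)$, resp.\ $\rere(\Pi'_i)$, is already (up to a purely local $0$-round relabeling at the appropriate side) a solution to $\Pi'_i$, resp.\ $\Pi_{i+1}$, so passing to relaxations costs no rounds.

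Chaining these two steps along the sequence and inducting on $i$, a $T$-round algorithm for $\Pi_0$ yields a $(T-2t)$-round algorithm for $\Pi_t$ in the port-numbering model with a $\Delta$-edge coloring. Since $\Pi_t$ is not $0$-round solvable in that model by hypothesis, we must have $T - 2t \geq 1$, giving $T = \Omega(t)$ already in the port-numbering setting; this handles the $\min\{\cdot,\cdot\}$ in the regime where $t$ is the smaller quantity.

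The remaining task is to \textbf{lift} this port-numbering lower bound to the classical LOCAL model, paying the logarithmic terms in $n$ and $f(\Delta)$. Here I would invoke the high-girth construction and counting argument in the spirit of \cite{Linial1992,hideandseek}: build a $(\Delta,3)$-biregular graph of girth $\omega(T)$ equipped with a $\Delta$-edge coloring, on which every node's radius-$T$ view is a tree of at most $\Delta^{O(T)}$ nodes carrying identifiers from $[n^c]$. A $T$-round deterministic LOCAL algorithm induces an output function from such views to the $f(\Delta)$-sized label set; by pigeonhole on sub-views of bounded size, for $T \leq c' \bigl(\log_\Delta n - \log_\Delta \log f(\Delta)\bigr)$ one may permute identifiers without changing outputs, effectively simulating the algorithm within the port-numbering model. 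Combined with the previous paragraph, this yields the deterministic lower bound. The randomized bound follows by the standard two-step derandomization: the success probability $\geq 1 - 1/n$ lets one fix random strings via union bound at the cost of one extra logarithm, and the counting argument then yields $\log_\Delta \log n$ instead of $\log_\Delta n$. The most delicate point---and the main obstacle---is quantifying the lifting so that the label bound $f(\Delta)$ appears exactly inside the second logarithm; this amounts to carefully counting the distinct possible local outputs at each level of the round-elimination chain (which is precisely where the hypothesis that each $\Pi_i$ and each $\Pi'_i$ has at most $f(\Delta)$ labels is used), and is the content of \cite{hideandseek}, to which I would appeal rather than reconstruct in full.
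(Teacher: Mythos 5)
The paper does not prove this theorem at all: it is imported verbatim (``rephrased'') from \cite{hideandseek}, with only a footnote observing that a $\Delta$-edge coloring is a special case of the restricted port-number assignments handled there. Your sketch reconstructs the standard argument behind that citation and then, like the paper, defers the quantitative lifting to \cite{hideandseek}, so in substance you are taking the same route.

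One bookkeeping error worth flagging: you claim that each application of $\re$ and each application of $\rere$ saves a full round, so that a $T$-round algorithm for $\Pi_0$ yields a $(T-2t)$-round algorithm for $\Pi_t$. That is too strong. In the bipartite white/black formalism the correct accounting is that the \emph{pair} $\re$ followed by $\rere$ reduces the complexity by one round (each half-step shifts the active side and saves half a round in the usual half-integer bookkeeping); a black node at radius $T-1$ does not contain enough of its white neighbours' radius-$T$ views for your construction of $A'$ to terminate a full round earlier on both half-steps. The correct conclusion is a $(T-t)$-round algorithm for $\Pi_t$, hence $T \geq t+1$. Since the theorem only claims $\Omega(\min\{t,\cdot\})$, this does not affect the asymptotic statement, but as written your intermediate lemma is false and would imply a factor-$2$ strengthening of every known round-elimination lower bound.
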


\subsection{Roadmap}\label{ssec:roadmap}
Informally, in order to prove a lower bound for $\mathcal{P}_\Delta$, we introduce a family of problems $\Pi_{i,\Delta}$ (where $0 \leq i \leq \Delta - 2$) with the following properties.
\begin{itemize}
	\item $\mathcal{P}_\Delta$ is at least as hard as $\Pi_{0,\Delta}$.
	\item By applying round elimination to $\Pi_{i,\Delta}$, we obtain a problem that is at least as hard as $\Pi_{i+1,\Delta}$, for all $0 \leq i \le \Delta - 3$.
	\item $\Pi_{\Delta-2,\Delta}$ cannot be solved in $0$ rounds given a $\Delta$-edge coloring.
\end{itemize}
Our result will follow by applying \Cref{thm:lifting} to such a sequence of problems.

More in detail, in \Cref{ssec:re-def} we define the problems $\Pi_{i,\Delta}$, and in \Cref{ssec:diagram} we prove the strength relation between their labels. Then, in \Cref{ssec:re-first} we define a problem $\Pi'_{i,\Delta}$ and show that it is a relaxation of $\re(\Pi_{i,\Delta})$. In \Cref{ssec:re-diagram-2} we prove the strength relation between the labels of $\Pi'_{i,\Delta}$. In \Cref{ssec:re-second} we prove that $\rere(\Pi'_{i,\Delta})$ can be relaxed to $\Pi_{i+1,\Delta}$. Then, in \Cref{ssec:re-non-trivial} we show that $\Pi_{\Delta-2,\Delta}$ cannot be solved in $0$ rounds given a $\Delta$-edge coloring. In \Cref{ssec:re-start}, we show that $\mathcal{P}_\Delta$ is at least as hard as $\Pi_{0,\Delta}$. Finally, in \Cref{ssec:re-lift}, we combine the results of all previous sections to prove \Cref{thm:main-lb}.

\subsection{Problem definition}\label{ssec:re-def}
For each $0 \le i \le \Delta-2$, we define the problems $\Pi_{i,\Delta}$ in the black-white formalism. 
We remark that while the problem $\Pi_{i,\Delta}$ is defined also for $i= \Delta-2$, from \Cref{ssec:diagram} onwards we will assume $i \le \Delta-3$ for technical reasons. 

\paragraph{Labels.}
The label set of $\Pi_{i,\Delta}$ is defined as
\begin{align*}
	\Sigma_{i,\Delta} = 
	~ & \{\reEE{j} \mid 1 \le j \le \Delta - i -1 \} \\
	\cup ~ &\{\reE{j} \mid \Delta - i + 1 \le j \le \Delta \}\\
	\cup ~ &\{\reMM{j} \mid 1 \le j \le \Delta - i \}\\
	\cup ~ &\{\reMY{j}{y} \mid 1 \le j \le \Delta - i \text{ and } y \in \{0,1\}\}\\
	\cup ~ &\{\reXM{j}{x} \mid 2 \le j \le \Delta - i -1 \text{ and } x \in \{0,1\}\}\\
	\cup ~ &\{\reXY{j}{x}{y} \mid 2 \le j \le \Delta - i \text{ and }x,y \in \{0,1\}\}\\
	\cup ~ &\{\reQ{j}\mid \Delta - i + 1 \le j \le \Delta \}.
\end{align*}
We call the subscript of a label its \emph{color}. For example, label $\reMM{j}$ has color $j$. When considering some fixed $i$, we say that a color $j$ is \emph{gone} if $j > \Delta - i$, otherwise it is \emph{present}. Moreover, color $\Delta - i$ is called \emph{special}. Throughout the section, when writing a configuration $L_1 \ldots L_\Delta$ we may implicitly assume that label $L_i$ is of color $i$.

\paragraph{White constraint.}
We now define the white constraint $\nodeconst_{i,\Delta}$ of problem $\Pi_{i,\Delta}$.
For an example, see the paragraph after the definition.
$\nodeconst_{i,\Delta}$ contains node configurations of three different kinds.

First, for all integers $a,b$ satisfying $1 \le a<b \le \Delta - i + 1$ and $b \neq \Delta - i$, and for all vectors $(y_a, \dots, y_{b-1}) \in \{0, 1\}^{b - a}$,
$\nodeconst_{i,\Delta}$ contains the configuration $L_1 \ldots L_\Delta$ satisfying the following properties.
\begin{itemize}[noitemsep]
	\item For all $j$ such that $1 \le j < a$, $L_j = \reMM{j}$,
	\item $L_a = \reMY{a}{y_a}$,
	\item for all $j$ such that $a < j < b$, $L_j = \reXY{j}{y_{j-1}\,}{y_{j}}$,
	\item if $b \le \Delta - i$, $L_b = \reXM{b}{y_{b-1}\,}$,
	\item for all $j$ such that $b < j \le \Delta - i$, $L_j = \reMM{j}$, and
	\item for all $j$ such that $\Delta - i + 1 \le j \le \Delta$, $L_j = \reQ{j}$.
\end{itemize}
Second, for all integers $a$ satisfying $1 \le a \le \Delta - i - 1$, $\nodeconst_{i,\Delta}$ contains the configuration $L_1 \ldots L_\Delta $ satisfying the following properties.
\begin{itemize}[noitemsep]
	\item $L_a = \reEE{a}$,
	\item for all $j$ such that $1 \le j \le \Delta - i$ and $j \neq a$, $L_j = \reMM{j}$, and
	\item for all $j$ such that $\Delta - i + 1 \le j \le \Delta$, $L_j = \reQ{j}$.
\end{itemize}
Third, for all integers $a$ satisfying $\Delta - i + 1 \le a \le \Delta$, $\nodeconst_{i,\Delta}$ contains the configuration $L_1 \ldots L_\Delta$ satisfying the following properties.
\begin{itemize}[noitemsep]
	\item $L_a = \reE{a}$,
	\item for all $j$ such that  $1 \le j \le \Delta - i$, $L_j = \reMM{j}$, and
	\item for all $j$ such that  $\Delta - i + 1 \le j \le \Delta$ and $j \neq a$, $L_j = \reQ{j}$.
\end{itemize}

\paragraph{Example for the white constraint.}
To make the above definition more accessible, we now informally show how to obtain the white constraint in a concrete way, illustrated with an example for specific values of $\Delta$ and $i$.
Let $\Delta = 7$ and $i = 2$.
We obtain $\nodeconst_{2, 7}$ by setting $\nodeconst_{2, 7} = \emptyset$ and then performing the following procedure for each bit string $B$ of $\Delta-i$ bits (which we illustrate for the case $B = 01100$).

Add to $\nodeconst_{2, 7}$ the configuration $C$ of the first kind described above where we set $a := 1$, $b := \Delta - i + 1$, and $(y_a, \dots, y_{b-1}) := B$, i.e., we add 
\[
	C = \reMY{1}{0} ~\reXY{2}{0}{1} ~\reXY{3}{1}{1} ~\reXY{4}{1}{0} ~\reXY{5}{0}{0} ~\reQ{6} ~\reQ{7}.
\]
Starting from $C$, we now create additional configurations by \emph{canceling} some bits, which we will then all add to $\nodeconst_{2,7}$ (if not already present). We can cancel a prefix of the bit string $B$, a suffix, or both at the same time. However, at least one bit must remain. Moreover, if we cancel a suffix, the canceled suffix cannot be only the last bit. 
Each time we obtain a new bit string, we transform it into a configuration analogously to how we obtained $C$ from $B$, and add it to $\nodeconst_{2,7}$.
In this way, we add the following configurations.
\begin{itemize}
	\item Prefix canceling:
	\begin{align*}	
		\reMM{1} ~\reMY{2}{1} ~\reXY{3}{1}{1} ~\reXY{4}{1}{0} ~\reXY{5}{0}{0} ~\reQ{6} ~\reQ{7} \\
		\reMM{1} ~\reMM{2} ~\reMY{3}{1} ~\reXY{4}{1}{0} ~\reXY{5}{0}{0} ~\reQ{6} ~\reQ{7} \\
		\reMM{1} ~\reMM{2} ~\reMM{3} ~\reMY{4}{0} ~\reXY{5}{0}{0} ~\reQ{6} ~\reQ{7} \\
		\reMM{1} ~\reMM{2} ~\reMM{3} ~\reMM{4} ~\reMY{5}{0} ~\reQ{6} ~\reQ{7} 
	\end{align*}
	\item Suffix canceling (note that the rules do not allow a canceling of only the fifth bit in $B$):
	\begin{align*}
		\reMY{1}{0} ~\reXY{2}{0}{1} ~\reXY{3}{1}{1} ~\reXM{4}{1} ~\reMM{5} ~\reQ{6} ~\reQ{7}\\		\reMY{1}{0} ~\reXY{2}{0}{1} ~\reXM{3}{1} ~\reMM{4} ~\reMM{5} ~\reQ{6} ~\reQ{7}\\
		\reMY{1}{0} ~\reXM{2}{0} ~\reMM{3} ~\reMM{4} ~\reMM{5} ~\reQ{6} ~\reQ{7}
	\end{align*}
	\item Prefix and suffix canceling (again, the rules do not allow a suffix canceling of only the fifth bit in $B$):
	\begin{align*}	
		\reMM{1} ~\reMY{2}{1} ~\reXY{3}{1}{1} ~\reXM{4}{1} ~\reMM{5} ~\reQ{6} ~\reQ{7} \\
		\reMM{1} ~\reMY{2}{1} ~\reXM{3}{1} ~\reMM{4} ~\reMM{5} ~\reQ{6} ~\reQ{7}\\
		\reMM{1} ~\reMM{2} ~\reMY{3}{1}  ~\reXM{4}{1} ~\reMM{5} ~\reQ{6} ~\reQ{7}
	\end{align*}
\end{itemize}
Finally, we add a small number of additional configurations to $\nodeconst_{2,7}$ that we obtain from canceling the entire bit string $B$.
However, when we cancel $B$ entirely, we are required to select in the respectively obtained configuration a label at some position $j \neq \Delta - i$ and replace this label by $\reEE{j}$ if $j < \Delta-i$, and by $\reE{j}$ otherwise.
Hence, we obtain the following additional configurations that we add to $\nodeconst_{2,7}$.
\begin{align*}	
	\reEE{1} ~ \reMM{2} ~ \reMM{3} ~ \reMM{4} ~\reMM{5} ~\reQ{6} ~\reQ{7} \\
	\reMM{1} ~ \reEE{2} ~ \reMM{3} ~ \reMM{4} ~\reMM{5} ~\reQ{6} ~\reQ{7} \\
	\reMM{1} ~ \reMM{2} ~ \reEE{3} ~ \reMM{4} ~\reMM{5} ~\reQ{6} ~\reQ{7} \\
	\reMM{1} ~ \reMM{2} ~ \reMM{3} ~ \reEE{4} ~\reMM{5} ~\reQ{6} ~\reQ{7} \\
	\reMM{1} ~ \reMM{2} ~ \reMM{3} ~ \reMM{4} ~\reMM{5} ~\reE{6} ~\reQ{7} \\
	\reMM{1} ~ \reMM{2} ~ \reMM{3} ~ \reMM{4} ~\reMM{5} ~\reQ{6} ~\reE{7}
\end{align*}

\paragraph{Black constraint.}
We now define the black constraint $\edgeconst_{i,\Delta}$ of problem $\Pi_{i,\Delta}$. Let $j$ be an integer satisfying $\Delta-i+1 \le j \le \Delta$, i.e., a color that is gone. The allowed configurations are exactly those that are described by the set of condensed configurations containing the following: 
\begin{itemize}
	\item $[\reE{j},\reQ{j}]^2 ~~~ [\reQ{j}]$
\end{itemize}
Then, the allowed configurations are all the ones described by the set of condensed configurations containing the following:
\begin{itemize}
	\item $[\reEE{1},\reMM{1},\reMY{1}{0},\reMY{1}{1}] ~~~ [\reMM{1}] ~~~ [\reMM{1},\reMY{1}{0}]$
	\item $[\reMM{1},\reMY{1}{0}]^2 ~~~ [\reMM{1},\reMY{1}{1}]$
\end{itemize}
Let $j$ be an integer satisfying $2 \le j \le \Delta - i - 1$. The allowed configurations are all the ones described by the set of condensed configurations containing the following:
\begin{itemize}[noitemsep]
	\item $[\reMM{j}] ~~~ [\reEE{j},\reMM{j},\reMY{j}{0},\reMY{j}{1},\reXM{j}{0},\reXY{j}{0}{0},\reXY{j}{0}{1},\reXM{j}{1},\reXY{j}{1}{0},\reXY{j}{1}{1}]^2$
	\item $[\reMM{j},\reMY{j}{0},\reXY{j}{0}{0},\reXM{j}{1},\reXY{j}{1}{0},\reXY{j}{1}{1}] ~~~ [\reMM{j},\reXY{j}{0}{0}]^2$
	\item $[\reMM{j},\reMY{j}{1},\reXY{j}{0}{1},\reXM{j}{1},\reXY{j}{1}{0},\reXY{j}{1}{1}] ~~~ [\reMM{j},\reXY{j}{0}{0}] ~~~ [\reMM{j},\reXY{j}{0}{1}]$
	\item $[\reMM{j},\reMY{j}{0},\reXY{j}{0}{0},\reXM{j}{1},\reXY{j}{1}{0},\reXY{j}{1}{1}] ~~~ [\reMM{j},\reXY{j}{0}{1}]^2$
	\item $[\reMM{j},\reMY{j}{1},\reXM{j}{0},\reXY{j}{0}{0},\reXY{j}{0}{1},\reXY{j}{1}{1}] ~~~ [\reMM{j},\reXY{j}{0}{0}] ~~~ [\reMM{j},\reXY{j}{1}{0}]$
	\item $[\reMM{j},\reMY{j}{0},\reXM{j}{0},\reXY{j}{0}{0},\reXY{j}{0}{1},\reXY{j}{1}{0}] ~~~ [\reMM{j},\reXY{j}{0}{0}] ~~~ [\reMM{j},\reXY{j}{1}{1}]$
	\item $[\reMM{j},\reMY{j}{0},\reXM{j}{0},\reXY{j}{0}{0},\reXY{j}{0}{1},\reXY{j}{1}{0}] ~~~ [\reMM{j},\reXY{j}{0}{1}] ~~~ [\reMM{j},\reXY{j}{1}{0}]$
	\item $[\reMM{j},\reMY{j}{1},\reXM{j}{0},\reXY{j}{0}{0},\reXY{j}{0}{1},\reXY{j}{1}{1}] ~~~ [\reMM{j},\reXY{j}{0}{1}] ~~~ [\reMM{j},\reXY{j}{1}{1}]$
	\item $[\reMM{j},\reMY{j}{1},\reXY{j}{0}{1},\reXM{j}{1},\reXY{j}{1}{0},\reXY{j}{1}{1}] ~~~ [\reMM{j},\reXY{j}{1}{0}]^2$
	\item $[\reMM{j},\reMY{j}{0},\reXY{j}{0}{0},\reXM{j}{1},\reXY{j}{1}{0},\reXY{j}{1}{1}] ~~~ [\reMM{j},\reXY{j}{1}{0}] ~~~ [\reMM{j},\reXY{j}{1}{1}]$
	\item $[\reMM{j},\reMY{j}{1},\reXY{j}{0}{1},\reXM{j}{1},\reXY{j}{1}{0},\reXY{j}{1}{1}] ~~~ [\reMM{j},\reXY{j}{1}{1}]^2$
	\item $[\reMM{j},\reMY{j}{0},\reXY{j}{0}{0},\reXY{j}{1}{0}] ~~~  [\reMM{j},\reXY{j}{0}{0}] ~~~ [\reMM{j},\reXY{j}{0}{0},\reXY{j}{1}{1}]$
	\item $[\reMM{j},\reMY{j}{1},\reXY{j}{0}{1},\reXY{j}{1}{1}] ~~~ [\reMM{j},\reXY{j}{0}{0}] ~~~ [\reMM{j},\reXY{j}{0}{1},\reXY{j}{1}{0}]$
	\item $[\reMM{j},\reMY{j}{0},\reXY{j}{0}{0},\reXY{j}{1}{0}] ~~~ [\reMM{j},\reXY{j}{0}{1}] ~~~ [\reMM{j},\reXY{j}{0}{1},\reXY{j}{1}{0}]$
	\item $[\reMM{j},\reMY{j}{1},\reXY{j}{0}{1},\reXY{j}{1}{1}] ~~~ [\reMM{j},\reXY{j}{0}{0},\reXY{j}{1}{1}] ~~~ [\reMM{j},\reXY{j}{0}{1}]$
	\item $[\reMM{j},\reMY{j}{0},\reXY{j}{0}{0},\reXY{j}{1}{0}] ~~~ [\reMM{j},\reMY{j}{1},\reXY{j}{0}{1},\reXY{j}{1}{1}] ~~~ [\reMM{j},\reXY{j}{1}{0}]$
	\item $[\reMM{j},\reXY{j}{1}{1}] ~~~ [\reMM{j},\reMY{j}{0},\reXY{j}{0}{0},\reXY{j}{1}{0}]^2$
	\item $[\reMM{j},\reXY{j}{1}{0}] ~~~ [\reMM{j},\reXY{j}{0}{1},\reXY{j}{1}{0}]^2$
	\item $[\reMM{j},\reXY{j}{1}{0}] ~~~ [\reMM{j},\reXY{j}{0}{0},\reXY{j}{1}{1}]^2$
	\item $[\reMM{j},\reXY{j}{0}{0},\reXY{j}{1}{1}] ~~~ [\reMM{j},\reXY{j}{0}{1},\reXY{j}{1}{0}] ~~~ [\reMM{j},\reXY{j}{1}{1}]$
	\item $[\reMM{j},\reXY{j}{1}{1}] ~~~ [\reMM{j},\reMY{j}{1},\reXY{j}{0}{1},\reXY{j}{1}{1}]^2$
	\item $[\reMM{j},\reXM{j}{1},\reXY{j}{1}{0},\reXY{j}{1}{1}] ~~~ [\reMM{j},\reXM{j}{0},\reXY{j}{0}{0},\reXY{j}{0}{1}]^2$
	\item $[\reMM{j},\reXM{j}{1},\reXY{j}{1}{0},\reXY{j}{1}{1}]^3$
\end{itemize}
For the case $j = \Delta - i$, i.e., for the special color, the list of allowed configurations are all the ones described by the set of condensed configurations given by the following process:
\begin{itemize}[noitemsep]
	\item Start from the condensed configurations given above for the case $2 \le j \le \Delta - i - 1$.
	\item From each disjunction, remove labels $\reXM{j}{0}$, $\reXM{j}{1}$, and $\reEE{j}$ (i.e., labels that for color $\Delta-i$ do not exist).
\end{itemize}

\paragraph{Alternative definition of the black constraint.}
We provided the allowed configurations for the case $2 \le j \le \Delta - i - 1$ as a long list. We now provide an equivalent but compact description of such configurations, that will be useful later.
Such configurations are based on the allowed configurations of the iterated GHZ problem. Each configuration is composed of 3 labels with subscript $j$, and for the moment we only consider labels containing two bits, i.e., $\reXY{j}{0}{0}$,  $\reXY{j}{0}{1}$, $\reXY{j}{1}{0}$, and $\reXY{j}{1}{1}$. Later we will provide additional configurations as a function of the ones that we provide now. The first bit of each label is called input bit, and the second is called output bit. If the number of input bits equal to $1$ is even, is required that the XOR of the output bits is equal to the OR of the input bits. Instead, if the number of input bits equal to $1$ is odd, there is no constraint on the output bits. The configurations satisfying these requirements, which we call \emph{bit configurations}, are the following:
\begin{itemize}[noitemsep]
	\item $\reXY{j}{0}{0} ~~~ \reXY{j}{0}{0} ~~~ \reXY{j}{0}{0}$
	\item $\reXY{j}{0}{0} ~~~ \reXY{j}{0}{1} ~~~ \reXY{j}{0}{1}$
	\item $\reXY{j}{0}{0} ~~~ \reXY{j}{0}{0} ~~~ \reXY{j}{1}{0}$
	\item $\reXY{j}{0}{0} ~~~ \reXY{j}{0}{0} ~~~ \reXY{j}{1}{1}$
	\item $\reXY{j}{0}{0} ~~~ \reXY{j}{0}{1} ~~~ \reXY{j}{1}{0}$
	\item $\reXY{j}{0}{0} ~~~ \reXY{j}{0}{1} ~~~ \reXY{j}{1}{1}$
	\item $\reXY{j}{0}{1} ~~~ \reXY{j}{0}{1} ~~~ \reXY{j}{1}{0}$
	\item $\reXY{j}{0}{1} ~~~ \reXY{j}{0}{1} ~~~ \reXY{j}{1}{1}$
	\item $\reXY{j}{0}{0} ~~~ \reXY{j}{1}{0} ~~~ \reXY{j}{1}{1}$
	\item $\reXY{j}{0}{1} ~~~ \reXY{j}{1}{0} ~~~ \reXY{j}{1}{0}$
	\item $\reXY{j}{0}{1} ~~~ \reXY{j}{1}{1} ~~~ \reXY{j}{1}{1}$
	\item $\reXY{j}{1}{0} ~~~ \reXY{j}{1}{0} ~~~ \reXY{j}{1}{0}$
	\item $\reXY{j}{1}{0} ~~~ \reXY{j}{1}{0} ~~~ \reXY{j}{1}{1}$
	\item $\reXY{j}{1}{0} ~~~ \reXY{j}{1}{1} ~~~ \reXY{j}{1}{1}$
	\item $\reXY{j}{1}{1} ~~~ \reXY{j}{1}{1} ~~~ \reXY{j}{1}{1}$
\end{itemize}
We now compute the set of maximal configurations satisfying the universal quantifier w.r.t.\ the above set of configurations. 
\begin{lemma}\label{lem:onlybits-maximized}
	Let $S_j$ denote the set of maximal configurations satisfying the universal quantifier w.r.t.\ the set of bit configurations of color $j$.
	Then $S_j$ contains precisely the following configurations.
	\begin{itemize}[noitemsep]
		\item $\{\reXY{j}{0}{0},\reXY{j}{1}{0},\reXY{j}{1}{1}\} ~~~ \{\reXY{j}{0}{0}\}^2$
		\item $\{\reXY{j}{0}{1},\reXY{j}{1}{0},\reXY{j}{1}{1}\} ~~~ \{\reXY{j}{0}{0}\} ~~~ \{\reXY{j}{0}{1}\}$
		\item $\{\reXY{j}{0}{0},\reXY{j}{1}{0},\reXY{j}{1}{1}\} ~~~ \{\reXY{j}{0}{1}\}^2$
		\item $\{\reXY{j}{0}{0},\reXY{j}{0}{1},\reXY{j}{1}{1}\} ~~~ \{\reXY{j}{0}{0}\} ~~~ \{\reXY{j}{1}{0}\}$
		\item $\{\reXY{j}{0}{0},\reXY{j}{0}{1},\reXY{j}{1}{0}\} ~~~ \{\reXY{j}{0}{0}\} ~~~ \{\reXY{j}{1}{1}\}$
		\item $\{\reXY{j}{0}{0},\reXY{j}{0}{1},\reXY{j}{1}{0}\} ~~~ \{\reXY{j}{0}{1}\} ~~~ \{\reXY{j}{1}{0}\}$
		\item $\{\reXY{j}{0}{0},\reXY{j}{0}{1},\reXY{j}{1}{1}\} ~~~ \{\reXY{j}{0}{1}\} ~~~ \{\reXY{j}{1}{1}\}$
		\item $\{\reXY{j}{0}{1},\reXY{j}{1}{0},\reXY{j}{1}{1}\} ~~~ \{\reXY{j}{1}{0}\}^2$
		\item $\{\reXY{j}{0}{0},\reXY{j}{1}{0},\reXY{j}{1}{1}\} ~~~ \{\reXY{j}{1}{0}\} ~~~ \{\reXY{j}{1}{1}\}$
		\item $\{\reXY{j}{0}{1},\reXY{j}{1}{0},\reXY{j}{1}{1}\} ~~~ \{\reXY{j}{1}{1}\}^2$
		\item $\{\reXY{j}{0}{0},\reXY{j}{1}{0}\} ~~~ \{\reXY{j}{0}{0}\} ~~~ \{\reXY{j}{0}{0},\reXY{j}{1}{1}\}$
		\item $\{\reXY{j}{0}{1},\reXY{j}{1}{1}\} ~~~ \{\reXY{j}{0}{0}\} ~~~ \{\reXY{j}{0}{1},\reXY{j}{1}{0}\}$
		\item $\{\reXY{j}{0}{0},\reXY{j}{1}{0}\} ~~~ \{\reXY{j}{0}{1}\} ~~~ \{\reXY{j}{0}{1},\reXY{j}{1}{0}\}$
		\item $\{\reXY{j}{0}{1},\reXY{j}{1}{1}\} ~~~ \{\reXY{j}{0}{0},\reXY{j}{1}{1}\} ~~~ \{\reXY{j}{0}{1}\}$
		\item $\{\reXY{j}{0}{0},\reXY{j}{1}{0}\} ~~~ \{\reXY{j}{0}{1},\reXY{j}{1}{1}\} ~~~ \{\reXY{j}{1}{0}\}$
		\item $\{\reXY{j}{1}{1}\} ~~~ \{\reXY{j}{0}{0},\reXY{j}{1}{0}\}^2$
		\item $\{\reXY{j}{1}{0}\} ~~~ \{\reXY{j}{0}{1},\reXY{j}{1}{0}\}^2$
		\item $\{\reXY{j}{1}{0}\} ~~~ \{\reXY{j}{0}{0},\reXY{j}{1}{1}\}^2$
		\item $\{\reXY{j}{0}{0},\reXY{j}{1}{1}\} ~~~ \{\reXY{j}{0}{1},\reXY{j}{1}{0}\} ~~~ \{\reXY{j}{1}{1}\}$
		\item $\{\reXY{j}{1}{1}\} ~~~ \{\reXY{j}{0}{1},\reXY{j}{1}{1}\}^2$
		\item $\{\reXY{j}{1}{0},\reXY{j}{1}{1}\} ~~~ \{\reXY{j}{0}{0},\reXY{j}{0}{1}\}^2$
		\item $\{\reXY{j}{1}{0},\reXY{j}{1}{1}\}^3$
	\end{itemize}
\end{lemma}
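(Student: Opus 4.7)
The plan is to apply Observation \ref{lem:combining} to the 15 bit configurations of color $j$ (viewed as condensed configurations of singleton disjunctions) and to show that the closure under combinations, once non-maximal entries are pruned, yields exactly the 22 configurations listed. Observation \ref{obs:no-union-comparable} lets us ignore any combination whose chosen coordinate disjunctions $S_u$ and $S'_u$ are comparable, since the result is dominated by one of the parents; this substantially cuts down the cases to examine.

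First I would check that each of the 22 listed configurations satisfies the universal quantifier: for a candidate $S_1~S_2~S_3$ one enumerates the (at most $3 \cdot 3 \cdot 3 = 27$) triples from $S_1 \times S_2 \times S_3$ and confirms that each of these, as a multiset, appears among the 15 bit configurations. The cleanest way to organize this is by the GHZ input parity: a disjunction that mixes labels with different input bits forces the other two disjunctions to be ``small'' in a way that still respects the GHZ output relation $b_1 \oplus b_2 \oplus b_3 = a_1 \vee a_2 \vee a_3$ on every even-parity choice. I would then establish maximality of each listed entry by showing that extending any $S_k$ with any further label from $\{\reXY{j}{0}{0}, \reXY{j}{0}{1}, \reXY{j}{1}{0}, \reXY{j}{1}{1}\}$ produces some triple that violates the GHZ relation; because the GHZ constraint is tight, exactly one such obstruction triple can always be exhibited explicitly.

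The core step is completeness, that no further maximal configuration exists. I would carry this out as an exhaustive closure computation: starting from the 15 singleton-disjunction configurations, iteratively form combinations with respect to all indices $u$ and all permutations $\sigma$, discarding pairs whose $u$-th coordinates are comparable by Observation \ref{obs:no-union-comparable}, and keep all configurations that still satisfy the universal quantifier, terminating once no new configuration appears. Since the underlying alphabet has only four elements, every reachable disjunction has size at most four, and the state space is finite and small. The main obstacle is purely combinatorial bookkeeping; this is most safely handled by computer-assisted enumeration, after which one verifies that the set of maximal fixed-point entries is the list of 22 configurations in the statement, completing the proof.
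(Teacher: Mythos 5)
Your proposal is correct in principle but takes a genuinely different route from the paper. You propose running the combination-based closure process of \cite{balliu2024towards} (as restated before \Cref{lem:combining}) to a fixed point starting from the $15$ bit configurations, pruning non-maximal entries, and checking that the result is the listed set of $22$; since combinations preserve the universal quantifier and the state space over a four-letter alphabet is small, this is a sound, mechanically verifiable (indeed computer-checkable) argument, and \Cref{obs:no-union-comparable} correctly prunes the comparable-coordinate cases. The paper instead argues by hand: it notes that the listed configurations satisfy the universal quantifier and are pairwise non-dominating, then supposes a maximal configuration $C$ outside the list and performs a structural case analysis on the multiset of sizes of $C$'s three sets --- showing no set can have size $4$, that a size-$3$ set forces the other two to be singletons, and analyzing the $(1,2,2)$ and $(2,2,2)$ cases via the parity of the input bits of pickable triples --- reaching a contradiction in each case. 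Your approach buys mechanical certainty and reusability (the same script would handle variants of the game), at the cost of delegating the completeness step to an enumeration that a reader cannot easily audit; the paper's approach buys a self-contained, human-readable proof that exploits the GHZ parity structure, at the cost of a longer and more delicate case analysis. One small caution if you implement your plan: after taking intersections, discard configurations containing an empty set, and note that maximality of each listed entry follows automatically from the fixed-point computation, so your separate ``exhibit one obstruction triple'' step is redundant (though harmless).
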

\begin{proof}
	By inspecting each configuration of $S_j$, we can observe that the universal quantifier is satisfied on all of them.
	It is straightforward to verify that none of the displayed configurations dominates any of the other displayed configurations. Hence, it suffices to show that $S_j$ contains all the maximal configurations satisfying the universal quantifier.
	Assume for a contradiction that there exists a maximal configuration $C$ satisfying the universal quantifier that is not dominated by any configuration present in $S_j$.
	
	We start by proving that $C$ cannot contain two sets of size $1$, let them be $C_1$ and $C_2$. As it is straightforward to check, the set $S_j$ contains all  configurations satisfying the universal quantifier composed of two sets of size $1$ and one set of size $3$, and that  all  configurations satisfying the universal quantifier composed of two sets of size $1$ and one set of size at most $2$ are dominated by configurations in $S_j$. Hence, for the configuration $C$ to not be in $S_j$, it must contain a set of size $4$, let it be $C_3$.
	Consider the input bits $i_1$ and $i_2$ of the labels in the sets $C_1$ and $C_2$. If the number of input bits in $(i_1,i_2)$ equal to $1$ is even, then $C_3$ cannot contain both $\reXY{j}{0}{0}$ and $\reXY{j}{0}{1}$. If the number of input bits in $(i_1,i_2)$ equal to $1$ is odd, then $C_3$ cannot contain both $\reXY{j}{1}{0}$ and $\reXY{j}{1}{1}$. Hence, $C_3$ has size at most $3$.
	
	Since a set cannot be empty, and hence it must have size at least $1$, the above also implies that $C$ cannot contain sets of size $4$ (even if the other sets have size strictly larger than $1$).
	
	Moreover, consider a configuration $C$ containing two sets of size $1$ ($C_1$ and $C_2$) and one set of size $3$ ($C_3$). By adding any element $L$ to some set of size $1$, w.l.o.g.\ to $C_1$, the obtained configuration does not satisfy the universal quantifier anymore, for the following reasons.
	\begin{itemize}
		\item Consider the case in which $L$ has an input bit that is the same as $i_1$, and hence the opposite output bit. Since $C_3$ has size $3$, from $C$ we can always pick a configuration with an even number of input bits set to $1$. If we now replace the element picked from $C_1$ with $L$, we obtain an invalid configuration.
		\item If $L$ has an input bit that is different from $i_1$, then we can pick two configurations that only differ in the element picked from $C_3$ and that have an even number of input bits set to $1$, but that have different parity in the amount of output bits set to $1$, and hence one of the two is invalid.
	\end{itemize}
	Hence, if the number of sets of size $1$ is strictly less than two, it is not possible to have sets of size $3$.
	
	We now consider the case in which $C$ contains exactly one set of size $1$, let it be $C_1$. The other sets, $C_2$ and $C_3$, must be of size $2$. Since $C$ is maximal, it is not possible to add an element to $C_1$,  and in particular the label with the same input and opposite output bit of the label in $C_1$, without losing the property that the configuration satisfies the universal quantifier. This implies that it must be possible to pick, from $C$, a configuration with an even number of input bits set to $1$.
	Let such a configuration be $P_1 ~ P_2 ~ P_3$, and let the non-picked labels be $N_2 \in C_2$ and $N_3 \in C_3$.
	It cannot be that $P_2$ and $N_2$ (resp.\ $P_3$ and $N_3$) have the same input and opposite output bit, as replacing $P_2$ with $N_2$ (resp.\ $P_3$ with $N_3$) in $C$ would result in a configuration that is not allowed (since the amount of input bits equal to $1$ stays even, but the parity of the output changes). Moreover, the configuration $P_1 ~ N_2 ~ N_3$ still needs to be allowed. It is straightforward to verify that all configurations satisfying these requirements are listed.
	
	Finally, let us consider the case in which $C$ contains only sets of size exactly $2$. We prove that each set must satisfy that all its elements have the same input bit, and that the number of sets whose elements have input bit $1$ must be odd. Given such a proof, we get a contradiction to our initial assumption by observing that there are only two configurations satisfying these requirements, and they are both listed.

	Suppose that all configurations that can be picked from $C$ have an odd number of input bits set to $1$. This implies that all sets have elements with the same input bit, since otherwise we could replace a picked element with a non-picked one and change the parity.
	Now, suppose for a contradiction that we can pick a configuration with an even number of input bits set to $1$. If any set contains a non-picked element with the same input bit as the picked one but the opposite output bit, we would get a contradiction. Hence, each non-picked element must have opposite input bit from the picked one of the same set. Hence, it must be possible to pick  $\reXY{j}{0}{0}^3$, or $\reXY{j}{0}{0} ~ \reXY{j}{0}{1}^2$. In the former case, we must be able to pick $\reXY{j}{0}{0} ~ \reXY{j}{1}{0}^2$ or $\reXY{j}{0}{0} ~ \reXY{j}{1}{1}^2$, which are both not allowed. In the latter case, 
	we can pick $\reXY{j}{0}{0} ~ \reXY{j}{1}{0}^2$, $\reXY{j}{0}{0} ~ \reXY{j}{1}{1}^2$, or $\reXY{j}{1}{0} ~ \reXY{j}{0}{1} ~ \reXY{j}{1}{1}$, which are all not allowed. This completes our proof by contradiction.
\end{proof}
We now modify the set $S_j$ defined in \Cref{lem:onlybits-maximized} by applying the following rules.
\begin{itemize}[noitemsep]
	\item If a set contains both $\reXY{j}{0}{0}$ and $\reXY{j}{0}{1}$, add $\reXM{j}{0}$.
	\item If a set contains both $\reXY{j}{1}{0}$ and $\reXY{j}{1}{1}$, add $\reXM{j}{1}$.
	\item If a set contains both $\reXY{j}{0}{0}$ and $\reXY{j}{1}{0}$, add $\reMY{j}{0}$.
	\item If a set contains both $\reXY{j}{0}{1}$ and $\reXY{j}{1}{1}$, add $\reMY{j}{1}$.
	\item Always add $\reMM{j}$.
\end{itemize}
Finally, we add the following configuration $C^*$: 
\[
\{\reMM{j}\} ~~~ \{\reEE{j},\reMM{j},\reMY{j}{0},\reMY{j}{1},\reXM{j}{0},\reXY{j}{0}{0},\reXY{j}{0}{1},\reXM{j}{1},\reXY{j}{1}{0},\reXY{j}{1}{1}\}^2.
\]
Let the result be $S'_j$. Observe that, for all $2 \le j \le \Delta - i -1$, $S'_j$ is exactly equal to the set of condensed configurations listed when defining $\edgeconst_{i,\Delta}$.
We observe the following (which will be useful later).
\begin{observation}\label{obs:already-maximized}
	For all $2 \le j \le \Delta - i -1$, the set $S'_j$ is maximal-complete.
\end{observation}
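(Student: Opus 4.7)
The plan is to apply Observation \ref{lem:combining}: to prove that $S'_j$ is maximal-complete, it suffices to verify that for any two configurations $C_1, C_2 \in S'_j$, any permutation $\sigma$, and any index $u$, the combination of $C_1$ and $C_2$ w.r.t.\ $u$ and $\sigma$ is dominated by some configuration in $S'_j$. By Observation \ref{obs:no-union-comparable}, I can further restrict attention to combinations where the two sets at the union position are incomparable; otherwise the resulting combination is already dominated by $C_1$ or $C_2$.

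First I would handle the special configuration $C^*$ separately. Since one of its sets is the singleton $\{\reMM{j}\}$ and the other two are the full set of color-$j$ labels appearing in $S'_j$, any combination involving $C^*$ is easy to analyze: either position $u$ is mapped onto the full set and the combined configuration is dominated by $C^*$, or the singleton $\{\reMM{j}\}$ is intersected against another set. In the latter case, because every set in $S'_j$ contains $\reMM{j}$ (by the augmentation rule that always adds $\reMM{j}$), the intersection still contains $\reMM{j}$, and the resulting configuration is again dominated by $C^*$.

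For combinations among the remaining twenty-two condensed configurations, I would build on Lemma \ref{lem:onlybits-maximized}. Each such configuration in $S'_j$ is obtained from a unique configuration in $S_j$ by a fixed set-level augmentation rule that adds each of $\reXM{j}{x}$, $\reMY{j}{y}$, $\reMM{j}$ precisely when the corresponding bit-labels are already in the set. The key structural property is that this augmentation commutes well with set operations on the underlying bit-labels: for any augmented sets $\widetilde A, \widetilde B$, the intersection $\widetilde A \cap \widetilde B$ is exactly the augmentation of $A \cap B$, and the union $\widetilde A \cup \widetilde B$ is dominated by the augmentation of $A \cup B$. Consequently, any combination of two augmented configurations in $S'_j$ projects (on the bit-labels) to a combination of the two underlying bit configurations in $S_j$; by the maximal-completeness of $S_j$ (which follows from Lemma \ref{lem:onlybits-maximized} via the same combining argument), there exists a dominating bit configuration in $S_j$, and its augmentation in $S'_j$ dominates the original combination.

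The main obstacle is verifying the commuting-with-augmentation property, since each of the five augmentation rules must be checked to behave consistently under intersection and union, and a few corner cases (in particular when an intersection at the non-union positions becomes empty, rendering the combination invalid and hence trivially dominated) must be explicitly ruled out. This is a routine but careful bookkeeping argument, which I would present by tabulating the augmentation rules and invoking the resulting closure property to conclude that $S'_j$ is closed under combinations.
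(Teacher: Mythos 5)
Your proposal is correct and follows essentially the same strategy as the paper: invoke the combining criterion, exclude $C^*$ via comparability, project the two configurations down to their bit-only parts in $S_j$, find a dominating configuration there via \Cref{lem:onlybits-maximized}, and re-augment. The only difference is cosmetic: where the paper verifies domination by an explicit per-label case analysis (split on whether the label lies at the union or an intersection position), you package the same facts as a ``commutes-with-augmentation'' property, namely that $\widetilde{A}\cap\widetilde{B}=\widetilde{A\cap B}$ and $\widetilde{A}\cup\widetilde{B}\subseteq\widetilde{A\cup B}$, which together with monotonicity of augmentation gives the same conclusion more uniformly. Two small points of imprecision worth tightening: your $C^*$ case split is easier to dispatch directly via \Cref{obs:no-union-comparable} (every set of $C^*$ is comparable to every set in $S'_j$, since $\{\reMM{j}\}$ is a subset of each and the full set is a superset of each); and the empty-bit-intersection corner case should be resolved by observing that the corresponding augmented position equals $\{\reMM{j}\}$, so the whole configuration is dominated by $C^*$, rather than calling it ``invalid and hence trivially dominated.''
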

\begin{proof}
	We prove that, if we take two arbitrary configurations $C^1$ and $C^2$ in $S'_j$, and we combine them w.r.t.\ an arbitrary $u$ and an arbitrary $\sigma$, we obtain a configuration $C = C_1 ~ C_2 ~ C_3$ that is dominated by some other configuration in $S'_j$. By \Cref{lem:combining} this implies that $S'_j$ is maximal-complete.
	
	We start by observing that, by \Cref{obs:no-union-comparable}, we do not need to consider the case in which $C^1 = C^*$ or $C^2 = C^*$.
	
	If an intersection results in ${\reMM{j}}$, then $C$ is dominated by the configuration $C^*$.
	Otherwise, let $C'^1$ and $C'^2$ be the configurations obtained by starting from $C^1$ and $C^2$ and removing the labels $\reXM{j}{0}$, $\reXM{j}{1}$, $\reMY{j}{0}$, $\reMY{j}{1}$, and $\reMM{j}$ from all the sets of $C^1$ and $C^2$. We obtain that $C'^1$ and $C'^2$ are configurations listed in \Cref{lem:onlybits-maximized}.
	Let $C'$ be the configuration obtained by combining $C'^1$ and $C'^2$ w.r.t.\ $u$ and $\sigma$, and let $D'$ be the configuration dominating $C'$ among the configurations listed in \Cref{lem:onlybits-maximized} (which is guaranteed to exist by \Cref{lem:onlybits-maximized}).
	In each set of $D'$ add the labels $\reXM{j}{0}$, $\reXM{j}{1}$, $\reMY{j}{0}$, $\reMY{j}{1}$, $\reMM{j}$ by using the same rules applied when defining $S'_j$ as a function of $S_j$, obtaining the configuration $D = D_1 ~ D_2 ~ D_3$.
	Clearly, the configuration $D$ is present in $S'_j$. Moreover, the following observations show that $D$ dominates $C$.
	\begin{itemize}
		\item The label $\reMM{j}$ is present in all sets of both $C$ and $D$, and hence does not affect domination.
		\item If some label $L \in \{\reXY{j}{0}{0}, \reXY{j}{0}{1}, \reXY{j}{1}{0}, \reXY{j}{1}{1}\}$ is present in $C_i$, for some $i \in \{1,2,3\}$, then $L$ is present also in $D_i$.
		\item If $C_u$ contains a label $L$ in $\{\reXM{j}{0}, \reXM{j}{1}, \reMY{j}{0}, \reMY{j}{1}\}$, then $L \in D_u$. To see this, consider w.l.o.g.\ $L = \reXM{j}{0}$. We get that $\reXY{j}{0}{0} \in C^1_u$ or $\reXY{j}{0}{0} \in C^2_{\sigma(u)}$, and $\reXY{j}{0}{1} \in C^1_u$ or $\reXY{j}{0}{1} \in C^2_{\sigma(u)}$. Hence, $\{\reXY{j}{0}{0},\reXY{j}{0}{1}\} \subseteq D_u$, and hence $\reXM{j}{0} \in D_u$.
		\item Let $i \in \{1,2,3\}$ and $i \neq u$. If $C_i$ contains a label $L$ in $\{\reXM{j}{0}, \reXM{j}{1}, \reMY{j}{0}, \reMY{j}{1}\}$, then $L \in D_i$. To see this, consider w.l.o.g.\ $L = \reXM{j}{0}$. We get that $\reXY{j}{0}{0} \in C^1_i$, $\reXY{j}{0}{0} \in C^2_{\sigma(i)}$, $\reXY{j}{0}{1} \in C^1_i$, and $\reXY{j}{0}{1} \in C^2_{\sigma(i)}$. Hence, $\{\reXY{j}{0}{0},\reXY{j}{0}{1}\} \subseteq D_i$, and hence $\reXM{j}{0} \in D_i$.
	\end{itemize}
	Observe that this covers all cases, since $\reEE{j}$ only occurs in $C^*$.
\end{proof}

\subsection{The diagram of \texorpdfstring{$\Pi_{i,\Delta}$}{Pi}}\label{ssec:diagram}
Throughout the rest of \Cref{sec:lower-bound}, we assume that $i \le \Delta - 3$.
We now provide the diagram of $\Pi_{i,\Delta}$, that is, we prove the strength relation between the labels of $\Pi_{i,\Delta}$ w.r.t.\ its black constraint.
In \Cref{sec:lb-proofs-omitted-first-step} we will first observe that, a necessary condition for a label to be at least as strong as another label, is for the two labels to be of the same color. Then, we will prove \Cref{lem:diagram-gone,lem:diagram-first,lem:diagram-special,lem:diagram-present} (which are just a long case analysis).

\begin{figure}[ht!]
	\centering
	\includegraphics[scale=0.3]{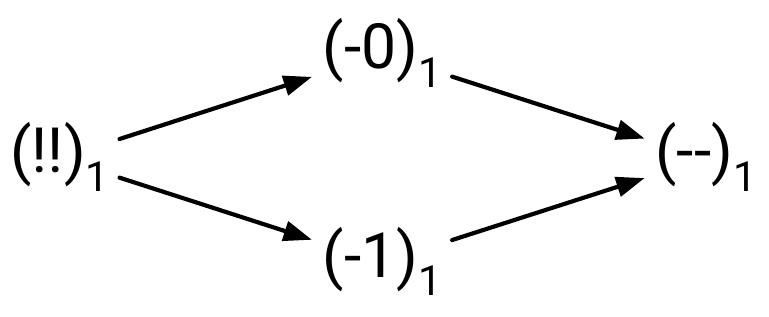}
	\caption{The strength relation of the first color.}
	\label{fig:diagram-first}
\end{figure}
\begin{figure}[ht!]
	\centering
	\includegraphics[scale=0.3]{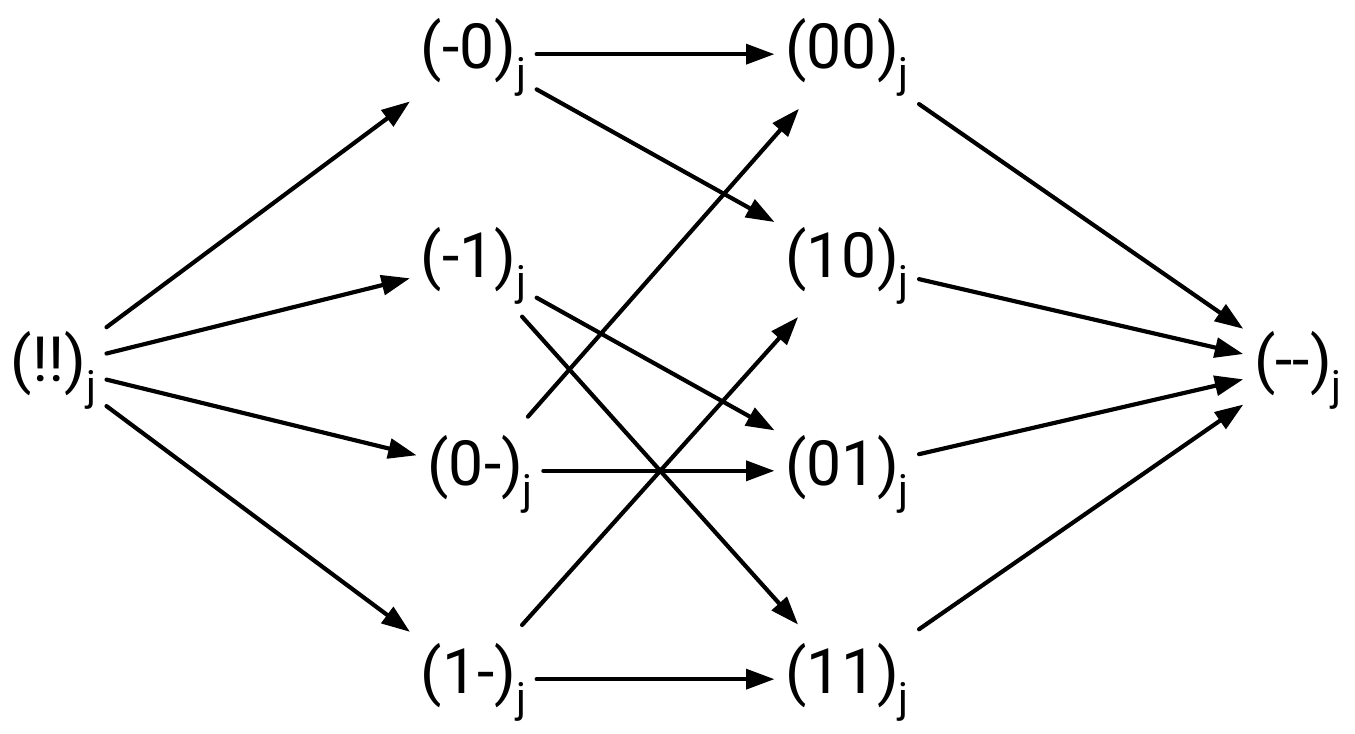}
	\caption{The strength relation of each present color $j$ different from the first and the special color.}
	\label{fig:diagram-present}
\end{figure}
\begin{figure}[ht!]
	\centering
	\includegraphics[scale=0.3]{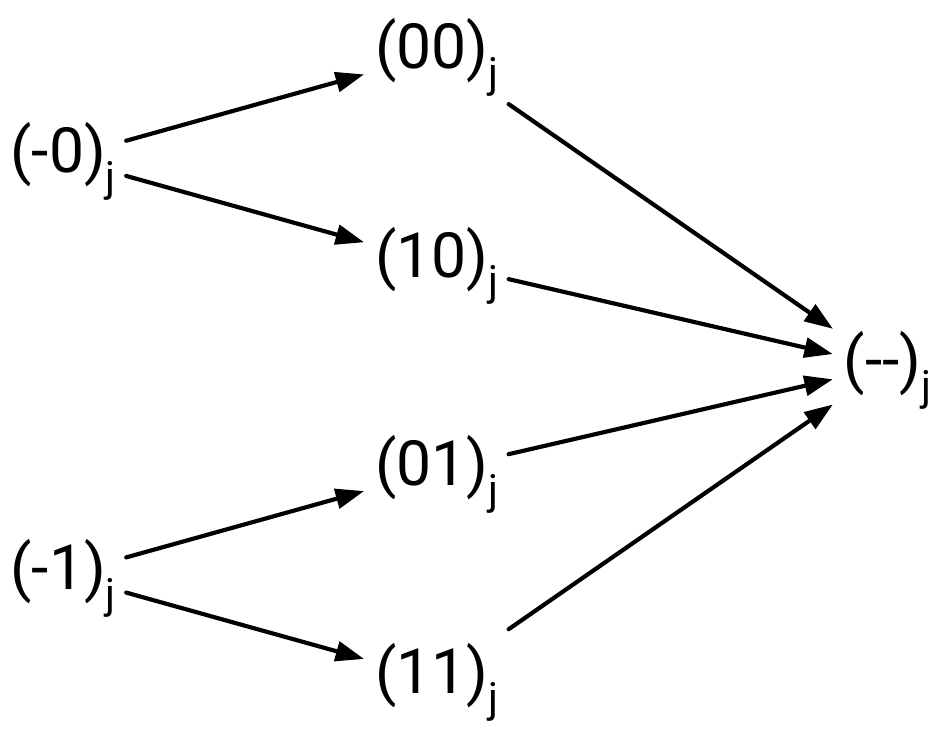}
	\caption{The strength relation of the special color $j=\Delta-i-1$.}
	\label{fig:diagram-special}
\end{figure}
\begin{figure}[ht!]
	\centering
	\includegraphics[scale=0.3]{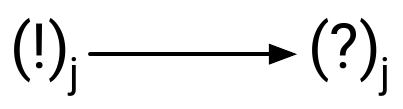}
	\caption{The strength relation of gone colors.}
	\label{fig:diagram-gone}
\end{figure}

\begin{lemma}\label{lem:diagram-first}
	The strength relation of the labels of color $1$ is given by the diagram of \Cref{fig:diagram-first}.
\end{lemma}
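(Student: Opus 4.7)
}
The plan is a direct case analysis on the four color-$1$ labels. Since $i\le\Delta-3$, color $1$ is present and is not the special color, and by inspection of $\Sigma_{i,\Delta}$ the labels of color $1$ are exactly $\reEE{1}, \reMM{1}, \reMY{1}{0}, \reMY{1}{1}$ (no $\reXM{1}{x}$ or $\reXY{1}{x}{y}$ exist, since those require $j\ge 2$). A label $\ell'$ can only be at least as strong as $\ell$ if they admit the same set of partner configurations, and every black configuration of color $1$ is built solely from labels of color $1$, so it suffices to analyze comparabilities within this four-label set.

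First I would expand the two condensed configurations listed for color $1$ in the definition of $\edgeconst_{i,\Delta}$,
\[
[\reEE{1},\reMM{1},\reMY{1}{0},\reMY{1}{1}]~[\reMM{1}]~[\reMM{1},\reMY{1}{0}] \quad\text{and}\quad [\reMM{1},\reMY{1}{0}]^2~[\reMM{1},\reMY{1}{1}],
\]
into an explicit list of the allowed multisets on a black node of color $1$. One obtains exactly eight multisets, which I would tabulate once and for all.

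Next, for each ordered pair $(\ell,\ell')$ of distinct color-$1$ labels, I would check whether $\ell\le\ell'$ by substituting $\ell'$ for $\ell$ in every allowed configuration containing $\ell$ and verifying the result remains allowed. The positive cases are exactly $\reEE{1}\le\reMY{1}{0}$, $\reEE{1}\le\reMY{1}{1}$, $\reMY{1}{0}\le\reMM{1}$, and $\reMY{1}{1}\le\reMM{1}$, each verified by direct comparison against the tabulated list. The negative cases are certified by exhibiting a single witness: e.g.\ $\reMM{1}\not\le\reMY{1}{0}$ because $\{\reMM{1},\reMM{1},\reMM{1}\}$ is allowed but $\{\reMY{1}{0},\reMM{1},\reMM{1}\}$ is not useful here (it is actually allowed, so I would instead use $\{\reEE{1},\reMM{1},\reMM{1}\}$ versus $\{\reEE{1},\reMY{1}{0},\reMM{1}\}$ — wait the latter is allowed; the correct witness is $\{\reMM{1},\reMY{1}{0},\reMY{1}{1}\}$ versus $\{\reMY{1}{0},\reMY{1}{0},\reMY{1}{1}\}$, which is allowed; so actually $\reMM{1}$ is not weaker than $\reMY{1}{0}$ witnessed by noting that $\{\reMM{1},\reMM{1},\reMM{1}\}$ exists but then substitution does give an allowed multiset — the proper witness is that $\{\reEE{1},\reMM{1},\reMM{1}\}$ is allowed while $\{\reEE{1},\reMY{1}{1},\reMM{1}\}$ is not). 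Similarly $\reMY{1}{0}$ and $\reMY{1}{1}$ are incomparable, witnessed by the allowed $\{\reEE{1},\reMM{1},\reMY{1}{0}\}$ vs the forbidden $\{\reEE{1},\reMM{1},\reMY{1}{1}\}$ in one direction, and by the allowed $\{\reMM{1},\reMM{1},\reMY{1}{1}\}$ vs the forbidden $\{\reMM{1},\reMM{1},\reMY{1}{1}\}$ — I mean the direction $\reMY{1}{1}\not\le\reMY{1}{0}$ is witnessed analogously.

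Taking the transitive reduction of the resulting partial order then gives the diamond $\reEE{1}\to\reMY{1}{0}\to\reMM{1}$ and $\reEE{1}\to\reMY{1}{1}\to\reMM{1}$, with $\reMY{1}{0}$ and $\reMY{1}{1}$ mutually incomparable, matching \Cref{fig:diagram-first} exactly. There is no conceptual obstacle: the only care needed is to include all eight configurations (in particular the $\reMY{1}{1}$-bearing ones coming from the second condensed configuration, which are easy to overlook), and then to select the correct witness pair for each negative case rather than an accidentally allowed substitution.
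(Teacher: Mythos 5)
Your overall strategy is the same as the paper's: restrict attention to the four color-$1$ labels, read the positive relations off the two condensed configurations, and kill each remaining ordered pair with an explicit witness (an allowed configuration that becomes forbidden after substitution). Your enumeration of the eight allowed multisets is correct and the final diagram is correct. However, the proof as written does not actually close two of the negative cases, because the witnesses you settle on are wrong. For $\reMM{1}\not\le\reMY{1}{0}$ you cycle through several candidates and end with ``$\reEE{1}~\reMM{1}~\reMM{1}$ is allowed while $\reEE{1}~\reMY{1}{1}~\reMM{1}$ is not'' --- but that substitution replaces $\reMM{1}$ by $\reMY{1}{1}$, so it certifies $\reMM{1}\not\le\reMY{1}{1}$, not the non-relation you were after. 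The witness you need is $\reMM{1}~\reMY{1}{0}~\reMY{1}{0}$ (allowed, picked from the first condensed configuration) versus the forbidden $\reMY{1}{0}^3$; this is exactly what the paper uses. Likewise, for $\reMY{1}{1}\not\le\reMY{1}{0}$ you write the same multiset on both sides of the ``vs'' and defer to ``analogously'', but the configuration you name, $\reMM{1}~\reMM{1}~\reMY{1}{1}$, is not a witness: substituting yields $\reMM{1}~\reMM{1}~\reMY{1}{0}$, which is allowed. The correct witness is $\reMY{1}{1}~\reMY{1}{0}~\reMY{1}{0}$ versus $\reMY{1}{0}^3$. Since exhibiting these witnesses is the entire content of the proof for those pairs, this is a real gap, not a cosmetic one.

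Two smaller points. First, ``the positive cases are exactly'' your four relations omits $\reEE{1}\le\reMM{1}$, which also holds (directly, or by transitivity); this does not change the transitive reduction, but the claim as stated is false. Second, you never address $\reMM{1}\not\le\reEE{1}$, $\reMY{1}{0}\not\le\reEE{1}$, and $\reMY{1}{1}\not\le\reEE{1}$. These do come for free once the other negatives are in place, via the transitivity observation the paper invokes ($L_1\le L_2$ and $L_3\not\le L_2$ imply $L_3\not\le L_1$, applied with $L_1=\reEE{1}$ and a suitable $L_2$), but a complete case analysis should say so explicitly rather than leave them implicit.
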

\begin{lemma}\label{lem:diagram-present}
	Let $j$ be a color satisfying $2 \le j \le \Delta - i - 1$. The strength relation of the labels of color $j$ is given by the diagram of \Cref{fig:diagram-present}.
\end{lemma}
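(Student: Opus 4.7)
The strategy is to establish both that every directed edge in \Cref{fig:diagram-present} represents a true strength relation $\ell < \ell'$, and that no further strict strength relations hold between labels of color $j$ beyond those obtained by transitive closure. By the color-preservation observation of \Cref{sec:lb-proofs-omitted-first-step}, it suffices to consider pairs of labels whose color is $j$, reducing the analysis to a fixed set of ten labels.

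For the positive direction, I will work directly with the maximal-complete set $S'_j$ of condensed configurations that describes $\edgeconst_{i,\Delta}$ restricted to color $j$ (by \Cref{obs:already-maximized}). To show $\ell \le \ell'$, I fix an arbitrary allowed triple picked from some $C \in S'_j$ using $\ell$ in some position, and argue that the triple obtained by substituting $\ell'$ for that occurrence of $\ell$ is still picked from some element of $S'_j$. Two structural facts shortcut much of this work. First, $\reMM{j}$ appears in every disjunction of every condensed configuration in $S'_j$, so $\ell \le \reMM{j}$ holds for every label $\ell$ of color $j$: the very same condensed configuration that witnesses a picking using $\ell$ also contains $\reMM{j}$ in the same slot, placing $\reMM{j}$ at the top of the diagram. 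Second, by the explicit construction of $S'_j$ from $S_j$, the labels $\reMY{j}{y}$ and $\reXM{j}{x}$ were only added to disjunctions already containing both $\reXY{j}{0}{y}$ and $\reXY{j}{1}{y}$, respectively both $\reXY{j}{x}{0}$ and $\reXY{j}{x}{1}$. Hence, any picking using $\reMY{j}{y}$ can be replaced in situ by either $\reXY{j}{0}{y}$ or $\reXY{j}{1}{y}$, and symmetrically for $\reXM{j}{x}$; this immediately yields the intermediate edges of the diagram. Finally, $\reEE{j}$ occurs only in the sweeping configuration $C^*$, whose remaining two disjunctions contain every label of color $j$, so $\reEE{j} \le \ell$ holds for every other label $\ell$ of color $j$, placing $\reEE{j} $ at the bottom.

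For the negative direction, I will exhibit, for each non-edge $\ell \not\le \ell'$ not already implied by transitivity, an explicit allowed triple containing $\ell$ whose substitution by $\ell'$ is not allowed. The bit-configurations from the iterated GHZ game will provide the cleanest witnesses: for instance, $\reXY{j}{0}{0} ~ \reXY{j}{0}{0} ~ \reXY{j}{0}{0}$ is allowed by the parity rule but $\reXY{j}{0}{0} ~ \reXY{j}{0}{0} ~ \reXY{j}{0}{1}$ violates the output-XOR condition, so the four labels $\reXY{j}{x}{y}$ are pairwise incomparable; analogous parity-based witnesses rule out $\reMY{j}{y} \ge \reXY{j}{x}{y}$, $\reXM{j}{x} \ge \reXY{j}{x}{y}$, and the remaining non-edges involving $\reMY{j}{y}$, $\reXM{j}{x}$. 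The natural bit-swap $0 \leftrightarrow 1$ and input/output-swap symmetries cut the enumeration down to a small number of essentially distinct cases.

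The main obstacle I anticipate is not conceptual difficulty but sheer bookkeeping: the condensed-configuration list defining $S'_j$ has more than twenty entries, and a systematic traversal of all label pairs requires care to avoid an overlooked case. The structural observations above (the universal presence of $\reMM{j}$ in every disjunction, the isolated role of $\reEE{j}$ in $C^*$, and the inclusion pattern inherited from the construction of $S'_j$ from $S_j$) turn each individual check into a short inspection, so the whole argument fits into a moderate-length case analysis that closely parallels the simpler argument needed for \Cref{lem:diagram-first}.
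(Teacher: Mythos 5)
Your overall strategy is the same as the paper's: the positive direction is obtained from the syntactic fact that, in the condensed description of $\edgeconst_{i,\Delta}$, every disjunction containing a label also contains all of its claimed successors (the paper's \Cref{obs:arrow-is-present}), and the negative direction is a witness-by-witness enumeration pruned by transitivity (\Cref{obs:arrow-negation-transitivity}). Your structural unpacking of the positive direction is correct: $\reMM{j}$ is added to every set of $S'_j$ and hence sits at the top; $\reXM{j}{x}$ and $\reMY{j}{y}$ are only ever added to sets already containing the two corresponding $\reXY{j}{\cdot}{\cdot}$ labels; and $\reEE{j}$ occurs only in the full disjunction of $C^*$, so it lies below everything.

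The gap is in the negative direction: neither the global bit-swap $0\leftrightarrow 1$ nor the input/output swap is a symmetry of the black constraint for colors $2 \le j \le \Delta-i-1$, so neither can be used to transport witnesses between cases. The GHZ-type constraint is asymmetric in precisely the ways these maps would need to respect: the configuration $\reXY{j}{0}{1}~\reXY{j}{0}{1}~\reXY{j}{0}{1}$ is forbidden (all inputs are $0$, so the output XOR must equal $0$, but it is $1$), yet both maps send it to $\reXY{j}{1}{0}~\reXY{j}{1}{0}~\reXY{j}{1}{0}$, which is allowed (the input sum $3$ is odd, so the outputs are unconstrained). Consequently a non-relation established for one pair of labels cannot be "transferred by symmetry" to another, and in particular your single witness ($\reXY{j}{0}{0}^3$ allowed, $\reXY{j}{0}{0}^2~\reXY{j}{0}{1}$ forbidden) only proves $\reXY{j}{0}{0}\not\le\reXY{j}{0}{1}$; the remaining eleven ordered pairs among the four labels $\reXY{j}{x}{y}$, as well as the non-edges involving $\reXM{j}{x}$, $\reMY{j}{y}$ and $\reMM{j}$, each need their own explicit witness or a reduction via \Cref{obs:arrow-negation-transitivity}. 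This is exactly what the paper's proof does: it writes out all of these witnesses, using only transitivity and the fact that configurations are multisets as legitimate shortcuts. Your plan is repairable, but as written it would leave most of the incomparability claims unjustified.
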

\begin{lemma}\label{lem:diagram-special}
	Let $j$ be the special color, that is, $j = \Delta - i$. The strength relation of the labels of color $j$ is given by the diagram of \Cref{fig:diagram-special}.
\end{lemma}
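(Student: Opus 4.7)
The plan is to establish \Cref{lem:diagram-special} by direct inspection of the black constraint $\edgeconst_{i,\Delta}$ restricted to the special color $j = \Delta - i$, essentially mimicking the approach that would be used for \Cref{lem:diagram-present}, but with the simplification that fewer labels are in play. Recall that by the definition of $\edgeconst_{i,\Delta}$, the black constraint for the special color is obtained from the one for a generic present color (i.e., for $2 \le j \le \Delta - i - 1$) by deleting the labels $\reXM{j}{0}$, $\reXM{j}{1}$, and $\reEE{j}$ from every disjunction in the condensed configurations; equivalently, by \Cref{obs:already-maximized}, the resulting set of condensed configurations is precisely the one obtained from $S'_j$ by removing those three labels wherever they occur.

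Concretely, the first step is to verify each claimed edge of \Cref{fig:diagram-special}. For each direct relation $\ell \to \ell'$ shown in the diagram, one enumerates the condensed configurations in which $\ell$ appears and checks that $\ell'$ appears in the same disjunction in each of them; this guarantees that replacing $\ell$ by $\ell'$ in any configuration picked from a condensed form with $\ell$ yields an allowed configuration, hence $\ell \le \ell'$. The second step is to rule out any missing edges and establish directness: for each ordered pair $(\ell, \ell')$ not related by a path in \Cref{fig:diagram-special}, one exhibits an explicit allowed configuration containing $\ell$ that becomes invalid when $\ell$ is replaced by $\ell'$, and for each claimed direct edge $\ell \to \ell'$ one checks that no label $\ell''$ of the special color satisfies $\ell < \ell'' < \ell'$. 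Since there are only seven labels of the special color ($\reMM{j}$, $\reMY{j}{0}$, $\reMY{j}{1}$, and the four $\reXY{j}{x}{y}$ with $x, y \in \{0, 1\}$), this amounts to a finite and manageable case analysis.

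The main obstacle is that the absence of $\reXM{j}{0}$, $\reXM{j}{1}$, and $\reEE{j}$ (which are intermediate labels in the present-color diagram) may introduce new strength relations or alter the directness of existing ones: pairs of labels that are incomparable, or only indirectly related, in \Cref{fig:diagram-present} could become directly comparable here simply because fewer condensed configurations need to be checked, and because removed labels may previously have served as witnesses of incomparability or as intermediate nodes on strength paths. The delicate part of the argument is therefore to track exactly these effects: for every pair of special-color labels whose relative status differs from what \Cref{lem:diagram-present} would predict, one must either exhibit a witnessing configuration that does not mention any removed label, or rigorously derive the new relation from the restricted condensed configurations. Once all such pairs have been handled, \Cref{lem:diagram-special} follows by combining the verified edges with the verified non-edges and the observation that any diagram is acyclic.
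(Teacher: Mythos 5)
Your plan matches the paper's own proof of \Cref{lem:diagram-special}: the positive relations are obtained exactly as you describe, by checking that $\ell'$ accompanies $\ell$ in every disjunction of the condensed black constraint (this is \Cref{obs:arrow-is-present}), and every missing relation is refuted by an explicit allowed configuration containing $\ell$ that becomes forbidden when $\ell$ is replaced by $\ell'$, with transitivity (\Cref{obs:arrow-negation-transitivity}) used to prune cases. What remains is only the finite case analysis over the seven special-color labels, which the paper carries out exhaustively in \Cref{sec:lb-proofs-omitted-first-step}.
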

\begin{lemma}\label{lem:diagram-gone}
	Let $j$ be a gone color, i.e., an integer satisfying $\Delta - i + 1 \le j \le \Delta$. The strength relation of the labels of color $j$ is given by the diagram of \Cref{fig:diagram-gone}.
\end{lemma}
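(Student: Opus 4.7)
The plan is to first reduce the claim to a within-color analysis and then handle the two labels present at each gone color by direct inspection of the (very short) list of allowed black configurations.

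First, I would invoke the observation from the preceding subsection (\cref{sec:lb-proofs-omitted-first-step}, referenced just before the lemma statements) that any two labels of distinct colors are incomparable w.r.t.\ $\edgeconst_{i,\Delta}$. This is immediate from the description of the black constraint: for any fixed color $j$, the condensed configuration $[\reE{j},\reQ{j}]^2~[\reQ{j}]$ contains only labels of color $j$, and analogously for the other colors, so no allowed configuration mixes colors. Hence a substitution that changes the color of one coordinate always leaves the constraint, and two labels can only be comparable if they share a color. It therefore suffices to fix a gone color $j$ with $\Delta - i + 1 \le j \le \Delta$ and compare the two labels of that color, namely $\reE{j}$ and $\reQ{j}$.

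Next, I would show that $\reE{j} \le \reQ{j}$. The allowed configurations for color $j$ are exactly the multisets picked from $[\reE{j},\reQ{j}]^2~[\reQ{j}]$, i.e., $\{\reE{j},\reE{j},\reQ{j}\}$, $\{\reE{j},\reQ{j},\reQ{j}\}$, and $\{\reQ{j},\reQ{j},\reQ{j}\}$. I would enumerate the allowed configurations containing $\reE{j}$ and check that each one remains allowed after replacing a single occurrence of $\reE{j}$ by $\reQ{j}$: from $\{\reE{j},\reE{j},\reQ{j}\}$ we get $\{\reE{j},\reQ{j},\reQ{j}\}$, and from $\{\reE{j},\reQ{j},\reQ{j}\}$ we get $\{\reQ{j},\reQ{j},\reQ{j}\}$, both of which are allowed. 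Hence $\reQ{j}$ is at least as strong as $\reE{j}$.

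Finally, I would show that the reverse relation fails, so that $\reE{j} < \reQ{j}$ strictly. The configuration $\{\reE{j},\reE{j},\reQ{j}\}$ is allowed, but replacing the unique $\reQ{j}$ by $\reE{j}$ produces the multiset $\{\reE{j},\reE{j},\reE{j}\}$, which cannot be picked from $[\reE{j},\reQ{j}]^2~[\reQ{j}]$ since the third disjunction contains only $\reQ{j}$. Therefore $\reE{j}$ is not at least as strong as $\reQ{j}$, confirming the strict edge $\reE{j} \to \reQ{j}$ in the diagram. Since no other edges can exist (there are only two labels of color $j$, and between-color edges are excluded), this matches \cref{fig:diagram-gone}. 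The only potential subtlety is the cross-color incomparability step, but as noted above it is immediate from the structure of the black constraint, so the argument is purely a finite case check.
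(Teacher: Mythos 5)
Your proposal is correct and follows essentially the same route as the paper's proof: reduce to a within-color comparison (the paper cites \Cref{obs:no-arrow-different-colors}), observe that $\reE{j} \le \reQ{j}$ holds because every disjunction of the condensed configuration $[\reE{j},\reQ{j}]^2~[\reQ{j}]$ containing $\reE{j}$ also contains $\reQ{j}$ (the paper cites \Cref{obs:arrow-is-present}, you verify it by direct enumeration), and rule out $\reQ{j} \le \reE{j}$ with the identical counterexample that $\reE{j}^2~\reQ{j}$ is allowed while $\reE{j}^3$ is not. No gaps.
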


\subsection{The first round elimination step}\label{ssec:re-first}
In this section, we define a family of problems $\Pi'_{i,\Delta}$, and we prove that $\re(\Pi_{i,\Delta})$ can be relaxed to $\Pi'_{i,\Delta}$. Each label of $\Pi'_{i,\Delta}$ is going to be a set of labels of $\Pi_{i,\Delta}$. Recall that with $\gen{L_1,\ldots,L_k}$ we denote the set of labels containing $L_1,\ldots,L_k$ and all of their (not necessarily direct) successors in the diagram of $\Pi_{i,\Delta}$.

\paragraph{Black constraint.}
We now define the black constraint $\edgeconst'_{i,\Delta}$ of $\Pi'_{i,\Delta}$. For each $j$ where $j$ is a gone color, the following configuration is allowed:
\begin{itemize}[noitemsep]
	\item $\gen{\reE{j}}^2 ~~~ \gen{\reQ{j}}$
\end{itemize}
For color $1$, the following configurations are allowed:
\begin{itemize}[noitemsep]
	\item $\gen{\reEE{1}} ~~~ \gen{\reMM{1}} ~~~ \gen{\reMY{1}{0}}$
	\item $\gen{\reMY{1}{0}}^2 ~~~ \gen{\reMY{1}{1}}$
\end{itemize}
For each color $j$ satisfying $2 \le j \le \Delta - i - 1$, the following configurations are allowed:
\begin{itemize}[noitemsep]
	\item $\gen{\reMM{j}} ~~~ \gen{\reEE{j}}^2$
	\item $\gen{\reMY{j}{0},\reXM{j}{1}} ~~~ \gen{\reXY{j}{0}{0}}^2$
	\item $\gen{\reMY{j}{1},\reXM{j}{1}} ~~~ \gen{\reXY{j}{0}{0}} ~~~ \gen{\reXY{j}{0}{1}}$
	\item $\gen{\reMY{j}{0},\reXM{j}{1}} ~~~ \gen{\reXY{j}{0}{1}}^2$
	\item $\gen{\reMY{j}{1},\reXM{j}{0}} ~~~ \gen{\reXY{j}{0}{0}} ~~~ \gen{\reXY{j}{1}{0}}$
	\item $\gen{\reMY{j}{0},\reXM{j}{0}} ~~~ \gen{\reXY{j}{0}{0}} ~~~ \gen{\reXY{j}{1}{1}}$
	\item $\gen{\reMY{j}{0},\reXM{j}{0}} ~~~ \gen{\reXY{j}{0}{1}} ~~~ \gen{\reXY{j}{1}{0}}$
	\item $\gen{\reMY{j}{1},\reXM{j}{0}} ~~~ \gen{\reXY{j}{0}{1}} ~~~ \gen{\reXY{j}{1}{1}}$
	\item $\gen{\reMY{j}{1},\reXM{j}{1}} ~~~ \gen{\reXY{j}{1}{0}}^2$
	\item $\gen{\reMY{j}{0},\reXM{j}{1}} ~~~ \gen{\reXY{j}{1}{0}} ~~~ \gen{\reXY{j}{1}{1}}$
	\item $\gen{\reMY{j}{1},\reXM{j}{1}} ~~~ \gen{\reXY{j}{1}{1}}^2$
	\item $\gen{\reMY{j}{0}} ~~~  \gen{\reXY{j}{0}{0}} ~~~ \gen{\reXY{j}{0}{0},\reXY{j}{1}{1}}$
	\item $\gen{\reMY{j}{1}} ~~~ \gen{\reXY{j}{0}{0}} ~~~ \gen{\reXY{j}{0}{1},\reXY{j}{1}{0}}$
	\item $\gen{\reMY{j}{0}} ~~~ \gen{\reXY{j}{0}{1}} ~~~ \gen{\reXY{j}{0}{1},\reXY{j}{1}{0}}$
	\item $\gen{\reMY{j}{1}} ~~~ \gen{\reXY{j}{0}{0},\reXY{j}{1}{1}} ~~~ \gen{\reXY{j}{0}{1}}$
	\item $\gen{\reMY{j}{0}} ~~~ \gen{\reMY{j}{1}} ~~~ \gen{\reXY{j}{1}{0}}$
	\item $\gen{\reXY{j}{1}{1}} ~~~ \gen{\reMY{j}{0}}^2$
	\item $\gen{\reXY{j}{1}{0}} ~~~ \gen{\reXY{j}{0}{1},\reXY{j}{1}{0}}^2$
	\item $\gen{\reXY{j}{1}{0}} ~~~ \gen{\reXY{j}{0}{0},\reXY{j}{1}{1}}^2$
	\item $\gen{\reXY{j}{0}{0},\reXY{j}{1}{1}} ~~~ \gen{\reXY{j}{0}{1},\reXY{j}{1}{0}} ~~~ \gen{\reXY{j}{1}{1}}$
	\item $\gen{\reXY{j}{1}{1}} ~~~ \gen{\reMY{j}{1}}^2$
	\item $\gen{\reXM{j}{1}} ~~~ \gen{\reXM{j}{0}}^2$
	\item $\gen{\reXM{j}{1}}^3$
\end{itemize}
For the special color $j = \Delta - i$, the following configurations are allowed:
\begin{itemize}[noitemsep]
	\item $\gen{\reXY{j}{0}{0},\reXY{j}{0}{1}} ~~~ \gen{\reMY{j}{0},\reMY{j}{1}}^2$
	\item $\gen{\reXY{j}{1}{0},\reXY{j}{1}{1}} ~~~ \gen{\reMY{j}{0},\reMY{j}{1}}^2$
\end{itemize}
This concludes the definition of the black constraint.

Before proving the relation between the black constraint of $\re(\Pi_{i,\Delta})$ and $\edgeconst'_{i,\Delta}$, we will state a useful observation, based on the following definitions.
For any set $S$ that contains only labels of the same color, say $j$, we define the color of $S$ as $j$. For any configuration $C$ that contains only sets of color $j$, we define the color of $C$ as $j$.

In order for a configuration of the black constraint of $\re(\Pi_{i,\Delta})$ to satisfy the universal quantifier, the following must clearly hold.
\begin{observation}\label{obs:configurations-same-color}
	All labels of $\re(\Pi_{i,\Delta})$ are sets of labels of $\Pi_{i,\Delta}$, where each set $S$ satisfies that all labels in $S$ have the same color.
	Moreover, each configuration of the black constraint of $\re(\Pi_{i,\Delta})$ consists of sets of the same color.
\end{observation}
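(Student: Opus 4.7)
The plan is to argue from first principles using the defining property of $\re$. The starting point is the syntactic observation that the black constraint $\edgeconst_{\Pi_{i,\Delta}}$, as written out in \Cref{ssec:re-def}, is itself color-partitioned: every configuration listed (whether for a gone color, for color $1$, for a generic present color $2 \le j \le \Delta-i-1$, or for the special color $\Delta-i$) consists of three labels whose subscripts all coincide with one single value of $j$. So the first thing I would do is state and use this immediate fact: \emph{every configuration in $\edgeconst_{\Pi_{i,\Delta}}$ consists of three labels all of the same color.}

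Next, for the first part of the statement, I would take an arbitrary label $S \in \Sigma_{\re(\Pi_{i,\Delta})}$ and derive that all its elements have the same color. By definition of $\Sigma_{\re(\Pi_{i,\Delta})}$, the set $S$ occurs as one of the three coordinates of some configuration $S_1\,S_2\,S_3 \in \edgeconst_{\re(\Pi_{i,\Delta})}$; without loss of generality $S = S_1$. Suppose towards a contradiction that $S_1$ contained two labels $\ell, \ell'$ of distinct colors $a \neq b$. Since $S_2$ and $S_3$ are nonempty by definition, fix any $\ell_2 \in S_2$ and $\ell_3 \in S_3$. The universal-quantifier condition in the definition of $\edgeconst_{\re(\Pi_{i,\Delta})}$ forces both $\ell\,\ell_2\,\ell_3$ and $\ell'\,\ell_2\,\ell_3$ to lie in $\edgeconst_{\Pi_{i,\Delta}}$. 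By the color-partition observation above, the first membership requires $\ell_2$ and $\ell_3$ to have color $a$, while the second requires them to have color $b$; since each label carries a unique color, $a = b$, a contradiction.

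For the second part, I would fix a configuration $S_1\,S_2\,S_3 \in \edgeconst_{\re(\Pi_{i,\Delta})}$ and show that $S_1$, $S_2$, $S_3$ all share one common color. By the first part each $S_t$ is already monochromatic; let $a$ be the color of $S_1$. Pick any $\ell_1 \in S_1$ (color $a$) and any $\ell_2 \in S_2$, $\ell_3 \in S_3$. Again the universal quantifier forces $\ell_1\,\ell_2\,\ell_3 \in \edgeconst_{\Pi_{i,\Delta}}$, and the color-partition of the black constraint then forces $\ell_2$ and $\ell_3$ to also be of color $a$. Hence $S_2$ and $S_3$ are of color $a$, completing the claim.

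There is really no main obstacle here: the whole argument is a direct unfolding of the definitions, and the only ``ingredient'' beyond the definition of $\re$ is the trivial fact that our listing of $\edgeconst_{\Pi_{i,\Delta}}$ never mixes colors within a single triple. I would write the proof very briefly, essentially as the two contradiction arguments above, and mention explicitly that nonemptiness of $S_2$ and $S_3$ (guaranteed in the definition of $\re$) is what lets us pick witnesses $\ell_2,\ell_3$ to force the color collision.
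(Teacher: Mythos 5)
Your proof is correct and follows exactly the route the paper intends: the paper states this observation without proof, treating it as immediate from the fact that every configuration of $\edgeconst_{\Pi_{i,\Delta}}$ is monochromatic together with the universal-quantifier condition in the definition of $\re$, and your two short contradiction arguments simply spell out those details. Nothing is missing.
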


In the following, we will implicitly use the fact that, for any set $\mathcal{L}$ of labels, if two labels satisfy $L_i \le L_j$, and $L_i,L_j \in \mathcal{L}$, then $\gen{L \mid L \in \mathcal{L}} = \gen{L \mid L \in \mathcal{L} \setminus \{L_j\}}$.
\begin{lemma}\label{lem:step11}
	The black constraint of $\re(\Pi_{i,\Delta})$ is dominated by $\edgeconst'_{i,\Delta}$.
\end{lemma}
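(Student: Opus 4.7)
The plan is to start from an arbitrary configuration $C = S_1~S_2~S_3 \in \edgeconst_{\re(\Pi_{i,\Delta})}$ and exhibit a configuration in $\edgeconst'_{i,\Delta}$ that dominates it. Two general reductions apply before any case analysis. First, by \Cref{obs:configurations-same-color}, the three sets $S_1, S_2, S_3$ consist of labels of a single common color $j$, so we can work color by color. Second, by \Cref{obs:rcs}, each $S_k$ is right-closed with respect to $\edgeconst_{i,\Delta}$ and hence coincides with $\gen{M_k}$ where $M_k$ is (for instance) its collection of minimal elements in the corresponding diagram from \Cref{lem:diagram-first,lem:diagram-present,lem:diagram-special,lem:diagram-gone}.

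For the boundary cases (a gone color, $j=1$, or the special color $j=\Delta-i$), the label set is small, and a direct enumeration shows that every maximal right-closed configuration satisfying the universal quantifier is identical to one of the listed $\gen{\cdot}$ configurations in $\edgeconst'_{i,\Delta}$. For example, when $j$ is gone the only nontrivial strength relation is $\reE{j} < \reQ{j}$, so the unique maximal right-closed configuration is $\{\reE{j},\reQ{j}\}^2~\{\reQ{j}\}$, which is exactly $\gen{\reE{j}}^2~\gen{\reQ{j}}$, i.e., the corresponding entry of $\edgeconst'_{i,\Delta}$.

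For the main case $2\le j\le \Delta-i-1$ I would invoke \Cref{obs:already-maximized}: the collection $S'_j$ used in the alternative definition of $\edgeconst_{i,\Delta}$ is maximal-complete, so every maximal configuration of nonempty sets satisfying the universal quantifier with respect to $\edgeconst_{i,\Delta}$ is (after reinterpreting each disjunction as a set) one of the configurations listed in $S'_j$. In particular, $C$ appears in $S'_j$. To finish, it remains to match each entry of $S'_j$ with a $\gen{\cdot}$ entry of $\edgeconst'_{i,\Delta}$: for every set appearing in an entry of $S'_j$, the generators written inside the corresponding $\gen{\cdot}$ expression produce, after closing under the strength order from \Cref{lem:diagram-present}, exactly that set. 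Domination then holds, in fact with equality.

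The main obstacle will be this last matching step: it requires a side-by-side comparison of all entries of $S'_j$ with the $\gen{\cdot}$ entries of $\edgeconst'_{i,\Delta}$, and for each expansion one must evaluate successors in the diagram of \Cref{lem:diagram-present}. The correspondence holds essentially by design, since the auxiliary labels $\reMM{j}$, $\reMY{j}{y}$, and $\reXM{j}{x}$ are added to the sets of $S_j$ by rules that mirror the strength relations of the diagram; nevertheless, spelling out each case explicitly is the most labor-intensive part of the argument, and is what turns the maximal-complete property of $S'_j$ into the desired domination statement.
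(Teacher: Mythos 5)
Your proposal shares the paper's overall skeleton: reduce to a single color via \Cref{obs:configurations-same-color}, use \Cref{obs:rcs} so that each set is right-closed, and for $2 \le j \le \Delta-i-1$ invoke \Cref{obs:already-maximized} to identify the maximal configurations with $S'_j$ and then match entries with the $\gen{\cdot}$ list (which the paper does implicitly through the definition of $S'_j$). For gone colors and color $1$ you substitute direct enumeration where the paper uses a short combination argument with \Cref{obs:no-union-comparable}; both are fine.

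The gap is in the special-color case $j = \Delta-i$. Your claim that ``a direct enumeration shows that every maximal right-closed configuration satisfying the universal quantifier is identical to one of the listed $\gen{\cdot}$ configurations in $\edgeconst'_{i,\Delta}$'' is false. For the special color, $\edgeconst'_{i,\Delta}$ lists only \emph{two} condensed configurations, whereas the collection of maximal configurations satisfying the universal quantifier with respect to $\edgeconst_{i,\Delta}$ at color $\Delta-i$ is about as large as for an ordinary present color (the special color still has the same bit configurations, and hence essentially the full list of roughly twenty entries from \Cref{lem:onlybits-maximized}, merely closed without the $\reXM$ labels). For instance, the closure of $\{\reXY{j}{0}{0},\reXY{j}{1}{0},\reXY{j}{1}{1}\}~\{\reXY{j}{0}{0}\}^2$ is a maximal member of the black constraint of $\re(\Pi_{i,\Delta})$ at the special color that is strictly dominated by, but not equal to, either listed entry. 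This is precisely where $\edgeconst'_{i,\Delta}$ is a genuine relaxation rather than an identity, and the whole point of \Cref{lem:step11} as a \emph{domination} statement is to absorb it. The paper treats this case separately: it observes (from \Cref{obs:rcs} and \Cref{fig:diagram-special}) that every set in a maximal configuration is right-closed and hence a superset of one of the four two-element sets $X_1,\dots,X_4$, uses that $\gen{\reMY{j}{0},\reMY{j}{1}}$ contains every label of special color, and then runs a case analysis over $\{X_1,\dots,X_4\}^3$ to show the only surviving configurations are the two listed ones up to domination. That analysis is the argument your plan is missing; without some replacement for it, the special color is not handled and the lemma is not established.
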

\begin{proof}
	By \Cref{obs:configurations-same-color}, each configuration of $\re(\Pi_{i,\Delta})$ is composed of labels with the same color.
	In order to prove the statement, we consider each color $j$ separately.
	\begin{itemize}
		\item For all $j \neq \Delta - i$, the configurations of $\edgeconst'_{i,\Delta}$ are exactly the condensed configurations provided when describing $\edgeconst_{i,\Delta}$. We prove that, by combining configurations of color $j$, we do not get any additional configuration. By \Cref{lem:combining}, this implies that the configurations of color $j$ are maximal-complete.
		\item For $j = \Delta -i$, we prove that each configuration of color $j$ of $\re(\Pi_{i,\Delta})$ is dominated by (at least) one of the two configurations of color $j$ that we provided.
	\end{itemize}

	Let $j$ be a gone color, i.e., $j > \Delta-i$. By \Cref{obs:no-union-comparable}, when combining the unique configuration of color $j$ with itself (for any index $u$ and permutation $\sigma$), we do not obtain any new configuration, since $\gen{\reE{j}}$ and $\gen{\reQ{j}}$ are comparable.

	Let us now consider color $1$. By \Cref{obs:no-union-comparable}, by combining $\gen{\reEE{1}} ~ \gen{\reMM{1}} ~ \gen{\reMY{1}{0}}$ with itself, we cannot get additional configurations (not dominated by ones already present), since all sets in this configuration are comparable with each other.
	By \Cref{obs:no-union-comparable}, in order to get new configurations when combining $\gen{\reMY{1}{0}}^2 ~ \gen{\reMY{1}{1}}$ with itself, we need to take the union of $\gen{\reMY{1}{0}}$ with $ \gen{\reMY{1}{1}}$, thus obtaining the configuration $\gen{\reMY{1}{0},\reMY{1}{1}} ~ \gen{\reMY{1}{0}} ~ \gen{\reMM{1}}$, which is dominated by $\gen{\reEE{1}} ~ \gen{\reMM{1}} ~ \gen{\reMY{1}{0}}$.
	Finally, if we combine $\gen{\reEE{1}} ~ \gen{\reMM{1}} ~ \gen{\reMY{1}{0}}$ with $\gen{\reMY{1}{0}}^2 ~ \gen{\reMY{1}{1}}$, by \Cref{obs:no-union-comparable} we need to take the union of $\gen{\reMY{1}{0}}$ with $ \gen{\reMY{1}{1}}$, thus obtaining the configuration  $\gen{\reMY{1}{0},\reMY{1}{1}} ~ \gen{\reMY{1}{0}} ~ \gen{\reMM{1}}$, which again is dominated by $\gen{\reEE{1}} ~ \gen{\reMM{1}} ~ \gen{\reMY{1}{0}}$.
	
	In the case $2 \le j \le \Delta - i - 1$, the claim follows by applying \Cref{obs:already-maximized} (and the definition of $\re$).
	
	Finally, we consider the case $j =  \Delta - i$, i.e., the special color. In this case, we prove that all the configurations satisfying the universal quantifier are dominated by (at least) one of the two configurations present in $\edgeconst'_{i,\Delta}$. We first observe that, since $\gen{\reMY{j}{0},\reMY{j}{1}}$ contains all labels of color $j$ (see \Cref{fig:diagram-special}), any configuration $C = C_1 ~ C_2 ~ C_3$ of color $j$ in the black constraint of $\re(\Pi_{i,\Delta})$ that is not dominated by the two configurations present in $\edgeconst'_{i,\Delta}$ must necessarily satisfy that all its sets $C_i$ satisfy $C_i \not\subseteq \gen{\reXY{j}{0}{0},\reXY{j}{0}{1}}$ and $C_i \not\subseteq \gen{\reXY{j}{1}{0},\reXY{j}{1}{1}}$. By \Cref{obs:rcs}, each set in such a configuration must be right-closed, and hence it holds that, for all $i$, $C_i$ is a superset of at least one of the following four sets: $X_1 =\{\reXY{j}{0}{0},\reXY{j}{1}{0}\}$, $X_2 = \{\reXY{j}{0}{0},\reXY{j}{1}{1}\}$, $X_3 = \{\reXY{j}{0}{1},\reXY{j}{1}{0}\}$, $X_4 = \{\reXY{j}{0}{1},\reXY{j}{1}{1}\}$.
	We consider all possible choices in $\{X_1,X_2,X_3,X_4\}^3$ (excluding permutations):
	\begin{itemize}
		\item For any choice in $\{X_3,X_4\}^3$, we can pick the configuration  $\reXY{j}{0}{1}^3$, which is not present in $\edgeconst_{i,\Delta}$.
		\item For any choice in $\{X_1,X_2\}^3$, we can pick the configuration  $\reXY{j}{0}{0} ~ \reXY{j}{1}{0}^2$ or the configuration $\reXY{j}{0}{0} ~ \reXY{j}{1}{1}^2$, which are not present in $\edgeconst_{i,\Delta}$.
		\item For any choice in $\{X_1,X_2\}^2 ~ \{X_3,X_4\}$, we can pick the configuration  $\reXY{j}{0}{0}^2 ~ \reXY{j}{0}{1}$, which is not present in $\edgeconst_{i,\Delta}$.
		\item For any choice in $\{X_1,X_2\} ~ \{X_3\}^2$, we can pick the configuration $\reXY{j}{0}{0} ~ \reXY{j}{1}{0}^2$, which is not present in $\edgeconst_{i,\Delta}$.
		\item For any choice in $\{X_1,X_2\} ~ \{X_4\}^2$, we can pick the configuration $\reXY{j}{0}{0} ~ \reXY{j}{1}{1}^2$, which is not present in $\edgeconst_{i,\Delta}$.
		\item For any choice in $\{X_1,X_2\} ~ \{X_3\} ~ \{X_4\}$, we can pick the configuration $\reXY{j}{1}{0} ~ \reXY{j}{0}{1} ~ \reXY{j}{1}{1}$, which is not present in $\edgeconst_{i,\Delta}$. \qedhere
	\end{itemize}
\end{proof}

\paragraph{White constraint.}
We first define an intermediate constraint $\nodeconst^*_{i,\Delta}$. Then, we define  $\nodeconst'_{i,\Delta}$  as a function of $\nodeconst^*_{i,\Delta}$. The constraint $\nodeconst^*_{i,\Delta}$ is defined as follows. Let $x = \Delta - i$ be the special color.
For each configuration $C \in  \nodeconst_{i,\Delta}$, the constraint $\nodeconst^*_{i,\Delta}$ contains the condensed configuration $C^*$ obtained by replacing each label of $C$ with the disjunction of sets according to the following rules (where $j$ denotes the color of $L$):
\begin{itemize}[noitemsep]
	\item $[\gen{L}]$, if $j \neq x$;
	\item $[\gen{\reXY{j}{0}{0},\reXY{j}{0}{1}}]$, if $j = x$ and $L \in \{\reXY{j}{0}{0},\reXY{j}{0}{1}\}$;
	\item $[\gen{\reXY{j}{1}{0},\reXY{j}{1}{1}}]$, if $j = x$ and $L \in \{\reXY{j}{1}{0},\reXY{j}{1}{1}\}$;
	\item $[\gen{\reMY{j}{0},\reMY{j}{1}}]$, if $j = x$ and $L \in \{\reMY{j}{0},\reMY{j}{1}\}$;
	\item $[\gen{\reXY{j}{0}{0},\reXY{j}{0}{1}}, \gen{\reXY{j}{1}{0},\reXY{j}{1}{1}}]$, if $j = x$ and $L = \reMM{j}$.
\end{itemize}

Let $\Sigma'_{i,\Delta}$ be the set of labels that appear in $\edgeconst'_{i,\Delta}$.
The white constraint $\nodeconst'_{i,\Delta}$ of $\Pi'_{i,\Delta}$ contains all the configurations described by the set of condensed configurations given by the following process. Take a condensed configuration $C^*$ in $\nodeconst^*_{i,\Delta}$, and replace each disjunction of labels $[L_1 \ldots L_k]$ with the disjunction containing $L_1,\ldots,L_k$ and all the labels $L \in \Sigma'_{i,\Delta}$ that are supersets of at least one label in $\{L_1,\ldots,L_k\}$.

\begin{lemma}\label{lem:step12}
	$\nodeconst'_{i,\Delta}$ is the set of all configurations $S_1 \ldots S_\Delta$ of sets contained in $\Sigma'_{i,\Delta}$ such that there exists some tuple  $(\ell_1, \ldots, \ell_\Delta) \in S_1 \times \ldots \times S_\Delta$ satisfying that $\ell_1~\ldots~\ell_\Delta $ is contained in $\nodeconst_{i,\Delta}$.
\end{lemma}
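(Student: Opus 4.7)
The plan is to prove the stated equality by establishing both set inclusions.

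For the inclusion $\subseteq$, I would take an arbitrary configuration $S_1 \ldots S_\Delta \in \nodeconst'_{i,\Delta}$. By construction, this configuration arises (up to a permutation $\sigma$) from some condensed configuration $C^* = D_1 \ldots D_\Delta \in \nodeconst^*_{i,\Delta}$, which itself is derived from some $C = L_1 \ldots L_\Delta \in \nodeconst_{i,\Delta}$ by applying the five replacement rules. Inspection of each rule shows that every element of the disjunction $D_j$ is a set containing the original label $L_j$. Since the superset-expansion step only adds labels that are supersets of elements of $D_j$, each $S_i$ is either equal to, or a superset of, some set in $D_{\sigma(i)}$; in particular $L_{\sigma(i)} \in S_i$. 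Taking $\ell_i := L_{\sigma(i)}$ therefore produces a tuple in $S_1 \times \ldots \times S_\Delta$ whose underlying multiset is $L_1 \ldots L_\Delta \in \nodeconst_{i,\Delta}$, as required.

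For the reverse inclusion $\supseteq$, I would start with a configuration $S_1 \ldots S_\Delta$ of labels in $\Sigma'_{i,\Delta}$ together with a valid tuple $(\ell_1, \ldots, \ell_\Delta) \in S_1 \times \ldots \times S_\Delta$ such that $\ell_1 \ldots \ell_\Delta \in \nodeconst_{i,\Delta}$. Treating $\ell_1 \ldots \ell_\Delta$ as a starting configuration in $\nodeconst_{i,\Delta}$, I apply the replacement rules to obtain a corresponding $C^* = D_1 \ldots D_\Delta \in \nodeconst^*_{i,\Delta}$. The goal reduces to showing that for each $i$ the label $S_i$ is a superset (in $\Sigma'_{i,\Delta}$) of some element of $D_i$; this places $S_i$ in the expanded disjunction at position $i$ of $C^*$ and hence $S_1 \ldots S_\Delta$ in $\nodeconst'_{i,\Delta}$.

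The key tool here is right-closedness: by direct inspection of $\edgeconst'_{i,\Delta}$, every label in $\Sigma'_{i,\Delta}$ is of the form $\gen{\cdot}$ and is thus right-closed with respect to the diagram on the black constraint. For non-special colors, the unique element of $D_i$ is $\gen{\ell_i}$, and $\ell_i \in S_i$ combined with right-closedness of $S_i$ yields $\gen{\ell_i} \subseteq S_i$ immediately. For the special color $x = \Delta - i$, the set $D_i$ is one of the prescribed merged disjunctions, and I would perform a short case analysis against the only three special-color labels available in $\Sigma'_{i,\Delta}$ (namely $\gen{\reXY{x}{0}{0},\reXY{x}{0}{1}}$, $\gen{\reXY{x}{1}{0},\reXY{x}{1}{1}}$, and $\gen{\reMY{x}{0},\reMY{x}{1}}$), using \Cref{lem:diagram-special}, to verify that any such $S_i$ containing $\ell_i$ is a superset of the merged set prescribed by the replacement rule.

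The main obstacle will be this special-color case analysis, in particular the sub-case $\ell_i = \reMM{x}$ (for which $D_i$ is the two-element disjunction $[\gen{\reXY{x}{0}{0},\reXY{x}{0}{1}}, \gen{\reXY{x}{1}{0},\reXY{x}{1}{1}}]$). Here one must exploit both the right-closedness of $S_i$ and the position of $\reMM{x}$ in the diagram of the special color to force $S_i$ to contain (at least) one of the two merged sets of $D_i$. Because the labels of the special color in $\Sigma'_{i,\Delta}$ are designed hand-in-hand with the replacement rules, this verification should ultimately reduce to a short diagram chase rather than a combinatorially deep argument.
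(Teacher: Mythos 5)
Your proposal is correct and takes essentially the same route as the paper: both reduce the claim to checking, color by color, that for $\ell \in \Sigma_{i,\Delta}$ and $L' \in \Sigma'_{i,\Delta}$ the membership $\ell \in L'$ is equivalent to $L'$ being a superset of one of the prescribed $\gen{\cdot}$-sets in the replacement rule, using the diagram lemmas and the fact that every label in $\Sigma'_{i,\Delta}$ is of the form $\gen{\cdot}$ and hence right-closed. The paper phrases this as showing that picking from the canonical replacement collection $\mathcal{C}$ coincides with $\nodeconst'_{i,\Delta}$, whereas you spell out the two set inclusions explicitly; the content, including the small case analysis at the special color where $\ell_i = \reMM{\Delta-i}$, is the same.
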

\begin{proof}
	Consider the collection $\mathcal{C}$ of all configurations obtained from $\nodeconst_{i,\Delta}$, by replacing, in each configuration, each label $L$ with the disjunction containing all sets $L' \in \Sigma'_{i,\Delta}$ satisfying $L \in L'$.

	It is sufficient to prove that the collection of configurations that can be picked from $\mathcal{C}$ is equal to $\nodeconst'_{i,\Delta}$. 
	This follows directly by the definition of $\nodeconst'_{i,\Delta}$ and \Cref{lem:diagram-first,lem:diagram-gone,lem:diagram-present,lem:diagram-special}.
\end{proof}

By combining \Cref{lem:step11} and \Cref{lem:step12}, by the definition of relaxation, we obtain the following.
\begin{lemma}\label{lem:all-step-1}
	The problem $\Pi'_{i,\Delta}$ is a relaxation of $\re(\Pi_{i,\Delta})$.
\end{lemma}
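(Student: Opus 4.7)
The plan is to verify directly the three conditions in the definition of a relaxation of $\re(\Pi_{i,\Delta})$, using the two preceding lemmas as black boxes. Recall that a problem $\Pi' = (\Sigma',\nodeconst',\edgeconst')$ is a relaxation of $\re(\Pi_{i,\Delta})$ provided that (i) every configuration in the black constraint of $\re(\Pi_{i,\Delta})$ is dominated by some configuration in $\edgeconst'$; (ii) $\Sigma'$ equals the collection of sets that appear in at least one configuration of $\edgeconst'$; and (iii) $\nodeconst'$ is exactly the set of configurations $S_1 \ldots S_\Delta$ of sets from $\Sigma'$ from which some tuple $(\ell_1,\ldots,\ell_\Delta) \in S_1 \times \cdots \times S_\Delta$ with $\ell_1~\ldots~\ell_\Delta \in \nodeconst_{i,\Delta}$ can be picked.

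First, I would invoke \Cref{lem:step11}, which states precisely that the black constraint of $\re(\Pi_{i,\Delta})$ is dominated by $\edgeconst'_{i,\Delta}$; this gives condition (i). Next, condition (ii) holds by construction: we explicitly defined $\Sigma'_{i,\Delta}$ as the set of labels appearing in $\edgeconst'_{i,\Delta}$. Finally, condition (iii) is exactly the content of \Cref{lem:step12}, which characterizes $\nodeconst'_{i,\Delta}$ as the collection of all configurations of sets from $\Sigma'_{i,\Delta}$ from which some valid white configuration of $\Pi_{i,\Delta}$ can be picked.

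Since all three conditions in the definition of relaxation are satisfied, the conclusion follows immediately. I do not expect any obstacle at this step: the substantive work (the case analysis on colors and the maximal-completeness arguments via \Cref{obs:already-maximized} and \Cref{lem:onlybits-maximized}) is entirely contained in the proofs of \Cref{lem:step11} and \Cref{lem:step12}; the present lemma is just the assembly step that packages those results into the form required by \Cref{thm:lifting}.
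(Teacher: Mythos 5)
Your proposal is correct and matches the paper's argument: the paper also derives this lemma directly by combining \Cref{lem:step11} and \Cref{lem:step12} with the definition of relaxation. Your write-up merely spells out the three conditions more explicitly than the paper's one-line proof does.
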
 

\paragraph{White constraint example.}
We now provide an example of $\nodeconst'_{i,\Delta}$, for the case $\Delta=7$ and $i=2$, by showing some examples of replacements.

In $\nodeconst_{2,7}$, the following configuration is present:
\[
\reMY{1}{0} ~\reXY{2}{0}{1} ~\reXY{3}{1}{1} ~\reXY{4}{1}{0} ~\reXY{5}{0}{0} ~\reQ{6} ~\reQ{7}.
\]
We get that, in $\nodeconst'_{2,7}$, the following configuration is present:
\[
\gen{\reMY{1}{0}} ~ \gen{\reXY{2}{0}{1}} ~ \gen{\reXY{3}{1}{1}} ~ \gen{\reXY{4}{1}{0}} ~ \gen{\reXY{5}{0}{0},\reXY{5}{0}{1}} ~ \gen{\reQ{6}} ~ \gen{\reQ{7}}.
\]
Moreover,  $\nodeconst'_{2,7}$ contains all configurations that can be obtained by starting from such a configuration and replacing some labels with arbitrary supersets. For example, the following configuration is also present:
\[
\gen{\reMY{1}{0}} ~ \gen{\reXY{2}{0}{1}} ~ \gen{\reXY{3}{1}{1}} ~ \gen{\reXY{4}{1}{0}} ~ \gen{\reXY{5}{0}{0},\reXY{5}{0}{1}} ~ \gen{\reE{6}} ~ \gen{\reE{7}}.
\]

For another example, consider the following configuration present in $\nodeconst_{2,7}$:
\[
\reMY{1}{0} ~\reXY{2}{0}{1} ~\reXM{3}{1} ~\reMM{4} ~\reMM{5} ~\reQ{6} ~\reQ{7}.
\]
We get that, in $\nodeconst'_{2,7}$, all the configurations given by the following condensed configuration (plus all the ones obtained by replacing labels with supersets) are present:
\[
[\gen{\reMY{1}{0}}] ~[\gen{\reXY{2}{0}{1}}] ~ [\gen{\reXM{3}{1}}] ~[\gen{\reMM{4}}] ~ [\gen{\reXY{5}{0}{0},\reXY{5}{0}{1}},\gen{\reXY{5}{1}{0},\reXY{5}{1}{1}}] ~[\gen{\reQ{6}}] ~[\gen{\reQ{7}}].
\]

For another example, consider the following configuration present in $\nodeconst_{2,7}$:
\[
\reMM{1} ~\reMM{2} ~\reMM{3} ~\reMM{4} ~\reMY{5}{0} ~\reQ{6} ~\reQ{7} 
\]
We get that, in $\nodeconst'_{2,7}$, the following configuration is present, plus all the ones obtained by replacing labels with supersets:
\[
\gen{\reMM{1}} ~\gen{\reMM{2}} ~\gen{\reMM{3}} ~\gen{\reMM{4}} ~ \gen{\reMY{5}{0},\reMY{5}{1}} ~\gen{\reQ{6}} ~\gen{\reQ{7}} 
\]

We observe that the constraint $\nodeconst'_{i,\Delta}$ can be described in the following alternative (and more explicit) way, that will be useful later.
Let $\nodeconst^+_{i,\Delta}$ be the set containing the following condensed configurations.
\begin{itemize}
	\item \textbf{Strikethrough configurations}. For all integers $a,b$ satisfying $1 \le a<b \le \Delta - i$, and for all vectors $(y_j \mid y_j \in \{0,1\} \text{ and } a \le j < b)$, the condensed configuration $L_1 \ldots L_\Delta$ is allowed, where:
	\begin{itemize}[noitemsep]
		\item For all $j$ such that $1 \le j < a$, $L_j = [\gen{\reMM{j}}]$;
		\item $L_a = [\gen{\reMY{a}{y_a}}]$;
		\item For all $j$ such that $a < j < b$, $L_j = [\gen{\reXY{j}{y_{j-1}\,}{y_{j}}}]$;
		\item If $b \le \Delta - i - 1$, $L_b = [\gen{\reXM{b}{y_{b-1}}}]$;
		\item For all $j$ such that $b < j \le \Delta - i -1 $, $L_j = [\gen{\reMM{j}}]$;
		\item If $b \neq \Delta - i$, $L_{\Delta-i} = [\gen{\reXY{\Delta - i}{0}{0},\reXY{\Delta - i}{0}{1}},\gen{\reXY{\Delta - i}{1}{0},\reXY{\Delta - i}{1}{1}}]$;
		\item If $b = \Delta - i$, $L_{\Delta-i} = [\gen{\reXY{\Delta-i}{y_{b-1}}{0},\reXY{\Delta-i}{y_{b-1}}{1}}]$;
		\item For all $j$ such that $\Delta - i + 1 \le j \le \Delta$, $L_j = [\gen{\reQ{j}}]$.
	\end{itemize}
	\item \textbf{Grabbing-present configurations}. For all integers $a$ satisfying $1 \le a \le \Delta - i - 1$, the condensed configuration $L_1 \ldots L_\Delta$ is allowed, where:
	\begin{itemize}
		\item $L_a = [\gen{\reEE{a}}]$;
		\item For all $j$ such that $1 \le j \le \Delta - i - 1$ and $j \neq a$, $L_j = [\gen{\reMM{j}}]$;
		\item $L_{\Delta-i} = [\gen{\reXY{\Delta - i}{0}{0},\reXY{\Delta - i}{0}{1}},\gen{\reXY{\Delta - i}{1}{0},\reXY{\Delta - i}{1}{1}}]$;
		\item For all $j$ such that $\Delta - i + 1 \le j \le \Delta$, $L_j = [\gen{\reQ{j}}]$.
	\end{itemize}
	\item \textbf{Grabbing-gone configurations}. For all integers $a$ satisfying $\Delta - i + 1 \le a \le \Delta$, the configuration $L_1 \ldots L_\Delta$ is allowed, where:
	\begin{itemize}
		\item $L_a = [\gen{\reE{a}}]$;
		\item For all $j$ such that  $1 \le j \le \Delta - i - 1$, $L_j = [\gen{\reMM{j}}]$;
		\item $L_{\Delta-i} = [\gen{\reXY{\Delta - i}{0}{0},\reXY{\Delta - i}{0}{1}},\gen{\reXY{\Delta - i}{1}{0},\reXY{\Delta - i}{1}{1}}]$;
		\item For all $j$ such that  $\Delta - i + 1 \le j \le \Delta$ and $j \neq a$, $L_j = [\gen{\reQ{j}}]$.
	\end{itemize}
	\item \textbf{Grabbing-special configuration}. The configuration $L_1 \ldots L_\Delta$ is allowed where:
	\begin{itemize}
		\item For all $j$ such that  $1 \le j \le \Delta - i - 1$, $L_j = [\gen{\reMM{j}}]$;
		\item $L_{\Delta-i} = [\gen{\reMY{\Delta-i}{0},\reMY{\Delta-i}{1}}]$;
		\item For all $j$ such that  $\Delta - i + 1 \le j \le \Delta$, $L_j = [\gen{\reQ{j}}]$. 
	\end{itemize}
\end{itemize}
Then, let $\nodeconst^*_{i,\Delta}$ be the set of condensed configurations obtained as follows. Take a configuration $C^+$ from $\nodeconst^+_{i,\Delta}$. Replace, in $C^+$, each disjunction of labels $[L_1 \ldots L_k]$ with the disjunction containing $L_1,\ldots,L_k$ and all the labels $L \in \Sigma'_{i,\Delta}$ that are supersets of at least one label in $\{L_1,\ldots,L_k\}$.

By the definitions of  $\nodeconst^*_{i,\Delta}$ and $\nodeconst'_{i,\Delta}$, we obtain the following observation.
\begin{observation}\label{obs:black-intermediate-explicit}
	The white constraint $\nodeconst'_{i,\Delta}$ is the set containing all the configurations $C$ that can be picked from $\nodeconst^*_{i,\Delta}$.
\end{observation}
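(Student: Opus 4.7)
The plan is to prove the observation by establishing that the alternative characterization of $\nodeconst^*_{i,\Delta}$ (via $\nodeconst^+_{i,\Delta}$ followed by the superset extension) produces the same set of condensed configurations as the original definition of $\nodeconst^*_{i,\Delta}$ enlarged with supersets, which, by construction, yields $\nodeconst'_{i,\Delta}$ when picking. Once the equality of the two enlarged sets is confirmed, the observation follows immediately from the original definition of $\nodeconst'_{i,\Delta}$.

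First I would perform a case analysis on the three kinds of configurations present in $\nodeconst_{i,\Delta}$, applying the original replacement rules defining $\nodeconst^*_{i,\Delta}$ to each. For the first kind with parameters $a,b$ and vector $(y_a, \ldots, y_{b-1})$, the behavior splits according to the label appearing at the special position $\Delta - i$: when $b \le \Delta - i - 1$, the label $\reMM{\Delta-i}$ becomes $[\gen{\reXY{\Delta-i}{0}{0}, \reXY{\Delta-i}{0}{1}}, \gen{\reXY{\Delta-i}{1}{0}, \reXY{\Delta-i}{1}{1}}]$, matching a strikethrough configuration in $\nodeconst^+_{i,\Delta}$ with the same $a,b$; when $b = \Delta - i + 1$ and $a < \Delta - i$, the label $\reXY{\Delta-i}{y_{\Delta-i-1}}{y_{\Delta-i}}$ becomes $[\gen{\reXY{\Delta-i}{y_{\Delta-i-1}}{0}, \reXY{\Delta-i}{y_{\Delta-i-1}}{1}}]$, matching a strikethrough configuration with $b = \Delta - i$; and when $b = \Delta - i + 1$, $a = \Delta - i$, the label $\reMY{\Delta-i}{y_{\Delta-i}}$ becomes $[\gen{\reMY{\Delta-i}{0}, \reMY{\Delta-i}{1}}]$, matching precisely the grabbing-special configuration. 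For the second and third kinds of $\nodeconst_{i,\Delta}$ (using $\reEE{a}$ and $\reE{a}$ respectively), the correspondence is direct: after the replacements they yield the grabbing-present and grabbing-gone configurations of $\nodeconst^+_{i,\Delta}$, with the $\reMM{\Delta-i}$ label at position $\Delta - i$ again becoming the appropriate two-element disjunction.

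Having established this bijective correspondence between the base configurations of both descriptions, the superset extension (adding all $L \in \Sigma'_{i,\Delta}$ that are supersets of at least one label in the original disjunction) is applied identically in both cases, so the resulting enlarged sets of condensed configurations coincide. Since $\nodeconst'_{i,\Delta}$ was originally defined as the set of configurations that can be picked from this enlarged set, and the new $\nodeconst^*_{i,\Delta}$ is by definition exactly this enlarged set, the observation follows.

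The main potential obstacle is merely bookkeeping: carefully tracking the irregular treatment of the special color $\Delta - i$ across the three kinds of base configurations, and ensuring that the boundary cases (in particular $a = \Delta - i$ or $b = \Delta - i + 1$ in the first kind) are matched with the correct strikethrough or grabbing-special configuration. However, each verification is mechanical, reducing to checking that the rule applied to a given label produces the disjunction specified at the corresponding position of the matching configuration in $\nodeconst^+_{i,\Delta}$, so no genuine mathematical difficulty is expected.
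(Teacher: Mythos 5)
Your proof is correct and matches the paper's approach: the paper merely asserts the observation follows ``by the definitions,'' and your case analysis carefully unfolds the required matching between the replaced configurations of $\nodeconst_{i,\Delta}$ and the strikethrough, grabbing-present, grabbing-gone, and grabbing-special configurations of $\nodeconst^+_{i,\Delta}$. One small imprecision: the correspondence is not bijective but a surjection with a 2-to-1 collapse at the special color (when $b = \Delta - i + 1$, both choices of $y_{\Delta - i} \in \{0,1\}$ yield the same condensed configuration), but since only equality of the resulting sets is needed, the argument goes through unchanged.
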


\subsection{The diagram of \texorpdfstring{$\Pi'_{i,\Delta}$}{Pi'}}\label{ssec:re-diagram-2}
We now provide the diagram of $\Pi'_{i,\Delta}$, that is, we prove the strength relation between the labels of $\Pi'_{i,\Delta}$ w.r.t.\ its white constraint. Again, a necessary condition for a label to be at least as strong as another label is for the two labels to be of the same color.
\begin{figure}[ht!]
	\centering
	\includegraphics[scale=0.3]{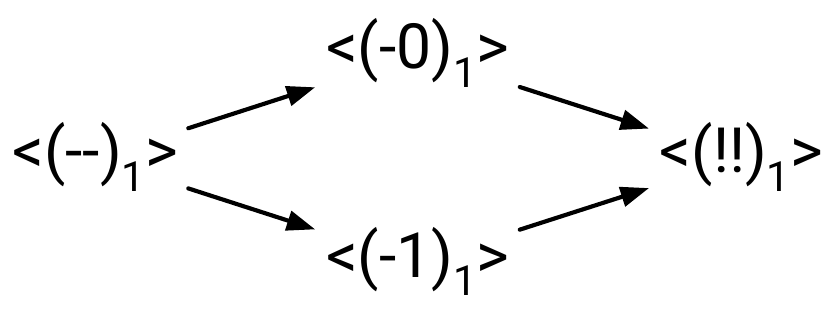}
	\caption{The strength relation of the first color.}
	\label{fig:diagram-intermediate-first}
\end{figure}
\begin{figure}[ht!]
	\centering
	\includegraphics[scale=0.3]{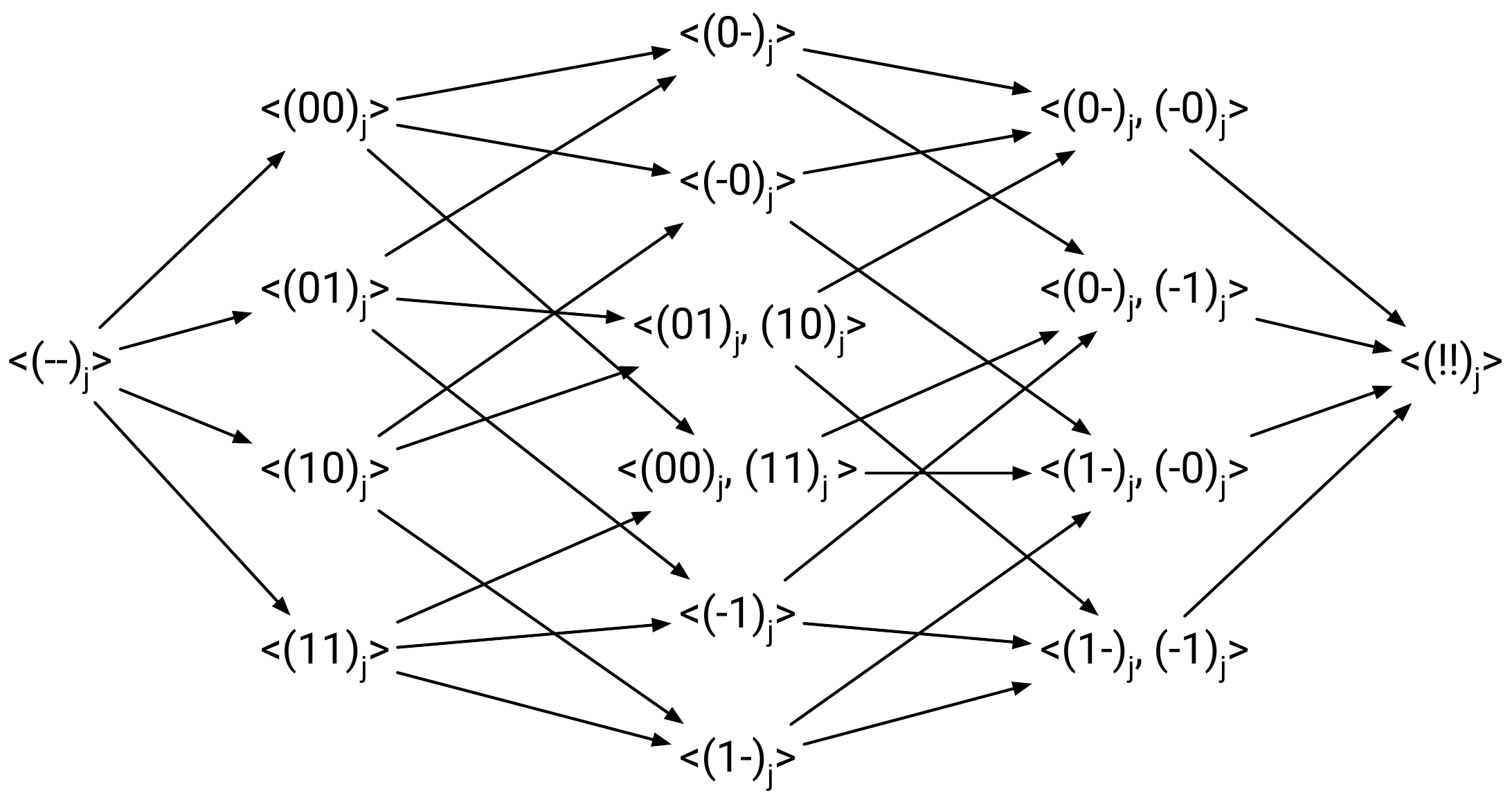}
	\caption{The strength relation of each present color $j$ different from the first and the special color.}
	\label{fig:diagram-intermediate-present}
\end{figure}
\begin{figure}[ht!]
	\centering
	\includegraphics[scale=0.3]{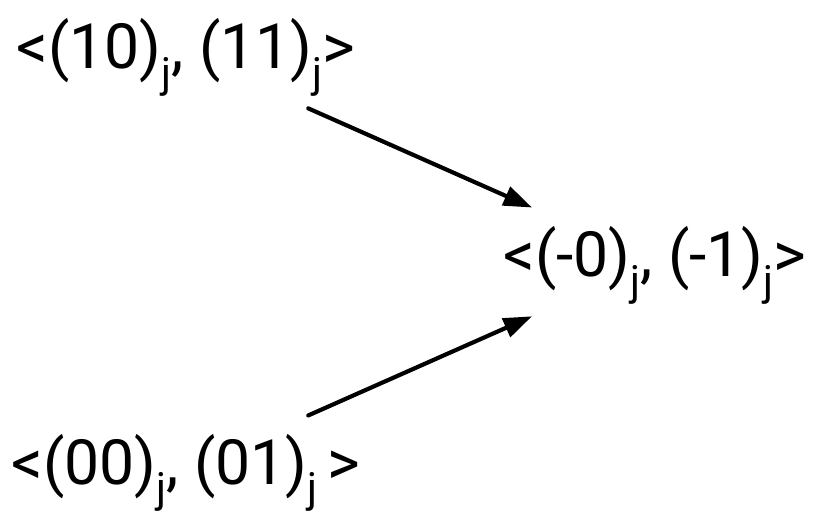}
	\caption{The strength relation of the special color $j=\Delta - i - 1$.}
	\label{fig:diagram-intermediate-special}
\end{figure}
\begin{figure}[ht!]
	\centering
	\includegraphics[scale=0.3]{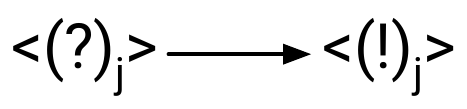}
	\caption{The strength relation of gone colors.}
	\label{fig:diagram-intermediate-gone}
\end{figure}

\begin{lemma}\label{lem:diagram-intermediate-first}
	The strength relation of the labels of color $1$ is given by the diagram of \Cref{fig:diagram-intermediate-first}.
\end{lemma}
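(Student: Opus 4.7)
The plan is to prove the diagram in two directions: forward (every arrow depicts a genuine strength relation) and backward (no undrawn arrow exists between pairs of labels of color~$1$). First I would identify the color-$1$ labels of $\Sigma'_{i,\Delta}$. By inspection of the black constraint $\edgeconst'_{i,\Delta}$, the only color-$1$ labels appearing in it are $\gen{\reEE{1}}$, $\gen{\reMM{1}}$, $\gen{\reMY{1}{0}}$, and $\gen{\reMY{1}{1}}$. Using \Cref{lem:diagram-first} (the diagram of $\Pi_{i,\Delta}$ restricted to color~$1$), I would compute these generated sets explicitly, obtaining the inclusion chain $\gen{\reMM{1}} \subsetneq \gen{\reMY{1}{0}}, \gen{\reMY{1}{1}} \subsetneq \gen{\reEE{1}}$ with $\gen{\reMY{1}{0}}$ and $\gen{\reMY{1}{1}}$ mutually incomparable as sets.

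For the forward direction I would prove a one-line general monotonicity fact: if $S, S' \in \Sigma'_{i,\Delta}$ with $S \subseteq S'$, then $S \le S'$ with respect to $\nodeconst'_{i,\Delta}$. Indeed, by \Cref{lem:step12}, a configuration $S~L_2~\cdots~L_\Delta$ belongs to $\nodeconst'_{i,\Delta}$ iff there is a tuple $(\ell,\ell_2,\ldots,\ell_\Delta) \in S \times L_2 \times \cdots \times L_\Delta$ with $\ell~\ell_2~\cdots~\ell_\Delta \in \nodeconst_{i,\Delta}$; since $\ell \in S \subseteq S'$, the same tuple is a valid witness for $S'~L_2~\cdots~L_\Delta$. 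Applying this to the chain above gives every arrow shown in \Cref{fig:diagram-intermediate-first}.

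For the backward direction I would use the explicit description of $\nodeconst'_{i,\Delta}$ via $\nodeconst^+_{i,\Delta}$ given by \Cref{obs:black-intermediate-explicit} to rule out the remaining would-be arrows, namely (i)~$\gen{\reMY{1}{0}} \leftrightarrow \gen{\reMY{1}{1}}$ and (ii)~any direct arrow that would skip an intermediate label in the chain. For (i), I would take the strikethrough configuration with $a=1$ and $y_1 = 1$; its leading label is $\gen{\reMY{1}{1}}$, and the subsequent label (of color~$2$) forces the picked label at position~$1$ to have output bit $y_1 = 1$. Replacing $\gen{\reMY{1}{1}}$ by $\gen{\reMY{1}{0}}$ leaves only picks from $\{\reMM{1}, \reMY{1}{0}\}$, neither of which supplies $y_1 = 1$; a case check against all condensed configurations of $\nodeconst^+_{i,\Delta}$ shows no alternative pick survives, so the replaced configuration is not in $\nodeconst'_{i,\Delta}$. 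The symmetric argument (with $y_1 = 0$) handles the converse direction. For (ii), the same witness doubles as a proof that the chain is strict—an arrow directly from $\gen{\reMM{1}}$ to $\gen{\reEE{1}}$ would, by minimality of the diagram, require the intermediate labels to be strength-equivalent, contradicting the incomparability just established.

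The main obstacle I anticipate is the case analysis needed to verify that, after the label substitution, no configuration in $\nodeconst^+_{i,\Delta}$ becomes applicable. Because $\nodeconst'_{i,\Delta}$ is closed under enlarging its sets to supersets in $\Sigma'_{i,\Delta}$, I must rule out that some grabbing or strikethrough template of different type $(a,b)$ could accommodate the replacement; this is straightforward since the remaining positions of the witness carry singleton generated sets $\gen{\reXY{2}{1}{y_2}}$, $\gen{\reMM{j}}$, $\gen{\reQ{j}}$, each of which appears in only one template family and pins down $a$, $b$, and the bit vector, but writing it out requires care. Once this check is in place the diagram is established.
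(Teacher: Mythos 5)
Your proof is correct and is structurally the same as the paper's: establish the depicted arrows, then show the remaining pairs are not in relation via a witness configuration plus transitivity. Your forward direction is slightly more streamlined than the paper's: you observe that \Cref{lem:step12} characterizes $\nodeconst'_{i,\Delta}$ existentially, so a witness tuple for $S~L_2~\cdots~L_\Delta$ is automatically a witness for $S'~L_2~\cdots~L_\Delta$ whenever $S\subseteq S'$; the paper uses \Cref{obs:arrow-is-present-2} (a syntactic condition on the condensed-configuration description), which gives the same conclusion but is less direct. For the backward direction your witness (strikethrough template with $a=1$, $y_1=1$) differs from the paper's (which sets $a=1$, $b=\Delta-i+1$, $y=1^{\Delta-i}$), but both are valid counterexamples to $\gen{\reMY{1}{1}}\le\gen{\reMY{1}{0}}$, and the symmetric version handles the converse. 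You derive the strictness of the chain and $\gen{\reEE{1}}\not\le\gen{\reMY{1}{0}},\gen{\reMY{1}{1}},\gen{\reMM{1}}$ from transitivity; the paper also uses transitivity (\Cref{obs:arrow-negation-transitivity-2}) but records one additional explicit witness for $\gen{\reEE{1}}\not\le\gen{\reMY{1}{0}}$---a difference in which relations get their own witnesses, not in the logic. Two minor imprecisions worth noting: $\gen{\reXY{2}{1}{y_2}}$ is not a singleton (it also contains $\reMM{2}$), and $\gen{\reMM{j}}$ appears in every template family, not just one, so the case check is pinned down less trivially than stated; neither affects the soundness of the argument, since the actual case analysis (which you acknowledge requires care) is what both proofs implicitly delegate to inspection.
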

\begin{lemma}\label{lem:diagram-intermediate-present}
	Let $j$ be a color satisfying $2 \le j \le \Delta - i - 1$. The strength relation of the labels of color $j$ is given by the diagram of \Cref{fig:diagram-intermediate-present}.
\end{lemma}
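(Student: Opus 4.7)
The plan is to prove Lemma~\ref{lem:diagram-intermediate-present} by a finite, mechanical case analysis enabled by the explicit description of $\nodeconst'_{i,\Delta}$ provided by Observation~\ref{obs:black-intermediate-explicit}. First, I would identify the color-$j$ labels of $\Pi'_{i,\Delta}$ by inspecting the black constraint $\edgeconst'_{i,\Delta}$ for colors in $\{2,\dots,\Delta-i-1\}$: these are the sets $\gen{\reMM{j}}$, $\gen{\reEE{j}}$, $\gen{\reMY{j}{y}}$, $\gen{\reXM{j}{x}}$, $\gen{\reXY{j}{x}{y}}$, $\gen{\reMY{j}{y},\reXM{j}{x}}$, $\gen{\reXY{j}{0}{0},\reXY{j}{1}{1}}$, and $\gen{\reXY{j}{0}{1},\reXY{j}{1}{0}}$, which are precisely the nodes in Figure~\ref{fig:diagram-intermediate-present}. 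The expansion of each $\gen{\cdot}$ is determined by the diagram of $\Pi_{i,\Delta}$ at color $j$ (Lemma~\ref{lem:diagram-present}).

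Next, I would reduce the strength test to a set-containment check. In every condensed configuration $C^+ \in \nodeconst^+_{i,\Delta}$, position $j$ (a present, non-special color) carries a singleton base disjunction drawn from the finite list $\{\gen{\reMM{j}}, \gen{\reMY{j}{0}}, \gen{\reMY{j}{1}}, \gen{\reXY{j}{x}{y}}, \gen{\reXM{j}{x}}, \gen{\reEE{j}}\}$. Passing from $\nodeconst^+_{i,\Delta}$ to $\nodeconst^*_{i,\Delta}$ closes each such base disjunction under supersets within $\Sigma'_{i,\Delta}$, and since color-$j$ labels can occupy only position $j$ in any picked configuration (every position of $\nodeconst^+_{i,\Delta}$ has a fixed color), the strength relation becomes: $L \le L'$ holds w.r.t.\ $\nodeconst'_{i,\Delta}$ if and only if for every base label $L^+$ in the above list, $L \supseteq L^+$ implies $L' \supseteq L^+$. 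This criterion is decoupled from the other positions of any condensed configuration because they involve labels of different colors and play no role in comparing two color-$j$ labels at position $j$.

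Finally, I would run this criterion over all pairs of color-$j$ labels: each edge $L \to L'$ of Figure~\ref{fig:diagram-intermediate-present} is verified by a direct subset comparison against the fixed base-label list, and each non-edge is certified by exhibiting a single base label $L^+$ that separates the two candidates (i.e.\ $L \supseteq L^+$ but $L' \not\supseteq L^+$, or vice versa). Taking the transitive reduction of the resulting comparability relation then produces the diagram. The main obstacle is purely the bookkeeping — there are a dozen or so labels and ten base labels to check against — but no new idea is required beyond the reduction above; the claim is a finite containment computation once the diagram of $\Pi_{i,\Delta}$ has been transcribed to expand each $\gen{\cdot}$.
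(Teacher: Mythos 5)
Your plan for the \emph{positive} relations (the edges of \Cref{fig:diagram-intermediate-present}) is sound and coincides with the paper's: by \Cref{obs:black-intermediate-explicit}, every disjunction of $\nodeconst^*_{i,\Delta}$ at a present non-special position $j$ is the superset-closure of one of the ten singleton base labels, so if every base contained in $L$ is also contained in $L'$, every disjunction containing $L$ contains $L'$, and $L \le L'$ follows (this is exactly \Cref{obs:arrow-is-present-2}).

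The gap is in the \emph{negative} direction. You claim that exhibiting a base label $L^+$ with $L \supseteq L^+$ but $L' \not\supseteq L^+$ certifies $L \not\le L'$, on the grounds that the check is "decoupled from the other positions." It is not. By definition, $L \not\le L'$ requires a full configuration $C \in \nodeconst'_{i,\Delta}$ containing $L$ such that $C$ with $L$ replaced by $L'$ is not pickable from \emph{any} condensed configuration of $\nodeconst^*_{i,\Delta}$ --- not merely that it fails to be pickable from the condensed configuration that witnessed $C$. The substituted configuration may be re-pickable from a \emph{different} condensed configuration whose position-$j$ base is contained in $L'$, and whether this escape exists depends entirely on the labels at the other positions, because the strikethrough configurations correlate adjacent positions (the output bit of color $k$ is the input bit of color $k+1$, and the interval $[a,b]$ must be contiguous). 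Concretely, for $L = \gen{\reXM{j}{0}}$ and $L' = \gen{\reXM{j}{1},\reMY{j}{1}}$ your separating base is $\gen{\reXY{j}{0}{0}}$; but $L'$ \emph{does} contain $\gen{\reXY{j}{0}{1}}$ and $\gen{\reMM{j}}$, so the substituted configuration could a priori be re-picked from a strikethrough whose output bit at $j$ is $1$, or from one whose interval does not reach $j$ at all. Ruling out every such alternative forces a careful choice of the surrounding labels (a chain of $\gen{\reXY{k}{0}{0}}$ on both sides of $j$ together with $\gen{\reXY{\Delta-i}{0}{0},\reXY{\Delta-i}{0}{1}}$ at the special position, so that neither shifting the bit nor truncating the interval is possible), and different non-relations require different surroundings (compare the paper's witnesses for $\gen{\reXM{j}{0}} \not\le \gen{\reXM{j}{1},\reMY{j}{0}}$ versus $\gen{\reXM{j}{0}} \not\le \gen{\reXM{j}{1},\reMY{j}{1}}$, which use opposite bit chains after position $j$). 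These explicit global witness configurations, plus the transitivity shortcuts of \Cref{obs:arrow-negation-transitivity-2}, are the actual content of the proof; your reduction assumes away exactly this part.
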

\begin{lemma}\label{lem:diagram-intermediate-special}
	Let $j$ be the special color, that is, $j = \Delta - i$. The strength relation of the labels of color $j$ is given by the diagram of \Cref{fig:diagram-intermediate-special}.
\end{lemma}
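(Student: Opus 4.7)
The plan is to identify the labels of the special color $j = \Delta - i$ that belong to $\Sigma'_{i,\Delta}$, and then read off the strength relations using the explicit description of $\nodeconst'_{i,\Delta}$ via $\nodeconst^+_{i,\Delta}$ (\cref{obs:black-intermediate-explicit}). A quick inspection of $\edgeconst'_{i,\Delta}$ shows that, at the special color, the only disjunctions appearing are $\gen{\reXY{j}{0}{0}, \reXY{j}{0}{1}}$, $\gen{\reXY{j}{1}{0}, \reXY{j}{1}{1}}$, and $\gen{\reMY{j}{0}, \reMY{j}{1}}$, so these are the only labels to consider.

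First I would compute the set-inclusion order among these three labels by evaluating $\gen{\cdot}$ explicitly from the diagram of $\Pi_{i,\Delta}$ at the special color (\cref{lem:diagram-special}); note that for this color $\reXM{j}{x}$ and $\reEE{j}$ do not exist, which simplifies the bookkeeping. By construction, $\nodeconst'_{i,\Delta}$ is closed under replacing a label by any of its supersets in $\Sigma'_{i,\Delta}$, so every set inclusion $L \subseteq L'$ immediately yields a strength relation $L \le L'$. Taking the transitive reduction of this inclusion poset should then match the edges claimed in \cref{fig:diagram-intermediate-special}.

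The remaining task is to rule out any additional strength relations, i.e., to show that for each pair of labels $L, L'$ of the special color that are not related in the claimed diagram, there is a configuration $C \in \nodeconst'_{i,\Delta}$ placing $L$ at the special-color slot such that replacing $L$ by $L'$ leaves $\nodeconst'_{i,\Delta}$. These witnesses can be extracted directly from $\nodeconst^+_{i,\Delta}$: a strikethrough configuration with $b = \Delta - i$ pins the special-color slot to exactly $\gen{\reXY{\Delta-i}{y_{b-1}}{0}, \reXY{\Delta-i}{y_{b-1}}{1}}$ for a chosen value of $y_{b-1} \in \{0,1\}$, and this choice of bit distinguishes the two $XY$-type labels from each other. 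Meanwhile, the grabbing-special configuration is the sole configuration in $\nodeconst^+_{i,\Delta}$ that places $\gen{\reMY{j}{0}, \reMY{j}{1}}$ at the special slot, and its structure on the remaining slots (all $\reMM{\cdot}$/$\reQ{\cdot}$) rules out replacing it with an $XY$-type label whenever the inclusion of the corresponding sets is not present.

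The main obstacle I anticipate is purely bookkeeping: verifying the inclusion pattern among the three $\gen{\cdot}$-sets requires tracing through the diagram of $\Pi_{i,\Delta}$ at the special color while keeping track of which labels are absent; and for each non-edge in the claimed diagram, one must confirm that the chosen witness configuration does not accidentally admit the forbidden superset as an alternative label drawn from $\Sigma'_{i,\Delta}$. Once these two verifications are carried out, the diagram in \cref{fig:diagram-intermediate-special} follows.
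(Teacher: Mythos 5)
Your approach is correct and closely parallels the paper's proof. The positive direction (set inclusion implies strength, via the upward-closure built into $\nodeconst'_{i,\Delta}$) is exactly what the paper's \Cref{obs:arrow-is-present-2} formalizes, and your restriction to the three special-color labels plus their inclusion pattern gives the same two edges as the paper's diagram. For the negative direction you correctly identify that four ordered pairs need witnesses, and your choice of a strikethrough configuration with $b=\Delta-i$ (pinning the special slot to the $\reXY{\cdot}{y_{b-1}}{\cdot}$ pair for one fixed input bit) is precisely the paper's witness for both $\gen{\reXY{j}{0}{0},\reXY{j}{0}{1}} \not\le \gen{\reXY{j}{1}{0},\reXY{j}{1}{1}}$ and its mirror.

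Where you differ slightly is the witness for $\gen{\reMY{j}{0},\reMY{j}{1}} \not\le \gen{\reXY{j}{b}{0},\reXY{j}{b}{1}}$: you propose the grabbing-special configuration, whereas the paper uses a strikethrough configuration with $a=1$, $b=\Delta-i$, and all bits equal to $1-b$ (so the special slot, though originally the singleton $\gen{\reXY{j}{1-b}{0},\reXY{j}{1-b}{1}}$, admits the superset $\gen{\reMY{j}{0},\reMY{j}{1}}$ after the closure). Your witness is equally valid, but your stated justification for it is too loose: the fact that grabbing-special is the sole member of $\nodeconst^+_{i,\Delta}$ placing $\gen{\reMY{j}{0},\reMY{j}{1}}$ at the special slot is not the operative point (after the closure, this label appears in many configurations in $\nodeconst'_{i,\Delta}$). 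What actually needs to be checked is that, after substituting $\gen{\reXY{j}{b}{0},\reXY{j}{b}{1}}$ into the grabbing-special configuration, the resulting configuration cannot be picked from any closed condensed configuration in $\nodeconst^*_{i,\Delta}$. That verification goes through — each of the other types of $\nodeconst^+_{i,\Delta}$ configurations forces a label strictly smaller than $\gen{\reMM{\cdot}}$ or $\gen{\reQ{\cdot}}$ somewhere outside the special slot, which the grabbing-special-derived target lacks — but it must be done explicitly rather than gestured at. Once you carry out that check, your proof matches the paper's in substance.
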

\begin{lemma}\label{lem:diagram-intermediate-gone}
	Let $j$ be a gone color, i.e., an integer satisfying $\Delta - i + 1 \le j \le \Delta$. The strength relation of the labels of color $j$ is given by the diagram of \Cref{fig:diagram-intermediate-gone}.
\end{lemma}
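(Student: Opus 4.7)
The plan is to establish the diagram of $\Pi'_{i,\Delta}$ for gone colors through three steps. First, I would observe that by inspecting $\edgeconst'_{i,\Delta}$, for a gone color $j$ the only allowed configuration is $\gen{\reE{j}}^{\,2} ~~~ \gen{\reQ{j}}$, so $\Sigma'_{i,\Delta}$ restricted to color $j$ equals $\{\gen{\reE{j}},\gen{\reQ{j}}\}$. Using \Cref{lem:diagram-gone}, which gives $\reE{j} < \reQ{j}$ in the diagram of $\Pi_{i,\Delta}$ w.r.t.\ $\edgeconst_{i,\Delta}$, we obtain $\gen{\reE{j}} = \{\reE{j},\reQ{j}\}$ and $\gen{\reQ{j}} = \{\reQ{j}\}$; in particular $\gen{\reQ{j}} \subsetneq \gen{\reE{j}}$.

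The second step is to prove $\gen{\reQ{j}} \le \gen{\reE{j}}$ w.r.t.\ $\nodeconst'_{i,\Delta}$. Using the explicit description $\nodeconst^+_{i,\Delta}$ and \Cref{obs:black-intermediate-explicit}, each disjunction at a position of gone color $j$ in a condensed configuration of $\nodeconst^*_{i,\Delta}$ is obtained from a base disjunction (either $[\gen{\reQ{j}}]$ in the strikethrough, grabbing-present, grabbing-special, and grabbing-gone-with-$a\ne j$ cases, or $[\gen{\reE{j}}]$ in the grabbing-gone-with-$a=j$ case) by adjoining all supersets within $\Sigma'_{i,\Delta}$. Since $\gen{\reE{j}}$ is itself a superset of $\gen{\reQ{j}}$, it lies in every such disjunction, regardless of the base. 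Hence, for any $C \in \nodeconst'_{i,\Delta}$ picked from a condensed configuration $D$, replacing an occurrence of $\gen{\reQ{j}}$ by $\gen{\reE{j}}$ yields a configuration that can still be picked from the same $D$, and is therefore again valid.

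The third step, showing $\gen{\reE{j}} \not\le \gen{\reQ{j}}$, requires exhibiting a witness. The natural starting point is the grabbing-gone condensed configuration with $a = j$: its position-$j$ disjunction is the singleton $\{\gen{\reE{j}}\}$, because $\gen{\reE{j}}$ is the maximal color-$j$ label in $\Sigma'_{i,\Delta}$. I would pick $C$ from this condensed configuration with deliberate choices at the remaining positions, and let $C'$ be the result of substituting $\gen{\reQ{j}}$ for $\gen{\reE{j}}$ at position $j$. Choosing $\gen{\reQ{a'}}$ at every other gone position $a'$ immediately rules out the grabbing-gone-with-$a\ne j$ configurations, whose position-$a'$ disjunction forces $\gen{\reE{a'}}$.

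The main obstacle is then the case analysis ruling out the remaining candidate condensed configurations (strikethrough, grabbing-present, grabbing-special) as potential hosts for $C'$. This reduces to choosing the present-position and special-position labels of $C$ so that, in $C'$, the corresponding disjunctions of every such condensed configuration fail to contain the chosen label. The required non-containments follow from the fact that if a base label $B$ is strictly below a candidate label $L_0$ in the relevant diagram of $\Pi_{i,\Delta}$, then $\gen{L_0}$ is not a superset of $\gen{B}$ and therefore does not appear in the disjunction $[\gen{B}]$ augmented by its supersets; the needed strict inequalities are read off from \Cref{lem:diagram-first,lem:diagram-present,lem:diagram-special}. I expect this case analysis to be tedious but entirely parallel to the arguments carried out for \Cref{lem:diagram-intermediate-first,lem:diagram-intermediate-present,lem:diagram-intermediate-special}.
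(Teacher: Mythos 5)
Your proposal matches the paper's approach: reduce via \Cref{obs:no-arrow-different-colors-2} and \Cref{obs:arrow-is-present-2} to the single non-relation $\gen{\reE{j}} \not\le \gen{\reQ{j}}$, then establish it by picking a witness configuration from the grabbing-gone condensed configuration with $a=j$ and checking the replaced configuration is pickable from no condensed configuration in $\nodeconst^*_{i,\Delta}$. For completeness, the paper's concrete choices at the remaining positions are $\gen{\reMM{k}}$ for each present color $k$, $\gen{\reXY{\Delta-i}{0}{0},\reXY{\Delta-i}{0}{1}}$ at the special position, and $\gen{\reQ{k}}$ for each gone $k \neq j$, and the case analysis you deferred goes through via precisely the superset-based non-containment principle you state.
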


The proofs of \Cref{lem:diagram-intermediate-gone,lem:diagram-intermediate-first,lem:diagram-intermediate-special,lem:diagram-intermediate-present} are just a long case analysis, and hence they are deferred to \Cref{sec:lb-proofs-omitted-second-step}.
By the above lemmas, we get the following.
\begin{observation}\label{obs:diagram-by-set-inclusion}
	$L_1 \le L_2 \iff L_1 \subseteq L_2$.
\end{observation}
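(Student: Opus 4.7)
The plan is to deduce the observation directly from the four diagrams of \Cref{lem:diagram-intermediate-first,lem:diagram-intermediate-present,lem:diagram-intermediate-special,lem:diagram-intermediate-gone}, combined with the explicit closure step that defines $\nodeconst'_{i,\Delta}$.

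First I would reduce to the same-color case. Every label in $\Sigma'_{i,\Delta}$ is a non-empty set of labels of $\Pi_{i,\Delta}$ that all share a single color; this follows from \Cref{obs:configurations-same-color} applied to $\edgeconst'_{i,\Delta}$. Consequently, two labels of different colors are disjoint as sets, so neither inclusion can hold. Moreover, each position in every configuration of $\nodeconst'_{i,\Delta}$ is constrained to a label of a fixed color (as can be read off directly from $\nodeconst^+_{i,\Delta}$), so the strength relation cannot hold across different colors either. Hence both sides of the equivalence fail for differently-colored pairs, and it suffices to handle pairs of the same color.

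For the direction $L_1 \subseteq L_2 \Rightarrow L_1 \le L_2$, I would appeal to the construction of $\nodeconst'_{i,\Delta}$ from $\nodeconst^*_{i,\Delta}$: each disjunction is, by definition, closed under ``is a superset of one of the listed labels''. Hence if a configuration in $\nodeconst'_{i,\Delta}$ contains $L_1$ at some position, replacing $L_1$ by any $L_2 \in \Sigma'_{i,\Delta}$ with $L_2 \supseteq L_1$ produces another configuration in $\nodeconst'_{i,\Delta}$, which is exactly the definition of $L_1 \le L_2$.

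For the converse direction $L_1 \le L_2 \Rightarrow L_1 \subseteq L_2$, I would verify by inspection that every edge $L_1 \to L_2$ in the diagrams of \Cref{lem:diagram-intermediate-first,lem:diagram-intermediate-present,lem:diagram-intermediate-special,lem:diagram-intermediate-gone} corresponds to a strict set inclusion $L_1 \subsetneq L_2$. Since $\le$ is the reflexive-transitive closure of the arrows of the diagram, transitivity of $\subseteq$ then extends the conclusion to arbitrary pairs with $L_1 \le L_2$. This per-color case check is the only tedious part and is the main obstacle, though no genuine conceptual difficulty arises: the nodes of each diagram are explicit expressions of the form $\gen{\ell_1, \dots, \ell_k}$, and the containments between them can be read off from the corresponding diagrams of $\Pi_{i,\Delta}$ in \Cref{lem:diagram-first,lem:diagram-present,lem:diagram-special,lem:diagram-gone}.
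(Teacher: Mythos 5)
Your proposal is correct and takes essentially the same route as the paper: the paper's justification is simply ``by the above lemmas,'' i.e., the equivalence is read off from the four diagram lemmas of \cref{ssec:re-diagram-2}, which is exactly your argument for the direction $L_1 \le L_2 \Rightarrow L_1 \subseteq L_2$ (plus the same-color reduction mirroring \cref{obs:no-arrow-different-colors-2}). Your separate argument for $L_1 \subseteq L_2 \Rightarrow L_1 \le L_2$ via the superset-closure in the construction of $\nodeconst'_{i,\Delta}$ is a valid and slightly more self-contained way to get that direction, but it does not change the substance of the proof.
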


\subsection{The second round elimination step}\label{ssec:re-second}
In this section, we define a family of problems $\Pi''_{i,\Delta}$, and we prove that $\rere(\Pi'_{i,\Delta})$ can be relaxed to $\Pi''_{i,\Delta}$. Then, we show how to relax $\rere(\Pi'_{i,\Delta})$ even more in order to obtain $\Pi_{i+1,\Delta}$.
Each label of $\Pi''_{i,\Delta}$ is going to be a set of labels of $\Pi'_{i,\Delta}$. Recall that with $\gen{L_1,\ldots,L_k}$ (where $L_1,\ldots,L_k$ are labels of $\Pi'_{i,\Delta}$) we denote the set of labels obtained by taking $L_1,\ldots,L_k$ and all their successors in the diagram of $\Pi'_{i,\Delta}$. 

\paragraph{White constraint.}
The white constraint $\nodeconst''_{i,\Delta}$ of $\Pi''_{i,\Delta}$ is defined as the set $\nodeconst^*_{i,\Delta}$ described immediately before \Cref{obs:black-intermediate-explicit}.
By \Cref{obs:diagram-by-set-inclusion}, we obtain that an alternative but equivalent definition of $\nodeconst''_{i,\Delta}$ is the following. 
\begin{observation}\label{obs:from-nodeplus}
	$\nodeconst''_{i,\Delta}$ is equal to the set containing, for each condensed configuration $C^+$ in $\nodeconst^+_{i,\Delta}$, the configuration generated by $C^+$.
\end{observation}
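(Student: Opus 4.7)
The plan is to unfold the definitions. By construction, $\nodeconst''_{i,\Delta}$ equals $\nodeconst^*_{i,\Delta}$, and the latter is built from $\nodeconst^+_{i,\Delta}$ by replacing, in each condensed configuration $C^+ = S_1 \ldots S_\Delta$, every disjunction $S_i = [L_1, \ldots, L_k]$ with the disjunction $S'_i$ consisting of $L_1, \ldots, L_k$ together with every $L \in \Sigma'_{i,\Delta}$ that is a superset of some $L_j$. The observation will therefore reduce to showing that $S'_i$ coincides positionwise with $\gen{c \mid c \in S_i}$ in $\Pi'_{i,\Delta}$: once this equality is established, the condensed configuration obtained from $C^+$ inside $\nodeconst^*_{i,\Delta}$ is, by the definition of \emph{configuration generated by}, precisely the one generated by $C^+$ (taking $\sigma$ to be the identity permutation, which is harmless since configurations are multisets).

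The key step will be to invoke \Cref{obs:diagram-by-set-inclusion}, which identifies the strength relation $\le$ among labels of $\Pi'_{i,\Delta}$ with set inclusion $\subseteq$. Unfolding: $\gen{L_1, \ldots, L_k}$ is by definition $\{L \in \Sigma'_{i,\Delta} : L \ge L_j \text{ for some } j\}$, which by that observation equals $\{L \in \Sigma'_{i,\Delta} : \exists j,\ L_j \subseteq L\}$. Provided each $L_j$ already lies in $\Sigma'_{i,\Delta}$, this set is precisely the extended disjunction $S'_i$ prescribed by the construction of $\nodeconst^*_{i,\Delta}$: the labels $L_j$ themselves are included because $L_j \subseteq L_j$, and the remaining elements are exactly the supersets of some $L_j$ inside $\Sigma'_{i,\Delta}$.

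The only remaining step, and the main (albeit mild) obstacle, will be to verify that every generator label appearing in a disjunction of a configuration in $\nodeconst^+_{i,\Delta}$ indeed belongs to $\Sigma'_{i,\Delta}$. This is a finite checklist across the four kinds of configurations---strikethrough, grabbing-present, grabbing-gone, and grabbing-special---showing that each of the labels $\gen{\reMM{j}}$, $\gen{\reMY{a}{y}}$, $\gen{\reXY{j}{x}{y}}$, $\gen{\reXM{b}{x}}$, $\gen{\reEE{a}}$, $\gen{\reE{a}}$, $\gen{\reQ{j}}$, together with the special-position pair labels $\gen{\reXY{\Delta-i}{0}{0},\reXY{\Delta-i}{0}{1}}$, $\gen{\reXY{\Delta-i}{1}{0},\reXY{\Delta-i}{1}{1}}$, and $\gen{\reMY{\Delta-i}{0},\reMY{\Delta-i}{1}}$, appears inside at least one configuration of $\edgeconst'_{i,\Delta}$, as can be confirmed by direct inspection of the explicit listing given earlier. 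Once this is checked, $\Sigma'_{i,\Delta}$ contains every generator label as required, the identification $S'_i = \gen{c \mid c \in S_i}$ holds verbatim, and the observation follows.
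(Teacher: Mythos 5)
Your proposal is correct and matches the paper's own (very terse) justification, which simply states that the equivalence follows from \Cref{obs:diagram-by-set-inclusion}: identifying the strength relation in $\Pi'_{i,\Delta}$ with set inclusion turns "all successors of the generators" into "all supersets in $\Sigma'_{i,\Delta}$ of the generators," which is exactly the replacement rule defining $\nodeconst^*_{i,\Delta} = \nodeconst''_{i,\Delta}$. Your additional check that every generator label indeed belongs to $\Sigma'_{i,\Delta}$ is a harmless extra verification that the paper leaves implicit.
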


We will prove that the white constraint of $\rere(\Pi'_{i,\Delta})$ is dominated by $\nodeconst''_{i,\Delta}$. For this purpose, we first define a function $f$ that takes as input two (condensed) strikethrough configurations $C_1 = L_{1,1} \ldots L_{1,\Delta}$ and $C_2 = L_{2,1} \ldots L_{2,\Delta}$ from $\nodeconst^+_{i,\Delta}$, an index $h$ satisfying $1 \le h \le \Delta - i - 1$, an index $1 \le u \le \Delta$, and a value $k \in \{1,2\}$ as follows. First, for $j \in \{1,2\}$, let $t_{j,h} = (t_{j,h,1},t_{j,h,2})$ be defined as:
\begin{itemize}[noitemsep]
	\item $t_{j,h} = (\minus,\minus)$ if $L_{j,h} = [\gen{\reMM{h}}]$;
	\item $t_{j,h} = (b,\minus)$ if $L_{j,h} = [\gen{\reXM{h}{b}}]$;
	\item $t_{j,h} = (\minus,b)$ if $L_{j,h} = [\gen{\reMY{h}{b}}]$;
	\item $t_{j,h} = (b_1,b_2)$ if $L_{j,h} = [\gen{\reXY{h}{b_1~}{b_2}}]$.
\end{itemize}
Then, let $f(C_1,C_2,h,u,k) = $
\begin{itemize}[noitemsep]
	\item $\minus$ if $h = u$;
	\item $\minus$ if  $h \neq u$ and $\{ t_{1,h,k}, t_{2,h,k} \} \in \{\{\minus\},\{0,1\}\}$;
	\item $0$ if $h \neq u$ and  $\{ t_{1,h,k}, t_{2,h,k} \} \in \{\{0\},\{0,\minus\}\}$;
	\item $1$ if $h \neq u$ and $\{ t_{1,h,k}, t_{2,h,k} \} \in \{\{1\},\{1,\minus\}\}$.
\end{itemize}

\begin{lemma}\label{lem:label-domination}
	Let $T_{h,k} = f(C_1,C_2,h,u,k)$.
	Let $L = [\gen{\reXY{h}{T_{h,1~}}{T_{h,2}}}]$. Then,
	\begin{itemize}[noitemsep]
		\item if $h = u$, $\gen{\ell \mid \ell \in L} \supseteq \gen{\ell \mid \ell \in L_{1,h}} \cup \gen{\ell \mid \ell \in L_{2,h}} $;
		\item if $h \neq u$, $\gen{\ell \mid \ell \in L} \supseteq \gen{\ell \mid \ell \in L_{1,h}} \cap \gen{\ell \mid \ell \in L_{2,h}} $.
	\end{itemize}
\end{lemma}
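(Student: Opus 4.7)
The plan is a case analysis on whether $h = u$ and on the forms of $t_{1,h}$ and $t_{2,h}$. In both cases the claim, via \Cref{obs:diagram-by-set-inclusion}, is equivalent to a set-inclusion statement between labels of $\Pi'_{i,\Delta}$, which are themselves right-closed subsets of $\Sigma_{i,\Delta}$ by \Cref{obs:rcs}.

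For $h = u$, by construction $T_{h,1} = T_{h,2} = \minus$, so $L = [\gen{\reMM{h}}]$. Inspecting the diagrams in \Cref{lem:diagram-first,lem:diagram-present,lem:diagram-special}, the label $\reMM{h}$ is the unique strongest label of color $h$ in $\Pi_{i,\Delta}$ — equivalently, $\reMM{h}$ appears in every disjunction of $\edgeconst_{i,\Delta}$ of color $h$ — so $\gen{\reMM{h}} = \{\reMM{h}\}$ as a subset of $\Sigma_{i,\Delta}$. By \Cref{obs:rcs}, every label of $\Pi'_{i,\Delta}$ of color $h$ is right-closed in $\Pi_{i,\Delta}$ and thus contains $\reMM{h}$; by \Cref{obs:diagram-by-set-inclusion}, such a label is $\ge \gen{\reMM{h}}$ in the strength order of $\Pi'_{i,\Delta}$ and therefore lies in $\gen{\ell \mid \ell \in L}$. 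Since by \Cref{obs:configurations-same-color} every label in $\gen{\ell \mid \ell \in L_{1,h}} \cup \gen{\ell \mid \ell \in L_{2,h}}$ has color $h$, the required inclusion is immediate.

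For $h \ne u$, let $M_{j,h}$ denote the unique generating set of the singleton disjunction $L_{j,h}$. Via \Cref{obs:diagram-by-set-inclusion}, the claim becomes: every label $X$ of $\Pi'_{i,\Delta}$ with $X \supseteq M_{1,h} \cup M_{2,h}$ also satisfies $X \supseteq \gen{\reXY{h}{T_{h,1}}{T_{h,2}}}$, where the latter is understood as $\gen{\reXM{h}{T_{h,1}}}$, $\gen{\reMY{h}{T_{h,2}}}$, or $\gen{\reMM{h}}$ when some $T_{h,k} = \minus$. I will verify this case-by-case over the four shapes of each $t_{j,h}$ (namely $(\minus,\minus)$, $(b,\minus)$, $(\minus,b)$, $(b_1,b_2)$); symmetry of $f$ in $(C_1,C_2)$ roughly halves the work. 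For each case I use the diagrams in \Cref{lem:diagram-first,lem:diagram-present} to compute $M_{1,h}$ and $M_{2,h}$, and the explicit list of labels of $\Pi'_{i,\Delta}$ from $\edgeconst'_{i,\Delta}$ to enumerate the candidate $X$. The function $f$ is tailored so that the generator of $L$ is an element whose right-closure is contained in every such candidate.

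The main obstacle is the subset of cases where $M_{1,h} \cup M_{2,h}$ is right-closed in $\Pi_{i,\Delta}$ but is not itself a label of $\Pi'_{i,\Delta}$: for instance, $t_{1,h} = (0,0)$ and $t_{2,h} = (1,0)$ give $L = [\gen{\reMY{h}{0}}]$, and one must verify that every label of $\Pi'_{i,\Delta}$ containing $\gen{\reXY{h}{0}{0}} \cup \gen{\reXY{h}{1}{0}}$ also contains $\reMY{h}{0}$ — a fact that follows from direct inspection of the labels appearing in $\edgeconst'_{i,\Delta}$. Because the diagram for $h=1$ (in \Cref{lem:diagram-first}) differs from the one for $2 \le h \le \Delta-i-1$ (in \Cref{lem:diagram-present}), both subfamilies must be checked, but the verification is mechanical in each.
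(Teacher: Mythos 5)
Your proof is correct and follows essentially the same route as the paper: for $h=u$ you observe $L = [\gen{\reMM{h}}]$ and that every label of color $h$ in $\Pi'_{i,\Delta}$ is right-closed and therefore contains $\reMM{h}$; for $h\neq u$ you reduce the claim via \Cref{obs:diagram-by-set-inclusion} to a set-containment check and verify by inspecting the diagrams, which is exactly what the paper's ``straightforward to verify'' amounts to. A few minor slips worth fixing: $h$ ranges over $1 \le h \le \Delta-i-1$ by the definition of $f$, so invoking \Cref{lem:diagram-special} is unnecessary; the fact that all labels in $\gen{\ell \mid \ell \in L_{1,h}}$ have color $h$ follows from \Cref{obs:no-arrow-different-colors-2} (incomparability across colors in $\Pi'_{i,\Delta}$) rather than \Cref{obs:configurations-same-color} (which is about $\re(\Pi_{i,\Delta})$); and right-closedness of the labels of $\Pi'_{i,\Delta}$ follows directly from their definition as $\gen{\cdot}$-sets rather than from \Cref{obs:rcs}, since $\Pi'_{i,\Delta}$ is a relaxation of $\re(\Pi_{i,\Delta})$ rather than $\re(\Pi_{i,\Delta})$ itself.
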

\begin{proof}
	By the definition of $f$, if $h = u$, then $L = [\gen{\reMM{h}}]$, and the set $\gen{\ell \mid \ell \in L}$ contains all labels of color $h$ (see \Cref{fig:diagram-intermediate-first,fig:diagram-intermediate-present}). Hence in this case the claim holds.
	
	If $h \neq u$, by inspecting the diagrams in \Cref{fig:diagram-intermediate-first,fig:diagram-intermediate-present}, it is straightforward to verify that it always holds that $\gen{\ell \mid \ell \in L_{1,h}} \cap \gen{\ell \mid \ell \in L_{2,h}}$ (i.e., the intersection between the set containing the unique label contained in the disjunction $L_{1,h}$ and all successors of that label, and the set containing the unique label contained in the disjunction $L_{2,h}$ and all successors of that label) is a subset of the set containing the unique label contained in the disjunction $L$ and all successors of that label.
\end{proof}

\begin{lemma}\label{lem:minus-on-one-side-of-union}
	Assume that $1\le u \le \Delta - i - 1$, and let $\ell_1$ denote the unique label in $L_{1,u}$ and $\ell_2$ the unique label in $L_{2,u}$.
	Assume further that $\ell_1$ and $\ell_2$ are incomparable.
	Then,
	\begin{itemize}
		\item for all $1 \le u < \Delta - i - 1$, it holds that, if $T_{u+1,1} \neq \minus$, then $\{t_{1,u,1}, t_{2,u,1}\} = \{0,1\}$ or $\{t_{1,u},t_{2,u}\} = \{(x_1,\minus),(\minus,x_2)\}$ for some $x_1,x_2 \in \{0,1\}$, and
		\item for all $1 < u \le \Delta - i - 1$, it holds that, if $T_{u-1,2} \neq \minus$, then $\{t_{1,u,2}, t_{2,u,2}\} = \{0,1\}$ or $\{t_{1,u},t_{2,u}\} = \{(x_1,\minus),(\minus,x_2)\}$ for some $x_1,x_2 \in \{0,1\}$.
	\end{itemize}
\end{lemma}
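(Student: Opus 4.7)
The plan is to exploit the rigid ``chain'' structure of strikethrough configurations. Inside $C_j \in \nodeconst^+_{i,\Delta}$ with parameters $a_j < b_j$, the tuple $t_{j,k}$ equals $(\minus,\minus)$ for $k < a_j$ or $k > b_j$, equals $(\minus,\, y_{a_j})$ at $k = a_j$, equals $(y_{k-1},\, y_k)$ for $a_j < k < b_j$, and equals $(y_{b_j-1},\, \minus)$ at $k = b_j$ (whenever $b_j \le \Delta-i-1$). The essential consequence is the \emph{chain identity} $t_{j,k,2} = t_{j,k+1,1}$ for all $a_j \le k \le b_j - 1$. Hence: if $t_{j,k+1,1}$ is a bit $b$ then $t_{j,k,2} = b$; and if $t_{j,k+1,1} = \minus$ then either $t_{j,k} = (\minus,\minus)$, or $k = b_j$ and $t_{j,k} = (y, \minus)$ for some bit $y$. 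The second bullet of the lemma is symmetric, using the chain identity in the reverse direction (input at $u$ determined by output at $u-1$), so I would prove the first in detail.

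For the first bullet, set $b = T_{u+1,1} \in \{0,1\}$. From the definition of $f$ (and $u+1 \neq u$), the hypothesis $T_{u+1,1} \neq \minus$ forces $\{t_{1,u+1,1},\, t_{2,u+1,1}\} \in \{\{b\},\, \{b,\minus\}\}$. I would split into two cases.

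\emph{Case A: both equal $b$.} The chain identity yields $t_{j,u,2} = b$, so $t_{j,u} = (\ast_j, b)$ with $\ast_j \in \{0,1,\minus\}$. Using \Cref{obs:diagram-by-set-inclusion} together with the diagram at color $u$ from \Cref{ssec:re-diagram-2}, incomparability of $\ell_1$ and $\ell_2$ rules out $\ast_1 = \ast_2$ (which gives $\ell_1 = \ell_2$) and $\{\ast_1, \ast_2\} = \{\minus, x\}$ for a bit $x$ (which gives $\gen{\reMY{u}{b}} \subsetneq \gen{\reXY{u}{x}{b}}$); the only remaining option is $\{\ast_1, \ast_2\} = \{0, 1\}$, matching the first disjunct. \emph{Case B: WLOG $t_{1,u+1,1} = b$ and $t_{2,u+1,1} = \minus$.} The chain analysis gives $t_{1,u} = (\ast_1, b)$ while $t_{2,u} \in \{(\minus,\minus)\} \cup \{(y,\minus) : y \in \{0,1\}\}$. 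The option $t_{2,u} = (\minus,\minus)$ makes $\ell_2 = \gen{\reMM{u}}$ comparable with every label of color $u$ and is hence excluded, so $\ell_2 = \gen{\reXM{u}{y}}$ for some bit $y$. A final subcase on $\ast_1$ excludes $\ast_1 = y$ (since $\gen{\reXM{u}{y}} \subsetneq \gen{\reXY{u}{y}{b}}$), yielding either $\ast_1 = \minus$ (so $\{t_{1,u}, t_{2,u}\} = \{(\minus,b),\, (y,\minus)\}$, the second disjunct with $(x_1,x_2) = (y,b)$) or $\ast_1 \in \{0,1\} \setminus \{y\}$ (so $\{t_{1,u,1}, t_{2,u,1}\} = \{\ast_1, y\} = \{0,1\}$, the first disjunct).

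The main obstacle is the bookkeeping: every subcase must be cross-referenced against the diagram of $\Pi'_{i,\Delta}$ at color $u$ to verify (in)comparability of the relevant generators. Corner positions, most notably $u = 1$ where $t_{j,u,1} \equiv \minus$ because no $\reXY{1}{\cdot}{\cdot}$ or $\reXM{1}{\cdot}$ labels exist, fall out of the same analysis: the stated conclusions are unsatisfiable there, but the premises are also inconsistent (both subcases force $\ell_1, \ell_2$ to reduce to comparable generators such as $\gen{\reMM{1}}$ and $\gen{\reMY{1}{b}}$), so the implication holds vacuously.
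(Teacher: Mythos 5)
Your proof is correct and takes essentially the same route as the paper's: you transfer the hypothesis from position $u+1$ (resp.\ $u-1$) to position $u$ via the chain identity of strikethrough configurations and then play incomparability of $\ell_1,\ell_2$ against the diagram of $\Pi'_{i,\Delta}$ to exclude everything except the two stated disjuncts, which is exactly what the paper's much terser appeal to \Cref{lem:diagram-intermediate-first,lem:diagram-intermediate-present} leaves implicit (your Cases A and B, and the vacuous $u=1$ corner, are the content of that appeal). The only blemish is cosmetic: your strict inclusions point the wrong way (it is $\gen{\reXY{u}{x\,}{b}} \subsetneq \gen{\reMY{u}{b}}$ and $\gen{\reXY{u}{y\,}{b}} \subsetneq \gen{\reXM{u}{y}}$, since the two-bit labels are successors of the wildcard labels in the diagram of $\Pi_{i,\Delta}$ and hence generate smaller right-closed sets), but since you only ever use comparability, which is symmetric, the argument is unaffected.
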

\begin{proof}
	 For all $1 \le u < \Delta - i - 1$, if $T_{u+1,1} \neq \minus$, it must hold that $\{t_{1,u,2}, t_{2,u,2}\}$ (i.e., the set of output bits in position $u$ of $C_1$ and $C_2$) is in $\{\{0,\minus\},\{1,\minus\},\{0\},\{1\}\}$.
By \Cref{lem:diagram-intermediate-present} (and \Cref{lem:diagram-intermediate-first}) it must hold that either $\{t_{1,u,1}, t_{2,u,1}\} = \{0,1\}$ or $\{t_{1,u},t_{2,u}\} = \{(x_1,\minus),(\minus,x_2)\}$ for some $x_1,x_2 \in \{0,1\}$. 
	 The other case is symmetric.
\end{proof}

\begin{observation}\label{obs:same-borders}
	 For all $1 \le j < \Delta - i - 1$ satisfying $j \neq u$ and $j+1 \neq u$, $T_{j,2} = T_{j+1,1}$.
\end{observation}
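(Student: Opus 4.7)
The plan is to observe that in any condensed strikethrough configuration from $\nodeconst^+_{i,\Delta}$, the output bit at position $j$ and the input bit at position $j+1$ are always the same symbol in $\{0,1,\minus\}$, and then note that the function $f$ depends on its last arguments only through the multiset $\{t_{1,h,k}, t_{2,h,k}\}$ (in the case $h \neq u$).

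First, I would unpack the definition of a strikethrough configuration: for the chosen integers $a < b$ and bit vector $(y_a, \dots, y_{b-1})$, the label at position $h$ (for $1 \le h \le \Delta-i-1$) is one of $\reMM{h}$, $\reMY{a}{y_a}$, $\reXY{h}{y_{h-1}}{y_h}$, $\reXM{b}{y_{b-1}}$, depending on whether $h$ lies below, at, strictly between, or at the right endpoint of the window $[a,b]$. A straightforward case distinction on where $j$ and $j+1$ fall relative to $a$ and $b$ then shows that, for every $k \in \{1,2\}$ and every $1 \le j \le \Delta-i-2$,
\[
 t_{k,j,2} \;=\; t_{k,j+1,1}.
\]
For instance, if $a < j$ and $j+1 < b$, both sides equal $y_j$; if $j = a$, both equal $y_a$; if $j+1 = b \le \Delta-i-1$, both equal $y_{b-1}$; and in all remaining cases (i.e., $j < a$ or $j > b$ on either side) both sides are $\minus$. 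The hypothesis $j < \Delta - i - 1$ guarantees that position $j+1$ still lies in the regular range $\{1, \dots, \Delta-i-1\}$, so the special-color label at position $\Delta-i$ never appears in this argument.

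From this pointwise equality applied to both $k=1$ and $k=2$, I would conclude that the two unordered pairs
\[
 \{t_{1,j,2}, t_{2,j,2}\} \;=\; \{t_{1,j+1,1}, t_{2,j+1,1}\}
\]
coincide. Since we assumed $j \neq u$ and $j+1 \neq u$, the definition of $f$ reduces in both cases to the last three clauses, which depend exclusively on this unordered pair; hence
\[
 T_{j,2} \;=\; f(C_1, C_2, j, u, 2) \;=\; f(C_1, C_2, j+1, u, 1) \;=\; T_{j+1,1},
\]
which is the desired equality.

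The only mild subtlety is bookkeeping of the endpoint cases in the first step (i.e., $j = a$, $j+1 = b$, and the trivial $\minus$ zones); none of these is genuinely hard, and there is no obstacle beyond a careful case split. I expect the entire proof to fit in a short paragraph once the case analysis for $t_{k,j,2} = t_{k,j+1,1}$ is laid out.
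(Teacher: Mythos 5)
Your proof is correct and matches the argument the paper implicitly relies on: the paper states \Cref{obs:same-borders} without proof, treating it as immediate from the definition of strikethrough configurations (where the output bit at position $j$ and the input bit at position $j+1$ are by construction the same symbol, namely $y_j$ inside the window $[a,b]$ and $\minus$ outside it) together with the fact that $f(\cdot,\cdot,h,u,k)$ for $h\neq u$ depends only on the pair $\{t_{1,h,k},t_{2,h,k}\}$. Your case split on the position of $j$ and $j+1$ relative to $a$ and $b$, and your use of $j<\Delta-i-1$ to keep position $j+1$ away from the special color, fill in exactly the bookkeeping the paper leaves to the reader.
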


\begin{lemma}\label{lem:step21}	
	The white constraint of $\rere(\Pi'_{i,\Delta})$ is dominated by $\nodeconst''_{i,\Delta}$.
\end{lemma}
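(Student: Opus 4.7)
The plan is to apply \cref{lem:combining} with $\mathcal{C} = \nodeconst''_{i,\Delta}$ viewed as a collection of configurations of sets. Since $\nodeconst'_{i,\Delta}$ is exactly the set of label configurations that can be picked from $\nodeconst''_{i,\Delta}$, once we establish that $\nodeconst''_{i,\Delta}$ is closed under the combining operation up to domination, the observation will imply that every maximal configuration satisfying the universal quantifier with respect to $\nodeconst'_{i,\Delta}$---that is, every configuration in the white constraint of $\rere(\Pi'_{i,\Delta})$---is dominated by some element of $\nodeconst''_{i,\Delta}$, which is precisely what the lemma asserts.

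By \cref{obs:from-nodeplus}, each element of $\nodeconst''_{i,\Delta}$ is generated by a configuration in $\nodeconst^+_{i,\Delta}$, which belongs to one of four families: strikethrough, grabbing-present, grabbing-gone, or grabbing-special. The proof thus proceeds by case analysis on the families of the two configurations $C_1, C_2$ being combined. In the main case, both are strikethrough. For arbitrary combining index $u$ and permutation $\sigma$, I would set $T_{h,k} = f(C_1, C_2, h, u, k)$ for each position $h \leq \Delta - i - 1$ and each $k \in \{1,2\}$. \Cref{lem:label-domination} then certifies that the condensed label built from $(T_{h,1}, T_{h,2})$ (with the understanding that $\minus$ bits correspond to labels of type $\reMM$, $\reXM$, or $\reMY$ depending on which coordinates are $\minus$) dominates the combined set at position $h$. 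The remaining task is to argue that the resulting position-wise sequence $(T_{h,1}, T_{h,2})_h$ is the bit pattern of some configuration in $\nodeconst^+_{i,\Delta}$: \cref{obs:same-borders} gives consistency between adjacent positions different from $u$, while \cref{lem:minus-on-one-side-of-union} ensures that at position $u$ itself the sequence still fits the strikethrough shape (leading $\reMM$s, one $\reMY$, a run of $\reXY$s, one $\reXM$, trailing $\reMM$s). Position $\Delta - i$ deserves separate treatment: its strikethrough disjunction is either preserved under the combination, or the resulting union is dominated by the grabbing-special label $\gen{\reMY{\Delta-i}{0}, \reMY{\Delta-i}{1}}$.

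For the remaining cases, at least one of $C_1, C_2$ is a grabbing configuration, in which case most positions outside the single ``grabbing'' position carry the rigid labels $\gen{\reMM{j}}$ or $\gen{\reQ{j}}$. By \cref{obs:no-union-comparable}, only positions where the two labels are incomparable need be considered, which drastically limits the enumeration. A short inspection of the relevant diagrams shows that each such combination either reduces via domination to $C_1$ or $C_2$ itself, or produces a configuration of a grabbing type already present in $\nodeconst^+_{i,\Delta}$ (typically grabbing-special, when a union lands at position $\Delta - i$). The main technical obstacle will be the strikethrough-strikethrough case with $u$ near (or at) the boundaries $a_1, b_1, a_2, b_2$ of the two input patterns, or at the special color $\Delta - i$: there one must check that either the output remains a strikethrough configuration with suitably adjusted $a$ and $b$, or it falls back cleanly onto a grabbing-special or grabbing-present configuration. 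This amounts to a careful but mechanical enumeration once the diagrams and the preceding three lemmas are in hand.
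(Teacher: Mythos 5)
Your high-level plan matches the paper's proof: reduce via \cref{lem:combining} to showing $\nodeconst''_{i,\Delta}$ is maximal-complete, observe via \cref{obs:from-nodeplus} and \cref{obs:no-union-comparable} that only pairs of strikethrough configurations need be considered, and in that case use $f$, \cref{lem:label-domination}, \cref{obs:same-borders}, and \cref{lem:minus-on-one-side-of-union} to show the combination is dominated by a configuration in $\nodeconst''_{i,\Delta}$. So the choice of reduction and the key ingredients are correct.

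However, the proposal stops precisely where the real work begins, and punts on one genuinely distinct sub-case. You write that the $T$-pattern will "fit the strikethrough shape" after invoking \cref{lem:minus-on-one-side-of-union} at position~$u$, and you describe what remains as a "careful but mechanical enumeration." That is too optimistic. First, \cref{lem:minus-on-one-side-of-union} only tells you that certain incomparability conditions hold at~$u$; it does not by itself say the $T$-pattern is a strikethrough pattern. In fact, the $T$-pattern is frequently \emph{not} the bit pattern of any strikethrough configuration: when the output bit at some position $h$ is $\minus$ while its input bit is not (or vice versa) without the right alignment, the combination collapses to a grabbing-present or grabbing-special configuration, and deducing which one requires tracking the boundary indices $a_1,a_2,b_1,b_2$ against $u$ carefully. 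The paper's proof is dominated by exactly this case split, based on whether a pair of indices $(a,b)$ with the right properties exists, and has several nontrivial sub-branches (e.g.\ $a=\bot$ with $u<h$, or $b=\bot$ with $u>h$, each with a further dichotomy at position $\Delta-i-1$). Second, your outline never mentions the separate case $a_1 = b_2 = u$ (or $a_2 = b_1 = u$), where the union position sits at the start of one strikethrough and the end of the other and the two patterns \emph{stitch together} into a single longer strikethrough; this does not follow from the $T$-based analysis and requires constructing a fresh dominating configuration from scratch, as the paper does with its $C^*$. Without those pieces spelled out, the proposal does not establish the lemma.
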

\begin{proof}
    By definition, $\nodeconst''_{i,\Delta}$ contains exactly the condensed configurations used to describe $\nodeconst'_{i,\Delta}$. Hence, we only need to prove that $\nodeconst''_{i,\Delta}$ is maximal-complete.

	In the following, we use the description of $\nodeconst''_{i,\Delta}$ given by \Cref{obs:from-nodeplus}.
	There are $4$ types of configurations in $\nodeconst^+_{i,\Delta}$, and by \Cref{lem:diagram-intermediate-gone,lem:diagram-intermediate-first,lem:diagram-intermediate-present,lem:diagram-intermediate-special}, for each color $1 \le j \le \Delta$, all sets of labels of color $j$ appearing in at least one of the configurations generated by the condensed configurations of types grabbing-present, grabbing-gone, and grabbing-special are comparable with all sets of labels of color $j$ appearing in at least one of the configurations generated by at least one of the condensed configurations in $\nodeconst^+_{i,\Delta}$. Moreover, in order to combine two configurations $C_1$ and $C_2$ and obtain a configuration $C$ that does not contain any empty set, it must clearly hold that the union and all the intersections must be taken on labels of the same color (i.e., the permutation $\sigma$ used to combine $C_1$ with $C_2$ must always pair labels of the same color). Hence, by \Cref{obs:no-union-comparable}, the only way we could get additional configurations is by combining two configurations generated by strikethrough configurations.

	Hence, let $C_1 = L_{1,1} \ldots L_{1,\Delta}$ and $C_2 = L_{2,1} \ldots L_{2,\Delta}$ be two strikethrough configurations. Let $a_1,a_2,b_1,b_2$ be their parameters and $y_{1,j},y_{2,j}$ their bits. Let $u$ be the position of the union, and note that the permutation $\sigma$ used to combine the configurations generated by $C_1$ and $C_2$ must be the identity function, since otherwise, when taking intersections, we would obtain empty sets. 
	Moreover, for the sake of contradiction, assume that the configuration obtained by combining the configuration generated by $C_1$ and the configuration generated by $C_2$ w.r.t.\ $u$ and $\sigma$ is not contained in $\nodeconst''_{i,\Delta}$.
	This implies that, by \Cref{obs:no-union-comparable}, $u \le \Delta - i$. Let $C$ be the combination of the configurations generated by $C_1$ and $C_2$ w.r.t.\ $u$ and the identity function $\sigma$.
	
	We start by handling a special case, that is, when $a_1 = b_2 = u$ or $a_2 = b_1 = u$. W.l.o.g., let $a_1 = b_2 = u$. Consider the configuration $C^*$ generated by the condensed configuration $L_1 \ldots L_\Delta$, where:
	\begin{itemize}[noitemsep]
		\item $L_j = L_{2,j}$ for all $j < u$;
		\item $L_j = L_{1,j}$ for all $j > u$;
		\item $L_u = [\gen{\reXY{u}{y_{2,u-1}~}{y_{1,u}}}]$.
	\end{itemize}
	We first prove that $C$ is dominated by $C^*$, and then that $C^*$ is contained in $\nodeconst''_{i,\Delta}$.
	For each $j \neq u$, it holds that the set $\gen{\ell \mid \ell \in L_{1,j}} \cap \gen{\ell \mid \ell \in L_{2,j}}$
	is a subset of both $\gen{\ell \mid \ell \in L_{1,j}}$ and $\gen{\ell \mid \ell \in L_{2,j}}$.
	In position $u$ of $C_1$ there is the disjunction $[\gen{\reMY{u}{y_{1,u}}}]$, and in position $u$ of $C_2$ there is the disjunction $[\gen{\reXM{u}{y_{2,u-1}}}]$. Observe that $L_u$ is a superset of both (see \Cref{fig:diagram-intermediate-present}). Also, by definition, the obtained configuration $C^*$ is generated by a strikethrough configuration with parameters $a_2$ and $b_1$, and hence it is part of $\nodeconst''_{i,\Delta}$.
	
	In the other cases, we proceed as follows. For $1 \le h \le \Delta - i - 1$, let \[T_h = (T_{h,1},T_{h,2}) = (f(C_1, C_2, h, u, 1), f(C_1, C_2, h, u, 2)).\]
	We now consider all possible cases for the values of $T_1,\ldots, T_{\Delta -i-1}$.
	
	\begin{itemize}
		\item \textbf{\boldmath{There exists an index $1 \le h \le \Delta - i - 1$ such that $T_{h,2} \neq \minus$ and $(h = \Delta - i - 1 \lor T_{h+1,1} \neq \minus)$}}. Let $a$ be the largest integer satisfying $T_a \in \{(\minus,0),(\minus,1)\}$, $a \le h$, $(u \ge h \lor u < a)$, if it exists, and $\bot$ otherwise. Let $b$ be the smallest integer satisfying $T_b \in \{(0,\minus),(1,\minus)\}$, $b \ge h+1$, $(u \le h +1 \lor u > b)$,  if it exists, and $\bot$ otherwise.
		In the following, we assume that, if there exists at least one index $h$ that satisfies the requirements and that makes $a$ and $b$ (which are defined as a function of $h$) both different from $\bot$, then $h$ is one of such values. 
		
		 If $a$ and $b$ are both different from $\bot$, consider the configuration $C^*$ generated by $L_1 \ldots L_\Delta$, where:
		\begin{itemize}
			\item $L_j = [\gen{\reMM{j}}]$ if $j < a$;
			\item $L_j = [\gen{\reXY{j}{T_{j,1~}}{T_{j,2}}}]$ if $a \le j \le b$;
			\item $L_j = [\gen{\reMM{j}}]$ if $b < j \le \Delta - i - 1$;
			\item $L_{\Delta - i} = [\gen{\reXY{\Delta - i}{0}{0},\reXY{\Delta - i}{0}{1}},\gen{\reXY{\Delta - i}{1}{0},\reXY{\Delta - i}{1}{1}}]$;
			\item $L_j = [\gen{\reQ{j}}]$ if $j > \Delta - i$.
		\end{itemize}
		We start by arguing that $u$ must satisfy $u < a$ or $u > b$.
		For a contradiction, assume otherwise.
		By the definition of $a$ and $b$, this implies $u \in \{ h, h + 1 \}$.
		If $u = h$, then the assumption  $T_{h,2} \neq \minus$ implies $T_u \neq (\minus, \minus)$.
		If $u = h + 1$, then the assumption $T_b \in \{(0,\minus),(1,\minus)\}$ implies $b \le \Delta - i - 1$ (as otherwise $T_b$ would not be defined), which, combined with the assumption $b \ge h+1$ implies $h < \Delta -i-1$,  which implies $T_{h+1,1} \neq \minus$, and hence $T_u \neq (\minus, \minus)$.
		In either case, we obtain $T_u \neq (\minus, \minus)$, contradicting the definition of $T_u$, and we conclude that  $u < a$ or $u > b$.

		This in particular implies that for all $a < h < b$ it holds that $T_{h,1}$ and $T_{h,2}$ are both different from $\minus$, and we get that $C^*$ is generated by a strikethrough configuration with parameters $a,b$ (where we use \Cref{obs:same-borders} in both arguments). Hence, $C^*$ is contained in $\nodeconst''_{i,\Delta}$.
		Moreover, $C^*$ dominates $C$, in fact:
		\begin{itemize}
			\item In all positions $j$ satisfying $j < a$ or $j > b$, $C^*$ contains all sets of color $j$. Hence, in such positions the set of $C^*$ is a superset of the set of $C$. Moreover, observe that $u$ satisfies $u < a $ or $u > b$ (as observed above).
			\item In all positions $j$ satisfying $a \le j \le b$, we apply \Cref{lem:label-domination}, obtaining that \[\gen{\gen{\reXY{j}{T_{j,1~}}{T_{j,2}}}} \supseteq \gen{\ell \mid \ell \in L_{1,j}} \cap \gen{\ell \mid \ell \in L_{2,j}} .\]
		\end{itemize}
		
		For similar reasons, if $a \neq \bot$, $u < a$, and $b = \bot$, $C$ is dominated by the configuration generated by $L_1 \ldots L_\Delta$, where:
		\begin{itemize}
			\item $L_j = [\gen{\reMM{j}}]$ if $j < a$;
			\item $L_j = [\gen{\reXY{j}{T_{j,1~}}{T_{j,2}}}]$ if $a \le j \le \Delta - i - 1$;
			\item $L_{j} = [\gen{\reXY{j}{T_{j-1,2}~}{0},\reXY{j}{T_{j-1,2}~}{1}}]$ if $j = \Delta - i$;
			\item $L_j = [\gen{\reQ{j}}]$ if $j > \Delta - i$.
		\end{itemize}
		Note that this configuration is contained in $\nodeconst''_{i,\Delta}$.
		
		Now consider the case that $a = \bot$.
		Recall that, by assumption, it holds that there exists an index $h$ such that $T_{h,2} \neq \minus$ and $(h = \Delta - i - 1  \lor T_{h+1,1} \neq \minus)$. By recursively applying \Cref{obs:same-borders} starting with $h$ and decreasing it step by step, since $a = \bot$ and since $T_{1,1} = \minus$, we get that $u < h$, and that for all $u < j \le h$, $T_{j,1}$ and $T_{j,2}$ are both different from $\minus$ (since otherwise we would have found a value for $a$).
		
		Similarly, if $b = \bot$ and $u > h$, it must also hold that for all $h +1 \le j < \min\{u,\Delta - i - 1\}$, $T_{j,1}$ and $T_{j,2}$ are both different from $\minus$. Moreover, the case $b = \bot$, $a \neq \bot$ and $u < a$ has already been handled, and, by the definition of $a$, it is not possible that $b = \bot$, $a \neq \bot$, and $a \le u \le h$ (where we use that $u \neq h$ since $T_{h,2} \neq \minus$ and $T_u = (\minus, \minus)$).
		
		Hence, we need to handle the following two cases: the case $a=\bot$ and $u < h$, and the case $b = \bot$, $a \neq \bot$, and $u > h$.
		
		Let us consider the case $a=\bot$ and $u < h$. Since $T_{u+1,1} \neq \minus$, by \Cref{lem:minus-on-one-side-of-union}, it must hold that either $\{t_{1,u,1}, t_{2,u,1}\} = \{0,1\}$ or $\{t_{1,u},t_{2,u}\} = \{(x_1,\minus),(\minus,x_2)\}$ for some $x_1,x_2 \in \{0,1\}$. The latter case is covered by the special case handled before, i.e., when $a_1 = b_2 = u$ or $a_2 = b_1 =u$.

	Hence, consider the former case, which implies that $u \neq 1$ and that $T_{u-1,2} = \minus$.
	If there exists an index $h'< u$ such that $T_{h',2} \neq \minus$, which by \Cref{obs:same-borders} implies $T_{h'+1,1} \neq \minus$, then, since $T_{u-1,2} = \minus$, we can find two indices $a \le h'$ and $b\ge h' + 1$ satisfying the requirements stated at the beginning of this case, contradicting the assumption that $a=\bot$.
		Hence, $T_j = (\minus,\minus)$ for all $j \le u$, which implies that $a_1 = a_2 < u$, because:
		\begin{itemize}
			\item if $a_1 > u$ or $a_2 > u$, then we get that the union is on comparable sets, which contradicts \Cref{obs:no-union-comparable};
			\item if $a_1 = u$ or $a_2 = u$, w.l.o.g. $a_1 = u$, then $t_{1,u,1} = t_{1,a_1,1} = \minus$, which contradicts $t_{1,u,1} \in \{0,1\}$;
			\item if $a_1 < u$, $a_2 < u$, and $a_1 \neq a_2$, w.l.o.g.\ $a_1 < a_2$, then it must hold that $T_{a_2} \neq (\minus,\minus)$, which is a contradiction.
		\end{itemize} 
		 Therefore, in position $a_1 = a_2$, since we must have $\{L_{1,a_1},L_{2,a_2}\} = \{[\gen{\reMY{a_1}{0}}],[\gen{\reMY{a_1}{1}}]\}$ (otherwise we would not have  $T_{a_1} = (\minus,\minus)$), we get $\gen{\ell \mid \ell \in L_{1,a_1}} \cap \gen{\ell \mid \ell \in L_{2,a_2}} = \gen{\gen{\reEE{a_1}}}$, by \Cref{lem:diagram-intermediate-first,lem:diagram-intermediate-present}. Hence the result is dominated by a configuration generated by a grabbing-present configuration.
					
		Let us now consider the case $b = \bot$, $a \neq \bot$, and $u > h$. If $u \le \Delta -i-1$ and $T_{\Delta - i - 1, 2} = \minus$, we consider two cases:
		\begin{itemize}
			\item $t_{1,\Delta-i-1,2} = t_{2,\Delta-i-1,2} = \minus$, i.e., the output bit  in position $\Delta-i-1$ is $\minus$ in both configurations. In this case, the claim holds for symmetric reasons as in the previous case (i.e., the case $a = \bot $ and $u < h$).
			\item $\{t_{1,\Delta-i-1,2} = t_{2,\Delta-i-1,2} \} = \{0,1\}$. In this case, the output is dominated by the configuration generated by the grabbing-special configuration, because $\{ L_{1,\Delta-i}, L_{2,\Delta-i} \} = \{ [\gen{\reXY{\Delta-i}{0}{0},\reXY{\Delta-i}{0}{1}}], [\gen{\reXY{\Delta-i}{1}{0},\reXY{\Delta-i}{1}{1}}] \}$.
		\end{itemize}

		We now consider the case that $u \le \Delta -i-1$ and $T_{\Delta - i - 1, 2} \neq \minus$.
		Since $b = \bot$ and $u > h$, we obtain that $T_{u-1, 2} \neq \minus$.
		By \Cref{lem:minus-on-one-side-of-union}, this implies that $\{t_{1,u,2}, t_{2,u,2}\} = \{0,1\}$, or $\{t_{1,u},t_{2,u}\} = \{(x_1,\minus),(\minus,x_2)\}$ for some $x_1,x_2 \in \{0,1\}$. The latter case is covered by the special case handled before, i.e., when $a_1 = b_2 = u$ or $a_2 = b_1 =u$. In the former case, we obtain that $u < \Delta - i - 1$ (because $T_{\Delta - i - 1, 2} \neq \minus$) and that $T_{u+1,1} = \minus$. We get that we could have picked a different value of $h$ that would have made $a \neq \bot$ and $u < a$, which is a case previously covered.
For the case $u = \Delta - i$, by applying \Cref{obs:no-union-comparable}, we get that $\{ L_{1,\Delta-i}, L_{2,\Delta-i} \} = \{ [\gen{\reXY{\Delta-i}{0}{0},\reXY{\Delta-i}{0}{1}}], [\gen{\reXY{\Delta-i}{1}{0},\reXY{\Delta-i}{1}{1}}] \}$ and hence that $\{t_{1,u-1,2}, t_{2,u-1,2}\} = \{0,1\}$. This implies that $T_{u-1,2} = \minus$, and hence $b$ cannot be $\bot$, a contradiction.
Finally, we can exclude the case $u > \Delta-i$ by \Cref{obs:no-union-comparable}.
		
		\item \textbf{\boldmath{For all $1 \le h \le \Delta - i - 1$, $T_h = (\minus,\minus)$}}. We consider positions $a_1$ and $b_1$. Recall that $a_1 < b_1$. Either $a_1 \neq u$ or $b_1 \neq u$. In the former case, in order to obtain $T_h = (\minus,\minus)$ it must hold that $\{t_{1,a_1}, t_{2,a_1}\} \in \{(\minus,0),(\minus,1)\}$ and that $t_{1,a_1} \neq t_{2,a_1}$. In this case, the result is dominated by the configuration generated by a grabbing-present configuration (i.e., the one with label $[\gen{\reEE{a_1}}]$). In the latter case, we distinguish two cases. If $b_1 < \Delta - i$, it must hold that $\{t_{1,b_1}, t_{2,b_1}\} \in \{(0,\minus),(1,\minus)\}$ and that $t_{1,b_1} \neq t_{2,b_1}$. As before,  the result is dominated by the configuration generated by a grabbing-present configuration (i.e, the one with label $[\gen{\reEE{b_1}}]$).
		If $b_1 = \Delta - i$, since it holds that $T_{\Delta - i - 1, 2} = \minus$, we get that $\{ L_{1,\Delta-i}, L_{2,\Delta-i} \} = \{ [\gen{\reXY{\Delta-i}{0}{0},\reXY{\Delta-i}{0}{1}}], [\gen{\reXY{\Delta-i}{1}{0},\reXY{\Delta-i}{1}{1}}] \}$, and hence the result is dominated by the configuration generated by the grabbing-special configuration.
		
		\item \textbf{\boldmath{There exists an index $1 \le h \le \Delta - i - 1$ such that $T_{h} = (\minus,b)$ and $T_{h+1} = (\minus,\minus)$, or such that $T_{h} = (b,\minus)$ and $T_{h-1} = (\minus,\minus)$, for some $b \in \{0,1\}$, and the previous cases do not apply}}.
		We consider positions $a_1$ and $b_1$. Recall that $a_1 < b_1$. Either $a_1 \neq u$ or $b_1 \neq u$. In the former case, if $\{t_{1,a_1}, t_{2,a_1}\} \in \{(\minus,0),(\minus,1)\}$ and $t_{1,a_1} \neq t_{2,a_1}$, the result is dominated by the configuration generated by a grabbing-present configuration. If $t_{1,a_1} =  t_{2,a_1}$, w.l.o.g.\ assume $t_{1,a_1} = t_{2,a_1} = (\minus,0)$, by \Cref{obs:same-borders}, in order to have $T_{a_1+1} = (\minus,\minus)$, it must hold that $u = a_1 + 1$. By \Cref{obs:no-union-comparable}, $L_{1,u}$ and $L_{2,u}$ must be incomparable, which implies that $\{t_{1,u}, t_{2,u}\} \in \{(0,0),(0,1)\}$. If $b_1 = u+1$, we get that $\{t_{1,b_1}, t_{2,b_1}\} \in \{(0,\minus),(1,\minus)\}$, and hence that the result is dominated by the configuration generated by the grabbing-special configuration. If $b_1 > u + 1$ and $\{t_{1,b_1}, t_{2,b_1}\} \in \{(0,\minus),(1,\minus)\}$, we again get that the result is dominated by the configuration generated by the grabbing-special configuration. The remaining case to handle is the one where $b_1 > u+1$ and $t_{1,b_1} = t_{2,b_1}$. We get that $T_{b_1 -1,2} \neq \minus$ and that $T_{b_1,1} \neq \minus$, which contradicts the assumption that previous cases do not apply.
		Finally, the case $b_1 \neq u$ is symmetric to the case $a_1 \neq u$, which we have already handled.
		
		\item \textbf{\boldmath{There exists an index $1 \le h \le \Delta - i - 1$ such that $T_{h} = (b_1,b_2)$ such that $b_h,b_j \in \{0,1\}$ and  for all $j \neq h$, it holds that $T_j = (\minus,\minus)$}}. This case cannot happen by \Cref{obs:same-borders}.
	\end{itemize}
	We obtained that, in all cases, $C$ is dominated by some configuration in $\nodeconst''_{i,\Delta}$, which is a contradiction.
\end{proof}

\paragraph{Black constraint.}
We define the black constraint $\edgeconst''_{i,\Delta}$ as follows. Let $\Sigma''_{i,\Delta}$ be the set of labels that appear in $\nodeconst''_{i,\Delta}$. The black constraint $\edgeconst''_{i,\Delta}$ contains all the configurations described by the set of condensed configurations given by the following process. Take a configuration $C^*$ in $\edgeconst'_{i,\Delta}$, and replace each label $L$ with the disjunction containing the labels of $\Sigma''_{i,\Delta}$ that are supersets of (or equal to) $\gen{L}$. 
By the definition of $\edgeconst''_{i,\Delta}$, we obtain the following.
\begin{lemma}\label{lem:step22}	
		$\edgeconst''_{i,\Delta}$ is the set of all configurations $S_1 ~ S_2 ~ S_3$ of sets contained in $\Sigma''_{i,\Delta}$ such that there exists some tuple  $(\ell_1, \ell_2, \ell_3) \in S_1 \times S_2 \times S_3$ satisfying that $\ell_1~\ell_2~\ell_3 $ is contained in $\edgeconst'_{i,\Delta}$.
\end{lemma}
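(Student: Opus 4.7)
The plan is to verify both inclusions between the two characterizations of $\edgeconst''_{i,\Delta}$ by a straightforward unfolding of definitions, leaning on Observation \ref{obs:rcs} (every label of an $\re$- or $\rere$-problem is right-closed) and Observation \ref{obs:diagram-by-set-inclusion} (in the diagram of $\Pi'_{i,\Delta}$, $L_1 \le L_2$ is the same as $L_1 \subseteq L_2$). Indeed, this lemma is the direct analogue of Lemma \ref{lem:step12} one level up, and the same pattern of argument applies; the main work was already done when we set up the replacement process.

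For the forward direction, I would take any $S_1~S_2~S_3 \in \edgeconst''_{i,\Delta}$. By the defining process of $\edgeconst''_{i,\Delta}$, this configuration arises from some $L_1~L_2~L_3 \in \edgeconst'_{i,\Delta}$ together with a permutation $\sigma$ such that each $S_{\sigma(i)} \in \Sigma''_{i,\Delta}$ satisfies $S_{\sigma(i)} \supseteq \gen{L_i}$. Setting $\ell_i := L_{\sigma^{-1}(i)}$, one has $\ell_i \in \gen{L_{\sigma^{-1}(i)}} \subseteq S_i$, and the multiset $\ell_1~\ell_2~\ell_3$ equals $L_1~L_2~L_3$ and therefore lies in $\edgeconst'_{i,\Delta}$, as required.

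For the reverse direction, suppose $S_1, S_2, S_3 \in \Sigma''_{i,\Delta}$ and there is a tuple $(\ell_1, \ell_2, \ell_3) \in S_1 \times S_2 \times S_3$ with $\ell_1~\ell_2~\ell_3 \in \edgeconst'_{i,\Delta}$. I would take $L_i := \ell_i$ and $\sigma$ the identity, and it remains to show $S_i \supseteq \gen{\ell_i}$ for each $i$ so that $S_1~S_2~S_3$ is indeed picked from the condensed configuration associated with $\ell_1~\ell_2~\ell_3$. This is exactly where Observation \ref{obs:rcs} enters: every label of $\Sigma''_{i,\Delta}$ is right-closed with respect to the white constraint of $\Pi'_{i,\Delta}$, hence $\ell_i \in S_i$ forces every successor of $\ell_i$ in the diagram to lie in $S_i$, i.e., $\gen{\ell_i} \subseteq S_i$.

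The only nontrivial check is that each $S \in \Sigma''_{i,\Delta}$ really is right-closed. This is automatic: $\Sigma''_{i,\Delta}$ is by definition the set of labels that appear in $\nodeconst''_{i,\Delta}$, and by inspection of $\nodeconst^+_{i,\Delta}$ together with the replacement rule of $\nodeconst^*_{i,\Delta}$ each such label is a set of the form $\gen{\,\cdot\,}$ in the diagram of $\Pi'_{i,\Delta}$, which is right-closed by the very definition of the $\gen{\,\cdot\,}$ operator. I do not anticipate any genuine obstacle in this proof; it is bookkeeping, not insight.
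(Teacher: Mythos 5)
Your proof is correct and follows the route the paper intends: the paper offers no explicit argument for this lemma (it is stated as following ``by the definition of $\edgeconst''_{i,\Delta}$''), and your two inclusions are exactly the definition-unfolding that is being elided, with the only substantive point---that every $S \in \Sigma''_{i,\Delta}$ is right-closed because it is of the form $\gen{\,\cdot\,}$ by construction of $\nodeconst''_{i,\Delta}$ (cf.\ \cref{obs:from-nodeplus} and the explicit label list)---correctly identified and justified.
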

By combining \Cref{lem:step21} and \Cref{lem:step22}, by the definition of relaxation, we obtain the following.
\begin{lemma}\label{lem:all-step-2}
	The problem $\Pi''_{i,\Delta}$ is a relaxation of $\rere(\Pi'_{i,\Delta})$.
\end{lemma}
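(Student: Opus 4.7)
The plan is to read off the claim directly from the definition of a relaxation of $\rere(\cdot)$, combined with the two preceding lemmas and the way $\Sigma''_{i,\Delta}$ was introduced. Recall that for a triple $\Pi'' = (\Sigma'', \nodeconst'', \edgeconst'')$ to be a relaxation of $\rere(\Pi'_{i,\Delta})$, three conditions must hold: (i) every configuration in the white constraint of $\rere(\Pi'_{i,\Delta})$ is dominated by some configuration in $\nodeconst''$; (ii) $\Sigma''$ is precisely the collection of sets that appear in at least one configuration of $\nodeconst''$; and (iii) $\edgeconst''$ is the set of all configurations $S_1~S_2~S_3$ over $\Sigma''$ admitting a tuple $(\ell_1,\ell_2,\ell_3) \in S_1 \times S_2 \times S_3$ with $\ell_1~\ell_2~\ell_3 \in \edgeconst'_{i,\Delta}$.

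First, I would verify (i) by invoking \Cref{lem:step21} verbatim: it asserts precisely that the white constraint of $\rere(\Pi'_{i,\Delta})$ is dominated by $\nodeconst''_{i,\Delta}$. Second, (ii) holds by construction, since $\Sigma''_{i,\Delta}$ was introduced as ``the set of labels that appear in $\nodeconst''_{i,\Delta}$.'' Third, (iii) is exactly the content of \Cref{lem:step22}. Stringing these three observations together and applying the relaxation definition yields the claim.

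So the proof itself amounts to a short paragraph citing the three ingredients. The substantive obstacle was already overcome in \Cref{lem:step21} (the long case analysis checking that combining two strikethrough configurations always produces something dominated by a configuration of $\nodeconst''_{i,\Delta}$, which is what gives maximal-completeness and therefore condition (i)); the third bullet of the relaxation definition is arranged to hold automatically by the very process used to define $\edgeconst''_{i,\Delta}$ from $\edgeconst'_{i,\Delta}$. The only caveat I would sanity-check before writing the proof is that the notion of ``dominated'' appearing in \Cref{lem:step21} matches the one in the relaxation definition, and that \Cref{lem:step22}'s characterization of $\edgeconst''_{i,\Delta}$ is phrased over $\Sigma''_{i,\Delta}$ rather than a larger ambient set; both alignments are immediate from the setup. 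No new combinatorial work is needed.
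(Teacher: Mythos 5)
Your proposal is correct and mirrors the paper's own proof exactly: the paper also derives the lemma directly from \Cref{lem:step21}, \Cref{lem:step22}, and the definition of relaxation, with condition (ii) holding by the very definition of $\Sigma''_{i,\Delta}$. There is no gap and no difference in approach.
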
 

\paragraph{More explicit black constraint.}
While for the white constraint we have an explicit definition (given by \Cref{obs:from-nodeplus}), the definition of $\edgeconst''_{i,\Delta}$ is given implicitly. We now provide an explicit definition of $\edgeconst''_{i,\Delta}$.
First, we explicitly list the labels appearing in $\nodeconst''_{i,\Delta}$, and hence the labels in $\Sigma''_{i,\Delta}$.
\begin{itemize}
	\raggedright
	\item For color $1$, there are the following labels: $\gen{\gen{\reEE{1}}}$, $\gen{\gen{\reMM{1}}}$, $\gen{\gen{\reMY{1}{0}}}$, $\gen{\gen{\reMY{1}{1}}}$.
	\item For each present color $j$ satisfying $2 \le j \le \Delta - i - 1$, there are the following labels: $\gen{\gen{\reEE{j}}}$, $\gen{\gen{\reMM{j}}}$, $\gen{\gen{\reMY{j}{0}}}$, $\gen{\gen{\reMY{j}{1}}}$, $\gen{\gen{\reXM{j}{0}}}$, $\gen{\gen{\reXM{j}{1}}}$, $\gen{\gen{\reXY{j}{0}{0}}}$, $\gen{\gen{\reXY{j}{0}{1}}}$, $\gen{\gen{\reXY{j}{1}{0}}}$, $\gen{\gen{\reXY{j}{1}{1}}}$.
	\item For each gone color $j$, there are the following labels: $\gen{\gen{\reQ{j}}}$, $\gen{\gen{\reE{j}}}$.
	\item For the special color $j = \Delta - i$, there are the following labels: $\gen{\gen{\reXY{\Delta - i}{0}{0},\reXY{\Delta - i}{0}{1}},\gen{\reXY{\Delta - i}{1}{0},\reXY{\Delta - i}{1}{1}}}$, $\gen{\gen{\reXY{\Delta - i}{0}{0},\reXY{\Delta - i}{0}{1}}}$, $\gen{\gen{\reXY{\Delta - i}{1}{0},\reXY{\Delta - i}{1}{1}}}$, $\gen{\gen{\reMY{\Delta-i}{0},\reMY{\Delta-i}{1}}}$.
\end{itemize}

Now, by \Cref{obs:diagram-by-set-inclusion} (see \Cref{fig:diagram-intermediate-first,fig:diagram-intermediate-gone,fig:diagram-intermediate-present,fig:diagram-intermediate-special}), $\edgeconst''_{i,\Delta}$ can be obtained as follows. Take each configuration $C^*$ in $\edgeconst'_{i,\Delta}$, and replace labels as follows.
For gone colors $j$:
\begin{itemize}
	\item $\gen{\reE{j}} \longmapsto [\gen{\gen{\reQ{j}}}, \gen{\gen{\reE{j}}}]$;
	\item $\gen{\reQ{j}} \longmapsto [\gen{\gen{\reQ{j}}}]$.
\end{itemize}
For color $1$:
\begin{itemize}
	\item $\gen{\reEE{1}} \longmapsto  [\gen{\gen{\reEE{1}}}, \gen{\gen{\reMM{1}}}, \gen{\gen{\reMY{1}{0}}}, \gen{\gen{\reMY{1}{1}}}]$;
	\item $\gen{\reMM{1}} \longmapsto [\gen{\gen{\reMM{1}}}]$;
	\item For $b \in \{0,1\}$, $\gen{\reMY{1}{b}}  \longmapsto [\gen{\gen{\reMY{1}{b}}}, \gen{\gen{\reMM{1}}}]$.
\end{itemize}
For each present color $j$ satisfying $2 \le j \le \Delta - i - 1$:
\begin{itemize}
	\raggedright
	\item For $b_1 \in \{0,1\}$, $b_2 \in \{0,1\}$, $\gen{\reXY{j}{b_1~}{b_2}} \longmapsto [\gen{\gen{\reMM{j}}},\gen{\gen{\reXY{j}{b_1~}{b_2}}}]$;
	\item For $b \in \{0,1\}$, $\gen{\reXM{j}{b}} \longmapsto [\gen{\gen{\reMM{j}}},\gen{\gen{\reXY{j}{b~}{0}}},\gen{\gen{\reXY{j}{b~}{1}}},\gen{\gen{\reXM{j}{b}}}]$;
	\item For $b \in \{0,1\}$, $\gen{\reMY{j}{b}} \longmapsto [\gen{\gen{\reMM{j}}},\gen{\gen{\reXY{j}{0~}{b}}},\gen{\gen{\reXY{j}{1~}{b}}},\gen{\gen{\reMY{j}{b}}}]$;
	\item $\gen{\reXY{j}{0}{1},\reXY{j}{1}{0}} \longmapsto [\gen{\gen{\reXY{j}{0}{1}}},\gen{\gen{\reXY{j}{1}{0}}},\gen{\gen{\reMM{j}}}]$;
	\item $\gen{\reXY{j}{0}{0},\reXY{j}{1}{1}} \longmapsto [\gen{\gen{\reXY{j}{0}{0}}},\gen{\gen{\reXY{j}{1}{1}}},\gen{\gen{\reMM{j}}}]$;
	\item For $b_1 \in \{0,1\}$, $b_2 \in \{0,1\}$, $\gen{\reXM{j}{b_1},\reMY{j}{b_2}} \longmapsto [\gen{\gen{\reMM{j}}}, \allowbreak \gen{\gen{\reXM{j}{b_1}}},\gen{\gen{\reMY{j}{b_2}}}, \allowbreak
	\gen{\gen{\reXY{j}{b_1~}{0}}}, \allowbreak \gen{\gen{\reXY{j}{b_1~}{1}}}, \allowbreak \gen{\gen{\reXY{j}{0~}{b_2}}},  \allowbreak \gen{\gen{\reXY{j}{1~}{b_2}}} 
	]$;
	\item $\gen{\reEE{j}} \longmapsto  [\gen{\gen{\reEE{j}}}, \allowbreak\gen{\gen{\reMM{j}}}, \allowbreak\gen{\gen{\reMY{j}{0}}}, \allowbreak\gen{\gen{\reMY{j}{1}}}, \allowbreak\gen{\gen{\reXM{j}{0}}}, \allowbreak\gen{\gen{\reXM{j}{1}}}, \allowbreak\gen{\gen{\reXY{j}{0}{0}}}, \allowbreak\gen{\gen{\reXY{j}{0}{1}}},\allowbreak \gen{\gen{\reXY{j}{1}{0}}}, \allowbreak\gen{\gen{\reXY{j}{1}{1}}}]$.
\end{itemize}
For the special color $j = \Delta - i$:
\begin{itemize}
	\raggedright
	\item $\gen{\reXY{j}{0}{0},\reXY{j}{0}{1}} \longmapsto [\gen{\gen{\reXY{j}{0}{0},\reXY{j}{0}{1}}}, \gen{\gen{\reXY{\Delta - i}{0}{0},\reXY{\Delta - i}{0}{1}},\gen{\reXY{\Delta - i}{1}{0},\reXY{\Delta - i}{1}{1}}}]$;
	\item $\gen{\reXY{j}{1}{0},\reXY{j}{1}{1}} \longmapsto [\gen{\gen{\reXY{j}{1}{0},\reXY{j}{1}{1}}}, \gen{\gen{\reXY{\Delta - i}{0}{0},\reXY{\Delta - i}{0}{1}},\gen{\reXY{\Delta - i}{1}{0},\reXY{\Delta - i}{1}{1}}}]$;
	\item $\gen{\reMY{j}{0},\reMY{j}{1}} \longmapsto [\gen{\gen{\reXY{\Delta - i}{0}{0},\reXY{\Delta - i}{0}{1}}}, \allowbreak \gen{\gen{\reXY{\Delta - i}{1}{0},\reXY{\Delta - i}{1}{1}}}, \allowbreak \gen{\gen{\reMY{\Delta-i}{0},\reMY{\Delta-i}{1}}}, \allowbreak\gen{\gen{\reXY{\Delta - i}{0}{0},\reXY{\Delta - i}{0}{1}},\gen{\reXY{\Delta - i}{1}{0},\reXY{\Delta - i}{1}{1}}}]$.
\end{itemize}
Let $B$ be the defined mapping. We summarize the above observation in the following statement.
\begin{observation}\label{obs:replacement-black}
	The black constraint $\edgeconst''_{i,\Delta}$ can be obtained as follows. Take each configuration $C^*$ in $\edgeconst'_{i,\Delta}$, and replace labels using the mapping $B$.
\end{observation}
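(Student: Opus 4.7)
The plan is to verify, in a case-by-case fashion, that the mapping $B$ correctly implements the definition of $\edgeconst''_{i,\Delta}$. Recall that by definition, $\edgeconst''_{i,\Delta}$ is obtained from $\edgeconst'_{i,\Delta}$ by replacing each label $L$ (of $\Pi'_{i,\Delta}$) that appears in a configuration $C^* \in \edgeconst'_{i,\Delta}$ with the disjunction of all $M \in \Sigma''_{i,\Delta}$ such that $M \supseteq \gen{L}$. So the task reduces to showing, for every $L$ that may appear in $\edgeconst'_{i,\Delta}$, that the disjunction listed in $B(L)$ is exactly the set $\{\, M \in \Sigma''_{i,\Delta} : M \supseteq \gen{L}\,\}$.

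First I would simplify the membership condition $M \supseteq \gen{L}$. By \Cref{obs:rcs}, every label $M \in \Sigma''_{i,\Delta}$ is a right-closed subset of the labels of $\Pi'_{i,\Delta}$ (with respect to the diagram of $\Pi'_{i,\Delta}$). Since $\gen{L}$ is by definition the smallest right-closed set containing $L$, we have $M \supseteq \gen{L}$ if and only if $L \in M$. Thus the goal becomes: show that for every $L$ appearing in some $C^* \in \edgeconst'_{i,\Delta}$, the disjunction $B(L)$ lists exactly those $M \in \Sigma''_{i,\Delta}$ with $L \in M$.

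Next I would split into cases based on the color of $L$, using the explicit enumeration of $\Sigma''_{i,\Delta}$ given immediately above the statement and the diagrams of $\Pi'_{i,\Delta}$ (\Cref{fig:diagram-intermediate-first,fig:diagram-intermediate-present,fig:diagram-intermediate-special,fig:diagram-intermediate-gone}), which by \Cref{obs:diagram-by-set-inclusion} express the ordering as set inclusion. For each color, I look at every label $L$ that actually appears in $\edgeconst'_{i,\Delta}$ at that color, compute $\gen{L}$ directly from the diagram, and then scan the small list of candidate $M$'s in $\Sigma''_{i,\Delta}$ to determine which of them contain $L$ (equivalently, which contain $\gen{L}$ as a subset of $\Pi'_{i,\Delta}$-labels). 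For instance, for a gone color $j$, there are only two $\Pi''_{i,\Delta}$-labels, $\gen{\gen{\reQ{j}}}$ and $\gen{\gen{\reE{j}}}$; since $\gen{\reE{j}}$ is a strictly larger $\Pi'_{i,\Delta}$-label than $\gen{\reQ{j}}$, only $\gen{\gen{\reQ{j}}}$ contains $\gen{\reQ{j}}$, while both $\Pi''_{i,\Delta}$-labels contain $\gen{\reE{j}}$, which matches the first two lines of $B$. The color-$1$ and present-color cases proceed identically, using the shape of the respective diagrams. The special color $j = \Delta - i$ is slightly more involved because its $\Pi''_{i,\Delta}$-labels are disjunctions of two $\Pi'_{i,\Delta}$-labels, but the check is still purely combinatorial.

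The main obstacle, if any, is the bookkeeping for the present colors: the disjunctions in $\edgeconst'_{i,\Delta}$ may mix several $\Pi'_{i,\Delta}$-labels (e.g. $\gen{\reXM{j}{b_1},\reMY{j}{b_2}}$), and the corresponding entry of $B$ must collect all $M \in \Sigma''_{i,\Delta}$ containing any one of them. This is routine but needs to be done carefully and exhaustively, guided by the diagram of $\Pi'_{i,\Delta}$ for color $j$. Once every case has been checked, the claim of the observation follows directly from the definition of $\edgeconst''_{i,\Delta}$.
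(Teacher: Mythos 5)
Your proposal is correct and matches the paper's (essentially proof-free) justification: the paper simply presents the list defining $B$ as a mechanical consequence of the explicit enumeration of $\Sigma''_{i,\Delta}$ together with \cref{obs:diagram-by-set-inclusion}, which is exactly the color-by-color check you describe. Your reduction of $M \supseteq \gen{L}$ to $L \in M$ is valid since every listed label of $\Sigma''_{i,\Delta}$ is of the form $\gen{\cdot}$ and hence right-closed by construction (you do not even need to invoke \cref{obs:rcs}, which strictly speaking is stated for $\re$/$\rere$ rather than for relaxations).
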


\paragraph{Relation between $\Pi''_{i,\Delta}$ and $\Pi_{i+1,\Delta}$.}
We define additional problems $\Pi'''_{i,\Delta}$ and $\Pi''''_{i,\Delta}$, which are, like $\Pi''_{i,\Delta}$, also going to be relaxations of $\rere(\Pi'_{i,\Delta})$,  and then we prove that $\Pi''''_{i,\Delta}$ is equivalent to $\Pi_{i+1,\Delta}$. Let $j$ be the special color. The problem $\Pi'''_{i,\Delta}$ is obtained by replacing each instance of $\gen{\gen{\reXY{j}{0}{0},\reXY{j}{0}{1}}}$, and each instance of $\gen{\gen{\reXY{j}{1}{0},\reXY{j}{1}{1}}}$, with the label $\gen{\gen{\reXY{j}{0}{0},\reXY{j}{0}{1}},\gen{\reXY{j}{1}{0},\reXY{j}{1}{1}}}$, in both the white and the black constraint of  $\Pi''_{i,\Delta}$. 

\begin{observation}\label{obs:no-suffix-at-new-end}
	In the white constraint of $\Pi'''_{i,\Delta}$, all configurations that, for $j = \Delta - i - 1$, contain the label $\gen{\gen{\reXM{j}{0}}}$ or $\gen{\gen{\reXM{j}{1}}}$ are non-maximal.
\end{observation}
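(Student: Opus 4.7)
My approach is to exhibit, for every configuration $C \in \nodeconst'''_{i,\Delta}$ containing $\gen{\gen{\reXM{\Delta-i-1}{b}}}$ (for some $b \in \{0,1\}$) at position $\Delta-i-1$, a configuration $C' \in \nodeconst'''_{i,\Delta}$ that strictly dominates $C$. By \Cref{obs:from-nodeplus}, $\nodeconst''_{i,\Delta}$ (and hence $\nodeconst'''_{i,\Delta}$, after the label merge at position $\Delta-i$) consists of configurations generated by the condensed configurations in $\nodeconst^+_{i,\Delta}$. An inspection of the four families in $\nodeconst^+_{i,\Delta}$ shows that $\gen{\gen{\reXM{\Delta-i-1}{b}}}$ at position $\Delta-i-1$ can arise only from a strikethrough configuration whose right parameter is $b_{\text{param}} = \Delta-i-1$ and whose last bit satisfies $y_{\Delta-i-2} = b$; all other condensed configurations place an $\reMM$, $\reMY$, $\reEE$, or $\reXY$ label at that position.

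Let $C$ be such a configuration, with parameter $a$ and bits $y_a,\dots,y_{\Delta-i-2}$. I choose as candidate $C'$ the configuration generated by the strikethrough condensed configuration with the same $a$, bits $y_a,\dots,y_{\Delta-i-2}$ extended by any $y_{\Delta-i-1} \in \{0,1\}$, and right parameter $b_{\text{param}}' = \Delta-i$. I must verify two things: that $C$ and $C'$ coincide on every position other than $\Delta-i-1$, and that $C'$ strictly dominates $C$ at position $\Delta-i-1$. For the first point, positions $1,\dots,\Delta-i-2$ are defined identically in both configurations (depending only on $a$ and the shared bits), positions $\Delta-i+1,\dots,\Delta$ are both $\gen{\gen{\reQ{j}}}$, and at position $\Delta-i$ both configurations acquire the merged label $\gen{\gen{\reXY{\Delta-i}{0}{0},\reXY{\Delta-i}{0}{1}},\gen{\reXY{\Delta-i}{1}{0},\reXY{\Delta-i}{1}{1}}}$: for $C$ this is the ``$b \neq \Delta-i$'' case of the strikethrough rule directly, and for $C'$ the labels $\gen{\gen{\reXY{\Delta-i}{y_{\Delta-i-1}}{0},\reXY{\Delta-i}{y_{\Delta-i-1}}{1}}}$ produced by the ``$b = \Delta-i$'' case are replaced by the merged label by the very definition of $\Pi'''_{i,\Delta}$.

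For the second point, I reduce the set comparison at position $\Delta-i-1$ to inclusions inside $\Sigma_{i,\Delta}$. By \Cref{lem:diagram-present} (\Cref{fig:diagram-present}), $\reXM{\Delta-i-1}{b} < \reXY{\Delta-i-1}{b}{y_{\Delta-i-1}}$ in the diagram of $\Pi_{i,\Delta}$, which gives the strict inclusion $\gen{\reXM{\Delta-i-1}{b}} \supsetneq \gen{\reXY{\Delta-i-1}{b}{y_{\Delta-i-1}}}$ of subsets of $\Sigma_{i,\Delta}$. Using \Cref{obs:diagram-by-set-inclusion} for the diagram of $\Pi'_{i,\Delta}$, any label $L' \in \Sigma'_{i,\Delta}$ with $L' \supseteq \gen{\reXM{\Delta-i-1}{b}}$ also satisfies $L' \supseteq \gen{\reXY{\Delta-i-1}{b}{y_{\Delta-i-1}}}$, so $\gen{\gen{\reXM{\Delta-i-1}{b}}} \subseteq \gen{\gen{\reXY{\Delta-i-1}{b}{y_{\Delta-i-1}}}}$. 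Strictness follows because $\gen{\reXY{\Delta-i-1}{b}{y_{\Delta-i-1}}}$ itself belongs to the right-hand side but not to the left-hand side, since it does not contain $\reXM{\Delta-i-1}{b}$.

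The main obstacle is really just the bookkeeping at position $\Delta-i$: the merge in $\Pi'''_{i,\Delta}$ is exactly the device that lets the two different special-color labels produced by the cases $b_{\text{param}} = \Delta-i-1$ and $b_{\text{param}} = \Delta-i$ become the same label, so the dominance of $C'$ over $C$ hinges only on the cleanly-handled position $\Delta-i-1$. Everything else is a direct unwinding of the strikethrough rule and \Cref{obs:diagram-by-set-inclusion}.
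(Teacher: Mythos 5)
Your proof is correct and follows essentially the same route as the paper's: the paper's argument also rests on the strict inclusion $\gen{\gen{\reXM{j}{b}}} \subsetneq \gen{\gen{\reXY{j}{b~}{y}}}$ together with the existence of a configuration agreeing everywhere except at position $j = \Delta-i-1$. You simply make explicit what the paper leaves as an observation — in particular the check that the two strikethrough families ($b_{\text{param}} = \Delta-i-1$ versus $b_{\text{param}} = \Delta-i$) produce the same label at the special position $\Delta-i$ only after the merge defining $\Pi'''_{i,\Delta}$, which is exactly the point the statement depends on.
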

\begin{proof}
	Consider the definition of $\nodeconst''_{i,\Delta}$ given by \Cref{obs:from-nodeplus}, and the replacement rules used to define $\Pi'''_{i,\Delta}$ as a function of $\Pi''_{i,\Delta}$. The proof follows by observing the following:
	\begin{itemize}
		\item Label $\gen{\gen{\reXM{j}{0}}}$ is a subset of both $\gen{\gen{\reXY{j}{0}{0}}}$ and $\gen{\gen{\reXY{j}{0}{1}}}$;
		\item Label $\gen{\gen{\reXM{j}{1}}}$ is a subset of both $\gen{\gen{\reXY{j}{1}{0}}}$ and $\gen{\gen{\reXY{j}{1}{1}}}$;
		\item Given a configuration $C$ that in position $j$ has a label in $\{\gen{\gen{\reXM{j}{0}}},\gen{\gen{\reXM{j}{1}}}\}$, there is also a configuration that is equal to $C$ in all positions except $j$, and in position $j$ contains a label from $\{\gen{\gen{\reXY{j}{0}{0}}},\gen{\gen{\reXY{j}{0}{1}}},\gen{\gen{\reXY{j}{1}{0}}},\gen{\gen{\reXY{j}{1}{1}}}\}$. \qedhere
	\end{itemize}
\end{proof}
Let $\Pi''''_{i,\Delta}$ be the problem obtained by removing from the white constraint of $\Pi'''_{i,\Delta}$ all configurations that, for $j = \Delta - i - 1$, contain the label $\gen{\gen{\reXM{j}{0}}}$ or $\gen{\gen{\reXM{j}{1}}}$.
By the definitions of $\Pi'''_{i,\Delta}$ and $\Pi''''_{i,\Delta}$, by \Cref{obs:no-suffix-at-new-end}, and by \Cref{lem:all-step-2}, we obtain the following.
\begin{lemma}\label{lem:all-step-22}
	The problem $\Pi''''_{i,\Delta}$ is a relaxation of $\rere(\Pi'_{i,\Delta})$.
\end{lemma}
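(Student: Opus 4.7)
My plan is to derive the statement in two steps, leveraging Lemma \ref{lem:all-step-2} and Observation \ref{obs:no-suffix-at-new-end}. First, I would argue that $\Pi'''_{i,\Delta}$ itself is a relaxation of $\rere(\Pi'_{i,\Delta})$. The passage from $\Pi''_{i,\Delta}$ to $\Pi'''_{i,\Delta}$ is a uniform label substitution in both constraints: the two labels $L_1 = \gen{\gen{\reXY{\Delta-i}{0}{0},\reXY{\Delta-i}{0}{1}}}$ and $L_2 = \gen{\gen{\reXY{\Delta-i}{1}{0},\reXY{\Delta-i}{1}{1}}}$ are merged into the already-present label $L_{12} = \gen{\gen{\reXY{\Delta-i}{0}{0},\reXY{\Delta-i}{0}{1}},\gen{\reXY{\Delta-i}{1}{0},\reXY{\Delta-i}{1}{1}}}$. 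Viewed as sets of $\Pi'_{i,\Delta}$-labels, both $L_1$ and $L_2$ are subsets of $L_{12}$, so by the set-inclusion characterization of the diagram (\Cref{obs:diagram-by-set-inclusion}), $L_{12}$ dominates each merged label. This gives the three relaxation conditions for $\Pi'''_{i,\Delta}$: (i) any dominating configuration from $\nodeconst''_{i,\Delta}$ provided by \Cref{lem:all-step-2} still dominates after the substitution; (ii) $\Sigma'''_{i,\Delta}$ is by construction the set of labels appearing in $\nodeconst'''_{i,\Delta}$; (iii) the substitution commutes with the picking operation, so $\edgeconst'''_{i,\Delta}$ remains exactly the full set of $\Sigma'''_{i,\Delta}$-configurations whose labels can be picked to yield a configuration in $\edgeconst'_{i,\Delta}$, inherited from the analogous property of $\edgeconst''_{i,\Delta}$ from \Cref{lem:step22}.

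The second step handles the pruning that produces $\Pi''''_{i,\Delta}$. For the dominance condition, pick any white configuration $C^\star$ of $\rere(\Pi'_{i,\Delta})$; by step 1 some $C \in \nodeconst'''_{i,\Delta}$ dominates it. If $C \in \nodeconst''''_{i,\Delta}$ we are done, so assume $C$ is removed, i.e.\ $C$ contains $\gen{\gen{\reXM{\Delta-i-1}{0}}}$ or $\gen{\gen{\reXM{\Delta-i-1}{1}}}$ at position $\Delta - i - 1$. By \Cref{obs:no-suffix-at-new-end}, $C$ is non-maximal in $\nodeconst'''_{i,\Delta}$, hence strictly dominated by some other configuration of $\nodeconst'''_{i,\Delta}$; since the alphabet is finite and dominance is a partial order, iterating yields a maximal configuration $C^{\max} \in \nodeconst'''_{i,\Delta}$ dominating $C^\star$. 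Such a $C^{\max}$ cannot contain either forbidden label at position $\Delta - i - 1$ (else \Cref{obs:no-suffix-at-new-end} would contradict its maximality), so $C^{\max} \in \nodeconst''''_{i,\Delta}$, proving the dominance condition.

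For the remaining two conditions, since the two forbidden labels are of color $\Delta - i - 1$ and thus only ever occur at position $\Delta - i - 1$ in a white configuration, we have $\Sigma''''_{i,\Delta} = \Sigma'''_{i,\Delta} \setminus \{\gen{\gen{\reXM{\Delta-i-1}{0}}},\gen{\gen{\reXM{\Delta-i-1}{1}}}\}$; correspondingly, $\edgeconst''''_{i,\Delta}$ is obtained from $\edgeconst'''_{i,\Delta}$ by restricting to configurations over $\Sigma''''_{i,\Delta}$, and since $\edgeconst'''_{i,\Delta}$ was already the full picking-closure relative to $\edgeconst'_{i,\Delta}$ by step 1, the restriction to $\Sigma''''_{i,\Delta}$ is precisely the full picking-closure for the smaller alphabet. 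The main obstacle I anticipate is ensuring this last point genuinely holds: one must verify that removing the two labels from the disjunctions listed around \Cref{obs:replacement-black} does not leave behind any orphan black configurations that the picking condition for $\Sigma''''_{i,\Delta}$ would require; however, this reduces to a short combinatorial check, since those two labels appear in black disjunctions only jointly with other labels that remain in $\Sigma''''_{i,\Delta}$.
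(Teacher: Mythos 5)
Your proposal is correct and follows essentially the same route as the paper, whose own proof is just a one-line appeal to the definitions of $\Pi'''_{i,\Delta}$ and $\Pi''''_{i,\Delta}$, \Cref{obs:no-suffix-at-new-end}, and \Cref{lem:all-step-2}; you have simply spelled out the details (label merging preserves domination because the merged label is a superset, and pruned configurations are non-maximal hence dominated by surviving ones). The only cosmetic quibble is that domination between configurations of sets is componentwise set inclusion by definition, so the appeal to \Cref{obs:diagram-by-set-inclusion} in step 1 is unnecessary.
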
 

We now define a renaming $N$ of the labels of $\nodeconst''''_{i,\Delta}$. Let $N$ be the following mapping.
\begin{itemize}[noitemsep]
	\item If $j$ is not special, let $N(\gen{\gen{L}}) = L$.
	\item If $j$ is special, let:
	\begin{itemize}
		\item $N(\gen{\gen{\reXY{j}{0}{0},\reXY{j}{0}{1}},\gen{\reXY{j}{1}{0},\reXY{j}{1}{1}}}) = \reQ{j}$;
		\item $N(\gen{\gen{\reMY{j}{0},\reMY{j}{1}}}) = \reE{j}$.
	\end{itemize}
\end{itemize}
By combining the definition of $\nodeconst''_{i,\Delta}$ given by \Cref{obs:from-nodeplus}, \Cref{obs:replacement-black}, the replacement rules used to define $\Pi'''_{i,\Delta}$ as a function of $\Pi''_{i,\Delta}$, and the definition of $\nodeconst''''_{i,\Delta}$ as a function of $\Pi'''_{i,\Delta}$, we obtain the following.
\begin{observation}\label{obs:ren1}
	Under the renaming $N$, the white constraint $\nodeconst''''_{i,\Delta}$ of $\Pi''''_{i,\Delta}$ is equal to $\nodeconst_{i+1,\Delta}$.
\end{observation}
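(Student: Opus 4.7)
The plan is to verify the equality by a direct case analysis over the four families of condensed configurations that make up $\nodeconst^+_{i,\Delta}$: strikethrough, grabbing-present, grabbing-gone, and grabbing-special. By \Cref{obs:from-nodeplus}, $\nodeconst''_{i,\Delta}$ consists of exactly one generated configuration per condensed configuration in $\nodeconst^+_{i,\Delta}$, so it suffices to trace each family through the successive transformations---the label merging of $\Pi'''_{i,\Delta}$ at the special color, the deletion in $\Pi''''_{i,\Delta}$ of configurations containing $\gen{\gen{\reXM{\Delta-i-1}{0}}}$ or $\gen{\gen{\reXM{\Delta-i-1}{1}}}$, and finally the renaming $N$---and to compare the result with the explicit definition of $\nodeconst_{i+1,\Delta}$ in \Cref{ssec:re-def} (read with $i$ replaced by $i+1$).

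First I would handle the strikethrough family. Configurations with $b \le \Delta-i-2$ pass through $\Pi'''_{i,\Delta}$ and $\Pi''''_{i,\Delta}$ unchanged and, under $N$, land on the first-type configurations of $\nodeconst_{i+1,\Delta}$ with the same parameters $(a,b,y_a,\ldots,y_{b-1})$. Configurations with $b = \Delta-i$ have their position-$(\Delta-i)$ disjunction collapsed by the $\Pi'''_{i,\Delta}$ merge and then sent by $N(\gen{\gen{\reXY{\Delta-i}{0}{0},\reXY{\Delta-i}{0}{1}},\gen{\reXY{\Delta-i}{1}{0},\reXY{\Delta-i}{1}{1}}}) = \reQ{\Delta-i}$ to the first-type configurations of $\nodeconst_{i+1,\Delta}$ whose rightmost non-$\reQ{}$ label sits at position $\Delta-i-1$ and whose position-$(\Delta-i)$ label is $\reQ{\Delta-i}$. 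Configurations with $b = \Delta-i-1$ contain $\gen{\gen{\reXM{\Delta-i-1}{y_{b-1}}}}$ and are precisely the ones removed in going from $\Pi'''_{i,\Delta}$ to $\Pi''''_{i,\Delta}$, mirroring exactly the exclusion $b \ne \Delta-(i+1)$ in the first family of $\nodeconst_{i+1,\Delta}$.

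Second, I would match the grabbing-present configurations with the second-type configurations of $\nodeconst_{i+1,\Delta}$, the grabbing-gone configurations (for $a \ge \Delta-i+1$) with the third-type configurations with the same $a$, and the grabbing-special configuration with the third-type configuration for $a = \Delta-i$---the newly gone color---using the remaining renaming rule $N(\gen{\gen{\reMY{\Delta-i}{0},\reMY{\Delta-i}{1}}}) = \reE{\Delta-i}$. All the substantive work has already been done by \Cref{obs:from-nodeplus}, \Cref{obs:replacement-black}, and the explicit definitions of $\Pi'''_{i,\Delta}$, $\Pi''''_{i,\Delta}$, and $N$, so the proof reduces to routine bookkeeping. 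The main obstacle I anticipate is keeping careful track of the index shift $\Delta-i \mapsto \Delta-(i+1)$---which turns the former special color into a gone color and the former present color $\Delta-i-1$ into the new special color---and verifying that, at these two boundary colors, the parameter ranges of each family line up exactly on both sides of the claimed equality.
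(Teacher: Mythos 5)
Your overall strategy---tracing each of the four families of $\nodeconst^+_{i,\Delta}$ through the merge defining $\Pi'''_{i,\Delta}$, the deletion defining $\Pi''''_{i,\Delta}$, and the renaming $N$, and matching the result against the three kinds of configurations of $\nodeconst_{i+1,\Delta}$---is exactly the verification the paper implicitly invokes (it offers no further argument for this observation). Your treatment of the strikethrough, grabbing-gone, and grabbing-special families is correct: the deleted $b=\Delta-i-1$ strikethrough configurations mirror the exclusion $b\neq\Delta-(i+1)$ in the first kind, the surviving $b=\Delta-i$ case lands on $b=\Delta-(i+1)+1$, and grabbing-special supplies the third-kind configuration for the newly gone color $a=\Delta-i$ via $N(\gen{\gen{\reMY{\Delta-i}{0},\reMY{\Delta-i}{1}}})=\reE{\Delta-i}$.

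The one place you defer to ``routine bookkeeping''---checking that the parameter ranges ``line up exactly'' at the boundary colors---is, however, exactly where the claimed matching of the grabbing-present family with the second-kind configurations does not go through as you state it. Grabbing-present configurations exist for all $1\le a\le\Delta-i-1$, whereas the second kind of $\nodeconst_{i+1,\Delta}$ only ranges over $1\le a\le\Delta-(i+1)-1=\Delta-i-2$; moreover $\reEE{\Delta-i-1}$ is not a label of $\Pi_{i+1,\Delta}$ at all, since the new special color $\Delta-i-1$ admits no such label. The grabbing-present configuration with $a=\Delta-i-1$ contains $\gen{\gen{\reEE{\Delta-i-1}}}$, is not removed in passing to $\Pi''''_{i,\Delta}$ (that deletion only concerns configurations containing $\gen{\gen{\reXM{\Delta-i-1}{0}}}$ or $\gen{\gen{\reXM{\Delta-i-1}{1}}}$), and under $N$ yields a configuration with no counterpart in $\nodeconst_{i+1,\Delta}$. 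You therefore need to either show that this configuration does not actually occur (or is redundant) in $\nodeconst''''_{i,\Delta}$, or handle it by a separate argument; note it cannot simply be discarded, as it is used in the proof of \Cref{lem:step21} to dominate combinations whose intersection at position $a_1=\Delta-i-1$ equals $\gen{\gen{\reEE{\Delta-i-1}}}$. As written, your case analysis stalls at precisely the boundary you flagged as the main obstacle, so the proposal is incomplete there.
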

\begin{observation}\label{obs:ren2}
	Under the renaming $N$, the black constraint $\edgeconst''''_{i,\Delta}$ of $\Pi''''_{i,\Delta}$ is equal to $\edgeconst_{i+1,\Delta}$.
\end{observation}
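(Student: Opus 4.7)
The plan is to verify the claimed equality configuration-by-configuration, color-by-color, exploiting the explicit rewriting description of $\edgeconst''_{i,\Delta}$ provided by Observation \ref{obs:replacement-black}. For each color $j$ I will trace a configuration of $\edgeconst'_{i,\Delta}$ through the successive transformations---the $B$-replacement of Observation \ref{obs:replacement-black} to reach $\edgeconst''_{i,\Delta}$, the label merge at the special color $\Delta - i$ defining $\Pi'''_{i,\Delta}$, and the implicit restriction to $\Sigma''''_{i,\Delta}$ inherited from $\nodeconst''''_{i,\Delta}$---and then apply the renaming $N$. The output should match the corresponding configuration of $\edgeconst_{i+1,\Delta}$ from Section \ref{ssec:re-def}. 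A key piece of bookkeeping is the color shift between $\Pi_{i,\Delta}$ and $\Pi_{i+1,\Delta}$: the old special color $\Delta - i$ becomes the first gone color in $\Pi_{i+1,\Delta}$, the old last present color $\Delta - i - 1$ becomes the new special color, and all other colors retain their type.

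For gone colors $j > \Delta - i$, the unique configuration $\gen{\reE{j}}^2~\gen{\reQ{j}}$ of $\edgeconst'_{i,\Delta}$ is sent by $B \circ N$ to $[\reE{j},\reQ{j}]^2~[\reQ{j}]$, which is the gone-color constraint of $\edgeconst_{i+1,\Delta}$. For $j = \Delta - i$ the two configurations of $\edgeconst'_{i,\Delta}$ at the special color are identified after the $\Pi'''_{i,\Delta}$-merge (both special-color labels $\gen{\gen{\reXY{j}{0}{0},\reXY{j}{0}{1}}}$ and $\gen{\gen{\reXY{j}{1}{0},\reXY{j}{1}{1}}}$ collapse to $\gen{\gen{\reXY{j}{0}{0},\reXY{j}{0}{1}},\gen{\reXY{j}{1}{0},\reXY{j}{1}{1}}}$) and map under $B \circ N$ to $[\reQ{\Delta-i}]~[\reQ{\Delta-i},\reE{\Delta-i}]^2$, again a gone-color constraint. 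For the first color $j = 1$ and the remaining present colors $2 \le j \le \Delta - i - 2$, the merge is inactive and $N$ merely unwraps $\gen{\gen{\cdot}}$; the result of $B \circ N$ applied to each configuration of $\edgeconst'_{i,\Delta}$ at color $j$ coincides by direct inspection with the corresponding configuration of $\edgeconst_{i+1,\Delta}$ (which at these colors has the same shape as the list of $\edgeconst_{i,\Delta}$, since the present-color constraint is independent of $i$).

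The crucial case is $j = \Delta - i - 1$, which is present in $\Pi_{i,\Delta}$ but becomes the new special color in $\Pi_{i+1,\Delta}$. The $B$-image of the present-color configurations of $\edgeconst'_{i,\Delta}$ at $j$ is precisely the present-color list of $\edgeconst''_{i,\Delta}$; passing to $\Sigma''''_{i,\Delta}$ removes $\gen{\gen{\reXM{j}{0}}}$ and $\gen{\gen{\reXM{j}{1}}}$ from every disjunction by Observation \ref{obs:no-suffix-at-new-end}, and a short side argument of the same flavor eliminates $\gen{\gen{\reEE{j}}}$ as well. After applying $N$, the three deletions are exactly the operation by which Section \ref{ssec:re-def} defines the special-color list of $\edgeconst_{i+1,\Delta}$ from its present-color list. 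The main obstacle is therefore this last case: one has to verify that the restriction to $\Sigma''''_{i,\Delta}$ touches the disjunctions in precisely the way prescribed by the special-color definition---removing $\reXM{j}{0}$, $\reXM{j}{1}$, and $\reEE{j}$ and nothing more---and that the surviving labels, together with the identifications at color $\Delta - i$, reconstruct each configuration of $\edgeconst_{i+1,\Delta}$ at the new special color exactly once, with no spurious configurations produced.
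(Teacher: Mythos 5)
Your high-level plan — tracing each configuration of $\edgeconst'_{i,\Delta}$ through the explicit $B$-replacement of Observation~\ref{obs:replacement-black}, the special-color merge defining $\Pi'''_{i,\Delta}$, the restriction to $\Sigma''''_{i,\Delta}$, and finally the renaming $N$, color by color — is the right one, and your treatment of the gone colors, color $1$, the present colors $j \le \Delta - i - 2$, and the old special color $\Delta - i$ (where both special-color configurations collapse to the gone-color constraint $[\reQ{\Delta-i}]~[\reQ{\Delta-i},\reE{\Delta-i}]^2$) is correct.

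The gap is exactly where you point at the ``short side argument of the same flavor'' for eliminating $\gen{\gen{\reEE{j}}}$ at $j = \Delta - i - 1$: this cannot be dispatched so lightly, because the paper's explicit construction of $\Pi''''_{i,\Delta}$ does \emph{not} perform the removal you need. By definition, $\Pi''''_{i,\Delta}$ discards from $\nodeconst'''_{i,\Delta}$ only those configurations carrying $\gen{\gen{\reXM{j}{0}}}$ or $\gen{\gen{\reXM{j}{1}}}$ at $j = \Delta - i - 1$; the grabbing-present configuration of $\nodeconst^+_{i,\Delta}$ with $a = \Delta - i - 1$ carries $\gen{\gen{\reEE{\Delta-i-1}}}$ at that position and no $\reXM$-label, so it survives, and hence $\gen{\gen{\reEE{\Delta-i-1}}} \in \Sigma''''_{i,\Delta}$. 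Consequently the black configuration $\gen{\gen{\reMM{\Delta-i-1}}}~\gen{\gen{\reEE{\Delta-i-1}}}^2$ lies in $\edgeconst''''_{i,\Delta}$ (it picks $\gen{\reMM{j}}~\gen{\reEE{j}}^2 \in \edgeconst'_{i,\Delta}$), and under $N$ it renames to $\reMM{\Delta-i-1}~\reEE{\Delta-i-1}^2$, which does not appear at the special color of $\edgeconst_{i+1,\Delta}$ (where $\reEE{\Delta-i-1}$ does not exist). To close this gap you must either (a) observe that the offending grabbing-present configuration becomes non-maximal in $\nodeconst'''_{i,\Delta}$ after the special-color merge --- it is dominated by the strikethrough configuration with $a = \Delta-i-1$, $b = \Delta-i$ --- and then explicitly extend the pruning step defining $\Pi''''_{i,\Delta}$ so that this configuration, and with it $\gen{\gen{\reEE{\Delta-i-1}}}$, disappears; or (b) acknowledge that $\edgeconst''''_{i,\Delta}$ as literally defined contains one extra label and is not precisely equal under $N$ to $\edgeconst_{i+1,\Delta}$. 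Either way, what you call a short side argument is in fact the crux of the observation and must be written out.
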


By combining \Cref{lem:all-step-1}, \Cref{lem:all-step-22}, \Cref{obs:ren1}, and \Cref{obs:ren2}, we obtain the following.
\begin{lemma}\label{lem:onestep}
	For all $0 \le i \le \Delta - 3$, there exists a problem $\Pi^{\mathrm{rel}}_{i,\Delta}$ satisfying the following:
	\begin{itemize}[noitemsep]
		\item $\re(\Pi_{i,\Delta})$ can be relaxed to $\Pi^{\mathrm{rel}}_{i,\Delta}$;
		\item $\rere(\Pi^{\mathrm{rel}}_{i,\Delta})$ can be relaxed to $\Pi_{i+1,\Delta}$;
		\item The number of labels of $\Pi^{\mathrm{rel}}_{i,\Delta}$ are upper bounded by $O(\Delta)$.
	\end{itemize}
\end{lemma}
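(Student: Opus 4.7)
The plan is to set $\Pi^{\mathrm{rel}}_{i,\Delta} := \Pi'_{i,\Delta}$, the intermediate problem constructed in \Cref{ssec:re-first}, and verify the three required properties by pulling together results already established in the preceding subsections.

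For the first bullet, \Cref{lem:all-step-1} directly states that $\Pi'_{i,\Delta}$ is a relaxation of $\re(\Pi_{i,\Delta})$, so this is immediate. For the second bullet, I would argue in two steps. \Cref{lem:all-step-22} gives that $\Pi''''_{i,\Delta}$ is a relaxation of $\rere(\Pi'_{i,\Delta})$. Then \Cref{obs:ren1} and \Cref{obs:ren2} show that under the renaming $N$, the white constraint $\nodeconst''''_{i,\Delta}$ coincides with $\nodeconst_{i+1,\Delta}$ and the black constraint $\edgeconst''''_{i,\Delta}$ coincides with $\edgeconst_{i+1,\Delta}$. Since a renaming of labels does not change the problem (the label set of an LCL is an abstract finite set), $\Pi''''_{i,\Delta}$ and $\Pi_{i+1,\Delta}$ are the same problem, and hence $\Pi_{i+1,\Delta}$ is a relaxation of $\rere(\Pi'_{i,\Delta})$ as required.

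For the third bullet, I would bound the number of labels of $\Pi'_{i,\Delta}$ by a direct enumeration using the description of $\edgeconst'_{i,\Delta}$ in \Cref{ssec:re-first}. For each of the $\Delta$ colors we can collect the finitely many sets $\gen{L}$ or $\gen{L,L'}$ that actually appear in the listed condensed configurations: for color $1$ there are $O(1)$ such labels ($\gen{\reEE{1}}$, $\gen{\reMM{1}}$, $\gen{\reMY{1}{0}}$, $\gen{\reMY{1}{1}}$); for each present color $2 \le j \le \Delta-i-1$ there are again $O(1)$ such labels (the singletons together with $\gen{\reXY{j}{0}{0},\reXY{j}{1}{1}}$, $\gen{\reXY{j}{0}{1},\reXY{j}{1}{0}}$, and the mixed ones $\gen{\reMY{j}{b_1},\reXM{j}{b_2}}$); for the special color there are the four labels listed explicitly; and for each gone color there are two. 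Summing over colors gives $O(\Delta)$ labels in total.

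The proof is essentially a bookkeeping step, so I do not anticipate any real obstacle: all of the technical content has already been carried out in \Cref{lem:all-step-1}, \Cref{lem:all-step-22}, and \Cref{obs:ren1,obs:ren2}. The only thing worth double-checking in writing it up is that the renaming $N$ is a genuine bijection between the label sets (which follows directly from its definition together with the explicit enumerations of $\Sigma''''_{i,\Delta}$ and $\Sigma_{i+1,\Delta}$), so that identifying $\Pi''''_{i,\Delta}$ with $\Pi_{i+1,\Delta}$ is legitimate.
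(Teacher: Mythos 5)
Your proposal is correct and takes essentially the same route as the paper, which simply combines \Cref{lem:all-step-1}, \Cref{lem:all-step-22}, \Cref{obs:ren1}, and \Cref{obs:ren2} after implicitly setting $\Pi^{\mathrm{rel}}_{i,\Delta} = \Pi'_{i,\Delta}$. Your explicit $O(\Delta)$ label count and the remark about $N$ being a bijection are exactly the bookkeeping the paper leaves implicit.
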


\subsection{The problem \texorpdfstring{$\Pi_{\Delta-2,\Delta}$}{Pi} is not trivial}\label{ssec:re-non-trivial}
We now prove that $\Pi_{\Delta-2,\Delta}$ is not trivial to solve, even when given a $\Delta$-edge coloring.
\begin{lemma}\label{lem:non-zero}
	The problem $\Pi_{\Delta-2,\Delta}$ is not $0$-round solvable in the deterministic port numbering model, even if a $\Delta$-edge coloring is given.
\end{lemma}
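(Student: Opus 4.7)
The plan is to exhibit a small $(\Delta,3)$-biregular instance with a $\Delta$-edge coloring and a port assignment on which every white node has an identical local view, so that any deterministic $0$-round algorithm is forced to make the same choice at every white node. Concretely, I would take three white nodes $w_1,w_2,w_3$ of degree $\Delta$ and $\Delta$ black nodes $b_1,\dots,b_\Delta$ of degree $3$, with each $b_k$ colored $k$ by the $\Delta$-edge coloring and adjacent to every $w_i$ via port $k$ of $w_i$. Every white node then has the same port-to-color bijection (the identity), so a deterministic $0$-round algorithm outputs a single label $L_k$ on the port of color $k$ at each white node; each black node $b_k$ therefore sees the multiset $L_k\,L_k\,L_k$, which must be an allowed configuration in $\edgeconst_{\Delta-2,\Delta}$.

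Next I would determine, color by color, the labels $L$ for which $L\,L\,L$ is a valid black configuration. For each gone color $k\ge 3$, the unique rule $[\reE{k},\reQ{k}]^2~[\reQ{k}]$ immediately forces $L_k=\reQ{k}$. For color $1$, intersecting the two rules gives $L_1=\reMM{1}$ as the only possibility. For the special color $\Delta-i=2$, a scan of the (long but finite) list of allowed condensed configurations shows that $L_2$ must lie in $\{\reMM{2},\reXY{2}{0}{0},\reXY{2}{1}{0},\reXY{2}{1}{1}\}$; the key point to verify is that no rule in the special-color list contains $\reMY{2}{0}$ or $\reMY{2}{1}$ in all three positions, as the labels $\reMY{2}{y}$ appear only in the first position of each rule.

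Finally, I would match these forced values against the white constraint $\nodeconst_{\Delta-2,\Delta}$. The requirement $L_1=\reMM{1}$ kills the second-kind configuration (which forces $L_1=\reEE{1}$) and the $a=1$ branch of the first kind (which forces $L_1=\reMY{1}{y_1}$); the requirement $L_k=\reQ{k}$ for every gone $k\ge 3$ kills every third-kind configuration (which forces $L_a=\reE{a}$ for some $a\ge 3$). The only remaining option is the first kind with $a=2,b=3$, namely $\reMM{1}\,\reMY{2}{y_2}\,\reQ{3}\cdots\reQ{\Delta}$, which demands $L_2\in\{\reMY{2}{0},\reMY{2}{1}\}$---contradicting the forced set for $L_2$. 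The only non-routine step in the plan is the special-color case analysis, but restricting attention to triples of the form $L\,L\,L$ collapses each condensed configuration to a three-way intersection of disjunctions, which makes the enumeration entirely mechanical.
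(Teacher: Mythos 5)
Your proof is correct and uses the same core idea as the paper: a deterministic $0$-round algorithm in the port-numbering model must output the same port-to-label assignment at every white node, so placing three white nodes on a common black node (via the same port) forces a triple $L\,L\,L$ that must be a valid black configuration. The paper runs the argument in the opposite direction—showing each valid white multiset necessarily contains one of a short list of labels $L$ with $L\,L\,L\notin\edgeconst_{\Delta-2,\Delta}$, and then building a single-gadget instance around that port—whereas you build the full $\Delta$-black-node instance and enumerate the safe labels per color to rule out every white configuration; this is a harmless stylistic dual and both arguments are complete.
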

\begin{proof}
	Assume for a contradiction that there exists a deterministic $0$-round algorithm solving $\Pi_{\Delta-2,\Delta}$ when given a $\Delta$-edge coloring.
	Any $0$-round algorithm (operating on white nodes) in the deterministic port numbering model satisfies that all white nodes output the same configuration $C = \{L_1,\ldots,L_\Delta\}\in \nodeconst_{\Delta-2,\Delta}$. Moreover, there must exist a permutation $\phi$ such that all nodes output label $L_i$ on port $\phi(i)$.
	
	By the definition of $\nodeconst_{\Delta-2,\Delta}$, the configuration $C$ must contain at least one label among the following:
	\begin{itemize}[noitemsep]
		\item $\reEE{1}$;
		\item $\reE{j}$, for some $3 \le j \le \Delta$;
		\item $\reMY{j}{0}$, for some $1 \le j \le 2$;
		\item $\reMY{j}{1}$, for some $1 \le j \le 2$.
	\end{itemize}
	Let $L_j$ be such a label, and let $i = \phi(j)$ be the port on which the algorithm outputs $L_j$.
	
	Observe that, for any possible $L_j$ listed above, the configuration $L_j ~ L_j ~ L_j$ is not contained in $\edgeconst_{\Delta-2,\Delta}$. However, by considering a graph in which there are three white nodes $u,v,w$ connected via port $i$ to the same black node $b$ (satisfying that the $\Delta$-edge coloring assigns color $i$ to node $b$), we obtain a setting in which the algorithm fails to produce a valid solution for $\Pi_{\Delta-2,\Delta}$.
\end{proof}

\subsection{Relation between \texorpdfstring{$\mathcal{P}_\Delta$}{P} and \texorpdfstring{$\Pi_{0,\Delta}$}{Pi}}\label{ssec:re-start}
\begin{lemma}\label{lem:mapping-natural-first}
	Given a solution for $\mathcal{P}_\Delta$, it is possible to solve $\Pi_{0,\Delta}$ in $0$ rounds.
\end{lemma}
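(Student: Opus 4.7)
The plan is to verify that the identity relabeling of edges (i.e., each white node keeps the same output labels it produced for $\mathcal{P}_\Delta$) already yields a valid solution to $\Pi_{0,\Delta}$. This requires no communication and hence is a $0$-round algorithm.

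First I would observe that the label set of $\mathcal{P}_\Delta$ is literally a subset of $\Sigma_{0,\Delta}$: the labels $\reMY{1}{y}$ and $\reXY{j}{x}{y}$ (for $2 \le j \le \Delta$) appear in both problems, while $\Sigma_{0,\Delta}$ contains additional labels such as $\reMM{j}$, $\reXM{j}{x}$, and $\reEE{j}$ that are simply never used by the algorithm. Note that for $i=0$ the ``gone colors'' range $\Delta - i + 1 \le j \le \Delta$ is empty, so no $\reQ{j}$ or $\reE{j}$ labels exist and no grabbing-gone configurations need to be matched.

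Next I would verify the white constraint. Every configuration in $\nodeconst_{\mathcal{P}_\Delta}$ is of the form $\reMY{1}{y_1} ~ \reXY{2}{y_1}{y_2} ~ \cdots ~ \reXY{\Delta}{y_{\Delta-1}}{y_\Delta}$ for some $(y_1,\dots,y_\Delta) \in \{0,1\}^\Delta$. This is precisely the first-kind configuration of $\nodeconst_{0,\Delta}$ obtained by choosing $a = 1$ and $b = \Delta - i + 1 = \Delta + 1$ with bit vector $(y_1,\dots,y_\Delta)$: the ranges $1 \le j < a$, $b < j \le \Delta - i$, and $\Delta - i + 1 \le j \le \Delta$ are all empty, and the condition $b \neq \Delta - i$ is satisfied since $b = \Delta + 1 \neq \Delta$. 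Hence $\nodeconst_{\mathcal{P}_\Delta} \subseteq \nodeconst_{0,\Delta}$.

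For the black constraint, I would treat the three cases separately. For color $1$: the single configuration $\reMY{1}{0} ~ \reMY{1}{0} ~ \reMY{1}{1}$ of $\edgeconst_{\mathcal{P}_\Delta}$ can be picked from the condensed configuration $[\reMM{1},\reMY{1}{0}]^2 ~ [\reMM{1},\reMY{1}{1}]$ in $\edgeconst_{0,\Delta}$. For colors $2 \le j \le \Delta - 1$: the configurations in $\edgeconst_{\mathcal{P}_\Delta}$ of color $j$ are exactly the ``bit configurations'' listed in \Cref{sec:iterated-ghz}; by construction of $S'_j$ via $S_j$ (the set of maximal configurations satisfying the universal quantifier over bit configurations), every bit configuration can be picked (up to permutation) from some condensed configuration in $S'_j = \edgeconst_{0,\Delta}|_j$. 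For the special color $j = \Delta - i = \Delta$: the same reasoning applies, noting that the only labels of color $\Delta$ used by the $\mathcal{P}_\Delta$-solution are of the form $\reXY{\Delta}{x}{y}$, and these survive the removal of $\reXM{\Delta}{0}, \reXM{\Delta}{1}, \reEE{\Delta}$ from the disjunctions. The main thing to check carefully — the only real bookkeeping in the proof — is that the removal of those labels does not destroy any pick needed to realize a bit configuration; this is immediate because bit configurations use only $\reXY{j}{\cdot}{\cdot}$ labels, which are never removed. Combining all three cases yields $\edgeconst_{\mathcal{P}_\Delta} \subseteq \edgeconst_{0,\Delta}$, completing the proof.
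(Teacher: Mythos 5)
Your proof is correct and takes the same route as the paper, which simply asserts the containments $\nodeconst_{\mathcal{P}_\Delta} \subseteq \nodeconst_{0,\Delta}$ and $\edgeconst_{\mathcal{P}_\Delta} \subseteq \edgeconst_{0,\Delta}$ and concludes that any $\mathcal{P}_\Delta$-solution is already a $\Pi_{0,\Delta}$-solution; you merely spell out the verification (the choice $a=1$, $b=\Delta+1$ for the white constraint, and the pickability of the bit configurations from $S'_j$) that the paper leaves as ``easy to see.''
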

\begin{proof}
	It is easy to see that $\nodeconst_{\mathcal{P}_\Delta} \subsetneq \nodeconst_{0,\Delta}$ and that $\edgeconst_{\mathcal{P}_\Delta} \subsetneq \edgeconst_{0,\Delta}$. Hence, any solution for $\mathcal{P}_\Delta$ is also a solution for $\Pi_{0,\Delta}$.
\end{proof}

\subsection{Putting things together}\label{ssec:re-lift}
By applying \Cref{lem:onestep} multiple times, and then applying \Cref{lem:non-zero}, we obtain the following.
\begin{lemma}\label{lem:manysteps}
	Let $\Delta \ge 3$ be an integer. There exists a problem sequence $\Pi_{0,\Delta} \rightarrow \Pi_{1,\Delta} \rightarrow \ldots \Pi_{\Delta-2,\Delta}$ such that, for all $0 \le i < \Delta-2$, the following holds:
	\begin{itemize}[noitemsep]
		\item There exists a problem $\Pi'_{i,\Delta}$ that is a relaxation of $\re(\Pi_{i,\Delta})$;
		\item $\Pi_{i+1,\Delta}$ is a relaxation of $\rere(\Pi'_{i,\Delta})$;
		\item The number of labels of $\Pi_{i,\Delta}$, and the ones of $\Pi'_{i,\Delta}$, are upper bounded by $O(\Delta)$.
	\end{itemize}
	Also, $\Pi_{\Delta-2,\Delta}$ is not $0$-round solvable in the deterministic port numbering model, when given a $\Delta$-edge coloring.
\end{lemma}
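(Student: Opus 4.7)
The plan is to simply instantiate the sequence with the family $\{\Pi_{i,\Delta}\}_{0 \le i \le \Delta-2}$ that was already explicitly constructed in \Cref{ssec:re-def}, and then verify the three bullet points for each $i$ in the required range by invoking \Cref{lem:onestep}, while the final non-triviality claim is exactly \Cref{lem:non-zero}. So this is essentially a packaging lemma.

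More concretely, first I would set $\Pi'_{i,\Delta} := \Pi^{\mathrm{rel}}_{i,\Delta}$ where the latter is the problem handed to us by \Cref{lem:onestep}; that lemma immediately yields that $\Pi'_{i,\Delta}$ is a relaxation of $\re(\Pi_{i,\Delta})$, that $\Pi_{i+1,\Delta}$ is a relaxation of $\rere(\Pi'_{i,\Delta})$, and that $\Pi'_{i,\Delta}$ has at most $O(\Delta)$ labels, for each $0 \le i < \Delta-2$. This gives two of the three bullets for free. Since \Cref{lem:onestep} is applicable for the full range $0 \le i \le \Delta-3$, which matches the range of indices needed in the sequence, no additional case analysis is required.

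Second, I would bound the number of labels of each $\Pi_{i,\Delta}$ directly from the definition of $\Sigma_{i,\Delta}$ in \Cref{ssec:re-def}: for every color $j \in \{1,\dots,\Delta\}$, the set $\Sigma_{i,\Delta}$ contains only a constant number of labels of color $j$ (at most the labels $\reEE{j}$, $\reMM{j}$, $\reMY{j}{0}$, $\reMY{j}{1}$, $\reXM{j}{0}$, $\reXM{j}{1}$, $\reXY{j}{0}{0}, \dots, \reXY{j}{1}{1}$, $\reE{j}$, $\reQ{j}$, regardless of which of these are actually present for a given color type). Hence $|\Sigma_{i,\Delta}| = O(\Delta)$, establishing the third bullet.

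Finally, I would invoke \Cref{lem:non-zero} to conclude that $\Pi_{\Delta-2,\Delta}$ is not $0$-round solvable in the deterministic port numbering model even when a $\Delta$-edge coloring is given. No step here is really difficult: the substantive work has already been done in \Cref{lem:onestep} (building the correct relaxation at each step) and in \Cref{lem:non-zero} (handling the base case). The only thing to be careful about is making sure the index range where \Cref{lem:onestep} applies matches the index range used in the chain, which is immediate from the statement.
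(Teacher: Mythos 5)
Your proposal is correct and follows exactly the same route as the paper: the paper's proof is simply "By applying \Cref{lem:onestep} multiple times, and then applying \Cref{lem:non-zero}," and your write-up just spells this out, including the minor additional observation that $|\Sigma_{i,\Delta}| = O(\Delta)$ (which follows directly from the definition, as you note, since each color contributes a constant number of labels). The index-range check ($0 \le i \le \Delta-3$ matching $0 \le i < \Delta - 2$) is also correctly noted.
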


By combining \Cref{lem:manysteps} with \Cref{thm:lifting}, we obtain the following.
\begin{lemma}\label{lem:lb-first}
	Let $\Delta \ge 3$ be an integer. The problem $\Pi_{0,\Delta}$ requires  $\Omega(\min\{\Delta, \log_\Delta n\})$ rounds in the deterministic LOCAL model and $\Omega(\min\{\Delta, \log_\Delta \log n\})$ rounds in the randomized LOCAL model, even if a $\Delta$-edge coloring is given.
\end{lemma}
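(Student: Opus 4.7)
The plan is a direct composition of the two main technical building blocks already established: the problem sequence from \Cref{lem:manysteps}, which organizes all the relaxation work carried out in \Cref{ssec:re-def} through \Cref{ssec:re-non-trivial}, and the general lifting theorem \Cref{thm:lifting} that converts such a sequence into a round complexity lower bound. So the proof will be only a couple of lines; all the genuine effort is already encapsulated in \Cref{lem:manysteps}.

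First, I would invoke \Cref{lem:manysteps} to obtain the sequence
\[
\Pi_{0,\Delta} \rightarrow \Pi_{1,\Delta} \rightarrow \dots \rightarrow \Pi_{\Delta-2,\Delta},
\]
together with the auxiliary problems $\Pi'_{i,\Delta}$ for $0 \le i < \Delta-2$. The lemma guarantees exactly the three hypotheses required by \Cref{thm:lifting}: each $\Pi'_{i,\Delta}$ is a relaxation of $\re(\Pi_{i,\Delta})$, each $\Pi_{i+1,\Delta}$ is a relaxation of $\rere(\Pi'_{i,\Delta})$, and the label sizes of both $\Pi_{i,\Delta}$ and $\Pi'_{i,\Delta}$ are bounded by $f(\Delta) = O(\Delta)$. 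It also provides the base case: $\Pi_{\Delta-2,\Delta}$ is not $0$-round solvable in the deterministic port numbering model when a $\Delta$-edge coloring is given, which is precisely the non-triviality hypothesis of \Cref{thm:lifting}.

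Next, I would apply \Cref{thm:lifting} with $t = \Delta - 2$ and $f(\Delta) = O(\Delta)$. The theorem then gives the lower bound
\[
\Omega\bigl(\min\{\Delta - 2,\ \log_\Delta n - \log_\Delta \log f(\Delta)\}\bigr)
\]
for deterministic LOCAL, and analogously $\Omega(\min\{\Delta-2, \log_\Delta \log n - \log_\Delta \log f(\Delta)\})$ for randomized LOCAL, even when a $\Delta$-edge coloring is given.

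Finally, I would simplify the additive correction term. Since $f(\Delta) = O(\Delta)$, we have $\log f(\Delta) = O(\log \Delta)$ and hence $\log_\Delta \log f(\Delta) = O(1)$. Absorbing this additive constant into the asymptotic notation (and noting $\Delta - 2 = \Theta(\Delta)$) yields the claimed bounds $\Omega(\min\{\Delta, \log_\Delta n\})$ for the deterministic setting and $\Omega(\min\{\Delta, \log_\Delta \log n\})$ for the randomized setting. Since there is no real obstacle here beyond unpacking the statements, the only care needed is to verify that the additive $\log_\Delta \log f(\Delta)$ term is indeed negligible compared to the two quantities being minimized; this is immediate from $f(\Delta) = O(\Delta)$.
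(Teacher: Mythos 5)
Your proposal is correct and matches the paper's argument exactly: the paper obtains \Cref{lem:lb-first} by applying \Cref{thm:lifting} to the sequence guaranteed by \Cref{lem:manysteps}, and your observation that the additive $\log_\Delta \log f(\Delta)$ term is $O(1)$ because $f(\Delta)=O(\Delta)$ is the only calculation needed to go from the raw bound of \Cref{thm:lifting} to the stated form.
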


By combining \Cref{lem:lb-first} with \Cref{lem:mapping-natural-first}, we obtain \Cref{thm:main-lb}.
\newcommand{\given}{\ \middle| \ }

\section{Networks of non-signaling games}
\label{sec:networks}

In this section, we show a classical upper bound for all networks of non-signaling games.
We start by defining the notion of a game.
Each player of a game receives a local input and must produce a local output
such that the outputs of all players combined are consistent with their inputs.

\begin{definition}[Game]
  Let $\Sigma$ be a finite set, and let $m \in \N_+$ be the number of players.
  We call $\mathfrak{g} \subseteq \Sigma^m \times \Sigma^m$ a \emph{game}.
  Each player $i$ receives one input $x_i \in \Sigma$ and produces one output
  $y_i \in \Sigma$.
  A \emph{move} $\mu = (x,y) \in \Sigma^m \times \Sigma^m$ is valid if $\mu \in
  \mathfrak{g}$.
  We overload the notation so that $\mathfrak{g}(x) = \{ y \in \Sigma^m \mid (x, y) \in \mathfrak{g} \}$.
  We say $\mathfrak{g}$ is \emph{solvable} if, for every $x$, $\mathfrak{g}(x)$
  is non-empty.
\end{definition}

Note that our games last for a single move.
(Later on we will connect a player to multiple games and hence each player will
be part of moves in multiple games.)
Since games that are not solvable cannot be solved by any strategy at all, in
what follows we assume that we are dealing exclusively with solvable games.

Let us now move to the setting where a player takes part in multiple games.
A game's inputs may depend arbitrarily (but in a computable way) on the outputs
of games that were played before.
Formally we model this using circuits.

\begin{definition}[Circuit of half-games]
  Let $d \in \N_+$, and let $D$ be a circuit that satisfies the following
  properties:
  \begin{itemize}[noitemsep]
    \item $D$ has a single input gate $\xi$ and no output gates.
    \item Every wire in $D$ carries a value in $\Sigma$.
    \item $D$ may contain arbitrary gates over $\Sigma$ (of unbounded fan-in
    and fan-out).
    \item $D$ contains $d$ unary gates $H_1,\dots,H_d$, which we call
    \emph{half-games}.
  \end{itemize}
  A circuit $C$ is a \emph{(degree-$d$) circuit of half-games} if it can be
  obtained from such a $D$ by detaching all the $H_i$, that is, by replacing
  every $H_i$ with a pair of input and output gates $x_i$ and $y_i$ (that are
  connected to the rest of the circuit in the same way the input and output of
  $H_i$ are).
  In a sense, $C$ is just $D$ where one can externally examine the inputs~$x_1 \cdots x_d$ to half-games $H_1, \dots, H_d$, and the outputs of half-games are fixed at $y_1 \cdots y_d$.
  As a result, $C$ has $d+1$ inputs $\xi,y_1,\dots,y_d \in \Sigma$ and $d$
  outputs $x_1,\dots,x_d \in \Sigma$.
  Equivalently, we may also view $C$ as having $2$ inputs, $\xi \in \Sigma$ and
  $y = y_1 \cdots y_d \in \Sigma^d$, and a single output $x = x_1 \cdots x_d \in
  \Sigma^d$.
  See \cref{fig:half-game-circuit} for an example.
\end{definition}

\begin{figure}
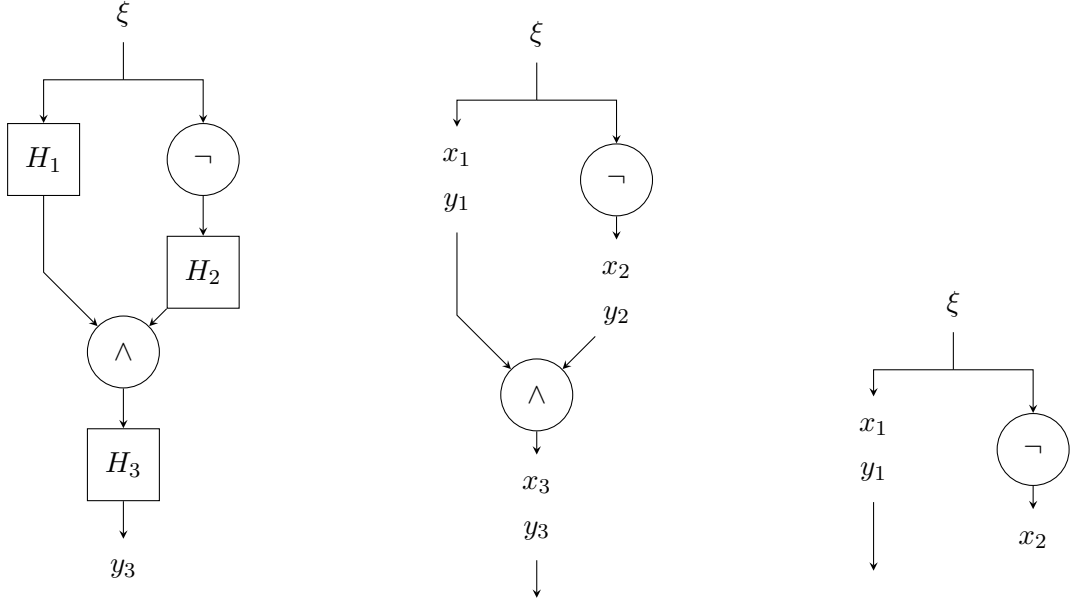

  \centering
  \begin{subfigure}[t]{.32\textwidth}
    \centering
    \includestandalone{figs/half-game-circuit}
  \end{subfigure}%
  ~
  \begin{subfigure}[t]{.32\textwidth}
    \centering
    \includestandalone{figs/half-game-circuit-2}
  \end{subfigure}%
  ~
  \begin{subfigure}[t]{.32\textwidth}
    \centering
    \includestandalone{figs/half-game-circuit-3}
  \end{subfigure}
  \caption{Transforming $D$ (on the left) into $C$ (in the middle) by detaching
  all half-games.
  Note how $C$ may have multiple components and not use every one of its
  inputs.
  On the right, the prefix~$C^2$ is presented.
  }
  \label{fig:half-game-circuit}
\end{figure}

The intuition is that, in addition to an external input $\xi$, $D$ is
connecting together $d$ pairs $(x_i,y_i)$ that each correspond to a half-game
$H_i$.
Note that the roles of $x_i$ and $y_i$ as input and output from the perspective
of $D$, in contrast to that of the half-game $H_i$, are reversed:
How $y_i$ is obtained from $x_i$ is extrinsic to $D$---indeed, it is
determined by $H_i$---and hence what $D$ does is describe how $x_i$ (which is
the input of $H_i$) is obtained from the external input $\xi$ and the $y_j$
(which are the outputs of the other half-games).

With this in place, we now turn to the definition of the network of games
problem.
Each player is modeled as a circuit of half-games of degree $d$.
The input on white nodes describe what the circuit looks like for each white
node and what is their local input.
The input on black nodes specify which game is played with its neighbors.
The output on each edge is the respective player's input and output pair to the
game across that edge.

\begin{definition}[Network of games problem]
  Fix the number of players~$m \in \N_+$ and the number of games $d \in \N_+$
  each player plays.
  Let $\mathcal{G} = \{ \mathfrak{g}_i \subseteq \Sigma^m \times \Sigma^m \}_i$
  be a family of games on $\Sigma$.

  The \emph{network of games problem} $\GAMES_d[\mathcal{G}]$ is an LCL problem defined on $(d, m)$-regular graphs as follows:
  \begin{itemize}
    \item Each edge must be labeled with a pair from $\Sigma \times \Sigma$.
    \item Each white node $w$ is given an input element $\xi \in \Sigma$, a
    circuit of half-games $C$, and a permutation $\sigma_w$ of its neighboring
    edges.
    The node satisfies the white constraint if the output labels
    $(x_1,y_1),\dots,(x_d,y_d) \in \Sigma\times\Sigma$ around it satisfy
    $C(\xi,y^{\sigma_w}) = x^{\sigma_w}$.
    This ensures that the outputs of games flow correctly through the circuit to following games.
    \item Each black node is given as input a game $\mathfrak{g} \in \mathcal{G}$ and a
    permutation $\sigma_b$ of its neighboring edges.
    The node satisfies the black constraint if the output labels
    $(x_1,y_1),\dots,(x_m,y_m) \in \Sigma\times\Sigma$ around it satisfy
    $y^{\sigma_b} \in \mathfrak{g}(x^{\sigma_b})$.
    This ensures that each game is played correctly.
  \end{itemize}
\end{definition}

Note that, for simplicity, the problem above is defined with inputs on nodes.
To make this an LCL problem in the sense of \cref{def:black-white-lcl}, we can
move the node input labels to the adjacent edges and require that all the
inputs on the adjacent edges agree (and allow a node's output to be arbitrary if
this is not the case).
In particular, the permutation of neighboring edges is most naturally encoded by having the neighboring edges carry their indices in the permutation.
Moreover, note that both the circuits~$C$ and the games~$\mathfrak{g}$ can be seen as tables over a finite alphabet, and hence there are only finitely many distinct circuits and games.
This ensures that the input and output label sets are finite, as required by \cref{def:black-white-lcl}.

We note here that the iterated GHZ problem, as defined in \cref{sec:iterated-ghz}, is a relaxation of a network of games problem, namely $\GAMES_\Delta[\{\SYMM, \GHZ\}]$.
Here~SYMM is the symmetry-breaking game, where exactly one of the players must output~$1$ and the other two output~$0$, regardless of the input.
The easiest way to see that is indeed the case is to notice that the iterated GHZ problem is a~$\GAMES_\Delta[\{\SYMM, \GHZ\}]$ instance with the following promises on the input:
\begin{itemize}
  \item The circuit in each white node is the same, namely one where the first game is the SYMM game and the rest are GHZ games.
  The circuit is linear and the output of one game flows directly to become the input of the next game.
  \item The $\Delta$-edge coloring (recall that this is a coloring of the black nodes) gives the order of the games:
  Black node with color~$1$ plays the SYMM game while the rest of the black nodes play the GHZ game.
  Note that the setting is actually even more relaxed, as the algorithm is free to permute the order of the colors.
  \item The $\Delta$-edge coloring also promises that each white agrees on the order of the games with its adjacent white nodes.
  In the network of games problems in general, the first half-game of one white node may be played with the last half-game of another white node.
\end{itemize}

Using similar reasoning, one can also get convinced that the iterated CHSH game, defined in \cref{ssec:intro-iterated-def}, is a relaxation of a network of games problem, this time $\GAMES_\Delta[\{\CHSH\}]$.
Finally, we note that CHSH, SYMM and GHZ are non-signaling games, as defined in the next section (\cref{def:ns-game}), and hence our upper bound algorithm works for solving both $\GAMES_\Delta[\{\CHSH\}]$ and $\GAMES_\Delta[\{\SYMM, \GHZ\}]$, and therefore also for solving the iterated CHSH and GHZ games.

\subsection{Solving networks of non-signaling games deterministically}

An important special class of games are so-called non-signaling games.
For a survey on the topic, see \cite{Brunner_Review}.
We now give a definition of non-signaling games adapted to our purposes.
Before doing that, we introduce some notation and minor definitions.
For any function \(f: A \to B\) and any subset \(S \subseteq A\), we denote by \(f \restriction_S\) the function \(g: S \to B\) such that \(f(x) = g(x)\) for all \(x \in S\).
We now define the notion of a \emph{strategy} for a game.

\begin{definition}[Strategy]\label{def:strategy}
  For any fixed game \(\mathfrak{g} \subseteq \Sigma^m \times \Sigma^m\) of \(m\) players, consider a mapping from inputs to probability distributions over outputs \(\rho^{(\mathfrak{g})}: (x_1,\dots,x_m) \mapsto \{(\lambda_j : [m] \to \Sigma^m, p_j) : j \in J\}\). 
  Here, \((x_1, \dots, x_m) \in \Sigma^m\) is an input for \(\mathfrak{g}\), \(\lambda_j : [m] \to \Sigma^m\) is an assignment of outputs to the \([m]\) players, \(p_j\) is the probability that \(\lambda_j\) is the actual outputs of the players, and \(J\) is any set used for indexing.
  We call \(\rho^{(\mathfrak{g})}\) a \emph{strategy} for \(\mathfrak{g}\).
  Since \(\{(\lambda_j : [m] \to \Sigma^m, p_j) : j \in J\}\) is a probability distribution, \(\sum_{j \in J} p_j = 1\) and, for all \(j \in J\), \(p_j \ge 0\).
\end{definition}

For any game \(\mathfrak{g}\) and any strategy \(\rho^{(\mathfrak{g})}\) for \(\mathfrak{g}\), we say that \(\rho^{(\mathfrak{g})}\) solves \(\mathfrak{g}\) with probability at least \(p\) if, for any input \((x_1,\dots, x_m) \in \Sigma^m\) to \(\mathfrak{g}\), the distribution \(\{(\lambda_j : [m] \to \Sigma^m, p_j) : j \in J\} = \rho^{(\mathfrak{g})}((x_1, \dots, x_m))\) satisfies the following property:
\[
  \sum_{\substack{j \in J : \\ (\lambda_{j}(1), \dots, \lambda_j(m)) \in \mathfrak{g}(x_1, \dots, x_m)}} p_j \ge p.
\]

For any subset of players \(S = \{i_1, \dots, i_s\} \subseteq [m]\) (with \(i_1 < i_2 < \dots < i_s\)), let us define the restriction of \(\rho^{(\mathfrak{g})}\) to \(S\) as \(\rho^{(\mathfrak{g})}_{S} : (x_{i_1}, \dots, x_{i_s}) \mapsto \{(\lambda_{S,j} : S \to \Sigma^s, p_{S,j}) : j \in J_S\} \) where 
\[
  p_{S,j} = \sum_{\substack{i \in J : \\ \lambda_{S,j} = \lambda_i \restriction_{S}}} p_i.
\]

\begin{definition}[Non-signaling game]
  \label{def:ns-game}
  A game $\mathfrak{g} \subseteq \Sigma^m \times \Sigma^m$ is said to be
  \emph{non-signaling} if there exists a strategy \(\rho^{(\mathfrak{g})}: (x_1, \dots, x_m) \mapsto \{(y_j: [m] \to \Sigma^m, p_j): j \in J\}\) solving \(\mathfrak{g}\) with probability \(1\) such that, for every pair of inputs \((x_1^{(1)}, \dots, x_m^{(1)}), (x_1^{(2)}, \dots, x_m^{(2)}) \in \Sigma^m \times \Sigma^m\) to \(\mathfrak{g}\), and any subset of the players $S = \{i_1, \dots, i_s\} \subseteq [m]$ (with \(i_1 < i_2 < \dots < i_s\)), if \(x_i^{(1)} = x_i^{(2)}\) for all \(i \in S\), then \(\rho^{(\mathfrak{g})}_S(x_{i_1}^{(1)}, \dots, x_{i_s}^{(1)}) = \rho^{(\mathfrak{g})}_S(x_{i_1}^{(2)}, \dots, x_{i_s}^{(2)})\).
  Also, \(\rho^{(\mathfrak{g})}\) is said to be a \emph{non-signaling strategy} for \(\mathfrak{g}\).
\end{definition}

Interestingly, for every non-signaling game, the existence of a non-signaling strategy solving \(\mathfrak{g}\) guarantees a strong completability property, which we exploit for our major result of this section.
Let us now describe \emph{strongly completable games}.

Informally, a game is strongly completable if every move can be
\enquote{sequentialized} by splitting it into \emph{plies} where each player
$P_i$ does the following in their respective ply:
\begin{enumerate}
  \item\label{step:ns-intuition-1} $P_i$ sees the inputs and outputs of all players that completed their
  ply before it as well as its own input $x_i$.
  \item With \emph{only} this information, $P_i$ can produce an output $y_i$
  such that, no matter what the rest of the inputs are and in what order the
  rest of the players complete their plies, the other players are each able to
  find outputs that make the whole move a valid one using only information described in \cref{step:ns-intuition-1}.
\end{enumerate}

Note we are assuming that the process determining the order in which players
must complete their plies is non-adaptive (i.e., it follows a fixed
permutation).

\begin{definition}[Strongly completable game]
  \label{def:sc-game}
  A game $\mathfrak{g} \subseteq \Sigma^m \times \Sigma^m$ is said to be
  \emph{strongly completable} if, for every permutation $\sigma \colon [m] \to [m]$, we
  have that
  \[
    \forall x_{\sigma(1)} \in \Sigma: \exists y_{\sigma(1)} \in \Sigma:
    \cdots
    \forall x_{\sigma(m)} \in \Sigma: \exists y_{\sigma(m)} \in \Sigma:
    (y_1 \cdots y_m) \in \mathfrak{g}(x_1 \cdots x_m).
  \]
\end{definition}

As mentioned before, non-signaling games are strongly completable.

\begin{lemma}\label{lemma:ns-game-implies-sc-game}
  Any non-signaling game \(\mathfrak{g} \subseteq \Sigma^m \times \Sigma^m\) is strongly completable.
\end{lemma}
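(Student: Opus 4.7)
The plan is to use the non-signaling strategy guaranteed by \Cref{def:ns-game} as an oracle to pick the outputs $y_{\sigma(k)}$ one at a time, in the order dictated by $\sigma$, so that the partial assignment always has strictly positive probability under the appropriate marginal of $\rho^{(\mathfrak{g})}$. Once all $m$ outputs have been fixed, positivity under the full distribution $\rho^{(\mathfrak{g})}(x_1,\dots,x_m)$ together with the fact that $\rho^{(\mathfrak{g})}$ wins with probability $1$ will force $(y_1,\dots,y_m) \in \mathfrak{g}(x_1,\dots,x_m)$, which is exactly the conclusion of \Cref{def:sc-game}.

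Concretely, fix a permutation $\sigma$ and let $S_k = \{\sigma(1),\dots,\sigma(k)\}$. I will prove by induction on $k$ the following statement: for every prefix of inputs $x_{\sigma(1)},\dots,x_{\sigma(k)}$ revealed so far, the player in its ply $k$ can choose $y_{\sigma(k)} \in \Sigma$, depending only on these $k$ inputs and on the previously chosen $y_{\sigma(1)},\dots,y_{\sigma(k-1)}$, such that the partial assignment $\lambda_{S_k}\colon \sigma(j) \mapsto y_{\sigma(j)}$ occurs with strictly positive probability in $\rho^{(\mathfrak{g})}_{S_k}(x_{\sigma(1)},\dots,x_{\sigma(k)})$. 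The base case $k = 1$ is immediate: $\rho^{(\mathfrak{g})}_{\{\sigma(1)\}}(x_{\sigma(1)})$ is a probability distribution over $\Sigma$, hence has at least one value of positive weight. For the inductive step, observe that by the definition of the restricted strategy, $\rho^{(\mathfrak{g})}_{S_{k-1}}(x_{\sigma(1)},\dots,x_{\sigma(k-1)})$ is exactly the marginal of $\rho^{(\mathfrak{g})}_{S_k}(x_{\sigma(1)},\dots,x_{\sigma(k)})$ obtained by summing over the value assigned to player $\sigma(k)$; so if the prefix $\lambda_{S_{k-1}}$ has positive weight in the former, at least one extension $y_{\sigma(k)}$ must have positive weight in the latter.

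Two points need to be checked carefully. First, the extension $y_{\sigma(k)}$ must be choosable from knowledge of only the \emph{first} $k$ inputs; but this is exactly what the non-signaling property buys us, since $\rho^{(\mathfrak{g})}_{S_k}$ is by \Cref{def:ns-game} a function of $x_{\sigma(1)},\dots,x_{\sigma(k)}$ alone, independent of how the adversary will later instantiate $x_{\sigma(k+1)},\dots,x_{\sigma(m)}$. Second, after $m$ plies we obtain a full assignment $(y_1,\dots,y_m)$ that has positive probability in $\rho^{(\mathfrak{g})}(x_1,\dots,x_m) = \rho^{(\mathfrak{g})}_{[m]}(x_1,\dots,x_m)$; combining this with the fact that $\rho^{(\mathfrak{g})}$ solves $\mathfrak{g}$ with probability $1$ gives $(y_1,\dots,y_m) \in \mathfrak{g}(x_1,\dots,x_m)$, which matches the innermost predicate in \Cref{def:sc-game}.

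The main obstacle is purely notational: making the passage \enquote{marginal of $\rho^{(\mathfrak{g})}_{S_k}$ equals $\rho^{(\mathfrak{g})}_{S_{k-1}}$} fully rigorous requires unpacking the indexing set $J$ used in \Cref{def:strategy} and verifying that the sum defining $p_{S_{k-1},j}$ factors through the sum defining $p_{S_k,j}$. This is a direct computation from the displayed formula in the preliminaries, and no new ideas beyond non-signaling and the probability-$1$ success are needed. Once this bookkeeping is in place, induction yields the nested quantifier statement of \Cref{def:sc-game}, proving that $\mathfrak{g}$ is strongly completable.
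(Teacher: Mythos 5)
Your proposal is correct and follows essentially the same strategy as the paper: fix $\sigma$, induct along the prefix $S_k = \{\sigma(1),\dots,\sigma(k)\}$ maintaining that the chosen partial assignment has positive weight under $\rho^{(\mathfrak{g})}_{S_k}$, use non-signaling to guarantee this marginal is determined by the already-seen inputs, and conclude from the probability-$1$ success that the full assignment is valid. The paper phrases the inductive step as an explicit conditional-sampling recipe (dividing $p_i$ by $p_{i-1}$) and then proves the sampled prefix recovers the unconditional marginal probability, whereas you observe directly that marginal consistency of the restrictions forces a positive-weight extension; these are the same argument, with yours being a slightly more economical presentation of the same invariant.
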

\begin{proof}
  Since \(\mathfrak{g} \) is non-signaling, there exists a non-signaling strategy \(\rho^{(\mathfrak{g} )}\) solving \(\mathfrak{g} \) with probability \(1\), which implies that all outputs sampled according to \(\rho^{(\mathfrak{g} )}\) are global solutions.
  Fix any permutation \(\sigma: [m] \to [m]\).
  Now, for the first player~\(\sigma(1)\) with input~\(x_{\sigma(1)}\), sample any output \(y_{\sigma(1)}\) from the restriction \(\rho^{(\mathfrak{g} )}_{\{\sigma(1)\}}(x_{\sigma(1)})\).
  Now we proceed recursively.
  Let \(i > 1\) and suppose that all inputs and outputs for players \(\sigma(1),\dots,\sigma(i-1)\) have been fixed.
  For player \(\sigma(i)\) with input \(x_{\sigma(i)}\), sample any output \(y_{\sigma(i)}\) using \(\rho^{(\mathfrak{g} )}_{\{\sigma(1), \dots, \sigma(i)\}}(x_{\sigma(1)},\dots, x_{\sigma(i)})\) as follows: For any \(j \in [m]\), let \(p(y_{\sigma(1)},\dots,y_{\sigma(j)})\) the probability of sampling \((y_{\sigma(1)},\dots,y_{\sigma(j)})\) from \(\rho^{(\mathfrak{g} )}_{\{\sigma(1), \dots, \sigma(j)\}}(x_{\sigma(1)},\dots, x_{\sigma(j)})\). 
  Then, any \(y_{\sigma(i)}\) is sampled with probability
  \[
    \frac{p(y_{\sigma(1)},\dots,y_{\sigma(i)})}{p(y_{\sigma(1)},\dots,y_{\sigma(i-1)})}.
  \]
  This procedure is well-defined because at any step there is always some output to choose with non-zero probability, due to the non-signaling property of \(\rho^{(\mathfrak{g})}\) (which solves \(\mathfrak{g}\) with probability 1).
  This is because, at any step \(i\), the probability of having sampled \((y_{\sigma(1)},\dots,y_{\sigma(i)})\) is exactly the unconditional probability associated to the output labeling \(\lambda_i: \{\sigma(1),\dots,\sigma(i)\} \to \Sigma^{i}\) defined by \(\lambda_i(\sigma(j)) = y_{\sigma(j)}\): such distribution is given by \(\rho^{(\mathfrak{g} )}_{\{\sigma(1), \dots, \sigma(i)\}}(x_{\sigma(1)},\dots, x_{\sigma(i)})\).
  Let us denote this probability by \(p_i\).
  We prove this statement by induction.
  If \(i = 1\), the thesis is trivial by construction.
  Assume \(i>1\) and the thesis to be true for \(i - 1\).
  Let \(Y_1, \dots, Y_i\) be the random variables describing the outputs for players \(\sigma(1),\dots,\sigma(i)\), respectively.
  Then, the probability of sampling \((y_{\sigma(1)},\dots,y_{\sigma(i)})\) can be expressed as follows:
  \begin{align}
    & \Pr\left[(Y_1, \dots, Y_i) = (y_{\sigma(1)},\dots,y_{\sigma(i)})\right] \nonumber \\
    = \ & \Pr\left[(Y_1, \dots, Y_i) = (y_{\sigma(1)},\dots,y_{\sigma(i)}) \given (Y_1, \dots, Y_{i-1}) = (y_{\sigma(1)},\dots,y_{\sigma(i-1)}) \right] \nonumber\\ 
    & \cdot \Pr\left[(Y_1, \dots, Y_{i-1}) = (y_{\sigma(1)},\dots,y_{\sigma(i-1)}) \right] \nonumber\\
    = \ & \frac{p_i}{p_{i-1}} \cdot p_{i-1} \label{eq:conditional-sampling} \\
    = \ & p_i, \nonumber
  \end{align}
  where \cref{eq:conditional-sampling} holds by construction of the sampling procedure.
  Finally, we conclude that \(\Pr\left[(Y_1, \dots, Y_{m}) = (y_{\sigma(1)},\dots,y_{\sigma(m)})\right] = p_m > 0\).
  Since all outputs sampled with non-zero probability according to \(\rho^{(\mathfrak{g})}\) are actual solutions, then  \((y_{\sigma(1)},\dots,y_{\sigma(m)})\) is a global solution, completing the lemma.
\end{proof}

We finally present the main result of this section:

\begin{theorem}
  For any collection of strongly completable games $\mathcal{G} = \{ \mathfrak{g}_i
  \subseteq \Sigma^m \times \Sigma^m \}_i$ and any $d \in \N_+$, there is a
  deterministic \local algorithm that solves $\GAMES_d[\mathcal{G}]$ in $O(d)$
  rounds.
  \label{thm:det-ub}
\end{theorem}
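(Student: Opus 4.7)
The plan is to design a deterministic LOCAL algorithm in which each white node processes its input circuit in topological order over its half-games, dispatching the input $x_i$ of each half-game $H_i$ to the corresponding black neighbor as soon as $x_i$ can be computed from $\xi$ and the outputs $y_j$ already returned by predecessor half-games. Each black node responds online: whenever it receives an input from its $k$-th arriving player $i_k$, it uses the $\exists y_{i_k}$ clause of strong completability to pick a valid output and immediately returns it to that player, never waiting to hear from all $m$ players at once.

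Correctness rests on the invariant that after every commit by a black node, the partial assignment $(x_S, y_S)$ maintained so far can be extended to a valid full move for every future choice of the remaining players' inputs. I plan to prove by induction on $|S|$ that strong completability suffices to preserve this invariant: having committed $(x_{i_1}, y_{i_1}), \ldots, (x_{i_{k-1}}, y_{i_{k-1}})$ safely and then observing a fresh input $x_{i_k}$, strong completability applied to the permutation whose first $k$ entries are $i_1, \ldots, i_k$ (with any canonical completion of the tail, e.g.\ in increasing player index) supplies a safe $y_{i_k}$. Once every half-game has been played, the black constraint of $\GAMES_d[\mathcal{G}]$ follows from the invariant, and the white constraint follows because each white node has internally simulated its circuit faithfully.

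For the running time, the key observation is that the dependency DAG over a white node's $d$ half-games has depth at most $d$, so within $O(d)$ layers---each costing a constant number of communication rounds for one send/receive exchange---every half-game is complete. Since circuits, games, and permutations are drawn from a finite alphabet that is independent of $n$, the non-half-game gates inside each circuit contribute only to unbounded local computation, which is free in LOCAL, and therefore do not enlarge the round complexity. A delicate but routine scheduling argument shows that a black node never introduces a waiting cycle: a white node waits only for an output it has explicitly requested, while a black node responds immediately upon arrival, so progress is made every round.

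The main obstacle will be to formalize the online commit step---in particular, to verify that the $y_{i_k}$ produced by strong completability for an arbitrary completion of the tail is actually safe regardless of the true future arrival order. I intend to handle this by exploiting the $\forall x_j \exists y_j$ structure of the chain: the $\exists$-clause supplying $y_{i_k}$ witnesses that, for every next input in the tail, some valid $y$ exists, and iterating the quantifier alternation implies that the extended assignment $(x_{S \cup \{i_k\}}, y_{S \cup \{i_k\}})$ admits valid completions for any sequence of future inputs, in any order. With the invariant thus maintained at every commit, the algorithm both terminates in $O(d)$ rounds and produces a solution to $\GAMES_d[\mathcal{G}]$.
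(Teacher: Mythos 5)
Your proposal matches the paper's proof essentially exactly: the same alternation in which white nodes evaluate their circuits prefix by prefix and dispatch each half-game's input to the corresponding black node, black nodes commit outputs online in arrival order, correctness rests on the same invariant that strong completability keeps every committed partial move extendable, and the round count is the same $O(d)$. The only divergence is in how the delicate point you yourself flag is discharged: rather than ``iterating the quantifier alternation'' over arbitrary tail orders (which does not follow directly from the definition of strongly completable games, since the existential witnesses in the chain may depend on the entire permutation and not just on its prefix), the paper observes that determinism fixes the order in which each black node receives inputs to a single permutation $\sigma$ determined by the instance, and then applies the definition to that one $\sigma$.
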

\cref{thm:det-ub} and \cref{lemma:ns-game-implies-sc-game} imply the following corollary.
\begin{corollary}\label{cor:det-ub}
  For any collection of non-signaling games $\mathcal{G} = \{ \mathfrak{g}_i
  \subseteq \Sigma^m \times \Sigma^m \}_i$ and any $d \in \N_+$, there is a
  deterministic \local algorithm that solves $\GAMES_d[\mathcal{G}]$ in $O(d)$
  rounds.
\end{corollary}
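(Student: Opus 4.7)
The plan is to serialize the evaluation of every white node's circuit of half-games---one half-game per pair of communication rounds---and to let each black node predict, from the circuits of its $m$ white neighbors, the deterministic order in which those neighbors will reach their half-game with it. Once this arrival order $\sigma_b$ is fixed at each black node~$b$, the strategy guaranteed by \cref{def:sc-game} applies directly, without any need for on-the-fly adaptivity.

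\textbf{Algorithm.} In the first round every white node $w$ broadcasts its input $(\xi_w, C_w, \sigma_w, \text{id}_w)$ to all $d$ of its black neighbors. After this round, each black node $b$ knows, for every white neighbor $w$, the sequential position $k_w$ at which $w$ will reach its half-game with $b$; sorting the white neighbors by $k_w$ and breaking ties by identifier yields a permutation $\sigma_b$. Using $\sigma_b$ and strong completability, $b$ locally pre-computes decision functions $f_1, \dots, f_m$ with $y_{\sigma_b(i)} = f_i(x_{\sigma_b(1)}, \dots, x_{\sigma_b(i)})$. In the remaining $2d$ rounds, each white node walks its circuit in a topological order, alternating a send-round, in which it transmits the input $x_k$ of its $k$-th half-game to the correct black neighbor, with a receive-round, in which the black neighbor applies $f_i$ to the inputs collected so far and returns $y_k$.

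\textbf{Correctness and main obstacle.} The white constraint is automatic because each white node literally evaluates its own circuit on the returned $y$-values. The black constraint follows because the sequence of committed outputs at $b$ is precisely the one produced by the strong-completability strategy for $\sigma_b$, and therefore lies in $\mathfrak{g}(x_1, \dots, x_m)$ by \cref{def:sc-game}. The total number of rounds is $1 + 2d = O(d)$. The main subtlety---and the one step that genuinely exploits the problem's structure---is arguing that inputs at each black node really do arrive in the pre-computed order $\sigma_b$, so that the \emph{non-adaptive} quantifier alternation of strong completability suffices. This reduces to noting that the serialized schedule of each white node is a deterministic function of $(C_w, \sigma_w)$, both of which are visible to $b$ after the broadcast, and that tie-breaking is by identifier everywhere; the proof then yields \cref{cor:det-ub} immediately via \cref{lemma:ns-game-implies-sc-game}.
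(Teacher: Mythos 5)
Your proof is correct and follows essentially the same route as the paper: serialize each white node's half-games into $d$ two-round phases and let each black node commit outputs greedily in arrival order using strong completability, which holds for non-signaling games by \cref{lemma:ns-game-implies-sc-game}. The only deviation is your extra initial broadcast letting black nodes pre-compute the arrival permutation $\sigma_b$; the paper's black nodes instead discover this order adaptively as messages arrive (which is already deterministic and fixed for a given input), so the pre-computation round is harmless but unnecessary.
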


To make the proof simpler, we make the following trivial observation:
\begin{observation}
  \label{obs:circuit-ordering}
  Let $C$ be a degree-$d$ circuit of half-games.
  Without loss of generality, we may assume that half-games~$H_1, \dots, H_d$ appear in a topological order in circuit~$C$.
\end{observation}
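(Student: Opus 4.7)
The plan is to prove this observation by exhibiting an explicit relabeling of the half-games that respects the DAG structure of the underlying circuit, and then verifying that the relabeling preserves the semantics of the $\GAMES_d[\mathcal{G}]$ problem. The key ingredient is that the circuit $D$ underlying $C$ is by definition a directed acyclic graph, so the half-games $H_1, \dots, H_d$ inherit a natural partial order from the reachability relation of $D$: declare $H_i \prec H_j$ whenever the output wire of $H_i$ can reach the input wire of $H_j$ via a directed path in $D$. Because $D$ is acyclic, this is genuinely a partial order on $\{H_1, \dots, H_d\}$, and a standard topological sort yields a permutation $\pi\colon [d] \to [d]$ such that $\pi(i) < \pi(j)$ whenever $H_i \prec H_j$.

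First, I would formally describe the relabeling: given such a $\pi$, construct a new circuit $C'$ from $C$ by renaming every half-game input/output pair $(x_i, y_i)$ to $(x_{\pi(i)}, y_{\pi(i)})$. To keep the correspondence between incident edges of a white node $w$ and the half-game slots of its circuit intact, the edge permutation $\sigma_w$ is simultaneously replaced by $\pi \circ \sigma_w$. Then I would verify equivalence: a labeling of the incident edges of $w$ satisfies the original white constraint $C(\xi, y^{\sigma_w}) = x^{\sigma_w}$ if and only if it satisfies the relabeled constraint $C'(\xi, y^{\pi \circ \sigma_w}) = x^{\pi \circ \sigma_w}$, since the two circuits compute identical functions modulo the renaming of half-game slots. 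The black constraints are unaffected, as they depend only on edge labels and the game $\mathfrak{g}$ assigned to the black node.

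Finally, I would note that this entire relabeling is a purely local bookkeeping operation at each white node, performed on its own input in zero communication rounds, so any algorithm is free to apply it as a preprocessing step. Hence we may assume from the outset that the input circuit $C$ presents the half-games in topological order. The main subtlety, such as it is, is tracking the interplay between the two simultaneous permutations: the half-game indices inside $C$ must be renumbered according to $\pi$, while the edge-to-half-game correspondence $\sigma_w$ must be adjusted compatibly so that the combined effect on the white constraint is the identity. Once this bookkeeping is in place, the observation follows immediately from the acyclicity of circuits.
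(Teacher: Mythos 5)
Your argument is correct and is exactly the standard justification the paper has in mind: the paper states this as a trivial observation without proof, and your topological-sort-plus-relabeling argument (with the compensating adjustment of $\sigma_w$, performed locally in zero rounds) is the right way to fill it in. The only nitpick is the direction of composition: with the paper's convention $x^{\sigma} = x_{\sigma(1)}\cdots x_{\sigma(d)}$, renaming slot $i$ to $\pi(i)$ forces $\sigma'_w = \sigma_w \circ \pi^{-1}$ rather than $\pi \circ \sigma_w$, but this is precisely the bookkeeping you already flag and it does not affect the validity of the argument.
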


With the help of this observation, we notice that the half-games of a circuit partition the circuit into connected components.
In particular, we can consider the circuit formed by only the first~$k$ games.
\begin{definition}[Prefix of a circuit of half-games]
  Let $C$ be a degree-$d$ circuit of half-games, and let $\xi$ and $y_1, \dots, y_d$ be the inputs and let $x_1, \dots, x_d$ be the outputs of~$C$.
  Let $k \in [d]$.
  Define~$C^k$ to be the circuit consisting of only components connected to half-games~$H_1, \dots, H_k$.
  In particular, circuit~$C^k$ has only inputs $\xi$ and (some subset of) $y_1, \dots, y_{k-1}$, and fully defines the outputs~$x_1, \dots, x_k$.
  This is because, by the above observation, the games~$H_1, \dots, H_k$ are in a topological order in~$C$.
  In particular, the input for game~$H_i$ depends only on~$\xi$ and the outputs of games~$H_j$ for $j < i$.
  See~\cref{fig:half-game-circuit} for a visualization.
\end{definition}

We can now present our algorithm~$\mathcal{A}$.
The algorithm proceeds in $d$ phases, each of which consists of two communication rounds:
one where information flows from white to black nodes and another where
it does so in the reverse direction.
We refer to these as the \emph{white} and \emph{black rounds}, accordingly.
Let us now describe what occurs in phase number $i \in [d]$.
\begin{enumerate}
  \item The white nodes keep track of the results~$y_1, \dots, y_{i-1}$ of the half-games they have determined already.
  In the white phase~$i$, each white node~$w$ evaluates~$C^i(\xi, y_1, \dots, y_{i-1})$ to get value~$x_i$.
  Node~$w$ then sends this value to the black node representing the said game according to the input permutation.

  \item Let $j \in [m]$ denote the number of messages black node~$b$ has received by black round~$i$, and let $s \in [m]^j$ be player numbers of the senders those messages in the order of reception, breaking ties arbitrarily among simultanously-arrived messages.
  The black node~$b$ can map the ports to player numbers using the local input permutation.
  Let~$x \in \Sigma^j$ be the values of those messages, in the same order.

  Now the node~$b$ now plays game~$\mathfrak{g}$ using $s$ as the playing order and $x$ as the inputs, producing partial output~$y \in \Sigma^j$ for those nodes.
  Because~$\mathfrak{g}$ is a strongly completable game, it can pick the outputs~$y$ one-by-one in such way that the game is still solvable for every possible future inputs.
  Moreover, the node plays the game in a deterministic manner in order not to change the values of previously-assigned outputs.
  This is possible by the definition of strongly completable games, see \cref{def:sc-game}.
  Finally, node~$b$ sends the player output~$y_k$ to each neighbor~$k$ it received a message in this round.
\end{enumerate}

The round complexity of $\mathcal{A}$ is evidently $2d = O(d)$.
The white constraint is satisfied by construction.
Hence for the correctness we need only argue that the black constraint is
satisfied, which we do in the following lemma:

\begin{lemma}
  Algorithm $\mathcal{A}$ is correct, that is, for every black node $b \in B$
  with input game $\mathfrak{g}$, the edges around it are labeled with
  $(x_i,y_i)$ such that $y \in \mathfrak{g}(x)$ (where $x = x_1 \cdots x_m$ and
  $y = y_1 \cdots y_m$).
  \label{lem:det-ub-correct}
\end{lemma}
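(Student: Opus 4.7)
The plan is to fix an arbitrary black node $b$ with input game $\mathfrak{g}\subseteq\Sigma^m\times\Sigma^m$ and port-to-player permutation $\sigma_b$, and argue that after $\mathcal{A}$ terminates the edge labels $(x_1,y_1),\dots,(x_m,y_m)$ around $b$ (indexed by player number) satisfy $(y_1\cdots y_m)\in\mathfrak{g}(x_1\cdots x_m)$; the statement $y^{\sigma_b}\in\mathfrak{g}(x^{\sigma_b})$ then follows immediately because $\sigma_b$ permutes both coordinates in the same way.

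First I would observe that $b$ eventually receives exactly one input from each of its $m$ white neighbors. By \cref{obs:circuit-ordering} the half-games $H_1,\dots,H_d$ of every white circuit are in topological order, so in phase $i$ a white node $w$ can evaluate the prefix $C^i$ on $\xi$ together with the previously-received outputs $y_1,\dots,y_{i-1}$ of its earlier half-games, and thus sends an input $x_i$ to the black neighbor representing $H_i$. Hence after $d$ phases each neighbor of $b$ has sent exactly one input. Let $\tau\colon[m]\to[m]$ denote the permutation giving the arrival order of these messages at $b$, with the tie-breaking fixed by $\mathcal{A}$.

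Next I would prove, by induction on $j\in\{0,1,\dots,m\}$, the invariant that once $b$ has committed outputs $y_{\tau(1)},\dots,y_{\tau(j)}$ for inputs $x_{\tau(1)},\dots,x_{\tau(j)}$, the partial move can be extended to a valid global move of $\mathfrak{g}$ for every possible continuation $x_{\tau(j+1)},\dots,x_{\tau(m)}$. The base case $j=0$ is exactly the strong-completability condition of \cref{def:sc-game} applied to the permutation $\tau$, which holds because $\mathfrak{g}\in\mathcal{G}$. In the inductive step $b$, upon receiving $x_{\tau(j)}$, picks the $y_{\tau(j)}$ whose existence is guaranteed by the innermost $\forall x_{\tau(j)}\,\exists y_{\tau(j)}$ nested in the invariant; this preserves the invariant at $j$. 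Taking $j=m$ yields $(y_1\cdots y_m)\in\mathfrak{g}(x_1\cdots x_m)$, as required.

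The main obstacle to address is consistency across phases: once $b$ has sent some $y_{\tau(k)}$ back to player $\tau(k)$, it cannot revise that value in a later phase. This is handled by insisting that $b$ plays deterministically as a function of the ordered history $\bigl((\tau(1),x_{\tau(1)}),\dots,(\tau(j),x_{\tau(j)})\bigr)$ alone: the choice made when the $k$-th message is processed depends only on what came before the $(k{+}1)$-th message, so continuing the same deterministic strategy on a longer prefix in a later phase reproduces all earlier outputs verbatim. It is precisely this extra content beyond mere solvability that makes strong completability the right hypothesis, and it is \cref{lemma:ns-game-implies-sc-game} that lets the same argument carry over to \cref{cor:det-ub}.
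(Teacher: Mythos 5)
Your proposal is correct and follows essentially the same route as the paper's proof: fix the (deterministic, hence input-determined) arrival order as the permutation in \cref{def:sc-game}, invoke strong completability with respect to that permutation, and note that determinism makes the phase-by-phase choices consistent with a single sequential play. Your version merely makes explicit, via the inductive invariant, what the paper compresses into the observation that ``there is no difference (from the point of view of $b$) if the inputs arrive in separate phases or all in the same one.''
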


\begin{proof}
  Fix such a black node $b \in B$.
  First notice that, since the algorithm is deterministic, for a fixed input to
  the algorithm we will have a fixed order in which $b$ receives inputs $x_i$
  from white nodes and, hence, also a fixed order in which it sets the outputs
  $y_i$.  
  Hence there is a permutation $\sigma$ as in \cref{def:sc-game} that
  corresponds to the order in which $b$ sets the $y_i$.
  Having observed this, note that there is no difference (from the point of view
  of $b$) if the inputs $x_i$ arrive in separate phases or all in the same one.
  Hence we can assume that the latter is the case.
  Having done so, then what $b$ does is exactly what is guaranteed to be
  possible by \cref{def:sc-game}, and thus the claim follows.
\end{proof}

This concludes the proof of \cref{thm:det-ub}.

\ifanon\else
\section*{Acknowledgments}

Augusto Modanese is supported by the Helsinki Institute for Information
Technology (HIIT).
Henrik Lievonen is supported by the Research Council of Finland, Grants 333837 and 359104.
Alkida Balliu and Dennis Olivetti are supported by MUR (Italy) Department of Excellence 2023 - 2027.
Francesco d'Amore is supported by  MUR FARE
2020 - Project PAReCoDi CUP J43C22000970001. Xavier Coiteux-Roy acknowledges funding from the BMW endowment fund and from the Swiss National Science Foundation. Fran{\c c}ois Le Gall is supported by JSPS KAKENHI grants Nos.~JP20H05966, 20H00579, 24H00071, MEXT Q-LEAP grant No.~JPMXS0120319794 and JST CREST grant No.~JPMJCR24I4.
Lucas Tendick, Marc-Olivier Renou and Isadora Veeren acknowledge funding from INRIA and CIEDS through the Action Exploratoire project DEPARTURE, and from the ANR for the JCJC grant LINKS (ANR-23-CE47-0003).

\fi

\urlstyle{same}
\DeclareUrlCommand{\path}{}
\bibliographystyle{plainurl}
\bibliography{references}

\newpage
\appendix

\section{Omitted proofs for the first round elimination step}\label{sec:lb-proofs-omitted-first-step}
We first observe that, a necessary condition for a label to be at least as strong as another label, is for the two labels to be of the same color.
\begin{observation}\label{obs:no-arrow-different-colors}
	Let $L_1$ and $L_2$ be two labels of different colors. Then, $L_1 \not\le L_2$ and $L_2 \not\le L_1$.
\end{observation}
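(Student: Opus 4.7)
The plan is to argue directly from the definition of the black constraint $\edgeconst_{i,\Delta}$ that every one of its configurations is \emph{monochromatic} (all three labels share the same subscript/color), and then combine this with the fact that every label appears in at least one such configuration.

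First, I would inspect the definition of $\edgeconst_{i,\Delta}$ in \Cref{ssec:re-def}: the constraint is described by a separate list of condensed configurations for each value of $j$ (the color), and in each of those lists every disjunction contains only labels with subscript $j$. In particular, the gone-color configurations only use $\reE{j}$ and $\reQ{j}$; the first-color configurations only use labels with subscript $1$; the configurations for $2 \le j \le \Delta - i - 1$ and for the special color $j = \Delta - i$ only use labels with subscript $j$. Hence every allowed black configuration is monochromatic.

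Next I would note that every label $L \in \Sigma_{i,\Delta}$ appears in at least one configuration of $\edgeconst_{i,\Delta}$: a quick look at the lists shows that each label of each color $j$ occurs in at least one of the explicit condensed configurations for that $j$. So for any label $L_1$ of color $c_1$ there exist labels $\ell_2, \ell_3$, both of color $c_1$, such that $L_1~\ell_2~\ell_3 \in \edgeconst_{i,\Delta}$.

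Given this, the observation follows immediately. Suppose $L_1$ and $L_2$ have different colors $c_1 \ne c_2$. Pick $\ell_2, \ell_3$ of color $c_1$ as above, so $L_1~\ell_2~\ell_3 \in \edgeconst_{i,\Delta}$. Replacing $L_1$ by $L_2$ yields a configuration of colors $c_2, c_1, c_1$, which is not monochromatic and hence not in $\edgeconst_{i,\Delta}$. This witnesses $L_1 \not\le L_2$ w.r.t.\ $\edgeconst_{i,\Delta}$. Swapping the roles of $L_1$ and $L_2$ gives $L_2 \not\le L_1$. There is no real obstacle here; the only care needed is to verify, by a short case analysis over the five classes of colors (gone, first, generic present, special, and the degenerate boundary cases), that each label of $\Sigma_{i,\Delta}$ actually occurs in at least one allowed black configuration, which is immediate from the explicit lists.
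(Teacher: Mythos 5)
Your argument is correct and it spells out exactly the reasoning the paper leaves implicit: the observation is stated without proof in the appendix, precisely because the monochromaticity of every configuration in $\edgeconst_{i,\Delta}$ is immediate from the color-by-color definition of the black constraint. You also correctly flag the only potential subtlety, namely that each label must actually appear in some allowed configuration (otherwise $L_1 \le L_2$ would hold vacuously), and that this is verified by inspection of the lists.
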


Then, by the definition of strength relation, the following also holds.
\begin{observation}\label{obs:arrow-is-present}
	Let $L_1$ and $L_2$ be two labels. Let $C$ be the set of configurations described by a set $S$ of condensed configurations. Assume that, for all condensed configurations of $S$ it holds that, if a disjunction contains $L_1$, then the disjunction also contains $L_2$. Then, w.r.t.\ the constraint $C$, it holds that $L_1 \le L_2$.
\end{observation}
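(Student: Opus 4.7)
The plan is to unfold the definitions of \emph{strength} and of \emph{configurations represented by a set of condensed configurations}, and observe that the conclusion follows essentially by a direct substitution argument: whenever $L_1$ appears in a valid configuration, we can swap it for $L_2$ by picking $L_2$ from the very same disjunction of the condensed configuration that witnesses validity.

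More concretely, the first step is to fix an arbitrary configuration $\ell_1~\ldots~\ell_k \in C$ in which (without loss of generality) $\ell_1 = L_1$, and to exhibit a configuration in $C$ obtained from it by replacing this occurrence of $L_1$ by $L_2$. Since $C$ is the set of configurations represented by $S$, by the definition of ``picking'' a configuration, there must exist a condensed configuration $S_1~\ldots~S_k \in S$ and a permutation $\sigma$ such that $\ell_i \in S_{\sigma(i)}$ for every $i$; in particular $L_1 = \ell_1 \in S_{\sigma(1)}$, so the disjunction $S_{\sigma(1)}$ contains $L_1$.

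The second step is to invoke the hypothesis of the observation: because every disjunction of every condensed configuration in $S$ that contains $L_1$ must also contain $L_2$, we obtain $L_2 \in S_{\sigma(1)}$. Therefore, using the same permutation $\sigma$ and the same choices $\ell_i$ for $i \neq 1$, and substituting $L_2$ for $\ell_1$, we pick a configuration $L_2~\ell_2~\ldots~\ell_k$ from the condensed configuration $S_1~\ldots~S_k \in S$. By the definition of $C$, this configuration belongs to $C$. Since the original configuration and its index were arbitrary (up to symmetry of the multiset, which is handled by $\sigma$), this is precisely the statement that $L_1 \le L_2$ w.r.t.\ the constraint $C$.

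\paragraph{Main obstacle.} There is no real technical obstacle here; the argument is a one-line unfolding of definitions. The only subtlety worth articulating carefully in the write-up is that configurations are multisets, so the occurrence of $L_1$ that is being replaced corresponds to a specific index $i$ in the enumeration, and the permutation $\sigma$ witnessing how $\ell_1~\ldots~\ell_k$ is picked from $S_1~\ldots~S_k$ is reused verbatim when picking the replacement configuration. This ensures that exactly one occurrence of $L_1$ is swapped for $L_2$, matching the definition of ``at least as strong'' given earlier in the section.
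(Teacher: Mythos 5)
Your proof is correct and matches the paper's intent exactly: the paper states this observation without proof, treating it as an immediate unfolding of the definitions of strength and of picking a configuration from a condensed configuration, which is precisely the substitution argument you spell out. No gaps.
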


By transitivity of the strength relation, the following also holds.
\begin{observation}\label{obs:arrow-negation-transitivity}
	Let $L_1$, $L_2$, $L_3$ be labels satisfying $L_1 \le L_2$ and $L_3 \not\le L_2$. Then, $L_3 \not \le L_1$.
\end{observation}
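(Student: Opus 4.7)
The plan is to prove this observation via a straightforward contrapositive argument using transitivity of the $\le$ relation. Specifically, I would first establish (or invoke) the fact that the strength relation $\le$ is transitive, and then the observation follows immediately by contradiction.

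\textbf{Transitivity of $\le$.} The first step is to note that the strength relation is transitive. Suppose $A \le B$ and $B \le C$ (w.r.t.\ some fixed constraint, say $\edgeconstone_\Pi$). By the definition of strength, any configuration of the form $A~\ell_2~\ell_3 \in \edgeconstone_\Pi$ must also satisfy $B~\ell_2~\ell_3 \in \edgeconstone_\Pi$ (since $A \le B$), and then $C~\ell_2~\ell_3 \in \edgeconstone_\Pi$ (since $B \le C$). Hence $A \le C$. The same argument works w.r.t.\ $\nodeconstone_\Pi$ (with $\Delta$-sized configurations instead of $3$-sized ones), and the reasoning is purely syntactic at the level of the constraint definition.

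\textbf{Contrapositive.} Now I would assume for contradiction that $L_3 \le L_1$. Combined with the hypothesis $L_1 \le L_2$, transitivity yields $L_3 \le L_2$, contradicting the hypothesis $L_3 \not\le L_2$. Therefore $L_3 \not\le L_1$, as desired.

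\textbf{Main obstacle.} There is essentially no obstacle here; the statement is a one-line contrapositive of transitivity. The only minor point to keep in mind is ensuring that all three labels are compared with respect to the same constraint (either $\nodeconstone_\Pi$ or $\edgeconstone_\Pi$), but this is implicit in the convention adopted throughout the section, which fixes the constraint once and for all depending on which problem's diagram is under consideration.
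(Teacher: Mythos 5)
Your proof is correct and matches the paper's approach: the paper simply states that the observation holds "by transitivity of the strength relation," which is exactly the contrapositive argument you give, and your explicit verification of transitivity from the definition of $\le$ is the right (and only) supporting detail needed.
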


\begin{proof}[Proof of \Cref{lem:diagram-gone}]
	By \Cref{obs:no-arrow-different-colors}, labels of color $j$ can only be in relation with labels of color $j$.
	By \Cref{obs:arrow-is-present}, the relation depicted in the diagram is present. Hence, we only need to prove that $\reQ{j} \not\le \reE{j}$, which holds since, while the configuration $\reE{j}^2 ~ \reQ{j}$ is allowed, the configuration $\reE{j}^3$ is not allowed.
\end{proof}

\begin{proof}[Proof of \Cref{lem:diagram-first}]
	By \Cref{obs:no-arrow-different-colors}, labels of color $1$ can only be in relation with labels of color $1$.
	By \Cref{obs:arrow-is-present}, the relations depicted in the diagram are present. Hence, we only need to prove that no additional relations than the ones of \Cref{fig:diagram-first} hold. We consider all possible cases:
	\begin{itemize}[noitemsep]
		\raggedright
		\item $\reMM{1} \not\le \reMY{1}{0}$ because the configuration $\reMM{1} ~ \reMY{1}{0}^2$ is allowed but $\reMY{1}{0}^3$ is not.
		\item $\reMM{1} \not\le \reMY{1}{1}$ because the configuration $\reMY{1}{1} ~ \reMM{1}^2$ is allowed but $\reMM{1} ~ \reMY{1}{1}^2$ is not.
		\item $\reMM{1} \not\le \reEE{1}$ holds by \Cref{obs:arrow-negation-transitivity}.
		
		\item $\reMY{1}{0} \not\le \reMY{1}{1}$ because the configuration $\reMY{1}{0} ~ \reMY{1}{1} ~ \reMM{1}$ is allowed but $\reMM{1} \reMY{1}{1}^2$ is not.
		\item $\reMY{1}{1} \not\le \reMY{1}{0}$ because the configuration $\reMY{1}{1} ~ \reMY{1}{0}^2$ is allowed but $\reMY{1}{0}^3$ is not.
		
		\item $\reMY{1}{0} \not\le \reEE{1}$ and $\reMY{1}{1} \not\le \reEE{1}$ hold by \Cref{obs:arrow-negation-transitivity}. \qedhere
	\end{itemize}
\end{proof}

\begin{proof}[Proof of \Cref{lem:diagram-present}]
	By \Cref{obs:no-arrow-different-colors}, labels of color $j$ can only be in relation with labels of color $j$.
	By \Cref{obs:arrow-is-present}, the relations depicted in the diagram are present. Hence, we only need to prove that no additional relations than the ones of \Cref{fig:diagram-present} hold. We consider all possible cases:
	\begin{itemize}[noitemsep]
		\raggedright
		\item $\reMM{j} \not\le \reXY{j}{0}{0}$ because $\reXY{j}{0}{0} ~ \reXY{j}{0}{1} ~ \reMM{j}$ is allowed but $\reXY{j}{0}{1} ~ \reXY{j}{0}{0}^2$ is not.
		\item $\reMM{j} \not\le \reXY{j}{0}{1}$ because $\reMM{j} ~ \reXY{j}{0}{0}^2$ is allowed but $\reXY{j}{0}{1} ~ \reXY{j}{0}{0}^2$ is not.
		\item $\reMM{j} \not\le \reXY{j}{1}{0}$ because $\reXY{j}{0}{0} ~ \reXY{j}{1}{0} ~ \reMM{j}$ is allowed but $\reXY{j}{0}{0} ~ \reXY{j}{1}{0}^2$ is not.
		\item $\reMM{j} \not\le \reXY{j}{1}{1}$ because $\reXY{j}{0}{0} ~ \reXY{j}{1}{1} ~ \reMM{j}$ is allowed but $\reXY{j}{0}{0} ~ \reXY{j}{1}{1}^2$ is not.
		
		\item $\reMM{j} \not\le \reXM{j}{0}$, $\reMM{j} \not\le \reXM{j}{1}$, $\reMM{j} \not\le \reMY{j}{0}$, $\reMM{j} \not\le \reMY{j}{1}$, and  $\reMM{j} \not\le \reEE{j}$ by \Cref{obs:arrow-negation-transitivity}.
		
		\item $\reXY{j}{0}{0} \not\le \reXY{j}{0}{1}$ because $\reXY{j}{0}{0}^3$ is allowed but $\reXY{j}{0}{1} ~ \reXY{j}{0}{0}^2$ is not.
		\item $\reXY{j}{0}{0} \not\le \reXY{j}{1}{0}$ because $\reXY{j}{1}{0} ~ \reXY{j}{0}{0}^2$ is allowed but $\reXY{j}{0}{0} ~ \reXY{j}{1}{0}^2$ is not.
		\item $\reXY{j}{0}{0} \not\le \reXY{j}{1}{1}$ because $\reXY{j}{1}{1} ~ \reXY{j}{0}{0}^2$ is allowed but $\reXY{j}{0}{0} ~ \reXY{j}{1}{1}^2$ is not.
		
		\item $\reXY{j}{0}{0} \not\le \reXM{j}{0}$, $\reXY{j}{0}{0} \not\le \reXM{j}{1}$, $\reXY{j}{0}{0} \not\le \reMY{j}{0}$, $\reXY{j}{0}{0} \not\le \reMY{j}{1}$, and $\reXY{j}{0}{0} \not\le \reEE{j}$ by \Cref{obs:arrow-negation-transitivity}.
		
		\item $\reXY{j}{0}{1} \not\le \reXY{j}{0}{0}$ because $\reXY{j}{0}{0} ~ \reXY{j}{0}{1}^2$ is allowed but $\reXY{j}{0}{1} ~ \reXY{j}{0}{0}^2$ is not.
		\item $\reXY{j}{0}{1} \not\le \reXY{j}{1}{0}$ because $\reXY{j}{0}{0} ~ \reXY{j}{0}{1} ~ \reXY{j}{1}{0}$ is allowed but $\reXY{j}{0}{0} ~ \reXY{j}{1}{0}^2$ is not.
		\item $\reXY{j}{0}{1} \not\le \reXY{j}{1}{1}$ because $\reXY{j}{0}{0} ~ \reXY{j}{0}{1} ~ \reXY{j}{1}{1}$ is allowed but $\reXY{j}{0}{0} ~ \reXY{j}{1}{1}^2$ is not.
		
		\item $\reXY{j}{0}{1} \not\le \reXM{j}{0}$, $\reXY{j}{0}{1} \not\le \reXM{j}{1}$, $\reXY{j}{0}{1} \not\le \reMY{j}{0}$, $\reXY{j}{0}{1} \not\le \reMY{j}{1}$, and $\reXY{j}{0}{1} \not\le \reEE{j}$ by \Cref{obs:arrow-negation-transitivity}.
		
		\item $\reXY{j}{1}{0} \not\le \reXY{j}{0}{0}$ because $\reXY{j}{0}{0} ~ \reXY{j}{0}{1} ~ \reXY{j}{1}{0}$ is allowed but $\reXY{j}{0}{1} ~ \reXY{j}{0}{0}^2$ is not.
		\item $\reXY{j}{1}{0} \not\le \reXY{j}{0}{1}$ because $\reXY{j}{1}{0} ~ \reXY{j}{0}{0}^2$ is allowed but $\reXY{j}{0}{1} ~ \reXY{j}{0}{0}^2$ is not.
		\item $\reXY{j}{1}{0} \not\le \reXY{j}{1}{1}$ because $\reXY{j}{0}{0} ~ \reXY{j}{1}{0} ~ \reXY{j}{1}{1}$ is allowed but $\reXY{j}{0}{0} ~ \reXY{j}{1}{1}^2$ is not.
		
		\item $\reXY{j}{1}{0} \not\le \reXM{j}{0}$, $\reXY{j}{1}{0} \not\le \reXM{j}{1}$, $\reXY{j}{1}{0} \not\le \reMY{j}{0}$, $\reXY{j}{1}{0} \not\le \reMY{j}{1}$, $\reXY{j}{1}{0} \not\le \reEE{j}$ by \Cref{obs:arrow-negation-transitivity}.

		\item $\reXY{j}{1}{1} \not\le \reXY{j}{0}{0}$ because $\reXY{j}{0}{0} ~ \reXY{j}{0}{1} ~ \reXY{j}{1}{1}$ is allowed but $\reXY{j}{0}{1} ~ \reXY{j}{0}{0}^2$ is not.
		\item $\reXY{j}{1}{1} \not\le \reXY{j}{0}{1}$ because $\reXY{j}{1}{1} ~ \reXY{j}{0}{0}^2$ is allowed but $\reXY{j}{0}{1} ~ \reXY{j}{0}{0}^2$ is not.
		\item $\reXY{j}{1}{1} \not\le \reXY{j}{1}{0}$ because $\reXY{j}{0}{0} ~ \reXY{j}{1}{0} ~ \reXY{j}{1}{1}$ is allowed but $\reXY{j}{0}{0} ~ \reXY{j}{1}{0}^2$ is not.
		
		\item $\reXY{j}{1}{1} \not\le \reXM{j}{0}$, $\reXY{j}{1}{1} \not\le \reXM{j}{1}$, $\reXY{j}{1}{1} \not\le \reMY{j}{0}$, $\reXY{j}{1}{1} \not\le \reMY{j}{1}$, and  $\reXY{j}{1}{1} \not\le \reEE{j}$ by \Cref{obs:arrow-negation-transitivity}.

		\item $\reXM{j}{0} \not\le \reXY{j}{1}{0}$ because $\reXY{j}{0}{0} ~ \reXY{j}{1}{0} ~ \reXM{j}{0}$ is allowed but $\reXY{j}{0}{0} ~ \reXY{j}{1}{0}^2$ is not.
		\item $\reXM{j}{0} \not\le \reXY{j}{1}{1}$ because $\reXY{j}{0}{0} ~ \reXY{j}{1}{1} ~ \reXM{j}{0}$ is allowed but $\reXY{j}{0}{0} ~ \reXY{j}{1}{1}^2$ is not.
		
		\item $\reXM{j}{0} \not\le \reXM{j}{1}$, $\reXM{j}{0} \not\le \reMY{j}{0}$, $\reXM{j}{0} \not\le \reMY{j}{1}$, and $\reXM{j}{0} \not\le \reEE{j}$ by \Cref{obs:arrow-negation-transitivity}.		
		
		\item $\reXM{j}{1} \not\le \reXY{j}{0}{0}$ because $\reXY{j}{0}{0} ~ \reXY{j}{0}{1} ~ \reXM{j}{1}$ is allowed but $\reXY{j}{0}{1} ~ \reXY{j}{0}{0}^2$ is not.
		\item $\reXM{j}{1} \not\le \reXY{j}{0}{1}$ because $\reXM{j}{1} ~ \reXY{j}{0}{0}^2$ is allowed but $\reXY{j}{0}{1} ~ \reXY{j}{0}{0}^2$ is not.
		
		\item $\reXM{j}{1} \not\le \reXM{j}{0}$, $\reXM{j}{1} \not\le \reMY{j}{0}$, $\reXM{j}{1} \not\le \reMY{j}{1}$, and $\reXM{j}{1} \not\le \reEE{j}$ by \Cref{obs:arrow-negation-transitivity}.		
		
		\item $\reMY{j}{0} \not\le \reXY{j}{0}{1}$ because $\reMY{j}{0} ~ \reXY{j}{0}{0}^2$ is allowed but $\reXY{j}{0}{1} ~ \reXY{j}{0}{0}^2$ is not.
		\item $\reMY{j}{0} \not\le \reXY{j}{1}{1}$ because $\reXY{j}{0}{0} ~  \reXY{j}{1}{1} ~ \reMY{j}{0}$ is allowed but $\reXY{j}{0}{0} ~ \reXY{j}{1}{1}^2$ is not.
		
		\item $\reMY{j}{0} \not\le \reXM{j}{0}$, $\reMY{j}{0} \not\le \reXM{j}{1}$, $\reMY{j}{0} \not\le \reMY{j}{1}$, and $\reMY{j}{0} \not\le \reEE{j}$ by \Cref{obs:arrow-negation-transitivity}.

		\item $\reMY{j}{1} \not\le \reXY{j}{0}{0}$ because $\reXY{j}{0}{0} ~ \reXY{j}{0}{1} ~ \reMY{j}{1}$ is allowed but $\reXY{j}{0}{1} ~ \reXY{j}{0}{0}^2$ is not.
		\item $\reMY{j}{1} \not\le \reXY{j}{1}{0}$ because $\reXY{j}{0}{0} ~ \reXY{j}{1}{0} ~ \reMY{j}{1}$ is allowed but $\reXY{j}{0}{0} ~ \reXY{j}{1}{0}^2$ is not.
		
		\item $\reMY{j}{1} \not\le \reXM{j}{0}$, $\reMY{j}{1} \not\le \reXM{j}{1}$, $\reMY{j}{1} \not\le \reMY{j}{0}$, and $\reMY{j}{1} \not\le \reEE{j}$ by \Cref{obs:arrow-negation-transitivity}. \qedhere
	\end{itemize}
\end{proof}

\begin{proof}[Proof of \Cref{lem:diagram-special}]
	By \Cref{obs:no-arrow-different-colors}, labels of color $j$ can only be in relation with labels of color $j$.
	By \Cref{obs:arrow-is-present}, the relations depicted in the diagram are present. Hence, we only need to prove that no additional relations than the ones of \Cref{fig:diagram-special} hold. We consider all possible cases:
	\begin{itemize}[noitemsep]
		\raggedright
		\item $\reMM{j} \not\le \reXY{j}{0}{0}$ because $\reXY{j}{0}{0} ~ \reXY{j}{0}{1} ~ \reMM{j}$ is allowed but $\reXY{j}{0}{1} ~ \reXY{j}{0}{0}^2$ is not.
		\item $\reMM{j} \not\le \reXY{j}{0}{1}$ because $\reMM{j} ~ \reXY{j}{0}{0}^2$ is allowed but $\reXY{j}{0}{1} ~ \reXY{j}{0}{0}^2$ is not.
		\item $\reMM{j} \not\le \reXY{j}{1}{0}$ because $\reXY{j}{0}{0} ~ \reXY{j}{1}{0} ~ \reMM{j}$ is allowed but $\reXY{j}{0}{0} ~\reXY{j}{1}{0}^2$ is not.
		\item $\reMM{j} \not\le \reXY{j}{1}{1}$ because $\reXY{j}{0}{0} ~ \reXY{j}{1}{1} ~ \reMM{j}$ is allowed but $\reXY{j}{0}{0} ~ \reXY{j}{1}{1}^2$ is not.
		\item $\reMM{j} \not\le \reMY{j}{0}$ and $\reMM{j} \not\le \reMY{j}{1}$ by \Cref{obs:arrow-negation-transitivity}.
		
		\item $\reXY{j}{0}{0}  \not\le  \reXY{j}{0}{1}$ because $\reXY{j}{0}{0}^3$ is allowed but $\reXY{j}{0}{1} ~ \reXY{j}{0}{0}^2$ is not.
		\item $\reXY{j}{0}{0}  \not\le  \reXY{j}{1}{0}$ because $\reXY{j}{1}{0} ~ \reXY{j}{0}{0}^2$ is allowed but $\reXY{j}{0}{0} ~ \reXY{j}{1}{0}^2$ is not.
		\item $\reXY{j}{0}{0} \not\le  \reXY{j}{1}{1}$ because $\reXY{j}{1}{1} \reXY{j}{0}{0}^2$ is allowed but $\reXY{j}{0}{0} ~ \reXY{j}{1}{1}^2$ is not.
		\item $\reXY{j}{0}{0}  \not\le \reMY{j}{0}$ and $\reXY{j}{0}{0}  \not\le  \reMY{j}{1}$ by \Cref{obs:arrow-negation-transitivity}.
		
		\item $\reXY{j}{0}{1}  \not\le  \reXY{j}{0}{0}$ because $\reXY{j}{0}{0} ~ \reXY{j}{0}{1}^2$ is allowed but $\reXY{j}{0}{1} ~ \reXY{j}{0}{0}^2$ is not.
		\item $\reXY{j}{0}{1} \not\le \reXY{j}{1}{0}$ because $\reXY{j}{0}{0} ~ \reXY{j}{0}{1} ~ \reXY{j}{1}{0}$ is allowed but $\reXY{j}{0}{0} ~ \reXY{j}{1}{0}^2$ is not.
		\item $\reXY{j}{0}{1} \not\le \reXY{j}{1}{1}$ because $\reXY{j}{0}{0} ~ \reXY{j}{0}{1} ~ \reXY{j}{1}{1}$ is allowed but $\reXY{j}{0}{0} ~ \reXY{j}{1}{1}^2$ is not.
		\item $\reXY{j}{0}{1} \not\le \reMY{j}{0}$ and $\reXY{j}{0}{1} \not\le \reMY{j}{1}$ by \Cref{obs:arrow-negation-transitivity}.
		
		\item $\reXY{j}{1}{0} \not\le \reXY{j}{0}{0}$ because $\reXY{j}{0}{0} ~ \reXY{j}{0}{1} ~ \reXY{j}{1}{0}$ is allowed but $\reXY{j}{0}{1} ~ \reXY{j}{0}{0}^2$ is not.
		\item $\reXY{j}{1}{0} \not\le \reXY{j}{0}{1}$ because $\reXY{j}{1}{0} ~ \reXY{j}{0}{0}^2$ is allowed but $\reXY{j}{0}{1} ~ \reXY{j}{0}{0}^2$ is not.
		\item $\reXY{j}{1}{0} \not\le \reXY{j}{1}{1}$ because $\reXY{j}{0}{0} ~ \reXY{j}{1}{0} ~ \reXY{j}{1}{1}$ is allowed but $\reXY{j}{0}{0} ~ \reXY{j}{1}{1}^2$ is not.
		\item $\reXY{j}{1}{0} \not\le \reMY{j}{0}$ and $\reXY{j}{1}{0} \not\le \reMY{j}{1}$  by \Cref{obs:arrow-negation-transitivity}.
		
		\item $\reXY{j}{1}{1} \not\le \reXY{j}{0}{0}$ because $\reXY{j}{0}{0} ~ \reXY{j}{0}{1} ~ \reXY{j}{1}{1}$ is allowed but $\reXY{j}{0}{1} ~ \reXY{j}{0}{0}^2$ is not.
		\item $\reXY{j}{1}{1} \not\le \reXY{j}{0}{1}$ because $\reXY{j}{1}{1} ~ \reXY{j}{0}{0}^2$ is allowed but $\reXY{j}{0}{1} ~ \reXY{j}{0}{0}^2$ is not.
		\item $\reXY{j}{1}{1} \not\le \reXY{j}{1}{0}$ because $\reXY{j}{0}{0} ~ \reXY{j}{1}{0} ~ \reXY{j}{1}{1}$ is allowed but $\reXY{j}{0}{0} ~ \reXY{j}{1}{0}^2$ is not.
		\item $\reXY{j}{1}{1} \not\le \reMY{j}{0}$ and $\reXY{j}{1}{1} \not\le \reMY{j}{1}$ by \Cref{obs:arrow-negation-transitivity}.
		
		\item $\reMY{j}{0} \not\le \reXY{j}{0}{1}$ because $\reMY{j}{0} ~ \reXY{j}{0}{0}^2$ is allowed but $\reXY{j}{0}{1} ~ \reXY{j}{0}{0}^2$ is not.
		\item $\reMY{j}{0} \not\le \reXY{j}{1}{1}$ because $\reXY{j}{0}{0} ~ \reXY{j}{1}{1} ~ \reMY{j}{0}$ is allowed but $\reXY{j}{0}{0} ~ \reXY{j}{1}{1}^2$ is not.
		\item $\reMY{j}{1} \not\le  \reXY{j}{0}{0}$ because $\reXY{j}{0}{0} ~ \reXY{j}{0}{1} ~ \reMY{j}{1}$ is allowed but $\reXY{j}{0}{1} ~ \reXY{j}{0}{0}^2$ is not.
		\item $\reMY{j}{1} \not\le \reXY{j}{1}{0}$ because $\reXY{j}{0}{0} ~ \reXY{j}{1}{0} ~ \reMY{j}{1}$ is allowed but $\reXY{j}{0}{0} ~ \reXY{j}{1}{0}^2$ is not.
		\item $\reMY{j}{0} \not\le \reMY{j}{1}$ and $\reMY{j}{1} \not\le \reMY{j}{0}$ by \Cref{obs:arrow-negation-transitivity}. \qedhere
	\end{itemize}
\end{proof}

\section{Omitted proofs for the second round elimination step}\label{sec:lb-proofs-omitted-second-step}
We observe that, as in the case of the first round elimination step, a necessary condition for a label to be at least as strong as another label, is for the two labels to be of the same color.
\begin{observation}\label{obs:no-arrow-different-colors-2}
	Let $L_1$ and $L_2$ be two labels of different colors. Then, $L_1 \not\le L_2$ and $L_2 \not\le L_1$.
\end{observation}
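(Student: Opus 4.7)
The plan is to mirror the proof of \Cref{obs:no-arrow-different-colors} for $\Pi_{i,\Delta}$. The crux is a single structural invariant about the white constraint of $\Pi'_{i,\Delta}$: every configuration in $\nodeconst'_{i,\Delta}$ contains exactly one label of each color $1, 2, \dots, \Delta$. Once this is established, the observation will follow immediately by the same \enquote{replacing a color-$c_1$ label by a color-$c_2$ label destroys the color count} argument.

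To establish the invariant, I would use the explicit description provided by \Cref{obs:black-intermediate-explicit} (and \Cref{obs:from-nodeplus} for the alternative view via $\nodeconst^+_{i,\Delta}$). Every condensed configuration in $\nodeconst^+_{i,\Delta}$ is of the form $L_1 \ldots L_\Delta$ where, for each $1 \le j \le \Delta$, the disjunction $L_j$ only contains labels of $\Sigma'_{i,\Delta}$ whose underlying sets consist of labels of color $j$; this can be verified by inspecting each of the four families (strikethrough, grabbing-present, grabbing-gone, grabbing-special). The replacement step that passes from $\nodeconst^+_{i,\Delta}$ to $\nodeconst^*_{i,\Delta}$ only augments each disjunction with supersets of labels already in the disjunction, and by \Cref{obs:configurations-same-color} every label of $\Pi'_{i,\Delta}$ is a set of labels of a single color, so this augmentation preserves the color of each slot. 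Hence every configuration one can pick out of $\nodeconst^*_{i,\Delta}$ — that is, every configuration in $\nodeconst'_{i,\Delta}$ — has exactly one label of each color.

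Given the invariant, the proof of the observation is immediate. Suppose $L_1$ has color $c_1$ and $L_2$ has color $c_2$ with $c_1 \neq c_2$. Fix any configuration $C \in \nodeconst'_{i,\Delta}$ that contains $L_1$ (at least one such $C$ exists for every label appearing in the diagrams, as one can check by exhibiting an appropriate strikethrough or grabbing configuration as a witness). By the invariant, $L_1$ sits at the unique position of color $c_1$ in $C$. Replacing $L_1$ by $L_2$ in $C$ produces a configuration with two labels of color $c_2$ and no label of color $c_1$, which by the invariant cannot belong to $\nodeconst'_{i,\Delta}$. Therefore $L_1 \not\le L_2$, and by symmetry $L_2 \not\le L_1$.

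The only real work will be the bookkeeping behind the invariant, i.e., walking through the definition of $\nodeconst^+_{i,\Delta}$ to confirm that every disjunction used in a strikethrough, grabbing-present, grabbing-gone, or grabbing-special configuration really consists of same-color labels, and then appealing to \Cref{obs:configurations-same-color} to transfer this property through the superset-augmentation step. No case analysis on pairs of colors is needed — the argument is uniform in $c_1, c_2$ — so I expect no conceptual obstacle beyond this one verification.
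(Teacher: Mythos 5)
Your proof is correct, and it fills in an argument that the paper treats as self-evident: the paper states \Cref{obs:no-arrow-different-colors-2} (and its twin \Cref{obs:no-arrow-different-colors}) without any accompanying proof. The route you take — establishing that every configuration in $\nodeconst'_{i,\Delta}$ is a multiset containing exactly one label of each color $1,\dots,\Delta$, and then noting that replacing a single occurrence of a color-$c_1$ label by a color-$c_2$ label destroys this count — is the natural way to make the observation rigorous, and it is sound given the explicit description of $\nodeconst^+_{i,\Delta}$ and the fact that picking from a condensed configuration uses a permutation and therefore takes exactly one label from each disjunction.

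Two small remarks. First, your appeal to \Cref{obs:configurations-same-color} is a slight misattribution: that observation concerns labels of $\re(\Pi_{i,\Delta})$, whereas here you need that labels of $\Sigma'_{i,\Delta}$ are monochromatic sets. The cleaner justification is that every element of $\Sigma'_{i,\Delta}$ is explicitly of the form $\gen{\cdot}$ over $\Pi_{i,\Delta}$-labels of a single color, and $\gen{\cdot}$ preserves monochromaticity because \Cref{obs:no-arrow-different-colors} forbids cross-color arrows in the diagram of $\Pi_{i,\Delta}$; once one knows every label in $\Sigma'_{i,\Delta}$ is monochromatic, a superset relation between two of them forces equal color. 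Second, your parenthetical that every label of $\Sigma'_{i,\Delta}$ occurs in at least one configuration of $\nodeconst'_{i,\Delta}$ is genuinely needed (a label that never occurs would be $\le$ everything vacuously, which would break the claim), so it is good that you flagged it; as you say, it does check out by exhibiting a suitable strikethrough or grabbing configuration for each label, and the superset-augmentation in passing from $\nodeconst^+_{i,\Delta}$ to $\nodeconst^*_{i,\Delta}$ ensures the non-singleton labels such as $\gen{\reXY{j}{0}{0},\reXY{j}{1}{1}}$ and $\gen{\reMY{j}{b},\reXM{j}{b'}}$ also occur.
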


Again, by the definition of strength relation, the following also holds.
\begin{observation}\label{obs:arrow-is-present-2}
	Let $L_1$ and $L_2$ be two labels. Let $C$ be the set of configurations described by a set $S$ of condensed configurations. Assume that, for all condensed configurations of $S$ it holds that, if a disjunction contains $L_1$, then the disjunction also contains $L_2$. Then, w.r.t.\ the constraint $C$, it holds that $L_1 \le L_2$.
\end{observation}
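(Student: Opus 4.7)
The plan is to unwind the definitions. Recall that $C$ is the collection of all configurations that can be picked from the condensed configurations in $S$, meaning a configuration $c = \ell_1 \ldots \ell_k \in C$ iff there exists some condensed configuration $S^* = D_1 \ldots D_k \in S$ and a permutation $\sigma$ such that $\ell_i \in D_{\sigma(i)}$ for all $i$. The goal is to show that if $c \in C$ has $L_1$ in some position, then replacing that $L_1$ with $L_2$ yields another configuration in $C$.

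First, I would fix an arbitrary configuration $c = \ell_1 \ldots \ell_k \in C$ and an arbitrary position $i$ with $\ell_i = L_1$, and let $c'$ be the configuration obtained from $c$ by replacing $\ell_i$ with $L_2$ (while keeping every other entry unchanged). I need to show $c' \in C$.

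Next, since $c \in C$, pick a condensed configuration $S^* = D_1 \ldots D_k \in S$ and a permutation $\sigma$ witnessing that $c$ is picked from $S^*$, i.e.\ $\ell_j \in D_{\sigma(j)}$ for every $j$. In particular, $L_1 = \ell_i \in D_{\sigma(i)}$. By the hypothesis of the observation applied to the condensed configuration $S^*$ and the disjunction $D_{\sigma(i)}$, we have $L_2 \in D_{\sigma(i)}$. Therefore the same permutation $\sigma$ witnesses that $c'$ is picked from $S^*$: at position $i$ we pick $L_2 \in D_{\sigma(i)}$, and at every other position $j \neq i$ we still pick the original $\ell_j \in D_{\sigma(j)}$. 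Hence $c' \in C$, which establishes $L_1 \leq L_2$ w.r.t.\ $C$.

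There is no real obstacle here; the statement is essentially a direct consequence of the definition of ``represented by a set of condensed configurations'' together with the pointwise inclusion hypothesis. The only thing to be careful about is that the permutation $\sigma$ used to witness $c \in C$ must be reused unchanged when witnessing $c' \in C$, so that the choices in the other positions remain valid; this is automatic because we modify $c$ only at position $i$.
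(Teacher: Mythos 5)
Your proof is correct and takes the natural route; the paper itself states this observation without proof, remarking only that it follows "by the definition of strength relation," and your argument simply unpacks that reasoning in full detail (reusing the witnessing condensed configuration and permutation after swapping $L_1$ for $L_2$ in one position). Nothing is missing.
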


Again, by transitivity of the strength relation, the following also holds.
\begin{observation}\label{obs:arrow-negation-transitivity-2}
	Let $L_1$, $L_2$, $L_3$ be labels satisfying $L_1 \le L_2$ and $L_3 \not\le L_2$. Then, $L_3 \not \le L_1$.
\end{observation}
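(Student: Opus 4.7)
The statement is simply the contrapositive of transitivity of the strength relation $\le$, so the plan is to verify transitivity and then contrapose. First, I would recall the definition of $\le$: for a fixed constraint $\mathcal{C}$ (in this section, the white constraint $\nodeconst'_{i,\Delta}$), we have $\ell \le \ell'$ iff replacing any occurrence of $\ell$ in any configuration of $\mathcal{C}$ by $\ell'$ yields a configuration that is still in $\mathcal{C}$. From this definition, transitivity is immediate: if $\ell_a \le \ell_b$ and $\ell_b \le \ell_c$, then for any configuration $C \in \mathcal{C}$ containing $\ell_a$, replacing that occurrence by $\ell_b$ keeps us in $\mathcal{C}$, and then replacing the resulting $\ell_b$ by $\ell_c$ again keeps us in $\mathcal{C}$; hence $\ell_a \le \ell_c$.

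With transitivity in hand, the proof of the observation is by contradiction. Suppose for contradiction that $L_3 \le L_1$. Combined with the hypothesis $L_1 \le L_2$, transitivity of $\le$ yields $L_3 \le L_2$, contradicting the other hypothesis $L_3 \not\le L_2$. Therefore $L_3 \not\le L_1$, as claimed.

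There is no real obstacle here: the entire content is that $\le$ is a preorder (reflexive and transitive), and the observation is just its contrapositive form. The only thing worth being careful about is making sure we apply the argument with respect to the same constraint throughout (which is exactly the convention announced earlier in \Cref{ssec:re-preliminaries}: for problems obtained from applying $\rere$, such as $\rere(\Pi'_{i,\Delta})$, strength is taken w.r.t.\ the respective black constraint, while for $\Pi'_{i,\Delta}$ itself and similar objects obtained from $\re$ it is taken w.r.t.\ the respective white constraint). Since \Cref{obs:arrow-negation-transitivity-2} is stated without reference to a specific constraint, the proof works uniformly for whichever constraint is fixed by the context.
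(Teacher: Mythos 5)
Your proof is correct and matches the paper's intent exactly: the paper simply remarks that this observation holds ``by transitivity of the strength relation,'' which is precisely the contrapositive argument you spell out. Nothing further is needed.
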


\begin{proof}[Proof of \Cref{lem:diagram-intermediate-gone}]
	By \Cref{obs:no-arrow-different-colors-2}, labels of color $j$ can only be in relation with labels of color $j$.
	By \Cref{obs:arrow-is-present-2}, the relation depicted in the diagram is present. Hence, we only need to prove that $\gen{\reE{j}} \not\le \gen{\reQ{j}}$, which holds since
	the configuration $L_1 \ldots L_\Delta$ is allowed, where:
	\begin{itemize}[noitemsep]
		\item For $1 \le k \le \Delta - i - 1$, $L_k = \gen{\reMM{k}}$;
		\item $L_{\Delta-i} = \gen{\reXY{\Delta-i}{0}{0},\reXY{\Delta-i}{0}{1}}$;
		\item For $\Delta-i+1 \le k \le \Delta$ such that $j \neq k$, $L_k = \gen{\reQ{k}}$;
		\item $L_j = \gen{\reE{j}}$,
	\end{itemize}
	but by replacing $\gen{\reE{j}}$ with $\gen{\reQ{j}}$ we obtain a configuration that is not allowed.
\end{proof}

\begin{proof}[Proof of \Cref{lem:diagram-intermediate-first}]
	By \Cref{obs:no-arrow-different-colors-2}, labels of color $1$ can only be in relation with labels of color $1$.
	By \Cref{obs:arrow-is-present-2}, the relations depicted in the diagram are present. Hence, we only need to prove that no additional relations than the ones of \Cref{fig:diagram-intermediate-first} hold. We consider all possible cases:
	\begin{itemize}
	\item $\gen{\reEE{1}} \not\le \gen{\reMY{1}{0}}$ because the configuration $L_1 \ldots L_\Delta$ is allowed, where:
		\begin{itemize}
		\item $L_1 = \gen{\reEE{1}}$;
		\item For $2 \le k \le \Delta - i - 1$, $L_k = \gen{\reXY{k}{1}{1}}$;
		\item $L_{\Delta-i} = \gen{\reXY{\Delta-i}{1}{0},\reXY{\Delta-i}{1}{1}}$;
		\item For $\Delta-i+1 \le k \le \Delta$, $L_k = \gen{\reQ{k}}$;
	\end{itemize}
	but by replacing $\gen{\reEE{1}}$ with $\gen{\reMY{1}{0}}$ we obtain a configuration that is not allowed.
	\item $\gen{\reEE{1}} \not\le \gen{\reMY{1}{1}}$ for symmetric reasons as in the case $\gen{\reEE{1}} \not\le \gen{\reMY{1}{0}}$.
	\item $\gen{\reEE{1}} \not\le \gen{\reMM{1}}$ by \Cref{obs:arrow-negation-transitivity-2}.

	\item $\gen{\reMY{1}{0}} \not\le \gen{\reMY{1}{1}}$ because the configuration $L_1 \ldots L_\Delta$ is allowed, where:
	\begin{itemize}
		\item $L_1 = \gen{\reMY{1}{0}}$;
		\item For $2 \le k \le \Delta - i - 1$, $L_k = \gen{\reXY{k}{0}{0}}$;
		\item $L_{\Delta-i} = \gen{\reXY{\Delta-i}{0}{0},\reXY{\Delta-i}{0}{1}}$;
		\item For $\Delta-i+1 \le k \le \Delta$, $L_k = \gen{\reQ{k}}$;
	\end{itemize}
	but by replacing $\gen{\reMY{1}{0}}$ with $\gen{\reMY{1}{1}}$ we obtain a configuration that is not allowed.

	\item $\gen{\reMY{1}{1}} \not\le \gen{\reMY{1}{0}}$  for symmetric reasons as in the case  $\gen{\reMY{1}{0}} \not\le \gen{\reMY{1}{1}}$.
	\item $\gen{\reMY{1}{1}} \not\le \gen{\reMM{1}}$ and $\gen{\reMY{1}{0}} \not\le \gen{\reMM{1}}$ hold by \Cref{obs:arrow-negation-transitivity-2}. \qedhere
	\end{itemize}
\end{proof}

\begin{proof}[Proof of \Cref{lem:diagram-intermediate-present}]
	By \Cref{obs:no-arrow-different-colors-2}, labels of color $j$ can only be in relation with labels of color $j$.
	By \Cref{obs:arrow-is-present-2}, the relations depicted in the diagram are present. Hence, we only need to prove that no additional relations than the ones of \Cref{fig:diagram-intermediate-present} hold. We consider all possible cases.
	\begin{itemize}
		\raggedright
		\item $\gen{\reXY{j}{0}{0}} \not\le \gen{\reXM{j}{1},\reMY{j}{1}}$ because the configuration $L_1 \ldots L_\Delta$ is allowed, where:
		\begin{itemize}[noitemsep]
			\item $L_1 = \gen{\reMY{1}{0}}$;
			\item For $2 \le k \le \Delta - i - 1$, $L_k = \gen{\reXY{k}{0}{0}}$;
			\item $L_{\Delta-i} = \gen{\reXY{\Delta-i}{0}{0},\reXY{\Delta-i}{0}{1}}$;
			\item For $\Delta-i+1 \le k \le \Delta$, $L_k = \gen{\reQ{k}}$;
		\end{itemize}
		but by replacing $\gen{\reXY{j}{0}{0}}$ with $\gen{\reXM{j}{1},\reMY{j}{1}}$ we obtain a configuration that is not allowed.		
		
		\item $\gen{\reXY{j}{0}{0}} \not\le \gen{\reMM{j}}$, $\gen{\reXY{j}{0}{0}} \not\le \gen{\reXY{j}{0}{1}}$, $\gen{\reXY{j}{0}{0}} \not\le \gen{\reXY{j}{1}{0}}$, $\gen{\reXY{j}{0}{0}} \not\le \gen{\reXY{j}{1}{1}}$, $\gen{\reXY{j}{0}{0}} \not\le \gen{\reXY{j}{0}{1},\reXY{j}{1}{0}}$, $\gen{\reXY{j}{0}{0}} \not\le \gen{\reXM{j}{1}}$, $\gen{\reXY{j}{0}{0}} \not\le \gen{\reMY{j}{1}}$ hold by \Cref{obs:arrow-negation-transitivity-2}.
		
		\item $\gen{\reXY{j}{0}{1}} \not\le \gen{\reMM{j}}$, $\gen{\reXY{j}{0}{1}} \not\le \gen{\reXY{j}{0}{0}}$, $\gen{\reXY{j}{0}{1}} \not\le \gen{\reXY{j}{1}{0}}$, $\gen{\reXY{j}{0}{1}} \not\le \gen{\reXY{j}{1}{1}}$, $\gen{\reXY{j}{0}{1}} \not\le \gen{\reXY{j}{0}{0},\reXY{j}{1}{1}}$, $\gen{\reXY{j}{0}{1}} \not\le \gen{\reXM{j}{1}}$, $\gen{\reXY{j}{0}{1}} \not\le \gen{\reMY{j}{0}}$, $\gen{\reXY{j}{0}{1}} \not\le \gen{\reXM{j}{1},\reMY{j}{0}}$, $\gen{\reXY{j}{1}{0}} \not\le \gen{\reMM{j}}$, $\gen{\reXY{j}{1}{0}} \not\le \gen{\reXY{j}{0}{0}}$, $\gen{\reXY{j}{1}{0}} \not\le \gen{\reXY{j}{0}{1}}$, $\gen{\reXY{j}{1}{0}} \not\le \gen{\reXY{j}{1}{1}}$, $\gen{\reXY{j}{1}{0}} \not\le \gen{\reXY{j}{0}{0},\reXY{j}{1}{1}}$, $\gen{\reXY{j}{1}{0}} \not\le \gen{\reXM{j}{0}}$, $\gen{\reXY{j}{1}{0}} \not\le \gen{\reMY{j}{1}}$, $\gen{\reXY{j}{1}{0}} \not\le \gen{\reXM{j}{0},\reMY{j}{1}}$, $\gen{\reXY{j}{1}{1}} \not\le \gen{\reMM{j}}$, $\gen{\reXY{j}{1}{1}} \not\le \gen{\reXY{j}{0}{0}}$, $\gen{\reXY{j}{1}{1}} \not\le \gen{\reXY{j}{0}{1}}$, $\gen{\reXY{j}{1}{1}} \not\le \gen{\reXY{j}{1}{0}}$, $\gen{\reXY{j}{1}{1}} \not\le \gen{\reXY{j}{0}{1},\reXY{j}{1}{0}}$, $\gen{\reXY{j}{1}{1}} \not\le \gen{\reXM{j}{0}}$, $\gen{\reXY{j}{1}{1}} \not\le \gen{\reMY{j}{0}}$, $\gen{\reXY{j}{1}{1}} \not\le \gen{\reXM{j}{0},\reMY{j}{0}}$ for symmetric reasons.

		\item $\gen{\reXM{j}{0}} \not\le \gen{\reXM{j}{1},\reMY{j}{0}}$ because the configuration $L_1 \ldots L_\Delta$ is allowed, where:
		\begin{itemize}[noitemsep]
			\item $L_1 = \gen{\reMY{1}{0}}$;
			\item For $2 \le k < j$, $L_k = \gen{\reXY{k}{0}{0}}$;
			\item $L_j = \gen{\reXM{j}{0}}$
			\item For $j < k \le \Delta - i - 1$, $L_k = \gen{\reXY{k}{1}{1}}$;
			\item $L_{\Delta-i} = \gen{\reXY{\Delta-i}{1}{0},\reXY{\Delta-i}{1}{1}}$;
			\item For $\Delta-i+1 \le k \le \Delta$, $L_k = \gen{\reQ{k}}$;
		\end{itemize}
		but by replacing $\gen{\reXM{j}{0}}$ with $\gen{\reXM{j}{1},\reMY{j}{0}}$ we obtain a configuration that is not allowed.
		\item $\gen{\reXM{j}{0}} \not\le \gen{\reXM{j}{1},\reMY{j}{1}}$ because the configuration $L_1 \ldots L_\Delta$ is allowed, where:
		\begin{itemize}[noitemsep]
			\item $L_1 = \gen{\reMY{1}{0}}$;
			\item For $2 \le k < j$, $L_k = \gen{\reXY{k}{0}{0}}$;
			\item $L_j = \gen{\reXM{j}{0}}$
			\item For $j < k \le \Delta - i - 1$, $L_k = \gen{\reXY{k}{0}{0}}$;
			\item $L_{\Delta-i} = \gen{\reXY{\Delta-i}{0}{0},\reXY{\Delta-i}{0}{1}}$;
			\item For $\Delta-i+1 \le k \le \Delta$, $L_k = \gen{\reQ{k}}$;
		\end{itemize}
		but by replacing $\gen{\reXM{j}{0}}$ with $\gen{\reXM{j}{1},\reMY{j}{1}}$ we obtain a configuration that is not allowed.
		
		\item $\gen{\reXM{j}{0}} \not\le \gen{\reMM{j}}$, $\gen{\reXM{j}{0}} \not\le \gen{\reXY{j}{0}{0}}$, $\gen{\reXM{j}{0}} \not\le \gen{\reXY{j}{0}{1}}$, $\gen{\reXM{j}{0}} \not\le \gen{\reXY{j}{1}{0}}$, $\gen{\reXM{j}{0}} \not\le \gen{\reXY{j}{0}{1},\reXY{j}{1}{0}}$, $\gen{\reXM{j}{0}} \not\le \gen{\reXY{j}{1}{1}}$, $\gen{\reXM{j}{0}} \not\le \gen{\reXY{j}{0}{0},\reXY{j}{1}{1}}$, $\gen{\reXM{j}{0}} \not\le \gen{\reXM{j}{1}}$, $\gen{\reXM{j}{0}} \not\le \gen{\reMY{j}{0}}$, $\gen{\reXM{j}{0}} \not\le \gen{\reMY{j}{1}}$ hold by \Cref{obs:arrow-negation-transitivity-2}.

		\item $\gen{\reXM{j}{1}} \not\le \gen{\reMM{j}}$, $\gen{\reXM{j}{1}} \not\le \gen{\reXY{j}{0}{0}}$, $\gen{\reXM{j}{1}} \not\le \gen{\reXY{j}{0}{1}}$, $\gen{\reXM{j}{1}} \not\le \gen{\reXY{j}{1}{0}}$, $\gen{\reXM{j}{1}} \not\le \gen{\reXY{j}{0}{1},\reXY{j}{1}{0}}$, $\gen{\reXM{j}{1}} \not\le \gen{\reXY{j}{1}{1}}$, $\gen{\reXM{j}{1}} \not\le \gen{\reXY{j}{0}{0},\reXY{j}{1}{1}}$, $\gen{\reXM{j}{1}} \not\le \gen{\reXM{j}{0}}$, $\gen{\reXM{j}{1}} \not\le \gen{\reMY{j}{0}}$, $\gen{\reXM{j}{1}} \not\le \gen{\reXM{j}{0},\reMY{j}{0}}$, $\gen{\reXM{j}{1}} \not\le \gen{\reMY{j}{1}}$, $\gen{\reXM{j}{1}} \not\le \gen{\reXM{j}{0},\reMY{j}{1}}$  for symmetric reasons.
		
		\item $\gen{\reMY{j}{0}} \not\le \gen{\reXM{j}{0},\reMY{j}{1}}$ because the configuration $L_1 \ldots L_\Delta$ is allowed, where:
		\begin{itemize}[noitemsep]
			\item $L_1 = \gen{\reMY{1}{1}}$;
			\item For $2 \le k < j$, $L_k = \gen{\reXY{k}{1}{1}}$;
			\item $L_j = \gen{\reMY{j}{0}}$
			\item For $j < k \le \Delta - i - 1$, $L_k = \gen{\reXY{k}{0}{0}}$;
			\item $L_{\Delta-i} = \gen{\reXY{\Delta-i}{0}{0},\reXY{\Delta-i}{0}{1}}$;
			\item For $\Delta-i+1 \le k \le \Delta$, $L_k = \gen{\reQ{k}}$;
		\end{itemize}
		but by replacing $\gen{\reMY{j}{0}}$ with $\gen{\reXM{j}{0},\reMY{j}{1}}$ we obtain a configuration that is not allowed.
		
		\item $\gen{\reMY{j}{0}} \not\le \gen{\reXM{j}{1},\reMY{j}{1}}$ because the configuration $L_1 \ldots L_\Delta$ is allowed, where:
		\begin{itemize}[noitemsep]
			\item $L_1 = \gen{\reMY{1}{0}}$;
			\item For $2 \le k < j$, $L_k = \gen{\reXY{k}{0}{0}}$;
			\item $L_j = \gen{\reMY{j}{0}}$
			\item For $j < k \le \Delta - i - 1$, $L_k = \gen{\reXY{k}{0}{0}}$;
			\item $L_{\Delta-i} = \gen{\reXY{\Delta-i}{0}{0},\reXY{\Delta-i}{0}{1}}$;
			\item For $\Delta-i+1 \le k \le \Delta$, $L_k = \gen{\reQ{k}}$;
		\end{itemize}
		but by replacing $\gen{\reMY{j}{0}}$ with $\gen{\reXM{j}{1},\reMY{j}{1}}$ we obtain a configuration that is not allowed.
		
		\item $\gen{\reMY{j}{0}} \not\le \gen{\reMM{j}}$, $\gen{\reMY{j}{0}} \not\le \gen{\reXY{j}{0}{0}}$, $\gen{\reMY{j}{0}} \not\le \gen{\reXY{j}{0}{1}}$, $\gen{\reMY{j}{0}} \not\le \gen{\reXY{j}{1}{0}}$, $\gen{\reMY{j}{0}} \not\le \gen{\reXY{j}{0}{1},\reXY{j}{1}{0}}$, $\gen{\reMY{j}{0}} \not\le \gen{\reXY{j}{1}{1}}$, $\gen{\reMY{j}{0}} \not\le \gen{\reXY{j}{0}{0},\reXY{j}{1}{1}}$, $\gen{\reMY{j}{0}} \not\le \gen{\reXM{j}{0}}$, $\gen{\reMY{j}{0}} \not\le \gen{\reXM{j}{1}}$, $\gen{\reMY{j}{0}} \not\le \gen{\reMY{j}{1}}$  hold by \Cref{obs:arrow-negation-transitivity-2}.

		\item $\gen{\reMY{j}{1}} \not\le \gen{\reMM{j}}$, $\gen{\reMY{j}{1}} \not\le \gen{\reXY{j}{0}{0}}$, $\gen{\reMY{j}{1}} \not\le \gen{\reXY{j}{0}{1}}$, $\gen{\reMY{j}{1}} \not\le \gen{\reXY{j}{1}{0}}$, $\gen{\reMY{j}{1}} \not\le \gen{\reXY{j}{0}{1},\reXY{j}{1}{0}}$, $\gen{\reMY{j}{1}} \not\le \gen{\reXY{j}{1}{1}}$, $\gen{\reMY{j}{1}} \not\le \gen{\reXY{j}{0}{0},\reXY{j}{1}{1}}$, $\gen{\reMY{j}{1}} \not\le \gen{\reXM{j}{0}}$, $\gen{\reMY{j}{1}} \not\le \gen{\reXM{j}{1}}$, $\gen{\reMY{j}{1}} \not\le \gen{\reMY{j}{0}}$, $\gen{\reMY{j}{1}} \not\le \gen{\reXM{j}{0},\reMY{j}{0}}$, $\gen{\reMY{j}{1}} \not\le \gen{\reXM{j}{1},\reMY{j}{0}}$ for symmetric reasons.

		\item $\gen{\reXY{j}{0}{1},\reXY{j}{1}{0}} \not\le \gen{\reXM{j}{1},\reMY{j}{0}}$ because the configuration $L_1 \ldots L_\Delta$ is allowed, where:
		\begin{itemize}[noitemsep]
			\item $L_1 = \gen{\reMY{1}{0}}$;
			\item For $2 \le k < j$, $L_k = \gen{\reXY{k}{0}{0}}$;
			\item $L_j = \gen{\reXY{j}{0}{1},\reXY{j}{1}{0}}$;
			\item For $j < k \le \Delta - i - 1$, $L_k = \gen{\reXY{k}{1}{1}}$;
			\item $L_{\Delta-i} = \gen{\reXY{\Delta-i}{1}{0},\reXY{\Delta-i}{1}{1}}$;
			\item For $\Delta-i+1 \le k \le \Delta$, $L_k = \gen{\reQ{k}}$;
		\end{itemize}
		but by replacing $\gen{\reXY{j}{0}{1},\reXY{j}{1}{0}}$ with $\gen{\reXM{j}{1},\reMY{j}{0}}$ we obtain a configuration that is not allowed.
		
		\item $\gen{\reXY{j}{0}{1},\reXY{j}{1}{0}} \not\le \gen{\reXM{j}{0},\reMY{j}{1}}$  because the configuration $L_1 \ldots L_\Delta$ is allowed, where:
		\begin{itemize}[noitemsep]
			\item $L_1 = \gen{\reMY{1}{1}}$;
			\item For $2 \le k < j$, $L_k = \gen{\reXY{k}{1}{1}}$;
			\item $L_j = \gen{\reXY{j}{0}{1},\reXY{j}{1}{0}}$;
			\item For $j < k \le \Delta - i - 1$, $L_k = \gen{\reXY{k}{0}{0}}$;
			\item $L_{\Delta-i} = \gen{\reXY{\Delta-i}{0}{0},\reXY{\Delta-i}{0}{1}}$;
			\item For $\Delta-i+1 \le k \le \Delta$, $L_k = \gen{\reQ{k}}$;
		\end{itemize}
		but by replacing $\gen{\reXY{j}{0}{1},\reXY{j}{1}{0}}$ with $ \gen{\reXM{j}{0},\reMY{j}{1}}$ we obtain a configuration that is not allowed.
		
		\item $\gen{\reXY{j}{0}{1},\reXY{j}{1}{0}} \not\le \gen{\reMM{j}}$, $\gen{\reXY{j}{0}{1},\reXY{j}{1}{0}} \not\le \gen{\reXY{j}{0}{0}}$, $\gen{\reXY{j}{0}{1},\reXY{j}{1}{0}} \not\le \gen{\reXY{j}{0}{1}}$, $\gen{\reXY{j}{0}{1},\reXY{j}{1}{0}} \not\le \gen{\reXY{j}{1}{0}}$, $\gen{\reXY{j}{0}{1},\reXY{j}{1}{0}} \not\le \gen{\reXY{j}{1}{1}}$, $\gen{\reXY{j}{0}{1},\reXY{j}{1}{0}} \not\le \gen{\reXY{j}{0}{0},\reXY{j}{1}{1}}$, $\gen{\reXY{j}{0}{1},\reXY{j}{1}{0}} \not\le \gen{\reXM{j}{0}}$, $\gen{\reXY{j}{0}{1},\reXY{j}{1}{0}} \not\le \gen{\reXM{j}{1}}$, $\gen{\reXY{j}{0}{1},\reXY{j}{1}{0}} \not\le \gen{\reMY{j}{0}}$, $\gen{\reXY{j}{0}{1},\reXY{j}{1}{0}} \not\le \gen{\reMY{j}{1}}$ hold by \Cref{obs:arrow-negation-transitivity-2}.

		\item $\gen{\reXY{j}{0}{0},\reXY{j}{1}{1}} \not\le \gen{\reXM{j}{1},\reMY{j}{1}}$ because the configuration $L_1 \ldots L_\Delta$ is allowed, where:
		\begin{itemize}[noitemsep]
			\item $L_1 = \gen{\reMY{1}{0}}$;
			\item For $2 \le k < j$, $L_k = \gen{\reXY{k}{0}{0}}$;
			\item $L_j = \gen{\reXY{j}{0}{0},\reXY{j}{1}{1}}$;
			\item For $j < k \le \Delta - i - 1$, $L_k = \gen{\reXY{k}{0}{0}}$;
			\item $L_{\Delta-i} = \gen{\reXY{\Delta-i}{0}{0},\reXY{\Delta-i}{0}{1}}$;
			\item For $\Delta-i+1 \le k \le \Delta$, $L_k = \gen{\reQ{k}}$;
		\end{itemize}
		but by replacing $\gen{\reXY{j}{0}{0},\reXY{j}{1}{1}}$ with $\gen{\reXM{j}{1},\reMY{j}{1}}$ we obtain a configuration that is not allowed.
		
		\item $\gen{\reXY{j}{0}{0},\reXY{j}{1}{1}} \not\le \gen{\reXM{j}{0},\reMY{j}{0}}$ because the configuration $L_1 \ldots L_\Delta$ is allowed, where:
		\begin{itemize}[noitemsep]
			\item $L_1 = \gen{\reMY{1}{1}}$;
			\item For $2 \le k < j$, $L_k = \gen{\reXY{k}{1}{1}}$;
			\item $L_j = \gen{\reXY{j}{0}{0},\reXY{j}{1}{1}}$;
			\item For $j < k \le \Delta - i - 1$, $L_k = \gen{\reXY{k}{1}{1}}$;
			\item $L_{\Delta-i} = \gen{\reXY{\Delta-i}{1}{0},\reXY{\Delta-i}{1}{1}}$;
			\item For $\Delta-i+1 \le k \le \Delta$, $L_k = \gen{\reQ{k}}$;
		\end{itemize}
		but by replacing $\gen{\reXY{j}{0}{0},\reXY{j}{1}{1}}$ with $\gen{\reXM{j}{0},\reMY{j}{0}}$ we obtain a configuration that is not allowed.
		
		\item $\gen{\reXY{j}{0}{0},\reXY{j}{1}{1}} \not\le \gen{\reMM{j}}$, $\gen{\reXY{j}{0}{0},\reXY{j}{1}{1}} \not\le \gen{\reXY{j}{0}{0}}$, $\gen{\reXY{j}{0}{0},\reXY{j}{1}{1}} \not\le \gen{\reXY{j}{0}{1}}$, $\gen{\reXY{j}{0}{0},\reXY{j}{1}{1}} \not\le \gen{\reXY{j}{1}{0}}$, $\gen{\reXY{j}{0}{0},\reXY{j}{1}{1}} \not\le \gen{\reXY{j}{0}{1},\reXY{j}{1}{0}}$, $\gen{\reXY{j}{0}{0},\reXY{j}{1}{1}} \not\le \gen{\reXY{j}{1}{1}}$, $\gen{\reXY{j}{0}{0},\reXY{j}{1}{1}} \not\le \gen{\reXM{j}{0}}$, $\gen{\reXY{j}{0}{0},\reXY{j}{1}{1}} \not\le \gen{\reXM{j}{1}}$, $\gen{\reXY{j}{0}{0},\reXY{j}{1}{1}} \not\le \gen{\reMY{j}{0}}$, $\gen{\reXY{j}{0}{0},\reXY{j}{1}{1}} \not\le \gen{\reMY{j}{1}}$ hold by \Cref{obs:arrow-negation-transitivity-2}.

		\item $\gen{\reXM{j}{0},\reMY{j}{0}} \not\le \gen{\reXM{j}{1},\reMY{j}{0}}$  because the configuration $L_1 \ldots L_\Delta$ is allowed, where:
		\begin{itemize}[noitemsep]
			\item $L_1 = \gen{\reMY{1}{0}}$;
			\item For $2 \le k < j$, $L_k = \gen{\reXY{k}{0}{0}}$;
			\item $L_j = \gen{\reXM{j}{0},\reMY{j}{0}}$;
			\item For $j < k \le \Delta - i - 1$, $L_k = \gen{\reXY{k}{1}{1}}$;
			\item $L_{\Delta-i} = \gen{\reXY{\Delta-i}{1}{0},\reXY{\Delta-i}{1}{1}}$;
			\item For $\Delta-i+1 \le k \le \Delta$, $L_k = \gen{\reQ{k}}$;
		\end{itemize}
		but by replacing $\gen{\reXM{j}{0},\reMY{j}{0}} $ with $\gen{\reXM{j}{1},\reMY{j}{0}}$ we obtain a configuration that is not allowed.
		
		\item $\gen{\reXM{j}{0},\reMY{j}{0}} \not\le \gen{\reXM{j}{0},\reMY{j}{1}}$ because the configuration $L_1 \ldots L_\Delta$ is allowed, where:
		\begin{itemize}[noitemsep]
			\item $L_1 = \gen{\reMY{1}{1}}$;
			\item For $2 \le k < j$, $L_k = \gen{\reXY{k}{1}{1}}$;
			\item $L_j = \gen{\reXM{j}{0},\reMY{j}{0}}$;
			\item For $j < k \le \Delta - i - 1$, $L_k = \gen{\reXY{k}{0}{0}}$;
			\item $L_{\Delta-i} = \gen{\reXY{\Delta-i}{0}{0},\reXY{\Delta-i}{0}{1}}$;
			\item For $\Delta-i+1 \le k \le \Delta$, $L_k = \gen{\reQ{k}}$;
		\end{itemize}
		but by replacing $\gen{\reXM{j}{0},\reMY{j}{0}} $ with $\gen{\reXM{j}{0},\reMY{j}{1}}$ we obtain a configuration that is not allowed.
		
		\item $\gen{\reXM{j}{0},\reMY{j}{0}} \not\le \gen{\reXM{j}{1},\reMY{j}{1}}$  because the configuration $L_1 \ldots L_\Delta$ is allowed, where:
		\begin{itemize}[noitemsep]
			\item $L_1 = \gen{\reMY{1}{0}}$;
			\item For $2 \le k < j$, $L_k = \gen{\reXY{k}{0}{0}}$;
			\item $L_j = \gen{\reXM{j}{0},\reMY{j}{0}}$;
			\item For $j < k \le \Delta - i - 1$, $L_k = \gen{\reXY{k}{0}{0}}$;
			\item $L_{\Delta-i} = \gen{\reXY{\Delta-i}{0}{0},\reXY{\Delta-i}{0}{1}}$;
			\item For $\Delta-i+1 \le k \le \Delta$, $L_k = \gen{\reQ{k}}$;
		\end{itemize}
		but by replacing $\gen{\reXM{j}{0},\reMY{j}{0}} $ with $\gen{\reXM{j}{1},\reMY{j}{1}}$ we obtain a configuration that is not allowed.
		
		\item $\gen{\reXM{j}{0},\reMY{j}{0}} \not\le \gen{\reMM{j}}$, $\gen{\reXM{j}{0},\reMY{j}{0}} \not\le \gen{\reXY{j}{0}{0}}$, $\gen{\reXM{j}{0},\reMY{j}{0}} \not\le \gen{\reXY{j}{0}{1}}$, $\gen{\reXM{j}{0},\reMY{j}{0}} \not\le \gen{\reXY{j}{1}{0}}$, $\gen{\reXM{j}{0},\reMY{j}{0}} \not\le \gen{\reXY{j}{0}{1},\reXY{j}{1}{0}}$, $\gen{\reXM{j}{0},\reMY{j}{0}} \not\le \gen{\reXY{j}{1}{1}}$, $\gen{\reXM{j}{0},\reMY{j}{0}} \not\le \gen{\reXY{j}{0}{0},\reXY{j}{1}{1}}$, $\gen{\reXM{j}{0},\reMY{j}{0}} \not\le \gen{\reXM{j}{0}}$, $\gen{\reXM{j}{0},\reMY{j}{0}} \not\le \gen{\reXM{j}{1}}$, $\gen{\reXM{j}{0},\reMY{j}{0}} \not\le \gen{\reMY{j}{0}}$, $\gen{\reXM{j}{0},\reMY{j}{0}} \not\le \gen{\reMY{j}{1}}$ hold by \Cref{obs:arrow-negation-transitivity-2}.

		\item $\gen{\reXM{j}{0},\reMY{j}{1}} \not\le \gen{\reXM{j}{1},\reMY{j}{1}}$  because the configuration $L_1 \ldots L_\Delta$ is allowed, where:
		\begin{itemize}[noitemsep]
			\item $L_1 = \gen{\reMY{1}{0}}$;
			\item For $2 \le k < j$, $L_k = \gen{\reXY{k}{0}{0}}$;
			\item $L_j = \gen{\reXM{j}{0},\reMY{j}{1}}$;
			\item For $j < k \le \Delta - i - 1$, $L_k = \gen{\reXY{k}{0}{0}}$;
			\item $L_{\Delta-i} = \gen{\reXY{\Delta-i}{0}{0},\reXY{\Delta-i}{0}{1}}$;
			\item For $\Delta-i+1 \le k \le \Delta$, $L_k = \gen{\reQ{k}}$;
		\end{itemize}
		but by replacing $\gen{\reXM{j}{0},\reMY{j}{1}}$ with $\gen{\reXM{j}{1},\reMY{j}{1}}$ we obtain a configuration that is not allowed.

		\item $\gen{\reXM{j}{0},\reMY{j}{1}} \not\le \gen{\reXM{j}{1},\reMY{j}{0}}$ because the configuration $L_1 \ldots L_\Delta$ is allowed, where:
		\begin{itemize}[noitemsep]
			\item $L_1 = \gen{\reMY{1}{0}}$;
			\item For $2 \le k < j$, $L_k = \gen{\reXY{k}{0}{0}}$;
			\item $L_j = \gen{\reXM{j}{0},\reMY{j}{1}}$;
			\item For $j < k \le \Delta - i - 1$, $L_k = \gen{\reXY{k}{1}{1}}$;
			\item $L_{\Delta-i} = \gen{\reXY{\Delta-i}{1}{0},\reXY{\Delta-i}{1}{1}}$;
			\item For $\Delta-i+1 \le k \le \Delta$, $L_k = \gen{\reQ{k}}$;
		\end{itemize}
		but by replacing $\gen{\reXM{j}{0},\reMY{j}{1}}$ with $\gen{\reXM{j}{1},\reMY{j}{0}}$ we obtain a configuration that is not allowed.
		
		\item $\gen{\reXM{j}{0},\reMY{j}{1}} \not\le \gen{\reXM{j}{0},\reMY{j}{0}}$ because the configuration $L_1 \ldots L_\Delta$ is allowed, where:
		\begin{itemize}[noitemsep]
			\item $L_1 = \gen{\reMY{1}{1}}$;
			\item For $2 \le k < j$, $L_k = \gen{\reXY{k}{1}{1}}$;
			\item $L_j = \gen{\reXM{j}{0},\reMY{j}{1}}$;
			\item For $j < k \le \Delta - i - 1$, $L_k = \gen{\reXY{k}{1}{1}}$;
			\item $L_{\Delta-i} = \gen{\reXY{\Delta-i}{1}{0},\reXY{\Delta-i}{1}{1}}$;
			\item For $\Delta-i+1 \le k \le \Delta$, $L_k = \gen{\reQ{k}}$;
		\end{itemize}
		but by replacing $\gen{\reXM{j}{0},\reMY{j}{1}}$ with $\gen{\reXM{j}{0},\reMY{j}{0}}$ we obtain a configuration that is not allowed.
		
		\item $\gen{\reXM{j}{0},\reMY{j}{1}} \not\le \gen{\reMM{j}}$, $\gen{\reXM{j}{0},\reMY{j}{1}} \not\le \gen{\reXY{j}{0}{0}}$, $\gen{\reXM{j}{0},\reMY{j}{1}} \not\le \gen{\reXY{j}{0}{1}}$, $\gen{\reXM{j}{0},\reMY{j}{1}} \not\le \gen{\reXY{j}{1}{0}}$, $\gen{\reXM{j}{0},\reMY{j}{1}} \not\le \gen{\reXY{j}{0}{1},\reXY{j}{1}{0}}$, $\gen{\reXM{j}{0},\reMY{j}{1}} \not\le \gen{\reXY{j}{1}{1}}$, $\gen{\reXM{j}{0},\reMY{j}{1}} \not\le \gen{\reXY{j}{0}{0},\reXY{j}{1}{1}}$, $\gen{\reXM{j}{0},\reMY{j}{1}} \not\le \gen{\reXM{j}{0}}$, $\gen{\reXM{j}{0},\reMY{j}{1}} \not\le \gen{\reXM{j}{1}}$, $\gen{\reXM{j}{0},\reMY{j}{1}} \not\le \gen{\reMY{j}{0}}$, $\gen{\reXM{j}{0},\reMY{j}{1}} \not\le \gen{\reMY{j}{1}}$ hold by \Cref{obs:arrow-negation-transitivity-2}.

		\item $\gen{\reXM{j}{1},\reMY{j}{0}} \not\le \gen{\reMM{j}}$, $\gen{\reXM{j}{1},\reMY{j}{0}} \not\le \gen{\reXY{j}{0}{0}}$, $\gen{\reXM{j}{1},\reMY{j}{0}} \not\le \gen{\reXY{j}{0}{1}}$, $\gen{\reXM{j}{1},\reMY{j}{0}} \not\le \gen{\reXY{j}{1}{0}}$, $\gen{\reXM{j}{1},\reMY{j}{0}} \not\le \gen{\reXY{j}{0}{1},\reXY{j}{1}{0}}$,  $\gen{\reXM{j}{1},\reMY{j}{0}} \not\le \gen{\reXY{j}{1}{1}}$, $\gen{\reXM{j}{1},\reMY{j}{0}} \not\le \gen{\reXY{j}{0}{0},\reXY{j}{1}{1}}$, $\gen{\reXM{j}{1},\reMY{j}{0}} \not\le \gen{\reXM{j}{0}}$, $\gen{\reXM{j}{1},\reMY{j}{0}} \not\le \gen{\reXM{j}{1}}$, $\gen{\reXM{j}{1},\reMY{j}{0}} \not\le \gen{\reMY{j}{0}}$, $\gen{\reXM{j}{1},\reMY{j}{0}} \not\le \gen{\reXM{j}{0},\reMY{j}{0}}$, $\gen{\reXM{j}{1},\reMY{j}{0}} \not\le \gen{\reMY{j}{1}}$, $\gen{\reXM{j}{1},\reMY{j}{0}} \not\le \gen{\reXM{j}{0},\reMY{j}{1}}$, $\gen{\reXM{j}{1},\reMY{j}{0}} \not\le \gen{\reXM{j}{1},\reMY{j}{1}}$ for symmetric reasons.
					
		\item $\gen{\reXM{j}{1},\reMY{j}{1}} \not\le \gen{\reMM{j}}$, $\gen{\reXM{j}{1},\reMY{j}{1}} \not\le \gen{\reXY{j}{0}{0}}$, $\gen{\reXM{j}{1},\reMY{j}{1}} \not\le \gen{\reXY{j}{0}{1}}$, $\gen{\reXM{j}{1},\reMY{j}{1}} \not\le \gen{\reXY{j}{1}{0}}$, $\gen{\reXM{j}{1},\reMY{j}{1}} \not\le \gen{\reXY{j}{0}{1},\reXY{j}{1}{0}}$, $\gen{\reXM{j}{1},\reMY{j}{1}} \not\le \gen{\reXY{j}{1}{1}}$, $\gen{\reXM{j}{1},\reMY{j}{1}} \not\le \gen{\reXY{j}{0}{0},\reXY{j}{1}{1}}$, $\gen{\reXM{j}{1},\reMY{j}{1}} \not\le \gen{\reXM{j}{0}}$, $\gen{\reXM{j}{1},\reMY{j}{1}} \not\le \gen{\reXM{j}{1}}$, $\gen{\reXM{j}{1},\reMY{j}{1}} \not\le \gen{\reMY{j}{0}}$, $\gen{\reXM{j}{1},\reMY{j}{1}} \not\le \gen{\reXM{j}{0},\reMY{j}{0}}$, $\gen{\reXM{j}{1},\reMY{j}{1}} \not\le \gen{\reXM{j}{1},\reMY{j}{0}}$, $\gen{\reXM{j}{1},\reMY{j}{1}} \not\le \gen{\reMY{j}{1}}$, $\gen{\reXM{j}{1},\reMY{j}{1}} \not\le \gen{\reXM{j}{0},\reMY{j}{1}}$ for symmetric reasons.

		\item $\gen{\reEE{j}} \not\le \gen{\reXM{j}{0},\reMY{j}{1}}$ because the configuration $L_1 \ldots L_\Delta$ is allowed, where:
		\begin{itemize}[noitemsep]
			\item $L_1 = \gen{\reMY{1}{1}}$;
			\item For $2 \le k < j$, $L_k = \gen{\reXY{k}{1}{1}}$;
			\item $L_j = \gen{\reEE{j}}$;
			\item For $j < k \le \Delta - i - 1$, $L_k = \gen{\reXY{k}{0}{0}}$;
			\item $L_{\Delta-i} = \gen{\reXY{\Delta-i}{0}{0},\reXY{\Delta-i}{0}{1}}$;
			\item For $\Delta-i+1 \le k \le \Delta$, $L_k = \gen{\reQ{k}}$;
		\end{itemize}
		but by replacing $\gen{\reEE{j}}$ with $\gen{\reXM{j}{0},\reMY{j}{1}}$ we obtain a configuration that is not allowed.
		
		\item $\gen{\reEE{j}} \not\le \gen{\reXM{j}{0},\reMY{j}{0}}$ because the configuration $L_1 \ldots L_\Delta$ is allowed, where:
		\begin{itemize}[noitemsep]
			\item $L_1 = \gen{\reMY{1}{1}}$;
			\item For $2 \le k < j$, $L_k = \gen{\reXY{k}{1}{1}}$;
			\item $L_j = \gen{\reEE{j}}$;
			\item For $j < k \le \Delta - i - 1$, $L_k = \gen{\reXY{k}{1}{1}}$;
			\item $L_{\Delta-i} = \gen{\reXY{\Delta-i}{1}{0},\reXY{\Delta-i}{1}{1}}$;
			\item For $\Delta-i+1 \le k \le \Delta$, $L_k = \gen{\reQ{k}}$;
		\end{itemize}
		but by replacing $\gen{\reEE{j}}$ with $\gen{\reXM{j}{0},\reMY{j}{0}}$ we obtain a configuration that is not allowed.

		\item $\gen{\reEE{j}} \not\le \gen{\reXM{j}{1},\reMY{j}{1}}$, $\gen{\reEE{j}} \not\le \gen{\reXM{j}{1},\reMY{j}{0}}$ for symmetric reasons.
		
		\item $\gen{\reEE{j}} \not\le \gen{\reMM{j}}$, $\gen{\reEE{j}} \not\le \gen{\reXY{j}{0}{0}}$, $\gen{\reEE{j}} \not\le \gen{\reXY{j}{0}{1}}$, $\gen{\reEE{j}} \not\le \gen{\reXY{j}{1}{0}}$, $\gen{\reEE{j}} \not\le \gen{\reXY{j}{0}{1},\reXY{j}{1}{0}}$, $\gen{\reEE{j}} \not\le \gen{\reXY{j}{1}{1}}$, $\gen{\reEE{j}} \not\le \gen{\reXY{j}{0}{0},\reXY{j}{1}{1}}$, $\gen{\reEE{j}} \not\le \gen{\reXM{j}{0}}$, $\gen{\reEE{j}} \not\le \gen{\reXM{j}{1}}$, $\gen{\reEE{j}} \not\le \gen{\reMY{j}{0}}$, $\gen{\reEE{j}} \not\le \gen{\reMY{j}{1}}$ hold by \Cref{obs:arrow-negation-transitivity-2}. \qedhere
	\end{itemize}
\end{proof}

\begin{proof}[Proof of \Cref{lem:diagram-intermediate-special}]
	By \Cref{obs:no-arrow-different-colors-2}, labels of the special color $j = \Delta - i$ can only be in relation with labels of color $j$.
	By \Cref{obs:arrow-is-present-2}, the relations depicted in the diagram are present. Hence, we only need to prove that no additional relations than the ones of \Cref{fig:diagram-intermediate-special} hold. We consider all possible cases:
	\begin{itemize}
		\item $\gen{\reXY{j}{0}{0},\reXY{j}{0}{1}} \not\le \gen{\reXY{j}{1}{0},\reXY{j}{1}{1}}$ because the configuration $L_1 \ldots L_\Delta$ is allowed, where:
		\begin{itemize}[noitemsep]
			\item $L_1 = \gen{\reMY{1}{0}}$;
			\item For $2 \le k \le \Delta - i - 1$, $L_k = \gen{\reXY{k}{0}{0}}$;
			\item $L_{j} = \gen{\reXY{j}{0}{0},\reXY{j}{0}{1}}$;
			\item For $\Delta-i+1 \le k \le \Delta$, $L_k = \gen{\reQ{k}}$;
		\end{itemize}
		but by replacing $\gen{\reXY{j}{0}{0},\reXY{j}{0}{1}}$ with $\gen{\reXY{j}{1}{0},\reXY{j}{1}{1}}$ we obtain a configuration that is not allowed.
		
		\item $\gen{\reXY{j}{1}{0},\reXY{j}{1}{1}} \not\le \gen{\reXY{j}{0}{0},\reXY{j}{0}{1}}$ for symmetric reasons as in the case  $\gen{\reXY{j}{0}{0},\reXY{j}{0}{1}} \not\le \gen{\reXY{j}{1}{0},\reXY{j}{1}{1}}$.
		
		\item $\gen{\reMY{j}{0},\reMY{j}{1}} \not\le \gen{\reXY{j}{0}{0},\reXY{j}{0}{1}}$ because the configuration $L_1 \ldots L_\Delta$ is allowed, where:
		\begin{itemize}[noitemsep]
			\item $L_1 = \gen{\reMY{1}{1}}$;
			\item For $2 \le k \le \Delta - i - 1$, $L_k = \gen{\reXY{k}{1}{1}}$;
			\item $L_{j} = \gen{\reMY{j}{0},\reMY{j}{1}}$;
			\item For $\Delta-i+1 \le k \le \Delta$, $L_k = \gen{\reQ{k}}$;
		\end{itemize}
		but by replacing $\gen{\reMY{j}{0},\reMY{j}{1}} $ with $\gen{\reXY{j}{0}{0},\reXY{j}{0}{1}}$ we obtain a configuration that is not allowed.
		
		\item $\gen{\reMY{j}{0},\reMY{j}{1}} \not\le \gen{\reXY{j}{1}{0},\reXY{j}{1}{1}}$ for symmetric reasons as in the case $\gen{\reMY{j}{0},\reMY{j}{1}} \not\le \gen{\reXY{j}{0}{0},\reXY{j}{0}{1}}$. \qedhere
	\end{itemize}
\end{proof} 
\end{document}